\newcommand\anytesi{2012}
\definecolor{linkcol}{rgb}{0,0,0}
\definecolor{citecol}{rgb}{0,0,0}
\let\headruleORIG\headrule
\renewcommand{\headrule}{\color{black} \headruleORIG}
\def\cleardoublepage{\clearpage\if@twoside \ifodd\c@page\else%
  \hbox{}%
  \thispagestyle{empty}
  \newpage%
  \if@twocolumn\hbox{}\newpage\fi\fi\fi}
\let\minitocORIG\minitoc
\renewcommand{\minitoc}{\begingroup

\definecolor{tmp}{rgb}{0,0,0.4} 

\hypersetup{linkcolor=tmp}

\minitocORIG \endgroup \vspace{1.5em}}
\newtheorem{definition}{Definition}
\renewcommand{\epsilon}{\varepsilon}
\newenvironment{protocol_message}%
{\noindent\ignorespaces}%
{\par\noindent%
\ignorespacesafterend}
\newtheorem{proposition}{Proposition}
\newtheorem{lemma}{Lemma}
\newtheorem{corollary}{Corollary}
\newtheorem{theorem}{Theorem}
\newtheorem{remark}{Remark}
\newtheorem{example}{Example}
\newtheorem{case}{Case}
\begin{document}

\begin{titlepage}
\begin{center}
\noindent {\LARGE \textbf{Universitat Rovira i Virgili}} \\
\vspace*{0.3cm}
\noindent {\large \textbf{Department of}} \\
\vspace*{0.3cm}
\noindent {\large \textbf{Computer Engineering and Mathematics}} \\
\vspace*{0.5cm}
\noindent \Huge \textbf{Ph.D. Dissertation} \\
\vspace*{0.5cm}
\noindent \Large \textbf{\textsc{Privacy in RFID and mobile objects}}\\
\vspace*{0.4cm}
\noindent \large {Author:\\}
\noindent \LARGE Rolando \textsc{Trujillo-Rasua} \\
\vspace*{0.8cm}
\noindent \Large {Thesis Advisors:} \\
\noindent \Large Dr. Agusti \textsc{Solanas}\\
\noindent \Large Dr. Josep \textsc{Domingo-Ferrer}\\
\vspace*{2.0cm}
\noindent \large Dissertation submitted to the Department of Computer \\
\noindent \large Engineering and Mathematics in partial fulfillment of the  \\
\noindent \large requirements of the degree of Doctor of Philosophy \\
\noindent \large in Computer Science \\
\vspace*{0.5cm}

\end{center}

\newpage


\thispagestyle{empty}

\begin{center}

$ $\\$ $\\$ $\\$ $\\$ $\\$ $\\$ $\\$ $\\$ $\\$ $\\$ $\\$ $\\

{\large

\copyright~ Copyright \anytesi~by Rolando Trujillo-Rasua\\

All Rights Reserved

}

\end{center}

\newpage


\thispagestyle{empty}

\begin{center}

{\normalsize

\begin{minipage}{10cm}

I certify that I have read this dissertation and that in

my opinion it is fully adequate, in scope and quality, as

a dissertation for the degree of Doctor of Philosophy

in Computer Science.

\vspace{3cm}

\begin{flushright}

\begin{minipage}{7cm}

\hrule

\begin{center}

\vspace{0.1cm}

Dr. Agusti Solanas\\

(Advisor)

\end{center}

\end{minipage}

\end{flushright}

\vspace{0.5cm}

I certify that I have read this dissertation and that in

my opinion it is fully adequate, in scope and quality, as

a dissertation for the degree of Doctor of Philosophy

in Computer Science.

\vspace{3cm}

\begin{flushright}

\begin{minipage}{7cm}

\hrule

\begin{center}

\vspace{0.1cm}

Dr. Josep Domingo-Ferrer\\

(Advisor)

\end{center}

\end{minipage}

\end{flushright}

\vspace{0.5cm}

Approved by the University Committee on Graduate Studies:

\vspace{3cm}

\begin{flushright}

\begin{minipage}{7cm}

\hrule

$ $

\end{minipage}

\end{flushright}

\end{minipage}

}

\end{center}

\newpage

\thispagestyle{empty}


\end{titlepage}
\sloppy

\titlepage

\dominitoc

\pagenumbering{roman}

 \cleardoublepage

\section*{Resum}

Els sistemes RFID permeten la identificaci\'{o} r\`{a}pida i autom\`{a}tica d'etiquetes RFID a trav\'{e}s d'un canal de comunicaci\'{o} sense fils. Aquestes etiquetes s\'{o}n dispositius amb cert poder de c\`{o}mput i amb capacitat d'emmagatzematge de informaci\'{o}. Es per aix\`{o} que els objectes que porten una etiqueta RFID adherida permeten la lectura d'una quantitat rica i variada de dades que els descriuen i caracteritzen, com per exemple un codi \'{u}nic d'identificaci\'{o}, el nom, el model o la data d'expiraci\'{o}. A m\'{e}s, aquesta informaci\'{o} pot ser llegida sense la necessitat d'un contacte visual entre el lector i l'etiqueta, la qual cosa agilita considerablement els processos d'inventariat, identificaci\'{o} o control autom\`{a}tic.

Perquè l'\'{u}s de la tecnologia RFID es generalitzi amb \`{e}xit, es convenient complir amb diversos objectius: efici\`{e}ncia, seguretat i protecci\'{o} de la privadesa. No obstant aix\`{o}, el disseny de protocols d'identificaci\'{o} segurs, privats i escalables \'{e}s un repte dif\'{i}cil d'abordar ateses les restriccions computacionals de les etiquetes RFID i la seva naturalesa sense fils. Es per aix\'{o} que, en la present tesi, partim de protocols d'identificaci\'{o} segurs i privats, i mostrem com es pot aconseguir escalabilitat mitjançant una arquitectura distribu\"{i}da i col·laborativa. D'aquesta manera, la seguretat i la privadesa s'aconsegueixen mitjançant el propi protocol d'identificaci\'{o}, mentre que l'escalabilitat s'aconsegueix per mitj\`{a} de nous protocols col·laboratius que consideren la posici\'{o} espacial i temporal de les etiquetes RFID.

Independentment dels aven\c{c}os en protocols d'identificaci\'{o} sense fils, existeixen atacs que poden superar amb \`{e}xit qualsevol d'aquests protocols sense necessitat de con\`{e}ixer o descobrir claus secretes v\'{a}lides, ni de trobar vulnerabilitats a les seves implantacions criptogr\`{a}fiques. La idea d'aquests atacs, coneguts com atacs de ``repetidor'', consisteix en crear inadvertidament un pont de comunicaci\'{o} entre una etiqueta leg\'{i}tima i un lector leg\'{i}tim. D'aquesta manera, l'adversari utilitza els drets de l'etiqueta leg\'{i}tima per superar el protocol d'autentificaci\'{o} utilitzat pel lector. En aquesta tesi proposem un nou protocol que, a m\'{e}s d'autentificaci\'{o}, realitza una revisi\'{o} de la dist\`{a}ncia a la qual es troben el lector i l'etiqueta. Aquests tipus de protocols es coneixen com a protocols de fitaci\'{o} de dist\`{a}ncia, els quals no impedeixen aquests tipus d'atacs, per\`{o} s\'{i} que poden frustrar-los amb alta probabilitat.

Per \'{u}ltim, afrontem els problemes de privadesa associats amb la publicaci\'{o} de informaci\'{o} recollida a trav\'{e}s de sistemes RFID. En concret, ens concentrem en dades de mobilitat, que tamb\'{e} poden ser proporcionades per altres sistemes \`{a}mpliament utilitzats tals com el sistema de posicionament global (GPS) i el sistema global de comunicacions m\`{o}bils. La nostra soluci\'{o} es basa en la coneguda noci\'{o} de k-anonimat, obtingut mitjançant permutaci\'{o} i microagregaci\'{o}. Per a aquesta finalitat, definim una nova funci\'{o} de dist\`{a}ncia entre traject\`{o}ries amb la qual desenvolupen dos m\`{e}todes diferents d'anonimitzaci\'{o} de traject\`{o}ries.

\section*{Resumen}

Los sistemas RFID permiten la identificaci\'{o}n r\'{a}pida y autom\'{a}tica de etiquetas RFID a trav\'{e}s de un canal de comunicaci\'{o}n inal\'{a}mbrico.  Dichas etiquetas son dispositivos con cierto poder de c\'{o}mputo y capacidad de almacenamiento de informaci\'{o}n. Es por ello que los objetos que contienen una etiqueta RFID adherida permiten la lectura de una cantidad rica y variada de datos que los describen y caracterizan, por ejemplo,  un c\'{o}digo \'{u}nico de identificaci\'{o}n, el nombre, el modelo o la fecha de expiraci\'{o}n. Adem\'{a}s, esta informaci\'{o}n puede ser le\'{i}da sin la necesidad de un contacto visual entre el lector y la etiqueta, lo cual agiliza considerablemente los procesos de inventariado, identificaci\'{o}n, o control autom\'{a}tico.

Para que el uso de la tecnolog\'{i}a RFID se generalice con \'{e}xito, es conveniente cumplir con varios objetivos: eficiencia, seguridad y protecci\'{o}n de la privacidad. Sin embargo,  el diseño de protocolos de identificaci\'{o}n seguros, privados, y escalables es un reto dif\'{i}cil de abordar dada las restricciones computacionales de las etiquetas RFID y su naturaleza inal\'{a}mbrica.  Es por ello que, en la presente tesis, partimos de protocolos de identificaci\'{o}n seguros y privados, y mostramos c\'{o}mo se puede lograr escalabilidad mediante una arquitectura distribuida y colaborativa. De este modo,  la seguridad y la privacidad se alcanzan mediante el propio protocolo de identificaci\'{o}n, mientras que la escalabilidad se logra por medio de novedosos m\'{e}todos colaborativos que consideran la posici\'{o}n espacial y temporal de las etiquetas RFID.

Independientemente de los avances en protocolos inal\'{a}mbricos de identificaci\'{o}n, existen ataques que pueden superar exitosamente cualquiera de estos protocolos sin necesidad de conocer o descubrir claves secretas v\'{a}lidas ni de encontrar vulnerabilidades en sus implementaciones criptogr\'{a}ficas.  La idea de estos ataques,  conocidos como ataques de ``repetidor'', consiste en crear inadvertidamente un puente de comunicaci\'{o}n entre una etiqueta leg\'{i}tima y un lector  leg\'{i}timo. De este modo, el adversario usa los derechos de la etiqueta leg\'{i}tima para pasar el protocolo de autenticaci\'{o}n usado por el lector. En esta tesis proponemos un nuevo protocolo que adem\'{a}s de autenticaci\'{o}n realiza un chequeo de la distancia a la cual se encuentran el lector y la etiqueta.  Este tipo de protocolos se conocen como protocolos de acotaci\'{o}n de distancia, los cuales no impiden este tipo de ataques, pero s\'{i} pueden frustrarlos con alta probabilidad.

Por \'{u}ltimo, afrontamos los problemas de privacidad asociados con la publicaci\'{o}n de informaci\'{o}n recogida a trav\'{e}s de sistemas RFID. En particular, nos concentramos en datos de movilidad que tambi\'{e}n pueden ser proporcionados por otros sistemas ampliamente usados tales como el sistema de posicionamiento global (GPS) y el sistema global de comunicaciones m\'{o}viles. Nuestra soluci\'{o}n se basa en la conocida noci\'{o}n de k-anonimato, alcanzada mediante permutaciones y microagregaci\'{o}n. Para este fin, definimos una novedosa funci\'{o}n de distancia entre trayectorias con la cual desarrollamos dos m\'{e}todos diferentes de anonimizaci\'{o}n de trayectorias.

\section*{Abstract}

Radio Frequency Identification (RFID) is a technology aimed at efficiently identifying and tracking goods and assets. Such identification may be performed without requiring line-of-sight alignment or physical contact between the RFID tag and the RFID reader, whilst tracking is naturally achieved due to the short interrogation field of RFID readers.
That is why the reduction in price of the RFID tags has been accompanied with an increasing attention paid to this technology. However, since tags are resource-constrained
devices sending identification data wirelessly, designing secure and private RFID identification protocols is a challenging task. This scenario is even more complex when  scalability must be met by those protocols.

Assuming the existence of a lightweight, secure, private and scalable RFID identification protocol, there exist other concerns surrounding the RFID technology. Some of them arise from the technology itself, such as \emph{distance checking}, but others are related to the potential of RFID systems to gather huge amount of tracking data. Publishing and mining such moving objects data is essential to improve efficiency of supervisory control, assets management and localisation, transportation, etc. However, obvious privacy threats arise if an individual can be linked with some of those published trajectories.

The present dissertation contributes to the design of algorithms and protocols aimed at dealing with the issues explained above. First, we propose a set of protocols and heuristics based on a distributed architecture that improve the efficiency of the
identification process without compromising privacy or security. Moreover, we present a novel distance-bounding protocol based on graphs that is extremely low-resource consuming. Finally, we present two trajectory anonymisation methods aimed at preserving the individuals' privacy when their trajectories are released.

\begingroup

\definecolor{tmp}{rgb}{0,0,0.4} 

\hypersetup{linkcolor=tmp}

\tableofcontents

\endgroup


\mainmatter

\chapter{Introduction}
\label{chap:1}

\emph{This chapter introduces the issues we are facing in this dissertation. In addition, it briefly describes the solutions we propose to tackle those issues. Finally, the structure and organisation of the present thesis are outlined.}

\minitoc

\section{Motivation} \label{sec:motivation}

Radio frequency identification (RFID) allows the simultaneous identification of multiple RFID tags. The identification process is performed over a wireless channel without requiring line-of-sight alignment or physical contact between the RFID tags and the RFID reader. These features together 
with others like low deployment costs, being flexible and manageable,
computational power, etc., are causing the RFID technology to be 
preferred to traditional options (\textit{e.g.} barcodes systems). Indeed, nowadays several RFID systems are massively deployed worldwide, namely for \textit{assets tracking} (\textit{e.g.} Air Canada decided to use this
technology to control their food trolleys so as to reduce more than \$2 million
in unexplained losses~\cite{RFIDJournal-2003-March}), \textit{manufacturing}
(\textit{e.g.} Boeing uses RFID to track parts as they arrive, and as they move
from one shop to another within their facilities, thus reducing errors and the
need for people to look for parts~\cite{RFIDJournal-2003-Sept}), \textit{supply
chain management} (\textit{e.g.} Paramount farms, the largest producer of
pistachio in the US, receives 50\% of its production from a network of about
400 partners; the shipments are processed by using RFID that reduces processing
times to up to 60\%~\cite{Violino-2004-RFIDJournal}), \textit{retailing}
(\textit{e.g.} Walmart started to explore the RFID technology in 2003 and
devoted at least three billion dollars to implement
it~\cite{Haley-2003-InternetNews}), and for other applications such as
\textit{payments}, \textit{security} and \textit{access control}.

The RFID technology also has the capability of naturally collecting trajectories of moving objects. Differently to other positioning systems like the GPS, RFID systems do not continuously track a moving object. Instead, RFID readers located at different waypoints create trajectories by identifying tags passing through those waypoints. In this sense, the RFID technology can be considered a simple and low-cost tracking system where complex trilateration and precisely timing of signals are not required. This type of \emph{coarse-grained} tracking is particularly useful to improve the quality of diagnostic processes and business decisions in the healthcare industry~\cite{jones:015001}, to monitor animal behaviour~\cite{10.1109/ICCNT.2010.40}, to enhance bike races~\cite{Tsai-master}, to provide location based services in indoor environments~\cite{10.1109/SUTC.2006.36}, etc.

However, all those potential benefits have been partially overshadowed by important security and privacy threats. RFID tags are resource-constrained devices that respond to any reader interrogation through an insecure channel. This means that both the data stored in the tags' memory and the data transmitted to readers cannot be protected by cryptographically strong primitives and/or large key sizes. Instead, lightweight cryptography requiring no more than $3000$ logic gates should be used~\cite{DBLP:conf/ifip6-8/Peris-LopezHER06}. In this scenario, the privacy of tag bearers could be seriously compromised by disclosing the individual's locations or other sensitive information contained in the RFID tag's memory. Moreover, other security risks like tag impersonation and counterfeiting increase due to the lack of randomness and computational power in the tag's side.

In recent years, several efforts have been made on designing secure and private RFID identification protocols. Among those protocols, the Randomised Hash-Lock Scheme~\cite{JuelsW-2007-percom} has been widely accepted due to its strong privacy and security properties, and its low computational
requirements in the tag's side, \textit{i.e.} it only needs a pseudo-random numbers generator and a one-way hash function. However, this protocol is not scalable. This is particularly problematic if we consider that millions or billions of tags should be managed by typical RFID applications (\textit{e.g.} for supply chain management or inventory control). That is why many other RFID identification protocols have been proposed aimed at being secure and private, yet scalable. Nevertheless, none of them has achieved those three goals at the same time~\cite{DBLP:journals/comcom/AlomairP10}, especially when strong privacy definitions must be met. In the present dissertation, we mainly focus on designing collaborative algorithms that improve the scalability of the Randomised Hash-Lock Scheme~\cite{JuelsW-2007-percom}. The algorithms are collaborative in the sense that several readers deployed in the system exchange information in order to efficiently identify RFID tags. By doing so, our proposals are able to improve the scalability, being as private and secure as the Randomised Hash-Lock Scheme~\cite{JuelsW-2007-percom}, though.

Regardless of the improvements on designing identification/authentication protocols, Desmedt, Goutier and Bengio~\cite{DBLP:conf/crypto/DesmedtGB87} presented in CRYPTO'87 a novel attack called \emph{mafia fraud} that defeated any authentication protocol. In this attack, an adversary succeeds by simply relying the messages between a legitimate reader and a legitimate tag. Initially, the mafia fraud attack was thought to be rather unrealistic because the legitimate prover should take part on the execution of the protocol. However, the RFID technology clearly opens the door to this type of attack since RFID tags answer to any reader's interrogation without any awareness or agreement of their holders. Other types of frauds are also applicable to RFID systems. 
Among them, the \emph{distance fraud} attack~\cite{188361}, in which a dishonest prover claims to be closer to the verifier than he really is, is very important. Both mafia and distance frauds may be accomplished despite of the authentication protocol used by tags and readers. This means that even assuming secure, private, and scalable RFID identification/authentication protocols at the application layer, there exists the need for designing protocols that thwart the mafia and distance fraud attacks. Among the countermeasures against these attacks, \emph{distance bounding} protocols based on the measurement of the round trip time of exchanged messages~\cite{springerlink:10.1007/BF00196726, Beth:1990:ITS:646755.705208} are considered the most suitable for RFID systems. We contribute by designing a novel distance bounding protocol based on graphs that is highly resilient to mafia and distance fraud attacks. Our protocol also deals with other RFID systems's requirements such as efficiency and low memory consumption.

Seemingly, an increasing number of articles are being written on RFID security and privacy areas, namely
ultralightweight protocols, distance-bounding protocols, privacy-preserving lightweight protocols, cryptographically secure pseudo-random numbers generators, cryptosystems based on elliptic curves, RFID privacy models, zero-knowledge authentication protocols for RFID systems, ownership transfer protocols, among others~\cite{RFID_lounge}. All these efforts contribute to the establishment of a technology that may help to do business as much as other revolutionary technologies like internet do. This means that, in the near future, billions of RFID tags will send information to thousands of RFID readers so as to enrich our interaction with
the environment and make our processes more efficient and resilient. Therefore, gathering huge databases of trajectories by using the potential of the RFID technology to track moving objects will be a reality.

Analysing this kind of databases can lead to useful and previously unknown knowledge. However, even when tracking is performed by legitimate parties, the privacy of individuals may be affected by the publication of such databases of trajectories. Simple de-identification realised by removing identifying attributes is insufficient to protect the privacy of individuals. Just knowing some locations visited by an individual can help an adversary to identify the individual's trajectory in the published database. In this context, privacy preservation means that no sensitive location ought to be linkable to an individual. The privacy threat grows if such a trajectory is linked with sensitive personal data like, price of products,  name of drugs, etc, which are usually stored in the tags' memory.

These privacy issues motivate our last research line in this thesis. We note that the boom of the RFID technology strongly promotes the design of privacy-preserving trajectory anonymisation methods. In this sense, we finally focus on mitigating the privacy issues that may arise from the publication of databases of trajectories, rather than on providing 
security and/or privacy to the RFID technology. In particular, we propose a novel distance measure for trajectories which naturally considers both spatial and temporal aspects of trajectories, is computable in polynomial
time, and can cluster trajectories not defined over the same time span. Mainly based on this metric, we propose two methods for trajectory anonymisation which yield anonymised trajectories formed by fully accurate true original locations.

\section{Contributions}

The main contributions of this dissertation are the following:

\begin{enumerate}
    \item \textbf{Efficient RFID identification protocol by means of collaborative readers:} Designing secure, private, and scalable, RFID identification protocols in a multiple tags to one reader scenario is a challenge. However, in scenarios where multiple readers are deployed in the system, scalability may be achieved without compromising privacy or security. In particular, we consider a scenario where readers should continuously monitor moving tags in the system. Under such an assumption, we propose a scheme that has been proven to be efficient in terms of both server and network overhead.
    \item \textbf{Predictive protocol for the scalable identification of RFID tags:} We improve the state-of-the-art of RFID identification schemes based on collaborative readers by proposing a protocol that predicts future and past locations of RFID tags. By doing so, readers are aware of which tags may be identified at some slot of time. Therefore, the identification process is considerably improved.
    \item \textbf{A new distance-bounding protocol:} RFID systems are specially susceptible to mafia and distance frauds. Among the countermeasures to thwart these attacks, distance-bounding protocols are considered the most suitable solutions for RFID systems. We propose a novel distance-bounding protocol only requiring a single hash computation and a linear amount of memory in the tag's side. Despite those limitations, our proposal is highly resilient to both mafia and distance frauds.
    \item \textbf{Privacy-preserving publication of trajectories:} It is hard to say how much personal information and tracking data may be collected by RFID readers in the near future. Nevertheless, trajectories are massively gathered by GPS and/or GSM technologies, and apparently the RFID technology is strongly supporting this massive collection of moving objects data. We focus on designing trajectory anonymisation algorithms that may work over trajectories not defined over the same time span. In particular, we propose two algorithms based on microaggregation and permutation aimed at achieving trajectory $k$-anonymity and location $k$-diversity. Both algorithms are based on a novel distance measure that effectively considers both dimensions: space and time.
\end{enumerate}

\section{Organisation}

This thesis is organised as follows:

\begin{itemize}
    \item Chapter~\ref{chap:2} provides an overview of RFID systems and describes some challenges that the RFID technology should address in order to be successfully deployed worldwide. Also, the controversy between privacy, security, and scalability in RFID systems
    is discussed in detail. Other ramifications of RFID systems are 
    introduced as well, namely distance checking and trajectory anonymisation.
    \item Chapter~\ref{chap:4} presents our first contributions to the scalability issue of RFID identification protocols. In particular, it describes a protocol based on collaborative readers that outperforms previous proposals in terms of both number of cryptographic operations and bandwidth usage.
    \item Chapter~\ref{chap:4_5} introduces the concept of RFID identification protocols based on location prediction. This new proposal is also based on collaborative readers, but it is able to improve the identification process by predicting the next reader that should identify a tag.
    \item Chapter~\ref{chap:5} is devoted to the description of a novel distance-bounding protocol based on graphs. The goal of this proposal is to reduce memory requirements while still achieving high security properties regarding both distance and mafia fraud. To do so, the concept of distance bounding protocols based on graphs is introduced and defined.
    \item Chapter~\ref{chap:7} presents our contributions to the anonymisation of moving objects data. In particular, two anonymisation methods releasing trajectories that contain true original locations are proposed. Both methods are able to effectively deal with trajectories not defined over the same time span thanks to a novel distance measure presented in this chapter.
    \item Finally, Chapter~\ref{chap:8} summarises our contributions and describes possible future research lines.
\end{itemize}

\chapter{Background}
\label{chap:2}

\emph{This chapter briefly describes RFID systems, from the very beginning of the technology to the most recent applications and challenges. Among all the challenges, it focuses on the security, privacy, and scalability issues of RFID systems, distance checking, and the anonymisation of mobility data collected by either the RFID or the GPS technologies. In addition, the main contributions aimed at facing all those challenges are reviewed.}

\minitoc

\section{A brief history of the RFID technology}

The first RFID system dates from the Second World War~\cite{Rieback:2006:ERS:1115690.1115749}. In those days, radar technology was used to detect approaching aircrafts by sending pulses of radio energy and receiving the echoes generated by those aircrafts. However, visual contact was the only way to identify an incoming plane as enemy or allied. The Germans solved this problem by rolling their aircrafts in response to a signal from the ground radar station so as to change the radar reflection's polarisation and thus, creating a distinctive blip on the radars. In military terms, this is considered a huge advantage over previous radar systems. Actually, some people believe that this ingenious German military strategy could have helped the US army to prevent the attack on Pearl Harbor.

Later, the British army introduced a more sophisticated system called Identify Friend or Foe (IFF). Closer to current RFID systems, each plane was equipped with a transponder that modulated back the radar signal, thereby allowing identification
of that aircraft as friendly. Due to its simplicity and resiliency, this technology is still being used by the aviation industry to keep airplanes tracked. However, a \emph{not} friendly aircraft should be treated with care since there is no proof of it being an enemy. Apparently, this inconvenience has been the cause of unfortunate accidents (\emph{e.g.} in 1983, the Soviet Union army shot down a Korean civilian airplane that was confused with a spy plane~\cite{korean_crash}. Similarly, an Iranian civilian plane was shot down in 1988 by the United States army,  and all 290 passengers and crew were killed, including 66 children~\cite{iran_crash}).

As advances in radio frequency communications systems and low-cost embedded computers continued through the 1950s, 1960s, and 1970s, several technologies related to radio waves were developed (\emph{e.g.} the Electronic Article Surveillance application (EAS) designed to prevent shoplifting from retail stores). Nevertheless, the first patent for a passive, read-write RFID tag, was received by Mario Cardullo in 1973. This is considered the first true ancestor of modern RFID as it was a passive radio transponder with memory. Since then, RFID systems hardly seem recognisable. Modern RFID tags may be similar in size to a grain of rice; may have computational capabilities, Read Only Memory (ROM), Electrically Erasable Programmable Read-Only Memory (EEPROM); may be active in the sense of using batteries rather than RFID readers' power, etc.

Consequently, over the years, the number of solutions based on RFID has considerably grown. In fact, RFID systems are nowadays more related with business than with the military industry. In the 1980s and 1990s, RFID applications emerged in transport, access control, animal identification, tracking nuclear material and trucks and electronic toll collection~\cite{citeulike:755357}. This trend is increasing  exponentially in the 21st century due to tags's price reduction~\cite{five_cents} and RFID standardisation~\cite{epcglobal}; over 33 billion RFID tags were produced in 2010\footnote{According to a study of In-Stat (\url{http://www.in-stat.com}) - \url{http://www.instat.com/press.asp?Sku=IN0502115WT&ID=1545}} and 2.31 and 2.88 billion tags were sold in 2010 and 2011, respectively\footnote{According to an extensive research by IDTechEx (\url{http://www.idtechex.com/research/reports/rfid_forecasts_players_and_opportunities_2011_2021_000250.asp})}.


\subsection{RFID systems} \label{chap:rfid}


An RFID system is supposed to identify and track objects by using radio waves. Similar to other identification systems such as barcodes, fingerprints or eyes' iris, the reader (RFID reader) reads from some source of identification data (RFID tag). Then, the identification data are usually processed by a data processing subsystem or server. However, RFID systems outstand from other identification systems because they may be nearly as cheap
as barcode systems, use a wireless channel like GPS or GSM, and have some computational capabilities like magnetic cards. That is why more and more attention has been paid to this technology in recent years.

In technical terms, an RFID system consists of three key elements:

\begin{itemize}
  \item The RFID tag, or \textit{transponder}, that contains information and identification data.
  \item The RFID reader, or \textit{transceiver}, that queries transponders for
  information stored on them. This information can range from static
  identification numbers to user or sensory data.
  \item The \textit{data processing subsystem or server}, which processes the data obtained
  from readers.
\end{itemize}

Intuitively, all objects to be identified shall be physically tagged with RFID tags. Then, RFID readers should be
strategically distributed to interrogate tags where their data are required (\emph{e.g.} a bicycle race timing system needs to place, at least, a reader at the start line and another one at the finish line). Other properties, namely readers' interrogation field size, computation capabilities, and memory size of tags, vary from application to application.

\subsubsection{RFID tags}

Typical \textit{transponders} (\textit{trasnmitters/responders}) or RFID tags (see examples in Figure~\ref{fig:tag_pictures}),
consist of integrated circuits connected to an antenna~\cite{finkenzeller99}.
The memory element serves as writable and non-writable data storage, which can range between few bytes up to several kilobytes. Tags can be designed to be
read-only, write-once, read-many, or fully rewritable. Therefore, tag programming can take place at the
manufacturing level or at the application level. A tag can obtain power from the signal
received from the reader, or it can have its own internal source
of power. The way tags get their power generally defines their
category:

\begin{itemize}
  \item \textbf{Passive tags} use power provided by the reader by means of electromagnetic waves. The lack of an onboard power supply means that the device can be quite small and cheap.
\item \textbf{Semi-passive tags} use a battery to run
the microchip's circuitry but communicate by harvesting
power from the reader signal.
\item \textbf{Active tags} have their own internal power source, usually a battery, which is
used to power the outgoing signal.
\end{itemize}

\begin{figure}[!t]
\centering
 \subfigure{
   \includegraphics[width=1.45in]
   {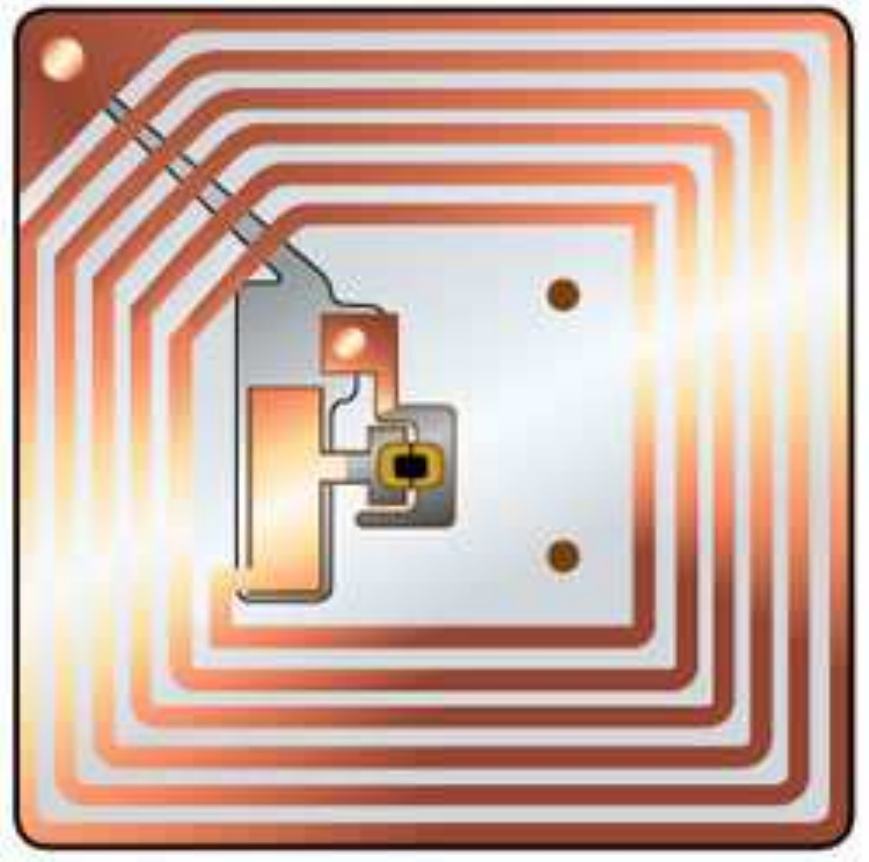}
 }
 \subfigure{
   \includegraphics[width=1.45in]
   {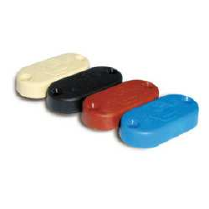}
 }
 \subfigure{
   \includegraphics[width=1.45in]
   {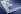}
 }
 \subfigure{
   \includegraphics[width=1.45in]
   {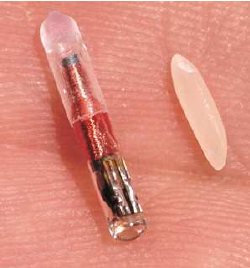}
 }
 \subfigure{
   \includegraphics[width=1.45in]
   {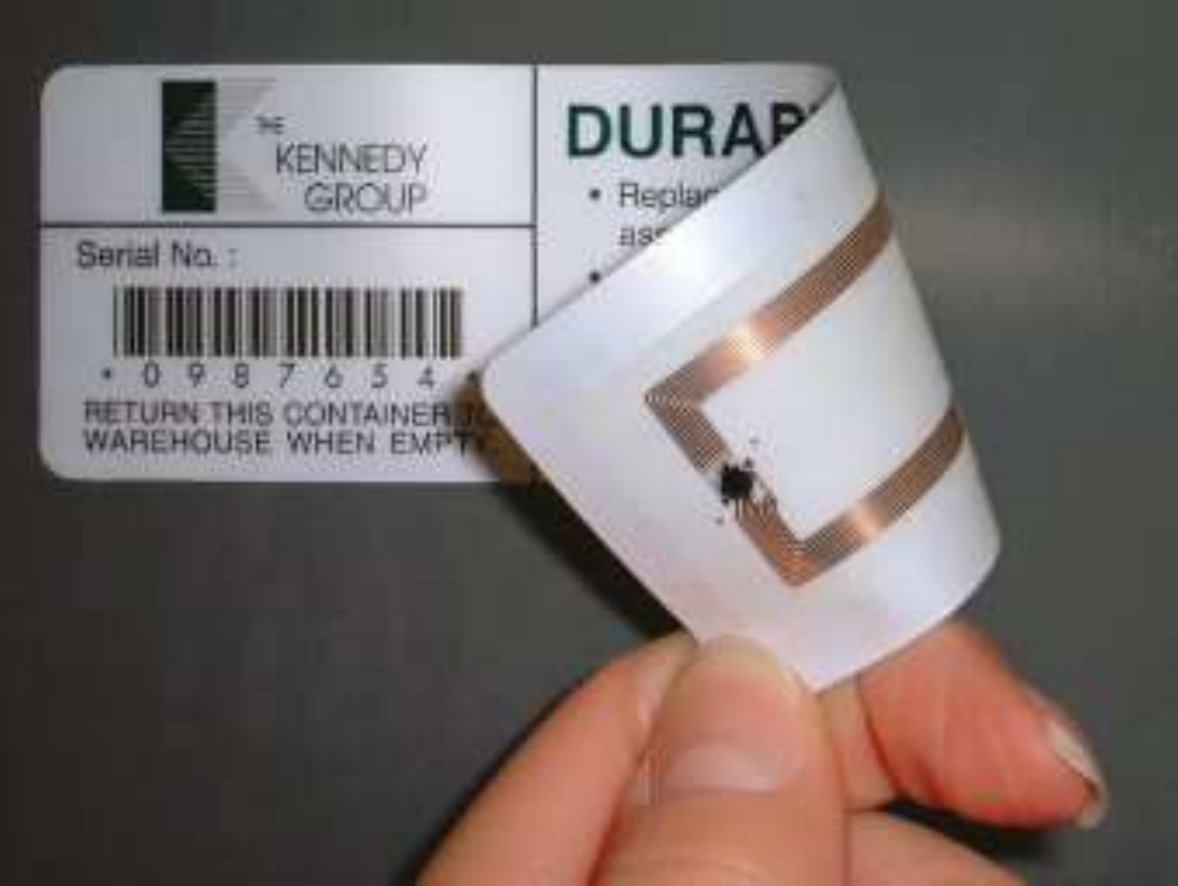}
 }
 \subfigure{
   \includegraphics[width=1.45in]
   {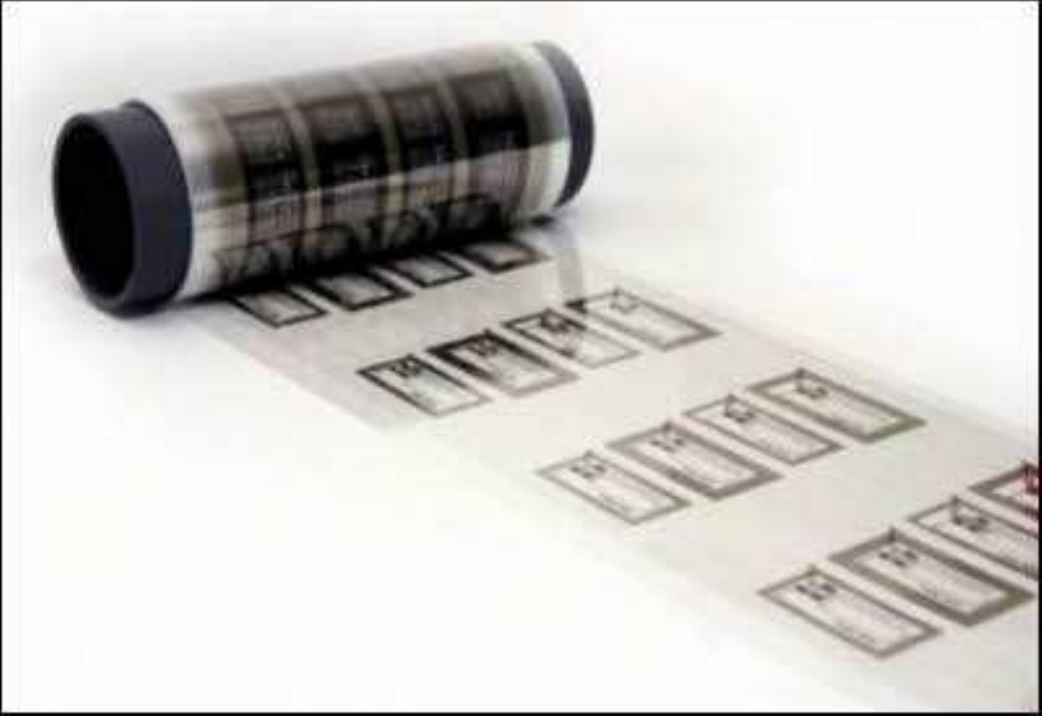}
 }
\caption{Pictures of some types of RFID tags}
\label{fig:tag_pictures}
\end{figure}

RFID tags may also be classified according to their processing power. A \emph{dumb} tag has no significant processing power, while \emph{smart} tags have on-board processors able to perform cryptographic operations~\cite{Adam20074}. Dumb tags are considered the heart of RFID systems. Manufacturers and retailers claim that reducing tags' cost is indispensable for the success of RFID systems. In some cases, sending a unique identifier would not necessarily be a problem. For instance, Canada, Israel, Japan, Belgium and The Netherlands, among other countries, require owners of pets to implant an RFID tag on their pets. These tags contain information that allow a fast and efficient localisation of pet owners in case of loss of their pets. Since those tags have a really short identification range and pets normally do not have enemies aimed at counterfeiting their identities, dumb tags could be the most practical option for this type of application. In turn, smart tags are used in those applications requiring some level of security and/or privacy, namely for e-passports, supply chain management or access control.

Considering that RFID systems rely on radio waves, tags operate in a well-defined frequency. There are four main bands: low frequency (LF), high frequency (HF), ultra-high frequency (UHF), and
microwave. The exact frequency varies depending on the application and the regulations in different countries. The frequency bands and the most common RFID system frequencies are listed in Table~\ref{tab:frequency}.

\begin{table}[!h]
\begin{center}
	\begin{tabular}{|p{3.5cm}|p{2cm}|p{6.5cm}|}
	\hline
	 \textbf{Frequency Band} &  \textbf{Operating Range} & \textbf{Applications}\\
	\hline
	 125kHz to 134kHz (LF) &  $\approx$ 0.5 Meters & Access control and Animal identification\\
	\hline
	 13.56MHz (HF) &  $\approx$ 1 Meters & Library books and Smart cards\\
	\hline
	 860MHz to 930MHz (UHF) &  $\approx$ 3 Meters &  Logistic and Parking access\\
	\hline
	 2.4GHz (Microwave) &  $\approx$ 10 Meters & Electronic toll collection and Airline baggage tracking\\
	\hline
	\end{tabular}
\end{center}

\caption{RFID frequency bands and characteristics.}\label{tab:frequency}

\end{table}

\subsubsection{RFID readers}

Typical \textit{transceivers} or RFID readers consist of a radio frequency
module, a control unit and a coupling element to interrogate RFID tags
via radio frequency communication. Readers may issue two types of challenge:
multicast and unicast. Multicast challenges are addressed to all tags in the
range of a reader whereas unicast challenges are addressed to specific tags.
In order to keep readers as simple as possible, they have, in general, an interface that allows them to forward the received data to a \textit{data processing subsystem, back-end database or server}. By doing so, readers delegate most of the computational effort to other computationally more powerful devices.

\subsubsection{Data processing subsystem}

The \emph{data processing subsystem} or \emph{server} is used to overcome the computational limitations of tags and readers. On the one hand, tags may not be able to store in their memory all the information required by readers. Thus, this information is usually stored in indexed databases. On the other hand, aimed at reducing the cost of RFID readers, cryptographic functions or processing data algorithms should rely on a data processing subsystem or server. It should be remarked that a secure connection between readers and back-end databases is generally assumed; anyway, secure communication between two computationally powerful devices does not belong to the problems addressed by the RFID technology. Refer to
Figure~\ref{fig:BasicScheme} for a graphical representation of the RFID components
and their basic relations/connections.

\begin{figure}[!t]
\centering
\includegraphics[scale=0.50,angle=270]{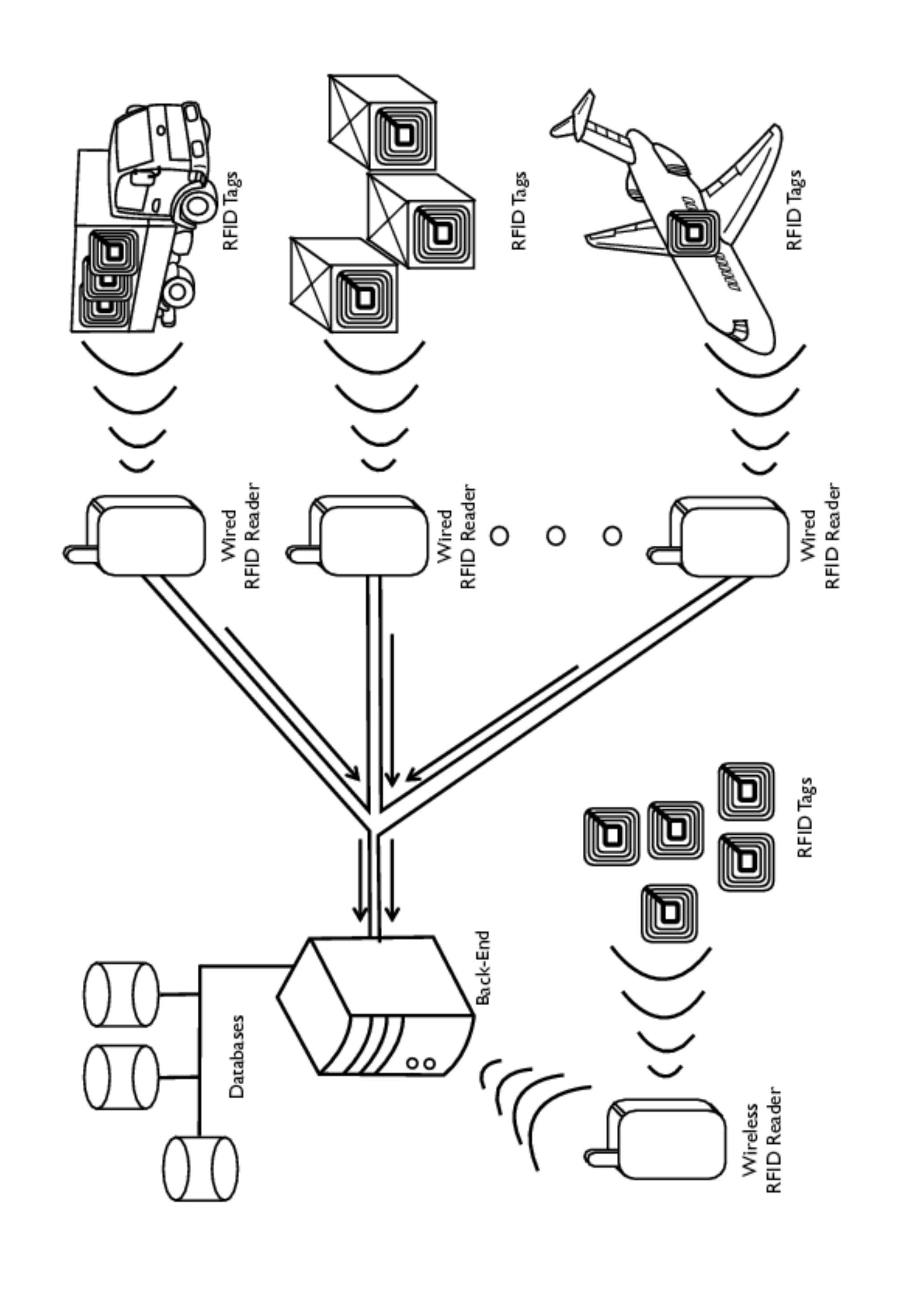}
\caption{Basic components of an RFID system. From left to right: a back-end,
RFID readers, and RFID tags. The back-end uses databases to store identification
information. RFID readers are used to query RFID tags (that can take a variety
of embodiments), retrieve their information, and forward it to the back-end
through a wireless or wired channel.}
\label{fig:BasicScheme}
\end{figure}

\subsection{RFID standards}

Nowadays, most technologies we use are governed by standards. Basically, these define the minimum requirements of some technology in order to achieve interoperability, which is particularly important in RFID systems. To illustrate the need for interoperability in the RFID technology it is important to understand the problems of supply chains. We may say that a supply chain management begins in a mine or a farm and it ends on a recycling or garbage plant~\cite{2283}. In between, the initial material is modified or processed from stage to stage, it may change hands from one owner to another, etc. In this globalised world, such material or item, presumably attached to an RFID tag, could travel around the world more than most people in their whole life (\emph{e.g.} from manufacturers to warehouses, from warehouses to points of sale, from points of sale to retailers, from retailers to customers, from customers to customers or second-hand retailers, etc). This means that RFID tags should be correctly read by everyone and everywhere, in the present and in the future, and without any restricted access or implementation, \emph{i.e.} RFID systems should be interoperable.

Continuing the work of Auto-ID Labs~\cite{auto_id_labs}, EPCGlobal is leading the development of industry-driven standards for the Electronic Product Code (EPC) to support the use of RFID systems~\cite{epcglobal}. Their task has been to specify frequencies, coupling methods, types of keying and modulation, information storage capacity, and modes of interoperability~\cite{28}. Table~\ref{tab:tag_classes} shows the classification of RFID tags according to the EPCGlobal organisation.

\begin{table}[!h]
  \begin{tabular}{l l }
    \hline
    \textbf{Class} & \textbf{Description} \\
    \hline
    Class 0 & Passive, read-only.\\
    \hline
    Class 0+ & Passive, write-once but using class 0 protocols.\\
    \hline
    Class I & Passive, write-once.\\
    \hline
    Class II & Passive, write-once with extras such as encryption.\\
    \hline
    Class III & Rewritable, semi-passive, integrated sensors.\\
    \hline
    Class IV & Rewritable, active, may communicate with other active tags.\\
    \hline
    Class V & Rewritable, active, can power and read other tags.\\
    \hline
  \end{tabular}
  \caption{EPC global tag classes.}\label{tab:tag_classes}
\end{table}

On the other hand, the International Organisation for Standardisation (ISO) has also created standards for RFID. Initially, there was some conflict between EPCGlobal and ISO specially due to the air interface protocol. At EPCGlobal, the ISO UHF protocol was thought to be too complex and costly. That is why they developed their own UHF protocol. Finally, in 2004, EPCglobal developed a second-generation protocol (Gen 2) aimed at creating a single, global standard that would be closer to the ISO standards and lastly accepted by ISO. Undoubtedly, this new generation has been the cornerstone of a massive deployment and global adoption of the RFID technology.

\subsection{Applications}

RFID technology has been characterised by its growing popularity. Consequently, a large and diverse number of RFID solutions are being used by more and more business companies. Not surprisingly, national governments have also noticed the benefits of RFID systems in their ordinary tasks, namely for passport control and document tracking. Therefore, it is hard to say exactly how many RFID systems are already deployed worldwide. However, it is clear that these systems are becoming more popular with each passing day.

\subsubsection{Identification}

Since the very beginning of the RFID technology during the Second World War, identification was its primary goal. Nowadays, the scenario is not so different; animal identification, inventory systems, human implants for identification of patients and drug control, are just a few examples of identification by radio frequency. Indeed, other RFID's features like tracking implicitly identify their targets, otherwise it would not be possible to track them.

\subsubsection{Tracking}

There exist several scenarios in which RFID systems are the most suitable for tracking (\emph{e.g.} indoor environments or for animal surveillance). Also, in comparison with other tracking systems like GPS or GSM, the RFID technology is considered much less costly. That is why tracking, together with identification, is considered one of the primary goals of RFID systems.

For tracking, tags operating at high frequency are usually required because they have a larger reading range. Those types of tags are used for tracking
in libraries or bookstores, pallet tracking, building access control,
  airline baggage tracking, and apparel and pharmaceutical items tracking (\emph{e.g.} in February 2008, the Emirates airline started a trial of RFID baggage tracking at London and Dubai airports~\cite{webster08}; in May 2007, Bear River Supply began to utilise ultrahigh-frequency
  identification tags to help monitor their agricultural equipment~\cite{bacheldo08}).

\subsubsection{Healthcare}

The healthcare industry has been heavily investing in RFID. The healthcare supply
chain, prevention of drug counterfeiting or patient safety, are just some examples of critical processes monitored by RFID. By doing so, patients of a hospital in England might avoid exposure to diseases caused by
infected equipment that was not properly tracked and classified~\cite{healthcare1}. Furthermore, discarded drug packaging will not be reusable by companies attempting to sell counterfeit pharmaceuticals, as noted by Colombian pharmacy chain Medicarte~\cite{healthcare2}. Indeed, it is expected that investments in RFID technology by the healthcare industry rise from $90\$$ million in 2006 to $2.1\$$ billion in 2016~\cite{healthcare3}.

\subsubsection{Electronic passports}

Electronic passports (e-passports) or passports with an embedded RFID tag have been introduced in many countries, including Malaysia (1998), New Zealand (2005), Belgium, The Netherlands (2005),
Norway (November 2005), Ireland (2006), Japan (2006), Pakistan,
Germany, Portugal, Poland (2006), Spain (August 2006), The United Kingdom,
Australia and the United States (2007), Serbia (July 2008) and Republic of Korea
(August 2008). Contrary to most RFID applications, RFID tags on passports are a sort of smart card rather than a low-cost tag. They are able to execute computationally complex public key cryptosystems with large key size and are tamper-proof. Also, plenty of information may be stored on the tag's memory such as, name, birthdate, biometric information, photo, etc. Such information may be contrasted with the information available on paper, thereby reducing the risk of passport forgery and fraud. However, several weaknesses have been found in e-passports~\cite{Nithyanand-2009-eprint}. Especially disturbing are those that allow an adversary to effectively clone an e-passport so that a reader cannot distinguish between a legitime and a cloned passport~\cite{Hlavac:2009:KAR:1617722.1617736,Blundo:2008:ISN:1462455.1462487}. Anyway, governments claim that cloning is not a big problem as the electronic information must match the physical characteristics of users. Furthermore, a third generation of e-passports has been released in which, to the best of our knowledge, no cloning attack has been reported yet.

\subsubsection{Transportation payments}

Public transportation payments with RFID cards is probably one of the first perceptible contacts we have with this technology. By this solution neither we need to have coins in our pockets nor the bus drivers need to regularly manage and change cash. Hence, the bus drivers' workload decreases, thereby reducing the risk of a traffic accident due to distractions and increasing the compliance with the schedule.

Conforming to the Calypso\footnote{Calypso is the international
  electronic ticketing standard for microprocessor contactless smartcards. It
  ensures multi-sources of compatible products, and makes possible the
  inter-operability between several transport operators in the same area.} (RFID)
  international standard, several countries in Europe and America use RFID passes for public transport systems. In Asia, in particulary Hong Kong, other type of RFID cards, called \emph{Octopus Cards}, are also being used for transport systems. Those cards also have grown to be similar to a cash card, and they may still be used in vending machines, fast-food restaurants and supermarkets. Many other public transport payments based on RFID exist (\emph{e.g.} the Moscow Metro
  pay system or the bike sharing system that prevents bicycle theft in Barcelona).

\subsection{RFID challenges}

There are many challenges associated with the deployment of RFID systems (\emph{e.g.} false or missing reads due to radio wave corruption, scalability, security and privacy, antenna design, deployment cost, among others). However, there are other challenges that may not seem so obvious. The introduction of a new order of things might create a maelstrom of uncertainty. For many industries, RFID deployment will change their business process, forcing new investments on personal training, infrastructure, testing, etc. For example, the McCarran International Airport in Las Vegas had to invest around $125, 000, 000 \$$ to RFID-enable its baggage-tracking system~\cite{Mary-2005-RFIDJournal}. Therefore, companies are requiring to carefully evaluate the economic viability of what may represent a big initial investment of money.

On the other hand, although an RFID system provides plenty of data essential to control and understand business processes, applications like supply chain management or real-time tracking may generate such a huge volume of information that could not be handled by traditional transactional databases (\emph{e.g.} it is predicted that WalMart will generate over $7$ terabytes of operational RFID data per day~\cite{Palmer-2011}). Therefore, software architectures and back-end databases should be rethought for the collection, correlation, filtering, and cleansing of RFID data.

Strongly related with technical and deployment details, three different challenges, namely security, privacy and, scalability, are the main subject of discussion in this dissertation. Almost every object is likely to be attached to an RFID tag. Therefore, billions of tags will need to be managed efficiently and in a scalable way. On the other hand, due to the wireless nature and the computational constraints of RFID tags, guaranteeing the security of tags' data and the privacy of tags' bearers is a challenging task. The privacy threats grow if we consider all personal data surrounding the huge amount of information collected from tags. If such data are not properly treated, sensitive information might be disclosed without the awareness of RFID's users. This means that the need for efficient and scalable privacy-preserving methods for microdata and trajectories increases with the massive deployment of RFID solutions.

\section{Security, privacy and scalability issues in RFID identification protocols}

The rapid proliferation of RFID solutions strongly supports the vision of
ubiquitous computing, in which tags interacting with readers throughout our
everyday life improve our experiences with the environment. Consequently,
in most applications, readers must be able to identify one or several tags among
a set of millions or billions. This scenario characterises an important property that an RFID identification
protocol should meet: scalability.

As for most identification systems, being secure and private are another two mandatory properties of RFID systems. These two features are even more relevant in the RFID context due to the insecure and easily accessible communication channel between tags and readers. Generally speaking, security means that data stored in a tag's memory should be accessed only by authorised parties and that impersonating or counterfeiting a tag may be achieved just with a negligible probability. On the other hand, privacy-preservation may be defined as the ability of tags to generate uncorrelated identification messages.

\subsection{Security}

RFID systems are subject to plenty of attacks, from attacks operating on the physical layer to attacks exploiting weaknesses on those protocols executed at the application layer~\cite{DBLP:journals/isf/MitrokotsaRT10, DBLP:conf/iwrt/MitrokotsaRT08, springerlink:10.1007/11599548_7}. Physical attacks may be as simple as wrapping an RFID tag in aluminum foil, which potentially causes denial of services (DoS) because readers will be not able to communicate with such tag. Other physical attacks are more sophisticated (\emph{e.g.} jamming attacks that permanently damage radio devices
or side-channel attacks that obtain information from the physical implementation of cryptosystems). However, in the present dissertation, we focus on adversaries aimed at breaking the identification/authentication schemes by using theoretical weaknesses of such algorithms. To do so, we assume that the adversary can observe, block, modify, and inject messages in the communication between a tag and a reader. Furthermore, as tags are not tamper-resistant, we assume an adversary able to clone and tamper with any RFID tag.

The most relevant attack to RFID systems is the so-called \emph{spoofing} or \emph{impersonation} attack. In this attack, an adversary is able to clone a tag without physically replicating it. By doing so, the adversary gains the privileges of such tag, which is considered an important security threat for almost every RFID applications. The worst situation occurs when the adversary is able to break the cryptosystem used during the authentication process (\emph{total break}), \emph{i.e.} the adversary gains knowledge of the authentication protocols and the secrets. In other cases, the adversary does not even need to spend too much time breaking the cryptographic protocol. Instead, the adversary could impersonate a tag by replaying and/or manipulating some tag's responses recorded from past transactions (\emph{forgery}). Although these attacks have been successfully
thwarted by lightweight and symmetric key cryptography suitable for low-cost RFID tags~\cite{WeisSRE-2003-spc}, there still exist open issues when privacy and scalability must be also considered.

\subsection{Privacy}

There exist two main privacy concerns in RFID systems: information leakage and traceability. Information leakage is potentially dangerous because tags may reveal sensitive information about products (\emph{e.g.} the name of drugs or the price of expensive products). Such data may be used for quick, easy, and low-cost profiling of individuals, or even for industrial espionage. The basic idea to prevent information leakage in RFID systems is to move all the tags' data to one or several servers. By doing so, only authorised parties may retrieve those data when required. However, this may not prevent traceability. For instance, a tag sending its unique identifier does not reveal trivial information about the object to which is attached, but it is traceable. To thwart traceability, readers and tags should exchange \emph{fresh} information at each identification so as to make the response of two different tags \emph{indistinguishable}.

The challenge is that indistinguishability is an application-dependent concept where the abilities of adversaries, tag's owners, physical constraints, etc, must be taken into account in order to provide a fair privacy definition for RFID systems. This is why different privacy models for RFID have been defined~\cite{Ng:2008:RPM:1462455.1462478, springerlink:10.1007/978-3-540-88313-5-18, Avoine-2005-techrep, Vaudenay:2007:PMR:1781454.1781461, Le07forward-securerfid}. Among them, we recall two well-known notions of privacy proposed by Avoine in~\cite{Avoine-2005-techrep}:

\begin{definition}[Universal untraceability]\label{def:universal}
Universal untraceability is achieved when any pair of tag's responses, separated by a successfully identification with a legitimate reader, cannot be correlated with high confidence by an adversary.
\end{definition}

\begin{definition}[Existential untraceability]\label{def:existential}
Existential untraceability is achieved when any pair of tag's responses cannot be correlated with high confidence by an adversary.
\end{definition}

Intuitively, existential untraceability is stronger than universal untraceability. Note that the latter ensures privacy against passive adversaries only. That is why protocols that achieve universal untraceability are usually referred as \emph{passively private}, whilst those protocols achieving existential untraceability are referred as \emph{actively private}~\cite{DBLP:journals/comcom/AlomairP10}.

There exist other notions of privacy in RFID systems like forward and backward untraceability. Both notions rely on the fact that RFID tags are not tamper-resistant and thus, an adversary may be able to get full access to the internal state of a tag. Informally, backward and forward untraceability ensure that revealing the internal state of a tag cannot help an adversary to identify previous or future transactions of such tag. However, both are beyond the scope of this dissertation (interested readers may refer to~\cite{Berbain:2009:EFP:1653662.1653669, springerlink:10.1007/s11277-010-0001-0, 4144831, Song:2008:RAP:1352533.1352556, Cai:2009:AIR:1514274.1514282, 4815081, springerlink:10.1007/s11277-010-0001-0, Berbain:2009:EFP:1653662.1653669, Le07forward-securerfid}).

\subsection{Why is secure, private, and scalable identification hard?}

 \raggedbottom

An RFID identification system where tags send their unique identifier in plain text to readers is scalable (\emph{e.g.} EPC Radio Frequency Identity Protocols Class-1 Generation-2 UHF	RFID~\cite{epcglobal, class_1_gen_2}). In some way, this is how the barcode systems work. However, from the security point of view, it would be quite easy to counterfeit an RFID tag just by building a device able to replay the unique identifier of this tag, which was previously eavesdropped or maybe read from the tag's embodiment. Some RFID manufacturers argue that, in any case, counterfeiting an RFID tag is much more difficult than counterfeiting a barcode label. Actually, this is true for RFID tags intended as the replacement of barcode labels. But RFID systems aimed at being an active part of the future of pervasive computing need anti-counterfeiting measures; they need entity authentication~\cite{Menezes:1996:HAC:548089}.

Public key cryptography (PKC) is known to achieve private and scalable authentication. Actually, most of the applications we use nowadays are protected by PKCs, namely secure remote login (SSH), digital signatures, internet key exchange (IKE), digital cash and secure transparent voting. That is why so many efforts have been devoted to designing and implementing asymmetric cryptosystems
such as Elliptic Curve Cryptosystem (ECC)~\cite{Batina07public-keycryptography} and N-th Degree Truncated Polynomial Ring (NTRU)~\cite{springerlink:10.1007/3-540-36563-X_9}, in RFID tags. However, they require a high number of logic gates to be implemented in tags. Therefore, the price of tags
will increase drastically to accommodate those cryptosystems in RFID tags. Even though other proposals~\cite{springerlink:10.1007/11967668_24} reduce the number of required logic gates by performing some pre-computation and storing partial results in tags, they increase the memory usage to around 1700 bits. It is an open problem whether public key cryptography will be suitable for low-cost RFID tags. Therefore, most RFID identification protocols are based on symmetric key cryptography rather than on public key cryptography.

However, symmetric key cryptography does not satisfy all the requirements of RFID systems because, in general, it
draws scalability problems on the server's side. Since tags are not tamper-resistant, each tag must contain a unique and private key with which encrypt its response. The paradox is that, in order to determine the tag's identity the server needs to decrypt the message using the tag's key, but retrieving the tag's key is only possible when the server knows the tag's identity. Consequently, the server should perform an exhaustive search looking for the proper key to decrypt the message. Several protocols overcome this scalability problem by performing an update phase after the identification process (stateful protocols)~\cite{NingLY-2011-cje, AlomairLP-2010-jcs, Li:2007:PRC:1229285.1229318, 4912773, springerlink:10.1007/11601494_13, ChatmonVB-2006-techrep, Ateniese:2005:URT:1102120.1102134, Karthikeyan:2005:RSW:1102219.1102229}. This means that the tag and the reader should share and synchronously update the next identification message. However, it has been shown~\cite{YuSMS-2009-esorics} that this synchronisation process must be carefully designed in order to resist Denial of Services attacks. On the other hand,
 the update phase is not only useless against active adversaries, but also is rather inefficient on the tag's side. Note that a write operation in a tag may take roughly $16.7$ milliseconds, while a read operation just needs
 around $0.007$ milliseconds~\cite{5678437}. Indeed, writing in the tag's memory is a time consuming operation; roughly five times more time consuming than a classical AES-128 encryption, which needs around $2.8$ milliseconds.

In conclusion, three main techniques have been proposed for the private identification of RFID tags: (i) Public key cryptosystems that are secure, private, and scalable, but not suitable for low-cost tags. (ii)  Stateful protocols which are scalable, but they are not strong against active attacks and may be less efficient than other symmetric key protocols due to the writing operation in the memory of the tags. (iii) Stateless protocols based on symmetric key cryptography that could be lightweight, private, and secure, but not scalable. As a result, designing lightweight, secure, private, yet scalable, RFID identification protocols still is a challenging task.

\subsection{Advances in RFID identification protocols}

As stated in~\cite{Chien-2007-ieeetdsc}, a tag can be classified according to the operations it supports. \emph{High-cost} tags are those that support on-board conventional cryptography like symmetric encryption and public key cryptography. In turn, \emph{simple} tags are also considered high-cost tags, but they only support random number generators and one-way hash functions. Likewise, low-cost tags can be classified as \emph{lightweight} tags or \emph{ultralightweight} tags. Both are able to compute simple bitwise operations like XOR, AND, OR, etc, but the former also support a random number generator (RNG) and simple functions like a cyclic redundancy code (CRC) checksum. Undoubtedly, low-cost and simple tags, intended as the replacement of the barcode labels, represent the greatest challenge in terms of security and privacy preservation.

\subsubsection{Identification protocols for low-cost tags}

 \raggedbottom
 
Several efforts have been made in order to achieve some level of security in low-cost tags~\cite{Wong:2006:CAR:1143147.1143152, citeulike:1141061, Burmester:2008:SEG:1788857.1788887}. One of the first proposals in this direction is due to Duc {\em et al.}~\cite{citeulike:1141061}. They designed a protocol where messages are encrypted using a CRC-16 function and randomised by an updating key process. Although this protocol is not resilient to desynchronisation attacks~\cite{Chien:2007:MAP:1222669.1222985}, its main weakness lies in the linearity of the CRC-16 function. Indeed, Burmester and Medeiros~\cite{Burmester:2008:SEG:1788857.1788887} show how to successfully implement an impersonation attack by eavesdropping only one session of the protocol. In the same article, four protocols with different levels of privacy are proposed. The first encrypts messages using the RNG defined in the EPCGlobal2 standard seeded with a key shared by the tag and the reader. The security of this protocol is based on the statistical behaviour of the RNG. However, the key is of size 16 bits; therefore, an exhaustive search on all possible $2^{16}$ key values can be enough to recover the key. On the other hand, EPCGen2 does not specify any protection of the RNG against the \emph{related-key} attack, in which it is possible to find a correlation between a sequence of outputs given by the RNG defined in the EPCGlobal2~\cite{Burmester:2008:SEG:1788857.1788887}. Therefore, an adversary could be able to disambiguate tags that respond with pseudonyms drawn from a EPCGlobal2 RNG complaint. The solution proposed in~\cite{Burmester:2008:SEG:1788857.1788887} is to build a Pseudo Random Function (PRF) from a RNG~\cite{DBLP:journals/jacm/GoldreichGM86}. The new PRF has an input size of 32 bits and it is defined by recursively executing a RNG 16 times.


Other proposals do not even consider tags generating random numbers. In such protocols, the randomness on the tag's side is provided by readers. It should be noted that, those protocols either do not provide anonymity or ensure privacy by updating tags' internal state. In 2006, the first three ultralightweight protocols were proposed: M$^{\mbox{2}}$AP~\cite{DBLP:conf/uic/Peris-LopezCER06}, EMAP~\cite{Peris-lopez06emap:an} and LMAP~\cite{PerisHER-2006-rfidsec}. Although it was a step forward on the security of low-cost tags, and many other ultralightweight protocols~\cite{Burmester:2008:SEG:1788857.1788887, Peris-Lopez:2009:AUC:1530270.1530277, Chien-2007-ieeetdsc} have been proposed so far, all of them have been proven to be insecure~\cite{YehL-2010-iasl, AvoineCM-2010-rfidsec, HernandezEPQ-2009-wcc, Phan-2008-ieeetdsc, 4159809, Li07securityanalysis}. According to Peris-Lopez {\em et al.}~\cite{Peris-Lopez:2009:AUC:1530270.1530277}, the main weakness of these protocols is the use of triangular functions like AND, XOR, etc. This problem was detected by Chien~\cite{Chien-2007-ieeetdsc} and he incorporated a left rotation operation (which is non triangular) to his proposal named SASI. Nevertheless, the SASI protocol has other weaknesses that can be found in~\cite{AvoineCM-2010-rfidsec, HernandezEPQ-2009-wcc, Phan-2008-ieeetdsc, Peris-Lopez:2009:AUC:1530270.1530277}. As a consequence, Peris-Lopez {\em et al.} designed the Gossamer protocol~\cite{Peris-Lopez:2009:AUC:1530270.1530277} aimed at being more secure than previous ultralightweight protocols, though more computationally expensive. In a similar line, Juels proposed a protocol~\cite{Juels03minimalistcryptography} where each tag has a list of one-time pads that together with the tag's keys identify the tag. The protocol is minimalist in the sense that involves only low-cost operations like: rudimentary memory management, string comparisons, and a basic XOR. However, the security of this protocol depends on the size of a list that should be updated at each session.

A completely different approach to the security of RFID systems was proposed by Juels in~\cite{springerlink:10.1007/11535218}. Juels adopts the human-to-computer authentication protocol designed
by Hopper and Blum (HB)~\cite{Hopper01asecure}, and shows it can be practical for low-cost
pervasive devices. HB is a probabilistic protocol consisting of several rounds (around 128 according to Hopper and Blum~\cite{Hopper01asecure}). In each round, the verifier sends $n$ bits of challenge and the prover responds correctly to each bit with a probability greater than $1/2$. At the end of the protocol, the verifier decides whether the prover gave a sufficient number of correct bits of response. The HB protocol can be also considered an ultralightweight protocol. Later, Juels modified slightly the HB protocol proposing a new protocol~\cite{springerlink:10.1007/11535218} (HB$^+$) that claims to be secure against active adversaries. Although HB$^+$ is suitable for EPC-Gen 2 tags, it has
a high false rejection rate (for 80 bits of security, the false rejection rate is estimated
at $44\%$~\cite{DBLP:conf/eurocrypt/GilbertRS08, cryptoeprint:2005:237}). Furthermore, the communication overheads increase linearly with the security parameter, which may be chosen around the $80$ bits~\cite{Fossorier06anovel}. Another protocol based on an NP-complete problem was proposed by Castelluccia and Soos~\cite{CastelluscciaS-2007-rfidsec} at RFIDSec'07. Their protocol (ProbIP) is based on the hardness of the boolean satisfiability problem (SAT), which is proven to be in the NP class of complexity. However, the protocol is neither private nor secure as was shown in~\cite{Ouafi:2008:PRR:1788494.1788513}.

A recent proposal \cite{Wu:2009:HSR:1574927.1574959} based on the hardness of the \emph{noisy polynomial interpolation} problem aims to be private and scalable. However, this protocol presents some shortcomings: (i) it has been shown in~\cite{Bleichenbacher:2000:NPI:1756169.1756175} that the \emph{noisy polynomial interpolation} problem can be easier than expected, (ii) the server needs to solve $mb$ polynomials of degree $k$ where $m$, $b$, and $k$, are predefined security parameters and, considering that typical values for m and b are 16 and 8 respectively \cite{Wu:2009:HSR:1574927.1574959}, then, the server needs to solve around 128 polynomials which can be considered still too heavy, (iii) tags should be designed with about 10,000 gates more than regular tags capable of hashing computation.

\subsubsection{Identification protocols for simple tags}

In general, lightweight RFID identification protocols use a pseudo random number generator and a symmetric key cryptographic primitive on the tag's side (\emph{e.g.} a hash function or some lightweight block cipher). In this vein, the first AES implementation devoted to RFID tags was proposed in 2004~\cite{FeldhoferDW-2004-ches, FeldhoferWR-2005-ieeproceedings} (other cryptographic primitives may be found in~\cite{4253019}). However, these implementations require no less than 3400 logic gates, which is beyond the capabilities of extremely constrained devices such as RFID tags~\cite{DBLP:conf/ifip6-8/Peris-LopezHER06}. That is why several other attempts have been made in order to find cryptographic primitives designed specifically for low-cost RFID tags. To the best of our knowledge, one of the most relevant block ciphers dedicated to RFID tags is PRESENT~\cite{Rolfes:2008:UIS:1430796.1430803, BogdanovKLPPRSV-2007-ches}. Surprisingly, PRESENT only requires between 1000 and 1600 logic gates depending on its variants. Undoubtedly, this improvement supports the rapid proliferation of RFID identification protocols based on symmetric key cryptography. Nevertheless, even considering that those protocols are suitable for very resource-constrained RFID tags, they should face the scalability issues inherent to these type of proposals. Next, we discuss some of those proposals according to their time complexity on the server's side.

\subsubsection{Lightweight protocols with linear time complexity}

The Improved Randomised
Hash-lock Scheme~\cite{JuelsW-2007-percom, Juels:2009:DSP:1609956.1609963} is a popular RFID identification protocol due to its strong
privacy and security properties at a low cost; it only uses a pseudo-random function
generator and a hash function on the tag's side. This protocol works as follows.
The reader sends a random nonce $r_1$ to the tag. Upon reception, the tag
generates a nonce $r_2$ and computes the response $r = h(r_1, r_2, ID)$ where $ID$
is its identifier and $h(...)$ is a one-way hash function. Finally, the reader
receives both the response $r$ and the nonce $r_2$, with which it performs an
exhaustive search on its database looking for an identifier $ID_i$ such that
$r = h(r_1, r_2, ID_i)$. Figure~\ref{fig:irhl} shows a detailed description of this protocol.
Although the improved randomised hash-lock scheme~\cite{JuelsW-2007-percom, Juels:2009:DSP:1609956.1609963} is a private,
and secure, RFID authentication protocol, it is not acceptable when a large
number of tags should be managed (\emph{e.g.} manufacturing processes). Note that this protocol performs an exhaustive search in the database in order to identify a tag. That is why several other protocols based on hash functions have been proposed in order to reduce this linear time complexity.

\begin{figure}[tb]
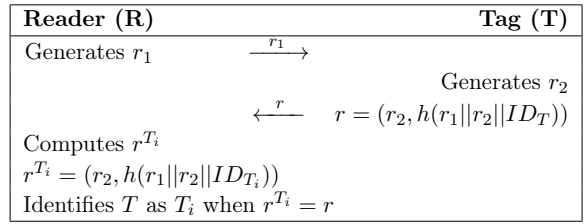


\centering
\scalebox{0.8}{
\begin{tabular}{|ccr|}
\hline
\textbf{Reader (R)}&&\textbf{Tag (T)}\\
\hline
Generates $r_1$ & $\hspace*{1cm}$ $\xrightarrow{\ \ r_1 \ \ }$&\\
& &	Generates $r_2$ \\
& $\hspace*{1cm}$ $\xleftarrow{\ \ r \ \ }$& $r=(r_2, h(r_1||r_2||ID_T))$\\
\multicolumn{3}{|l|}{Computes $r^{T_i}$}\\
\multicolumn{3}{|l|}{$r^{T_i}=(r_2, h(r_1||r_2||ID_{T_i}))$}\\
\multicolumn{3}{|l|}{Identifies $T$ as $T_i$ when $r^{T_i}=r$}\\
\hline
\end{tabular}
}\caption{Scheme of the Improved Randomised Hash-locks Protocol}\label{fig:irhl}
\end{figure}

Another hash-based protocol was proposed by Ohkubo {\em et al.}~\cite{Koutarou03cryptographicapproach}. This protocol uses hash-chains in order to guarantee forward secrecy. However, a high memory consumption
and a linear time complexity $O(N)$ are its main drawbacks. Although Avoine
{\em et al.}~\cite{DBLP:conf/sacrypt/AvoineDO05, Avoine05ascalable} reduced this time complexity to $O(N^{2/3})$ by applying a time-memory
trade-off, their improvement demands even more memory on the database than the original protocol~\cite{Koutarou03cryptographicapproach}. Another protocol based on hash-chains and resistant to denial-of-service attacks (DoS) is proposed in~\cite{4144831}. This protocol also achieves forward secrecy, but it presents some privacy issues as shown in~\cite{Ouafi:2008:PRR:1788494.1788513}.

In order to increase efficiency and reduce resource requirements, some protocols use a counter instead of a pseudo-random number generator on the tag's side~\cite{Choi05efficientrfid}. By this technique, tags may dedicate more logic gates to the encryption function and privacy could be guaranteed without updating the key material. However, this type of protocols may be vulnerable to impersonation attacks~\cite{springerlink:10.1007/11807964_27}.

\subsubsection{Lightweight protocols with logarithmic time complexity}

When looking for scalability, the tree-based protocol proposed by Molnar
and Wagner~\cite{Molnar:2004:PSL:1030083.1030112} is considered a secure and highly scalable protocol. It achieves a time complexity in the identification process of $O(b \log^N_b)$ where $N$ is the number of tags and $b$ is the branching factor of the
tree used to store the tag's identifiers. The idea is that each tag in the system is represented by a unique path in the tree, which is simply defined as a sequence of nodes. Then, each tag $T_i$ contains a unique secret key $k_i$ and also contains the keys of the nodes that represent its path in the tree (cf. Figure~\ref{fig:tree1}).

\begin{figure}[!ht]
  \begin{center}
     \includegraphics{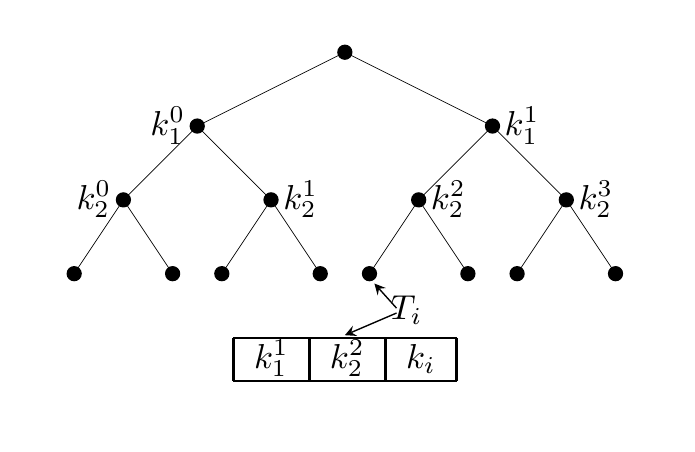}
  \end{center}
  \caption{A tree of depth 3 and branching factor 2. The tag $T_i$ is represented by the path $k_1^1, k_2^2$ and by its unique and secret key $k_i$.}
  \label{fig:tree1}
\end{figure}

Undoubtedly, the unique secret key of each tag is enough to identify it as the improved randomised hash lock~\cite{JuelsW-2007-percom, Juels:2009:DSP:1609956.1609963} does. However, the tree-based protocol uses the path's keys in order to rapidly discard large sets of tags whose keys do not match the received response. It may do so because tags allow readers to check their paths from the root to the leaf. Consequently, assuming that a reader knows a sub-path that matches the tag's response, it may discard all those tags that do not match this sub-path. However, since tags share sub-paths and hence they share keys, compromising some tags is enough to trace other tags in the system~\cite{DBLP:conf/sacrypt/AvoineDO05, Nohl06quantifyinginformation}. Although a trade-off between privacy and efficiency may be achieved considering the privacy measure proposed in~\cite{DBLP:conf/sec/NohlE08}, the tree-based protocols are considered non-private.

There exist two other tree-based protocols aimed at resisting compromise~\cite{LuLHHN-2007-percom, 10.1109/SITIS.2007.51}. However, both use an update phase in order to change the tag's keys in each successful execution of the protocol. As shown below, protocols based on updating the key material do not need complex structures, like trees, in order to be scalable. Furthermore, as shown in~\cite{10.1109/SITIS.2007.51}, synchronising keys that are shared by several tags is challenging.

\subsubsection{Lightweight protocols with sub-linear time complexity}

A similar idea to the tree-based protocols yields the group-based protocols~\cite{4351808}. In these protocols each
tag belongs to a group and has two keys: a group key ($GK$) and an identification
key ($IDK$). Once the tag receives a nonce $r_1$ from the reader, it generates another nonce $r_2$ and sends back $E_{GK}(r_1, r_2, ID)$ and $E_{IDK}(r_1, r_2)$, where $E$ is a symmetric key encryption function and $ID$ is the tag's identifier. The server iterates over all the group's keys until the decryption of $E_{GK}(r_1, r_2, ID)$ succeeds. If so, it recovers $ID$ and then it just needs to check the decryption of $E_{IDK}(r_1, r_2)$.
The time complexity of this protocol is
$O(\frac{N}{k})$ where $k$ is the number of tags in each group.
However, if a tag $T$ is compromised by an attacker, she will be able to recover the identity of
every tag belonging to the group of $T$ just by using the group key.

Another type of group-based protocol was proposed in~\cite{CheonHT-2009-eprint}. However, this protocol is different in the sense that a ``meet-in-the-middle'' strategy is used to efficiently identify tags. By this strategy, they reduce the reader computation to $O(\sqrt{N}\log{N})$, which may be considered sub-linear. However, this protocol is not resistant to compromising attacks either.

\subsubsection{Lightweight protocols with constant time complexity}

Privacy-preserving RFID identification protocols with constant time complexity are based, in general, on synchronisation between tags and readers. The idea is that any response coming from a legitimate tag is somehow expected by the reader. However, in order to preserve privacy, all the responses might be \emph{fresh} at each session. Therefore, at some point, tags should update their key material or renew their next responses. At the same time, the reader must be sure whether and how the tag updated its internal state. Otherwise, the reader should be provided with a mechanism to identify legitimate tags even when their internal states are unknown or unexpected~\cite{YuSMS-2009-esorics}. Note that this \emph{desynchronisation} between tags and readers could easily occur due to communication failures or active attacks to the protocol.

The basic idea to keep a tag and a server synchronised is to design a mutual authentication protocol so that both update the key material once they have mutually authenticated~\cite{1607559, 1276922, Avoine:2005, Chien:2007:MAP:1222669.1222985, Burmester:2008:ARA:1457161.1457162, citeulike:2632811}. In any case, the server should keep the last correct key used by every tag in order to resist desynchronisation attacks. The problem with this approach is that tags are traceable in isolated environments, \emph{i.e.} as long as a tag has not been identified by a legitimate reader, it will send the same response to any reader's interrogation. Note that this type of protocols are passively private only (cf. Definition~\ref{def:universal}).

To tackle this problem, some protocols consider the scalability and privacy issues as a matter of agreement between tags and readers~\cite{5318888, 6014260}. This means that a reader is able to identify a tag in constant time only if
the reader was the last one who interrogated such a tag. Otherwise, the tag's response looks random for the reader and thus the tag cannot be identified in constant time. Therefore, this type of protocols is actively private (cf. Definition~\ref{def:existential}), but not \emph{unconditionally} scalable.

Both passively and actively private approaches may be improved by pre-computing more than one future tag's responses~\cite{LimKwo06}. Typically, in these approaches tag's responses are based on hash chains~\cite{6014260, FernandezCV-2010-setop}. Then, the server may efficiently identify a tag because it had stored \emph{enough} values of the hash chain used by this tag. Therefore, by increasing memory requirements on the server, both privacy and scalability may be improved. However, it should be noticed that the improvement is achieved by demanding more of a resource that is already quite constrained in tags and servers~\cite{Palmer-2011}.

\subsubsection{Identification protocols for high-cost tags}

High-cost tags are similar to smart cards. They are far more expensive than low-cost RFID tags, which could cost as little as $0.05 \$$.
Nevertheless, there exist some applications requiring a high level of security and privacy in which high-cost tags are not only appropriate, but recommended (\emph{e.g.} passports or toll payment).

In such tags, there exists the possibility of implementing some public key cryptosystems, especially those requiring less computational capabilities on tags, namely Elliptic Curve Cryptosystems (ECC) or lattice-based cryptosystems.
A typical ECC implementation could need more than 30K logic gates~\cite{springerlink:10.1007/978-3-540-74442-9_30}, others are able to reduce the computational requirements between 10K and 18K logic gates~\cite{Kumar06arestandards, Hein:2009:ERR:1616707.1616744}, though.

In general, the reduction of computational requirements is achieved by optimising, manipulating, or removing some operations of the original proposal~\cite{1987}. The idea is to reduce as much as possible the computational requirements of public key cryptosystems while guaranteing security and privacy. However, as shown in~\cite{Bringer:2008:CER:1485310.1485324}, the task could be challenging. In a nutshell, they show that previous ECC proposals for RFID systems~\cite{4519370, 4911179} may be vulnerable to tracking and impersonation attacks.

Other approaches perform some pre-computation in order to reduce the computational overhead on the tag's side. In~\cite{Oren:2009:LPI:1514274.1514283}, Oren and Feldhofer improve Shamir's public key scheme~\cite{1994-2756} by replacing a
260-byte long pseudo-random sequence by a reversible
stream cipher of less than 300 bits. Another trade-off between efficiency and memory is shown by Hoffstein {\em et al.} in~\cite{springerlink:10.1007/3-540-36563-X_9}. They propose a lattice based cryptosystem referred to as NTRU, which claims to be faster than ECC during the signature and verification processes.

The well-known randomised Rabin encryption scheme has been also adapted to fit the RFID requirements~\cite{OrenF-2008-rfidsec}. Even though this protocol initially had some shortcomings, it was later improved in~\cite{4911191}. Finally, differently to classical public key cryptosystems, a lightweight identification protocol requiring around 3000 logic gates was proposed in~\cite{Kim:2007:RSP:1345530.1345612}. This protocol uses some ECC elements to strengthen RFID security. However, in order to decrease computational demands it does not provide a trapdoor function as ECC cryptosystems generally do. Hence, this approach presents the same scalability problem of other symmetric key cryptosystems.

\subsubsection{Other approaches}

According to the EPC standard, each tag contains a KILL password. A reader knowing the KILL password of a tag is able to disable this tag permanently. Therefore, after the shipping check-out process tags could be ``killed'' in order to preserve the privacy of their holders. Although this measure prevents privacy disclosure, it may not be practical in the long-term because tags cannot be reused. That is why Spiekermann and Berthold proposed a simple scheme so that users are able to disable/enable RFID tags when needed~\cite{springerlink:10.1007/0-387-23462-4_15}. A more sophisticated, yet complex idea, is proposed in~\cite{Engberg04zero-knowledgedevice}. When a tag enters the post-purchase phase, it supports the ability to change into \emph{privacy mode}. In this mode, tags only accept zero-knowledge proofs from legitimate devices.

There exist other proposals relying on re-encryption in which tags offload most of the computational effort during encryption to the readers or third devices~\cite{Saito04enhancingprivacy, Juels02squealingeuros:, Golle02universalre-encryption}. Those proposals are somehow based on updating the key material in tags. However, they use public key cryptography to re-encrypt the plaintext stored in tags. Therefore, readers knowing the proper private key can obtain the plaintext by decrypting only once, instead of several times as it is usually the case with symmetric key cryptography. Other examples of proposals using a third device are RFID guardian~\cite{Rieback05rfidguardian:}, RFID enhancer proxy~\cite{juels:proxies}, noisy tags~\cite{Castelluccia06noisytags:} and the blocker tag~\cite{Juels03theblocker, JuelsB-2004-wpes}. In addition to those proposals, there exist other proposals based on assumptions not considered by most RFID solutions. In particular, implementation of physically unclonable functions (PUF) in tags have been shown to be useful to cope with the tampering issues of RFID tags~\cite{Tuyls06rfid-tagsfor, Bolotnyy:2007:PUF:1263542.1263714, Bringer:2008:IPT:1432967.1432976}.

\section{Other issues in RFID systems}

Tracking and identifying are the main goals of an RFID system. In consequence, those challenges related to the identification process may seem much more relevant than others. However, RFID systems should face many other challenges depending on their application. For instance, RFID solutions for access control require tags to be in the near proximity of readers. However, the RFID technology is not able to measure the distance from readers to tags as the GPS technology can do. This clearly opens the challenge of designing \emph{distance-bounding protocols} dedicated to RFID tags~\cite{HanckeK-2005-securecomm}. Furthermore, the proper use of RFID data still is an open issue. Thanks to the RFID technology, the trajectories of individuals could be easily collected and released by supermarkets, hospitals, or amusement parks. Therefore, efficient \emph{trajectory anonymisation algorithms} are a need for protecting the privacy of RFID users.

\subsection{Distance checking}

In 1987, Desmedt, Goutier and Bengio~\cite{DBLP:conf/crypto/DesmedtGB87} presented an attack that defeated any authentication protocol. In this attack, called \emph{Mafia Fraud}, the adversary passes through the authentication process by simply relaying the messages between a legitimate reader (the verifier) and a legitimate tag (the prover). In that way, she does not need to modify or decrypt any exchanged data. Initially, this attack was thought to be rather unrealistic because the prover should actively participate in it. However, RFID tags respond to any reader request without any agreement or awareness of their bearer, a feature that clearly opens the door to this type of attack.

Actually, there exist some proofs of concept showing the feasibility of the mafia fraud. In 2005, Hancke showed that two colluders 50 meters apart can perform a mafia fraud attack through a radio channel~\cite{Hancke05apractical}. This is particulary dangerous because that distance is long enough to mount a mafia fraud attack in almost every payment or access control systems. Not surprisingly, this attack has been successfully applied to other technologies~\cite{Francis:2010:PNP:1926325.1926331, Hancke:2006:PAP:1130235.1130384, Kfir:2005:PVP:1128018.1128470, DBLP:conf/iscis/LeviCAKC04} namely, Bluetooth, contactless smart card, and NFC.

Another attack based on cheating the distance between provers and verifiers was introduced in 1993 by Brands and Chaum~\cite{188361}. In this attack, named \emph{Distance Fraud}, a dishonest prover claims to be closer to the verifier than she really is. Figures~\ref{fig:mafia_fraud} and~\ref{fig:distance_fraud} illustrate both mafia and distance fraud respectively. For both figures, the circle represented the maximum distance at which a prover should be authenticated. Formally, we may define both frauds as follows~\cite{AvoineBKLM-2011-jcs}:

\begin{definition}[Mafia fraud]
A mafia fraud is an attack where an adversary passes an authentication protocol by using a man-in-the-middle strategy between the reader and an honest tag located outside the neighbourhood of the verifier.
\end{definition}

\begin{definition}[Distance fraud]
A distance fraud is an attack where a dishonest and lonely prover claims to be in the neighbourhood of the verifier when actually she is not.
\end{definition}

\begin{figure}[!ht]
\begin{tabular}{cc}
\begin{minipage}[c]{0.5\linewidth}
\centering
\includegraphics{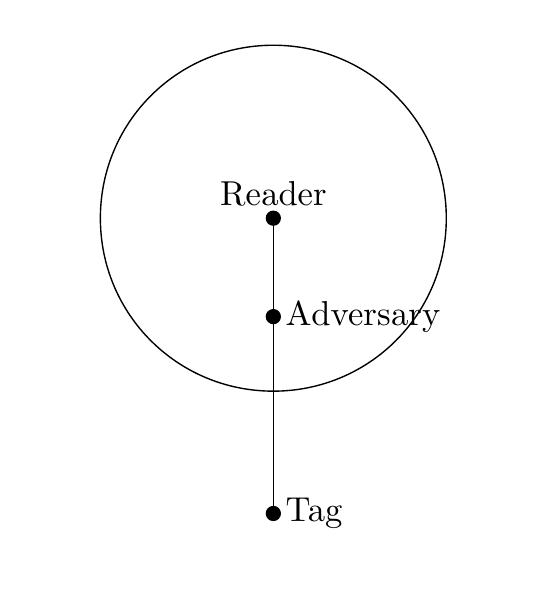}
\caption{Mafia fraud: an adversary trying to be authenticated by applying a man-in-the-middle attack.}
\label{fig:mafia_fraud}
\end{minipage}
&
\begin{minipage}[c]{0.5\linewidth}
\centering
\includegraphics{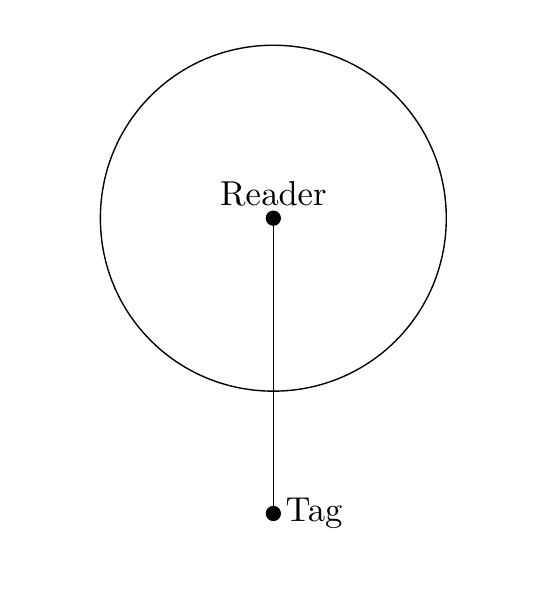}
\caption{Distance fraud: a legitimate prover is farther from the verifier than it is expected.}
\label{fig:distance_fraud}
\end{minipage}
\end{tabular}
\end{figure}

\subsubsection{RFID distance-bounding protocols}

In 1993, Brands and Chaum~\cite{188361} proposed a countermeasure that prevents such attack by computing an upper bound of the distance between the reader and the tag to authenticate: the \textit{distance-bounding protocol}. By doing so, mafia and distance frauds could not be completely prevented, but these protocols may effectively thwart them. However, it was not until 2005 that the first distance-bounding protocol dedicated to RFID came to the light~\cite{HanckeK-2005-securecomm}. The protocol is split in two phases: a \emph{slow phase}, in which reader and tag exchange two nonces, and carry on resource-consuming operations; followed by a \emph{fast phase} divided into $n$ rounds where, in each one, the reader measures the \emph{round trip time} (RTT) of the challenge/response process. Considering that radio waves cannot propagate faster than light, the reader is able to bound the distance between itself and the tag. These communications also provide an identity proof of the tag. Unfortunately, the adversary success probability regarding mafia and distance frauds is $(3/4)^n$ while one may expect
$(1/2)^n$ (the adversary's success probability at each round is expected to be $\frac{1}{2}$). Since then, several RFID distance-bounding protocols have been proposed in order to improve the resistance to both frauds.

Among all the RFID distance-bounding protocols, we differentiate two main families. (i) Those requiring an additional slow phase after the fast phase. This final phase may be used to sign the messages transmitted during the fast phase or to check any other information. (ii) Those that, closer to the Hancke and Kuhn proposal~\cite{HanckeK-2005-securecomm}, end the protocol after the fast phase.

Brands and Chaum~\cite{188361} proposed the first distance-bounding protocol relying on a signature after the fast phase. In the first slow phase, the prover commits to the verifier a sequence of $n$ bits $m_1, \cdots, m_n$. Then, during the fast phase, the verifier sends a challenge $c_i$ to the prover, who should reply with $r_i = c_i \oplus m_i$. Also, the prover concatenates and signs with his private key all the challenges and responses, \emph{i.e.} he sends to the verifier $Sign_{k}(c_1 || r_1 || \cdots || c_n || r_n)$. If some response $r_i$ delays more than a threshold $\Delta_t$, the verifier assumes the prover is out its neighbourhood. Finally, if the prover succeeds in all rounds, the verifier checks the received signature in order to authenticate the prover. This protocol is considered strong in the sense that both mafia and distance fraud attacks cannot succeed with probability higher than $\left ( \frac{1}{2} \right)^n$. 

There exist other distance-bounding protocols based on the Brands and Chaum proposal~\cite{KimAKSP-2008-icisc, springerlink:10.1007/0-387-25660-1_15}. Since both distance and mafia fraud resistance cannot be improved, those protocols aim at improving the resistance to a new type of fraud called \emph{terrorist fraud}~\cite{AvoineBKLM-2011-jcs} (cf. Definition~\ref{def:terrorist}), this type of fraud is out of the scope of this dissertation, though.

\begin{definition}[Terrorist fraud]\label{def:terrorist}
A terrorist fraud is an attack where an adversary defeats a distance bounding protocol using a man-in-the-middle strategy between the reader and a dishonest tag located outside of the neighbourhood, such that the latter actively helps the adversary to maximise her attack success probability, without giving her any advantage for future attacks.
\end{definition}

In practice, the final signature represents an additional delay. Besides, according to~\cite{AvoineT-2009-isc}, as the authentication entirely relies on this phase, if the latter is interrupted or not reached, then the whole process is lost. This means that secure distance-bounding protocols not requiring a final signature are preferred.

Among the protocols without a final signature, Avoine and Tchamkerten's protocol~\cite{AvoineT-2009-isc} is the most resilient to mafia and distance frauds. They introduced the notion of distance-bounding protocols based on trees. The idea is that prover and verifier agree on a decision tree of depth $n$, which contains in its nodes the correct responses for any sequence of challenges $c_1, \cdots, c_i$ ($1 \leq i \leq n$). Since the values of the nodes are randomly chosen at the beginning of the protocol, the probability that two different sequences of challenges $c_1, \cdots, c_i$ and $\tilde{c}_1, \cdots, \tilde{c}_i$ contain the same response is $\frac{1}{2}$. Intuitively, this property dramatically reduces the mafia and distance fraud success probability. However, storing a tree of depth $n$ is prohibitive for most RFID tags.

In comparison with Avoine and Tchamkerten's protocol~\cite{AvoineT-2009-isc}, just Kim and Avoine's~\cite{KimA-2009-cans, KimA-2011-ieeetwc} protocol achieves such a high resistance to mafia fraud. Furthermore, their protocol only requires $4n$ bits of memory on the tag's side where $n$ is the number of rounds. In this protocol, the prover is armed with a mechanism to detect whether it is the target of a mafia fraud attack. Then, once the prover detects the attack, it responds randomly to the subsequent rounds. Therefore, the probability of success of the adversary considerably decreases. However, the more efficient the mechanism, the weaker the protocol against a distance fraud attack. In consequence, the Kim and Avoine's protocol~\cite{KimA-2009-cans, KimA-2011-ieeetwc} might not be appropriate when both mafia and distance frauds need to be thwarted.


\subsection{Trajectory anonymisation}

The location of an individual can be determined by different techniques. Possibly, the most conventional and ancestral of these techniques is the visual identification of that individual in some place at some moment. Nowadays, this task is far easier since there is no need for a person monitoring or harassing another person. Instead, several technologies widely adopted worldwide can perform this task for us automatically (\emph{e.g.} surveillance cameras, credit card transactions, RFID identification, among others). In addition, today's pervasiveness of location-aware devices like mobile phones and GPS receivers helps companies and governments to easily collect huge amount of information about the movements of people.

Analysing and mining this type of information, also known as trajectories or spatio-temporal data, might reveal new trends and previously unknown knowledge to be used in traffic, sustainable mobility management, urban planning, supply chain management, etc. By doing so, resources can be optimised and business and government decisions can be solid and well-founded. As a result, it is considered that both companies and citizens profit directly from the publication and analysis of databases of trajectories. However, there are obvious threats to the privacy of individuals if their
trajectories are published in a way which allows re-identification of the
individual behind a trajectory.

A tentative solution to preserve individuals' privacy is de-identification, that is, to remove all the identifying attributes of individuals. However, this is often insufficient to preserve individuals' privacy. Another set of attributes, known as \emph{quasi-identifiers}, together with external information, can be used to re-identify the individual behind a record. For instance, it has been shown that the tuple $\{$zip-code, gender, and date of birthday$\}$ is unique for the $87\%$ of the population of United States~\cite{sweeney02a}. As an example in the context of spatio-temporal databases, let us consider a GPS application recording the trajectories of some people. Daily routine indicates that a user's trajectory in the morning is likely to begin at home and end at her work place. This information can be easily linked to a single user, whose identity might be obtained from an external source of information like telephone directories or social networks.

Estimating how much external information is available to an adversary is a challenging task~\cite{kaplan10}. Furthermore, the time information and its relation with the spatial information gives a distinctive nature to the spatio-temporal data over the microdata, \emph{i.e.} over records describing users' data without a sequential order. That is why traditional anonymisation and sanitisation methods for microdata~\cite{fung10} are not suitable for spatio-temporal data and viceversa. Therefore, specific anonymisation algorithms devoted to thwarting privacy attacks on published databases of trajectories are increasingly needed.

\subsubsection{$k$-Anonymity and $\ell$-diversity}
\label{micro}


A lot of work has been done in anonymising microdata
and relational/transactional
databases~\cite{samarati98,sweeney02a,truta06,machanavajjhala06,wong06,li07,nergiz07,domingo08,Zhang07aggregatequery, springerlink:10.1007/978-3-642-25237-2_8, Sun20122211, DBLP:journals/corr/abs-1101-2604}; see also the recent survey~\cite{fung10}.
A usual goal in anonymisation is to achieve
$k$-anonymity~\cite{samarati98,sweeney02a}, which is the ``safety in
numbers'' notion.

An anonymised  microdata set is said to satisfy $k$-anonymity if each
combination of quasi-identifier attribute values is shared
by at least $k$ records. Therefore,
this property guarantees that an adversary is unable
to identify the individual to whom an anonymised record
corresponds with probability
higher than $1/k$.

Another useful privacy notion is $\ell$-diversity~\cite{machanavajjhala06}, which improves $k$-anonymity by diversifying the sensitive attributes values of each group of records that can be isolated by an attacker. This privacy notion is motivated by the fact that even when an adversary cannot identify the individual's record among a set of $k$ records with probability greater than $1/k$, she could easily retrieve the individual's sensitive values with high level of confidence, \emph{e.g. } if the $k$ records have the same sensitive attributes values. In~\cite{machanavajjhala06, li07}, different considerations regarding the $\ell$-diversity privacy notion can be found.

\subsubsection{Microaggregation}

$k$-Anonymity cannot be directly achieved with
spatio-temporal data, because any point or time can be regarded as
a quasi-identifier attribute~\cite{abul08}.
Direct $k$-anonymisation would require
a set of original trajectories to be transformed into a set of
anonymised trajectories such that each of the latter
is identical to at least $k-1$ other anonymised trajectories.
This would obviously cause a huge information loss.

Generalisation was the computational approach
originally proposed to achieve $k$-anonymity~\cite{samarati98,sweeney02a}.
Later, Zhang {\em et al.} introduced
the permutation-based approach~\cite{Zhang07aggregatequery},
that has the advantage of not being constrained by domain
generalisation hierarchies. In~\cite{domingo05} it was shown
that $k$-anonymity could also
be achieved through microaggregation of quasi-identifiers.
Microaggregation~\cite{domingo02} works in two stages:
\begin{enumerate}
\item {\em Clustering}. The original records are
partitioned into clusters based
on some similarity measure (some kind of distance)
among the records with the restriction that each
cluster must contain at least $k$ records.
Several microaggregation heuristics are available
in the literature, some yielding fixed-size clusters
all of size $k$, except perhaps one ({\em e.g.} the
MDAV heuristic~\cite{domingo05}), and some yielding
variable-size clusters, of sizes between $k$ and $2k-1$
({\em e.g.} $\mu$-Approx~\cite{muapprox} or V-MDAV~\cite{DBLP:conf/ijcnn/SolanasGM10}).
\item {\em Anonymisation}. Each cluster is anonymised individually.
Anonymisation of a cluster may be based on an aggregation
operator like the average~\cite{domingo02} or the median~\cite{domingo05},
which is used to compute the cluster centroid; each record in the
cluster is then replaced by the cluster centroid.
Anonymisation of a cluster
can also be achieved by replacing the records in the cluster with synthetic
or partially synthetic data; this is called hybrid data
microaggregation~\cite{domingo10a} or condensation~\cite{aggarwal04}.
\end{enumerate}

\subsubsection{Clustering algorithms for trajectories}

Just like in microdata records, suppressing direct identifiers
from trajectories is not enough for privacy~\cite{kaplan08}.
Consequently, several anonymity notions and methods
for trajectories have been
proposed~\cite{hoh05,gruteser05,hoh07,bonchi08,pensa08,bonchi09,abul08,nergiz08,terrovitis08,mohammed09,nergiz09,yarovoy09,monreale10,abul10,hoh10,5374402, DBLP:journals/tdp/MonrealeTPRB11}.
Among those works, we next review the ones that try to achieve some notions of trajectory $k$-anonymity.
Other comparisons of several trajectory anonymisation
methods can be found in~\cite{bonchi09,abul10}.

A naive approach for achieving $k$-anonymity is by suppression of attribute values, which is generally used on categorical nominal data where perturbation methods are not well suited. One of the first suppression-based methods for trajectory anonymisation is due to Terrovitis and Mamoulis~\cite{10.1109/MDM.2008.29}. They consider trajectories to be sequences of addresses taken from an address domain $\mathcal{P}$ and adversaries controlling subsets of addresses of $\mathcal{P}$. Thus, the adversary's knowledge can be represented as a database of projections of original trajectories over the addresses in $\mathcal{P}$ that she controls. Then, they propose a greedy algorithm aimed at guaranteeing that no address unknown by the adversary can be linked with any user with probability higher that some threshold. The main problem with this approach is that dealing with all possible adversary's knowledge causes an anonymisation problem harder than the simpler $k$-anonymity problem in relational databases, which is already known to be NP-Hard~\cite{Meyerson:2004:COK:1055558.1055591}.

Abul, Bonchi and Nanni proposed a notion of trajectory $k$-anonymity assuming uncertainty on the data provided by technologies like GPS~\cite{abul08,abul10}. They also proposed two methods to achieve privacy according to their notion of privacy. In the
original method --Never Walk Alone (NWA)~\cite{abul08}--,
the set of trajectories is partitioned
into disjoint subsets in which trajectories begin and end at roughly
the same time; then trajectories
within each set are clustered using the Euclidean distance. In the follow-up
method --Wait For Me (W4M)~\cite{abul10}--, the original trajectories are
clustered using the edit distance on real sequences (EDR)~\cite{chen05}.
Both approaches proceed by anonymising each cluster separately. Two
trajectories $T_1$ and $T_2$ are said to be co-localised
with respect to $\delta$ in
a certain time interval $[t_1,t_n]$ if for each triple $(t, x_1,y_1)$
in $T_1$ and each triple $(t, x_2,y_2)$ in $T_2$
with $t \in [t_1,t_n]$, it holds that
the spatial Euclidean distance between both triples
is not greater than $\delta$.
Anonymity in this context
means that each trajectory is co-localised with at least $k-1$ other
trajectories ($(k,\delta)$-anonymity).
Anonymisation is achieved by spatial translation of trajectories inside a
cluster of at least $k$ trajectories having the same time span. In the
special case when $\delta=0$, the method produces one centroid/average
trajectory that represents all trajectories in the cluster.
{\em Ad hoc} preprocessing and outlier removal facilitate the process.
Utility is evaluated in terms of trajectory distortion and impact
on the results of range queries.
The problem with the NWA method is that partitioning the set of all
trajectories into subsets sharing the same time span may produce too
many subsets with too few trajectories inside each of them;
clearly, a subset with less than $k$ trajectories cannot be
$k$-anonymised. Also, setting a value for $\delta$ may be
awkward in many applications ({\em e.g.}
trajectories recorded using RFID technology).

Another $k$-anonymity based notion for trajectories consisting of ranges of
points and ranges of times has been proposed in~\cite{nergiz08}
and~\cite{nergiz09}. It uses clustering to minimise the ``log cost
metric'', which measures the spatial and temporal translations
with user-provided weights.
Minimising the log cost therefore maximises utility. The
clusters are anonymised by matching points of the trajectories and
generalising them into minimum bounding boxes. Unmatched points are
suppressed and so are some trajectories. The anonymised data are not
released; instead, synthetic ``atomic'' trajectories (having
unit x-range, y-range and time range) are
generated by sampling the bounding boxes. This approach
does not release standard trajectories but
only trajectories with unit ranges.

In~\cite{monreale10}, $k$-anonymity means that an original
trajectory $T$ is
generalised into a trajectory $g(T)$ (without the
time information) in such a way that $g(T)$ is
a sub-trajectory of the generalisations
of at least $k-1$ other original trajectories.
Ignoring the time information during anonymisation and complex plane
tessellations used to achieve the $k$-anonymity are the main drawbacks of
this method. Utility is measured by comparing clustering results.

\cite{hu10} is another proposal for achieving $k$-anonymity
of trajectories by means of generalisation. The difference
lies in the way generalisation is performed: the authors propose
a technique called \emph{local enlargement}, which guarantees that
user locations are enlarged just enough to reach $k$-anonymity,
which improves utility of the anonymised trajectories.

The adapted $k$-anonymity notion for trajectories in~\cite{yarovoy09}
is stated in terms of a bipartite attack
graph relating original and anonymised trajectories such that
the graph is symmetric and the degree of each vertex representing
an anonymised trajectory is at least $k$.
The quasi-identifiers used to define identities are the times of the positions
in a trajectory, and the anonymity is achieved by generalising points of
trajectories into areas on the grid. An
information loss metric defined for
such areas is used to evaluate the utility of the anonymised data.

Some approaches assume that the data owner anonymising the database knows
exactly what the adversary's knowledge is. If the adversary is assumed to
know different parts of the trajectories, then those are removed from the
published data~\cite{terrovitis08}. However, this work only considers
sequential place visitation without real time-stamps. If the adversary is
assumed to use some prediction of continuation of a trajectory based on
previous path and speed, then uncertainty-aware path
cloaking~\cite{hoh07,hoh10} can suppress these trajectories; this
procedure, however, results in high information loss.

Additional related work about anonymisation of spatio-temporal data can be
found in the literature about location privacy, focused on applications
such as privacy-aware location-based services (LBS) or
privacy-aware monitoring
of continuously moving objects. Location privacy
in the LBS-setting was first proposed in~\cite{gruteser03}.
See~\cite{palanisamy11,hu10tkde} for recent papers on location privacy,
in which mobile objects protect the privacy of
their continuous movement. Location
privacy is enforced on individual sensitive locations or unlinked
locations in an on-line mode; often, data are anonymised on a per-request
basis and in the context of obtaining a location-based service.
In this dissertation, we focus on
off-line publishing of whole spatio-temporal databases rather than
protecting specific individuals from LBS providers
or on-line movement monitoring. In general, a solution to location privacy is
not a solution for publishing anonymised trajectories, and vice versa.

\chapter{Improving Scalability by Means of Distributed Readers}

\label{chap:4}

\emph{This chapter describes our first RFID identification
proposal, which is based on collaborative readers. Its aim is to
improve flexibility and efficiency.}

\minitoc

The idea of making tags and/or readers collaborate has been proposed
and tested. With regard to tags, in \cite{BohnEUSAI2004} and \cite{LimIEA2006}
a distribution of tags is used to guide mobile robots equipped with RFID
readers and perform precise indoor positioning, respectively. Also,
in~\cite{Conti2011} tags cooperate in order to detect when and for how long
a tag has been tampered with. With regard to readers, to improve the
scalability of hash-based solutions without increasing the number
of rounds of the protocol, Solanas {\em et al.} proposed an approach that used
collaborative readers deployed in a grid structure \cite{SolanasDMD-2007-cn}.
Instead of having a centralised database with all the tag IDs, each reader
maintains a local database (\emph{e.g.} in a local cache) in which it stores the IDs of
the tags located in its cover area and the ones in its adjacent neighbours'
area. By doing so, readers no longer need to check all possible IDs to identify
a tag but only a smaller subset of IDs in their local cache. Although the
proposal in \cite{SolanasDMD-2007-cn} is a step forward in terms of scalability,
it replicates too many tag IDs and imposes several constraints to the system
(\emph{e.g.} readers must know the exact distance to the tags and the reader
distribution is very rigid). In \cite{1364147}, Ahamed, Rahman and Hoque
modified the proposal of Solanas {\em et al.} and proposed a more natural neighbourhood
structure using a hexagonal grid (cf. Figure~\ref{fig:grid}). Note that this
solution reduces the number of neighbours from nine (in the squared grid) to six
(in the hexagonal grid). However, this proposal has the same limitations of
\cite{SolanasDMD-2007-cn}.

The idea of distributing tags amongst a number of readers placed in a grid
or in hexagonal cells might resemble the antenna structure of the well-known GSM
system for mobile communications. In fact, readers store information about tags
similarly to what visitor location registers (VLR) do with cell phones in GSM.
However, there are some fundamental differences that make this problem different in the RFID context:
\begin{itemize}
\item In GSM, cell phones are active and they are responsible for the registration
of their ID in the VLR.
\item Visitor location registers (generally) do not exchange information amongst
them. They mainly communicate with a centralised database known as the
home location register (HLR).
\item A centralised database such as the HLR might not exist.
\end{itemize}

\begin{figure}[!ht]
\centering
\scalebox{0.75}{
\includegraphics{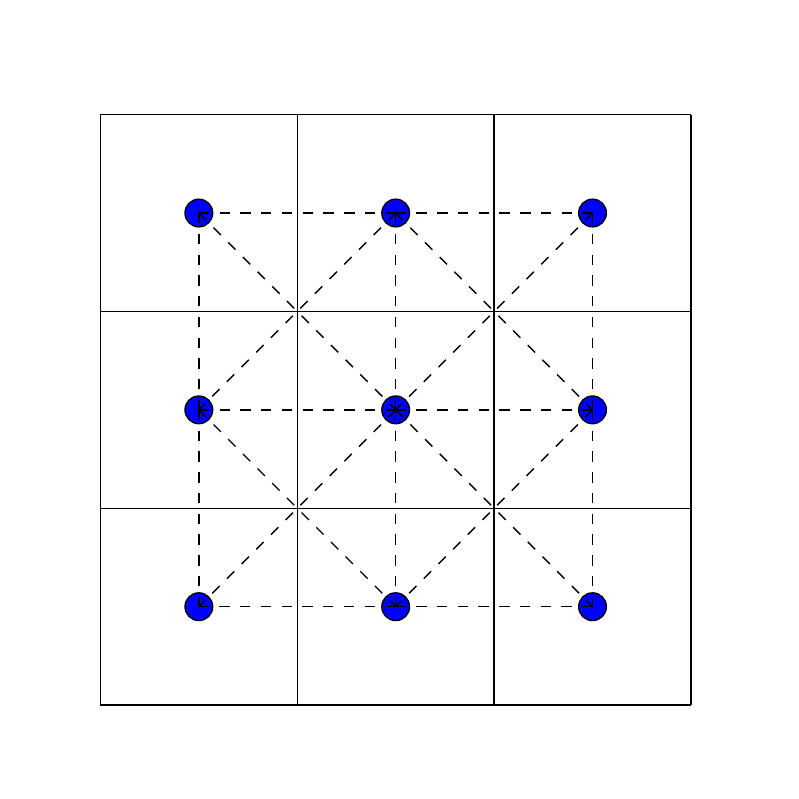}
}
\scalebox{0.6}{
\includegraphics{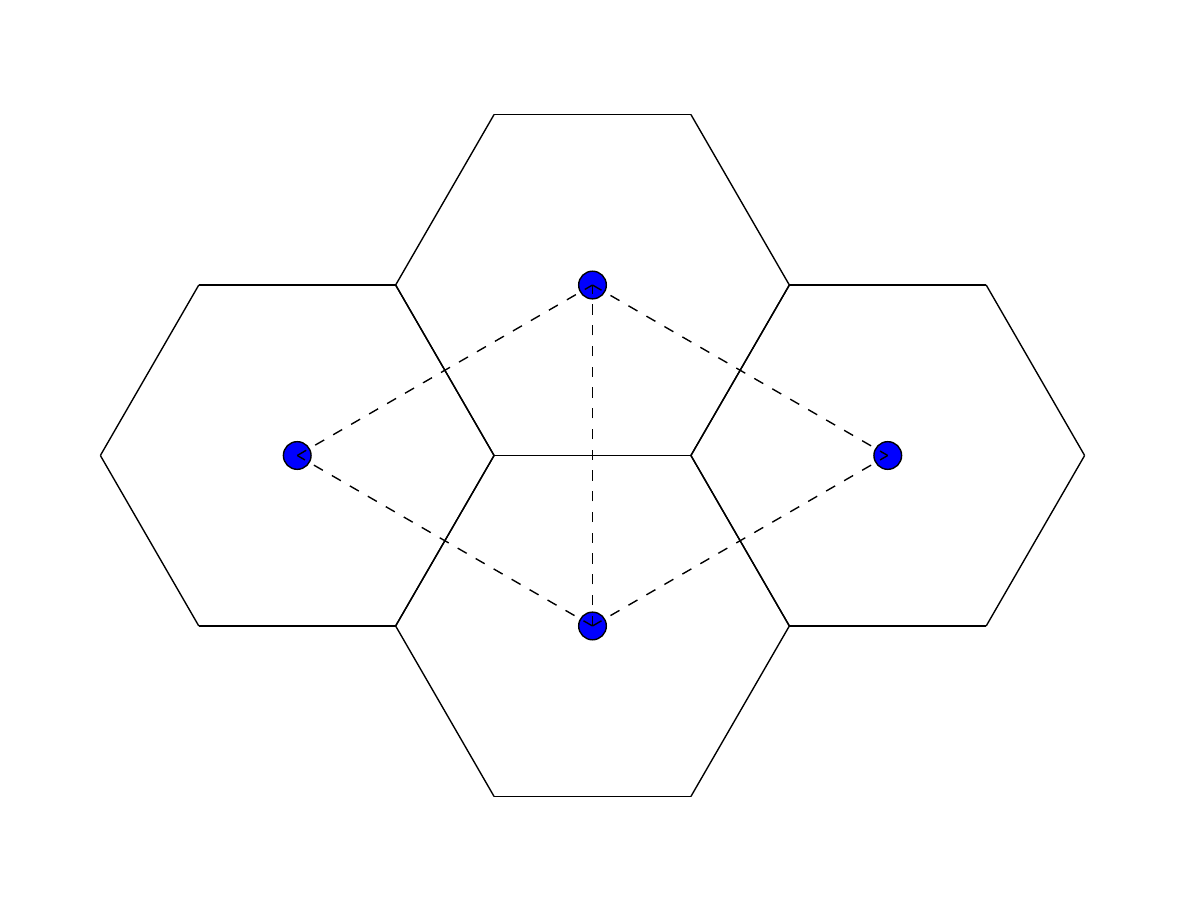}
}
\caption{\textbf{Left:} Scheme of nine collaborative readers using a
squared grid neighbourhood~\cite{SolanasDMD-2007-cn}. \textbf{Right:} Scheme of
four readers using a hexagonal grid neighbourhood~\cite{1364147}. Dashed lines
represent neighbourhood relations amongst readers.}
\label{fig:grid}
\end{figure}

\section{An efficient RFID identification protocol by means of collaborative readers}

Hash-based identification protocols for RFID tags have shown to be private and
secure but they require a significant computational effort on the readers' side
that is generally overcome by using a centralised mainframe, which can lead to
bottlenecks and delays. Specifically, the number of operations performed by
the mainframe to identify a single tag is a function of the number of tags
$(n)$ in the system, \emph{i.e.} $f(n)$.

An alternative to the centralised solution is the collaborative approach that
was first described in~\cite{SolanasDMD-2007-cn}, whose main idea is to
distribute the list of tag IDs amongst all the readers in the system and allow
them to identify tags within their cover range without contacting a central
mainframe. The solution proposed by Solanas {\em et al.} improves the scalability of
the system with regard to the centralised solution. Ideally, if we consider a
number of readers $(m)$ and a number of tags $(n)$, the number of operations
that must be performed by a reader to identify a tag is a function of
$(\frac{n}{m})$, \emph{i.e.} $f(\frac{n}{m})$. Unfortunately, the protocol proposed by
Solanas {\em et al.} requires
the readers to store the IDs of the tags controlled by
neighbour readers and this leads to a significant increase of redundant IDs. If we
assume that the redundancy can be expressed by a factor $k$, the number of
operations that a reader performs to identify a tag using the protocol described
in~\cite{SolanasDMD-2007-cn} is $f(\frac{k\times n}{m})$, where
\[
 f(\frac{n}{m}) < f(\frac{k\times n}{m}) < f(n)
\]

Our protocol leverages the idea of collaboration from~\cite{SolanasDMD-2007-cn},
but implements a new set of messages that permit the reduction of redundant
information. Ideally, we want $k\rightarrow 1$. To do so, thanks to our protocol,
readers can be initialised with a parameter $p \in [0,1]$ that represents the
probability for a reader to store tag IDs from its neighbours. Note that when
$p=0$, the number of redundant IDs is zero and we reach the optimal situation
where the number of operations required to identify a tag is $f(\frac{n}{m})$.

In addition, network designers/engineers can balance the reader's computational
cost and its bandwidth usage by tuning $p$. The smaller $p$ the lower
the number of operations, but the bandwidth requirements are higher.

\subsection{Brief recap of the \textit{Original} protocol}
The protocol described in~\cite{SolanasDMD-2007-cn}, that we call \textit{original}, was
designed to allow multiple readers to collaborate in order to exchange information
about tags so as to improve the scalability of the improved randomised hash-locks (IRHL)
identification procedure.

In the \textit{original} protocol, each reader was responsible for a squared cell and
they were all distributed in a grid structure. Note that, using this distribution,
the areas covered by each reader were disjoint and, by construction, a tag in a given
location could only be queried by a single reader (this is an important difference
with regard to the protocol described in this chapter).

In the \textit{original} protocol three main procedures/subprotocols were described:
\begin{enumerate}
	\item Tag arrival protocol: This protocol is applied when a tag enters the system
	through a System Access Point or SAP. A reader controlling this SAP identifies
	the tag using IRHL and communicates to \textbf{all} its neighbours the ID of
	that tag. Then if that tag moves to any of the cells controlled by these
	neighbours, they will be able to identify it.
	\item Roaming protocol: This protocol is used when a tag changes its location from
	a cell controlled by a reader to another cell. In this case, the reader controlling
	the destination cell informs all its neighbours that he is the new owner of the
	tag and forwards the ID information of the tag to all its neighbours. Also, the
	previous owner sends a message to its neighbours so as to inform that it is no
	longer the owner of the tag.
	\item Departure protocol: This protocol is used when a tag leaves the system. In
	this case a reader controlling a System Exit Point (SEP) simply forwards to its
	neighbours the message of deleting that tag from their caches.
\end{enumerate}

\subsection{Assumptions and definitions}
\label{subsec:assumptions}
In our proposal, instead of using the concept of \textit{unshared cover area}, as
described in~\cite{SolanasDMD-2007-cn}, we use the more general concept of
\emph{shared cover area}.

\begin{definition} [Unshared Cover Area ($A^u$)] \label{def:cover_area}
The unshared cover area of a reader $R$ is the set of locations controlled by
$R$ from which tags can communicate \textbf{only} with $R$.
\end{definition}
\begin{definition} [Shared Cover Area ($A^s$)] \label{def:cover_range}
The shared cover area of a reader $R$ is the set of locations from which tags
in the system can communicate with $R$ and possibly with other readers.
\end{definition}
From these definitions it can be derived that given two shared cover areas
$A^s_i$ and $A^s_j$, $A^s_i \cap A^s_j$ might be different from the
$\emptyset$, whilst given two unshared cover areas $A^u_i$ and $A^u_j$,
$A^u_i \cap A^u_j$ is always $\emptyset$.
Although this property of the unshared cover areas might be theoretically
useful, it is extremely hard to realise it in practise. Thus, from now on,
when we use the term \textit{cover area} we will refer to the more realistic
concept of \textit{shared cover area} described in
Definition~\ref{def:cover_range} and, for the sake of clarity, we avoid using
the superscript $s$.

Let $A_i$ be the cover area of a reader $R_i$ and let $A$ be the area covered by
all the readers in the system. We assume that
$A \subseteq {\bigcup^i} {A_i}, \forall{i}$.

Considering our definition of shared
cover area, we define the neighbourhood relation as follows:
\begin{definition}[Neighbourhood relation]
Two readers $R_i$ and $R_j$ are neighbours if their cover areas
$A_i$ and $A_j$ are not disjoint, \emph{i.e.} $A_i \cap A_j \neq \emptyset$.
\end{definition}

Our notions of cover area and neighbourhood are more flexible and
realistic than those proposed in \cite{SolanasDMD-2007-cn} and \cite{1364147}.
Also, they lead to a simple criterion for connecting readers, \emph{i.e.} only
neighbour readers will share a communication link to exchange protocol messages.
We assume that each reader in the system is connected to its neighbours (\emph{e.g.}
using  WLAN + SSL) and maintains a local database with a list of pairs $(ID_T,
ID_R)$, where $ID_T$ is the identifier of a given tag and $ID_R$ is the
identifier of the reader. We also assume that each tag is controlled by a
single reader, which is its owner.

Note that by using the notion of shared cover areas the tags moving in a region
shared by two readers are controlled by only one of them. On the contrary, if
unshared cover areas are used, a tag moving from one unshared cover area to
another leads to the change of owner from one reader to another. In
Figure~\ref{fig:grid2}, an example of this behaviour is shown. If we use shared
cover areas, the tag $T$ is controlled by $R2$ throughout its way. However, if we
consider the notion of unshared cover area, the tag $T$ is controlled by $R2$ at
locations $(1), (3)$ and $(5)$; and it is controlled by $R1$ at locations $(2)$
and $(4)$. This unnecessary change of ownership requires communication
between readers and increases the bandwidth usage. Consequently, using shared
cover areas may decrease the utilised bandwidth.

\begin{figure}[!ht]
\small
\centering
\scalebox{1}{
\includegraphics{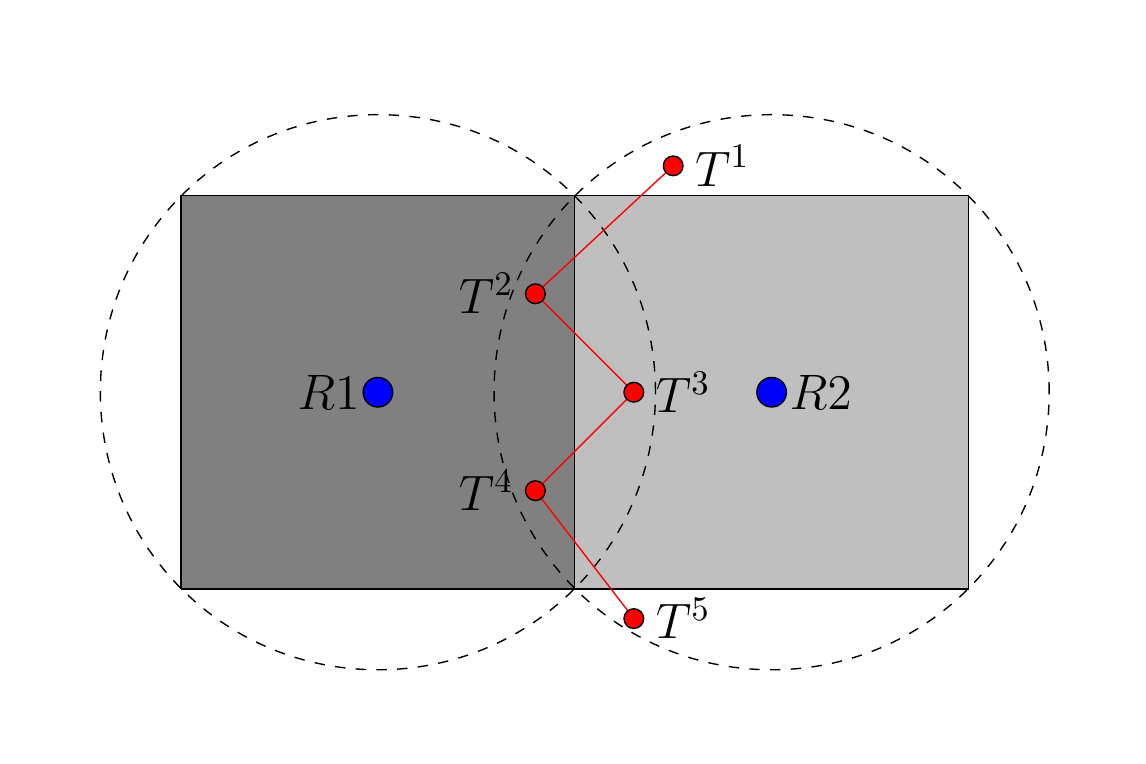}
}
\caption{Graphical example of two readers $R1$, $R2$ and a moving tag $T$. The
tag $T$ is captured in different positions at different instants $T^1$, $T^2$,
$T^3$, $T^4$, and $T^5$ ($T^x$ indicates the position of tag $T$ at time $x$).
The squares represent the unshared areas of $R1$ and
$R2$. The circles represent the shared areas of $R1$ and $R2$.}
\label{fig:grid2}
\end{figure}

\subsection{Messages}
\label{subsec:messages}

In our protocol, readers use a number of messages to exchange information about
the ownership of tags and collaborate to identify them. Each message sent by a
source reader ($R_{ID_S}$) to a destination reader ($R_{ID_D}$) makes the latter
perform an action regarding a tag ($ID_T)$ (cf. Figure~\ref{fig:messageformat}
for a graphical scheme of the message format and its flow).
Depending on the message, the information sent about the tag can be:
\begin{itemize}
	\item \textbf{The tag ID -- ($ID_T$)}: If $R_{ID_S}$ can identify the tag
	because it has the required information in its cache, it can send $ID_T$ to
	$R_{ID_D}$. This might happen for the following messages of the protocol:
	\textit{Delete}, \textit{I am the owner}, \textit{You are the owner}, and
	\textit{Search} messages.
	\item \textbf{The response of the tag $r=(r_2, h(r_1||r_2||ID_T))$ and the
	challenge $r1$}: If $R_{ID_S}$ is not able to identify the tag, it sends to
	$R_{ID_D}$ the challenge $r1$ that it sent to the tag and the answer $r$
	received from the tag. This happens for the \textit{Identify} message.
\end{itemize}

\begin{figure}[!ht]\centering
\begin{tabular}{|c|c|c|c|}
\hline
\multicolumn{4}{|>{\columncolor[rgb]{0.9,0.9,0.9}}c|}{\textbf{Message}}\\
\hline
\textbf{$\ $ Operation}&\textbf{Source}&\textbf{Destination}&\textbf{Tag}\\
\hline
3 bits & 32 bits & 32 bits & 128 bits\\
\hline
	   & $ID_S$ & $ID_D$ & $ID_T$\\
\hline
	   & $ID_S$ & $ID_D$ & $r$,$r_1$\\
\hline
\end{tabular}
\\
$ $
\\
\begin{tabular}[!ht]{|c c|}
\hline
\multicolumn{2}{|>{\columncolor[rgb]{0.9,0.9,0.9}}c|}{\textbf{Flow}}\\
\hline
 $\ \ \ \ R_{ID_S} \xrightarrow[\fbox{Message about the tag $(ID_T$)}]{\ }  R_{ID_D} $ & \\
 $\ \ \ \ R_{ID_S} \xleftarrow[\fbox{\ Information, ACK or NACK}]{\ }  R_{ID_D} $ & \\

\hline
\end{tabular}
\caption{Message format and flow}
\label{fig:messageformat}
\end{figure}
The messages of the protocol are explained in more detail below:

\begin{protocol_message}
\textbf{\rule{1ex}{1ex} \fbox{Delete - ($ID_T$)}} 
When $R_{ID_D}$ receives this message, it removes the identifier $ID_T$
from its local cache.
\end{protocol_message}
\begin{protocol_message}
\textbf{\rule{1ex}{1ex} \fbox{I am the owner - ($ID_T$)}} 
When $R_{ID_D}$ receives this message, it realises that $R_{ID_S}$ claims the
ownership of the tag $ID_T$. If $R_{ID_D}$ was the former owner, it sends a
\textit{Delete} message to its neighbours, excepting $R_{ID_S}$
and its neighbours, to let them know that it is no longer the owner of that tag.
If $R_{ID_D}$ was not the former owner, then it would generate a
random number $x \in [0,1]$, and if $x \ge p$ it would update its
cache with the new ownership information.
\end{protocol_message}
\begin{protocol_message}
\textbf{\rule{1ex}{1ex} \fbox{You are the owner - ($ID_T$)}} 
When $R_{ID_D}$ receives this message, it takes control over the tag $ID_T$. It
stores  the new ownership information in its cache and sends an
\textit{I am the owner} message to all its neighbours, so as to propagate the
new ownership information.
\end{protocol_message}
\begin{protocol_message}
\textbf{\rule{1ex}{1ex} \ \fbox{Identify - ($r, r_1$)}} 
This message is sent by $R_{ID_S}$ when it is not able to determine the ID of
a tag (using the Hash Lock protocol). With this message, $R_{ID_S}$ asks
$R_{ID_D}$ to identify the tag and return the ownership information stored in
its cache. If $R_{ID_D}$ identifies the tag and finds its owner, it sends the ID
of the owner back to $R_{ID_S}$, otherwise it responds with a NACK message.
\end{protocol_message}
\begin{protocol_message}
\textbf{\rule{1ex}{1ex} \fbox{Search - ($ID_T$)}} 
When $R_{ID_D}$ receives this message it checks whether the tag $ID_T$ is in its
cover area. If it finds the tag, it sends an ACK message back to $R_{ID_S}$,
otherwise it responds with a NACK.
\end{protocol_message}

\subsection{Protocol execution}
\label{subsec:protocolexecution}
\begin{figure}[!ht]\centering
\scalebox{0.85}{
\includegraphics{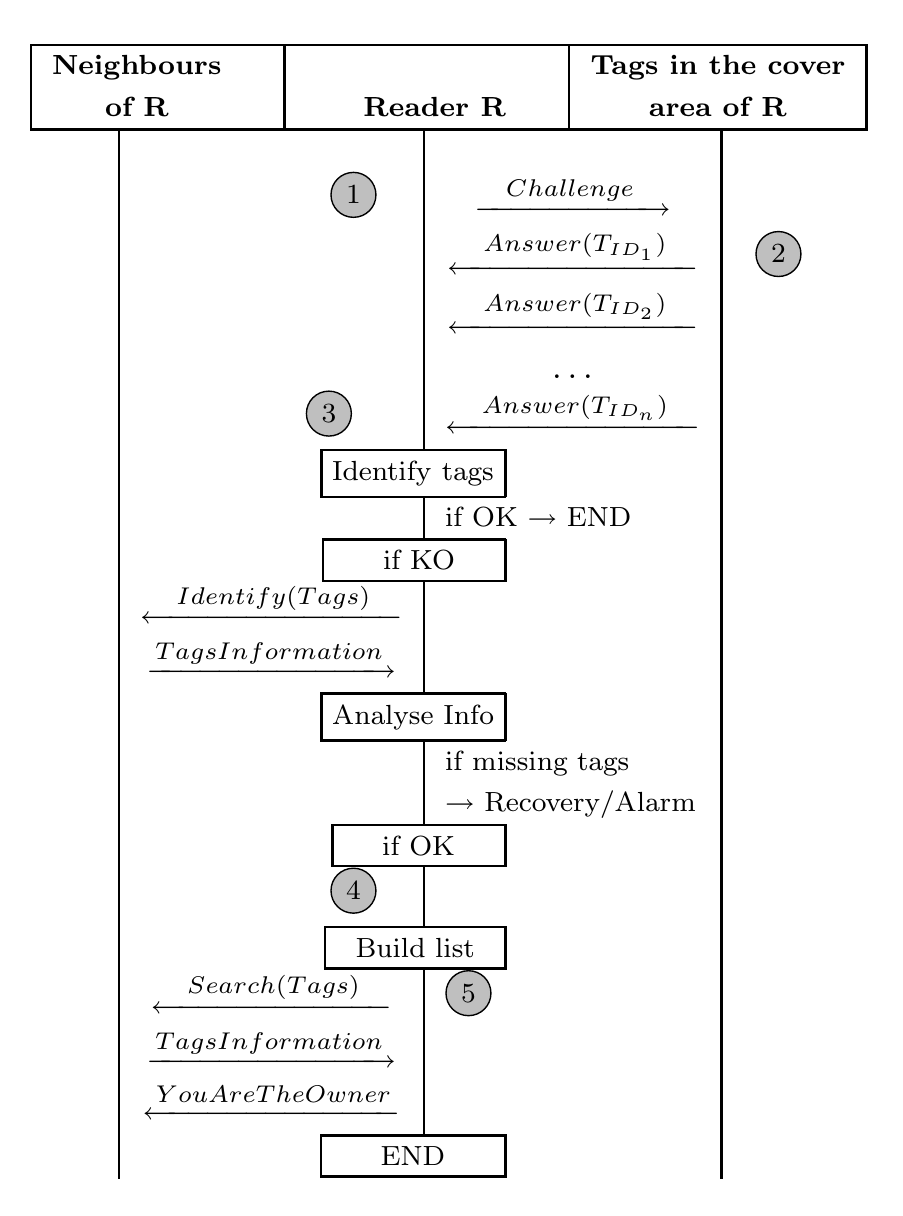}
}
\caption{Scheme of the flow of the identification protocol}\label{fig:protocol}
\end{figure}

Thanks to the probabilistic nature of our protocol, the number of IDs stored in
the local caches of the readers can be reduced with respect to the protocols
presented in \cite{SolanasDMD-2007-cn} and~\cite{1364147}; however, the flow of messages is a bit
more complex.
The identification protocol proposed in this chapter considers three main
actors: (i) the tags in the system, (ii) a reader, and (iii) the neighbours of
that reader. The protocol
depicted in Figure~\ref{fig:protocol} works as follows:

\begin{enumerate}
  \item A reader ($R$) sends a challenge ($r_1$) to the tags in its cover range.
  \item All tags in the cover range of $R$ answer the challenge.
  \item For each tag ($T$) responding to the challenge, $R$ tries to identify it
  using the hash-locks scheme~\cite{JuelsW-2007-percom} applied to its local cache.
  \begin{enumerate}
    \item If it identifies the responding tag, the process finishes.
    \item Otherwise, $R$ sends an \textit{Identify} message to its neighbours
    and stores their answers in its cache. If any of its neighbours
    identifies the tag, $R$ executes a recovery procedure described in the
    next section.
  \end{enumerate}
 \item Then, $R$ builds a list ($L$) containing all the tags that it owns
  (\emph{i.e.} which are under its control) and that have not responded to the
  challenge (\emph{e.g.} those tags that have left its cover range).
  \item For each tag  $T \in L$, $R$ sends a \textit{Search} message to its
  neighbours. After receiving the answers from its neighbours, $R$ sends a
  \textit{You are the owner} message to the first neighbour that responded
  positively (\emph{i.e.} ACK) to the search message.
\end{enumerate}

All the readers in the system periodically use this protocol. By doing so, all
tags can be controlled without the intervention of a centralised database. In
addition, due to the fact that readers only store information about
the tags of their neighbours with a given probability $p$, the number of
redundant IDs is reduced with respect to \cite{SolanasDMD-2007-cn, 1364147}
and, therefore, the computational effort performed by the readers is also reduced.

\subsection{Alarm/recovery protocol}
When a reader is not able to identify a tag and its neighbours do not have information
about this tag either, one may be in two possible situations:
\begin{itemize}
	\item An unauthorised tag has entered the system.
	\item A tag has been covered (so as to hide it from the readers) and uncovered
	in a different location controlled by another reader whose neighbours have no
	information about the tag.
\end{itemize}
When this situation arises, we propose two possible solutions:
\begin{itemize}
	\item A centralised solution: This solution is based on maintaining a
	backup of all tag's IDs in a centralised server. Doing so, when neither a
	reader nor its neighbours could identify a tag, that reader could request
	the identification of this tag to the centralised server. Note that, this
	solution has a high computational cost but does not create bottlenecks
	because the centralised server is supposed to be used in exceptional
	cases only.
	\item A fully decentralised solution: In this case readers can iteratively
	query their neighbours so as to find the previous owner of the tag in the
	system. First the reader queries its adjacent neighbours (located at one
	hop), then it queries the neighbours located at two hops, etc. This
	procedure finishes when the tag ID is found or when all readers have been
	queried. In the first case, our protocol keeps working normally, in the
	second case, an alarm is raised. This procedure is depicted in Figure~\ref{fig:alarm}.
	Note that in the worst case, in which all readers in the
	system were to be queried, the computational cost would be linear in
	the number of tags $n$. Although the computational cost is high and the
	communication overhead might be significant, this situation should happen
	rarely; hence, it should not affect the overall efficiency of the proposed
	protocol.
\end{itemize}
\begin{figure}[!ht]
\centering
\scalebox{0.6}{
\includegraphics{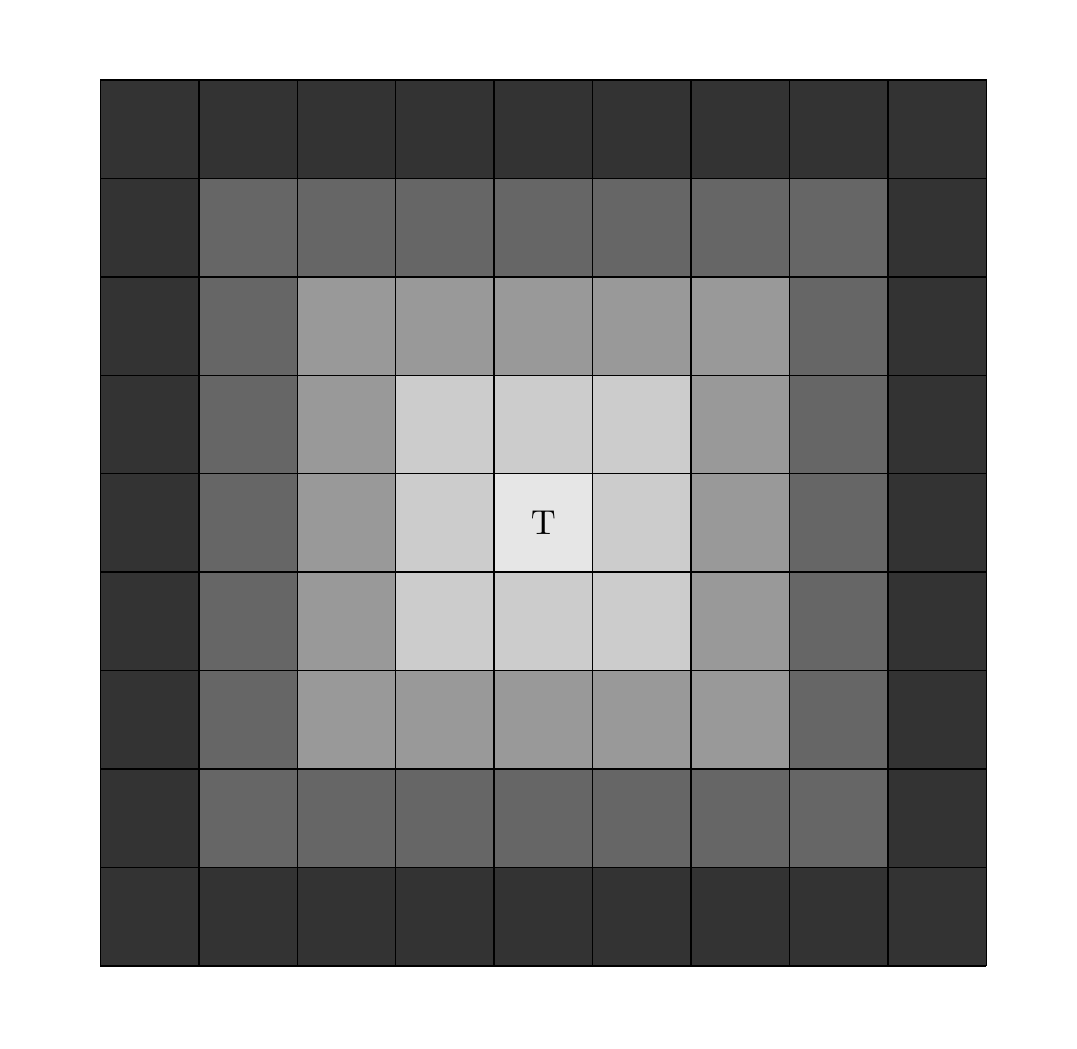}}
\caption{A representation of cells covering the monitored area. An unidentified
tag is located in the central cell. The reader in that cell will iteratively query
other readers to identify $T$. Readers in lighter coloured cells are queried first.}
\label{fig:alarm}
\end{figure}

\subsection{The role of $p$}
\label{subsec:role}
The number of operations performed in a reader to identify a tag is linear with
the number of tag IDs stored in its cache. A reader stores the IDs of the tags
in its cover area (for which it is responsible) -- we say that that reader is the
\textbf{owner} of those tags. In addition, a reader may store
the IDs of tags located in the cover area of its neighbours. This way, if a tag
moves from the cover area of one of its neighbours, it can identify that tag
without querying its neighbours.

Each reader is initialised with a parameter $p$. This parameter defines the
probability for a given reader to store neighbour tag's IDs in its cache. If
$p=1$ the reader stores all the IDs of its neighbour tags, on the contrary if
$p=0$ the reader stores no information about its neighbours' tags. If $p$ takes
a value in $(0,1)$ the reader stores a number of IDs \textit{proportional} to
that value. The main goal of $p$ is to reduce the number of redundant IDs stored
in the cache of neighbour readers.

The number of IDs stored by a reader $i$ $(\#ID^i)$ can be computed as
$$
\#ID^i = n_i + p_i\sum_{j=1}^{b_i} n_j^i
$$
where $n_i$ is the number of tags owned by $i$, $b_i$ is the number of
neighbours of reader $i$, $n_j^i$ is the number of tags owned by the $j$-th
neighbour of reader $i$, and $p_i$ is the probability for the reader $i$
to store IDs of tags owned by its neighbours. The total number of IDs stored
in the system ($\#ID$) can be computed as $\sum_{i=1}^m \#ID_i$, where $m$
is the total number of readers.

In the example shown in Figure~\ref{fig:p}, it is apparent that, by reducing the
value of $p$, the number of IDs stored in the caches of the readers is also
reduced. Consequently, the number of operations required to identify a tag
is also reduced and the whole process of identifying tags scales better.

Note that the protocols described in \cite{SolanasDMD-2007-cn}
and~\cite{1364147} do no support the addition of this probabilistic property.
Thus, the main goal of the proposed protocol, explained in the following
sections, is to allow the use of the parameter $p$ and, as a result, to improve
the scalability of the identification process on the readers' side.

\begin{figure}[tb]
\centering
\begin{tabular}{|c|c|c|}
\hline
2 & 2 & 2 \\
\hline
2 & 2 & 2 \\
\hline
2 & 2 & 2 \\
\hline
\multicolumn{3}{c}{$ $}\\
\multicolumn{3}{c}{$p=0$}\\
\end{tabular}
\begin{tabular}{|c|c|c|}
\hline
5 & 7 & 5 \\
\hline
7 & 10 & 7 \\
\hline
5 & 7 & 5 \\
\hline
\multicolumn{3}{c}{$ $}\\
\multicolumn{3}{c}{$p=0.5$}\\
\end{tabular}
\begin{tabular}{|c|c|c|}
\hline
8 & 12 & 8 \\
\hline
12 & 18 & 12 \\
\hline
8 & 12 & 8 \\
\hline
\multicolumn{3}{c}{$ $}\\
\multicolumn{3}{c}{$p=1$}\\
\end{tabular}
\caption{Number of IDs stored in the readers for different values of $p$
considering that each reader is the owner of 2 tags. The neighbourhood
relations are the ones described in Figure~\ref{fig:grid}-left.}
\label{fig:p}
\end{figure}

\subsection{Our protocol in a centralised back-end}
Although our protocol has been designed to work in a distributed way, it could
be ``simulated'' by a centralised database (\emph{i.e.} a back-end) connected
to a properly distributed set
of readers. By doing so, the back-end would be able to
identify tags and ``logically'' cluster them in regions (\emph{e.g.} virtually covered
by the readers). Thus, intelligent search of a tag into these regions might be
scalable in terms of computational cost. In addition, this approach averts the
communication overhead associated with the exchange of messages between readers
because all the communication might be ``simulated'' within the back-end.

The main problems of using this approach are: (i) using a single centralised
database leads to a single point of failure and, (ii) the communication of
a single back-end with a (possibly) large number of readers might create
bottle-necks and undesired delays.

It might be said that, depending on the special characteristics of the environment
in which the RFID system is to be deployed, engineers may decide whether to use
our protocol ``simulated'' within a back-end, or use it as a fully distributed
non-centralised protocol.


\section{Experimental results and evaluation}
\label{sec:simulation}

We have developed a simulator to analyse the number of operations performed by
the collaborative readers during the execution of our probabilistic protocol,
and their bandwidth usage. The simulator allows the deployment of readers
without constraints. The number of readers, their cover range, their location,
the number of moving tags, and the scenario in which they move can be defined
at the beginning of the simulation.

We have concentrated on simulations to analyse the theoretical
properties of our protocol and we have left for the future the implementation
and testing of a practical prototype. Although there are some limitations in
the off-the-shelf RFID tags, there exist some EPC UHF Gen 2 tags that can
compute hash functions and random numbers (using
ARMADILLO~\cite{armadillo}) and can be read at distances of up to 1 meter.
Currently, newer versions with larger reading distances (\emph{i.e.} 3 m) are under
development (cf. \url{www.oridao.com}).

With the aim to evaluate our probabilistic protocol, we compare it with the \textit{original} protocol
presented in \cite{SolanasDMD-2007-cn}.  Although our protocol has no limitations related to the
deployment and range of the readers, the \textit{original} protocol does
have some.
Consequently, we simulate the regular distribution of 24 readers ($4\times 6$)
depicted in Figure~\ref{fig:screenshot} that the \textit{original} protocol can
handle.

\begin{figure}[!ht]
\small
\centering
\scalebox{0.9}{
\includegraphics{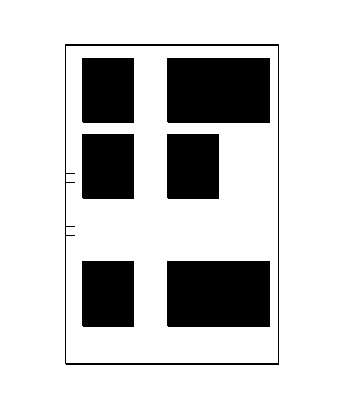}
}
\scalebox{0.9}{
\includegraphics{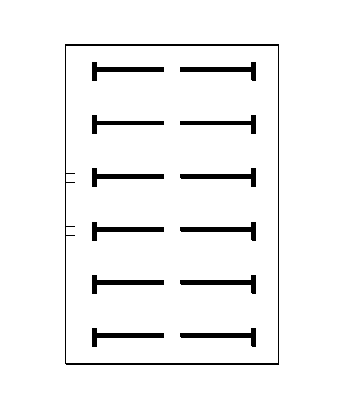}
}
\scalebox{0.9}{
\includegraphics{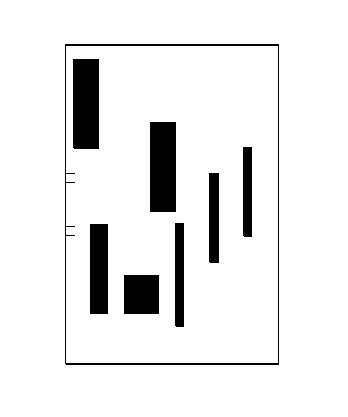}
}
\scalebox{0.9}{
\includegraphics{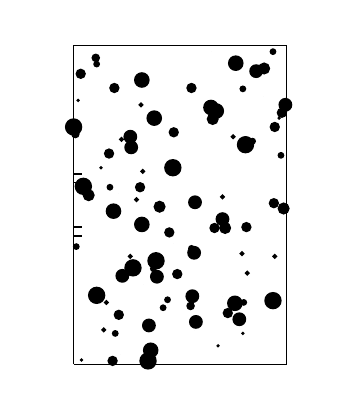}
}
\caption{Graphical scheme of the simulated scenarios. (From left to right) Scenario
with narrow corridors, scenario with wide corridors, scenario with random large
obstacles, scenario with random small obstacles.}
\label{fig:scenarios}
\end{figure}

We have considered five different scenarios\footnote{Some of these  scenarios
were already used in~\cite{SolanasDMD-2007-cn}}: (i) An empty scenario in which
tags can freely move, (ii) a scenario with narrow corridors, (iii) a scenario
with wide corridors, (iv) a scenario with randomly placed large obstacles and, (v) a
scenario with randomly placed small obstacles (cf. Figure~\ref{fig:screenshot} for
a screenshot of the simulator
and Figure~\ref{fig:scenarios} for a graphical scheme of the four non-empty
scenarios). For each scenario we have simulated the movement of $10^3$ and
$10^4$ tags. We have considered two different tag behaviours: (i) a random
movement and, (ii) a semi-directed movement: tags move half of the times
randomly and half of the times toward a far, randomly-selected point within the scenario.
Each simulation has been repeated 30 times for each value of $p$ in $(0,1)$ with
$0.1$ increments.  Globally a total of 7200 simulations have
been conducted: 2 types of
movement $\times$ 5 different scenarios $\times$ 12 protocols (11 different $p$ +
the \textit{original} one) $\times$ 30 repetitions $\times$ 2 different tag populations
($10^3$, $10^4$).

\begin{figure}[!ht]
\centering
	\includegraphics[scale=0.45, angle=90]{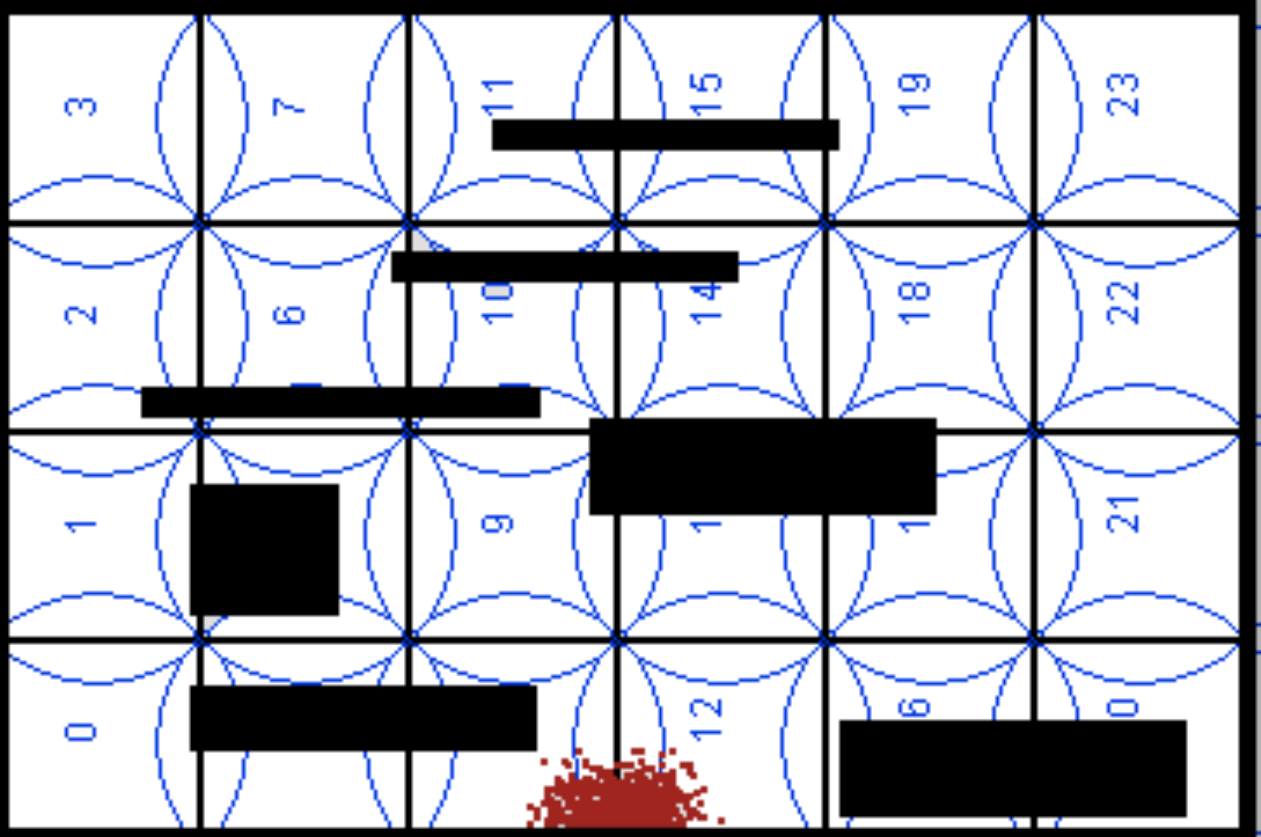}
\caption{Screenshot of the simulator. The cloud of red dots represents
the tags entering the system. Blue circles represent the \textbf{shared} cover
area of the readers, which are identified by a number. Thick black lines
represent obstacles. Finally, thin black lines represent the \textbf{unshared}
cover areas that the protocol in~\cite{SolanasDMD-2007-cn} would use.}
\label{fig:screenshot}
\end{figure}

\begin{figure}[p]
\centering
  \subfigure[Random movement - $10^3$ tags]{
  	\includegraphics[width=3.0in, angle=270]{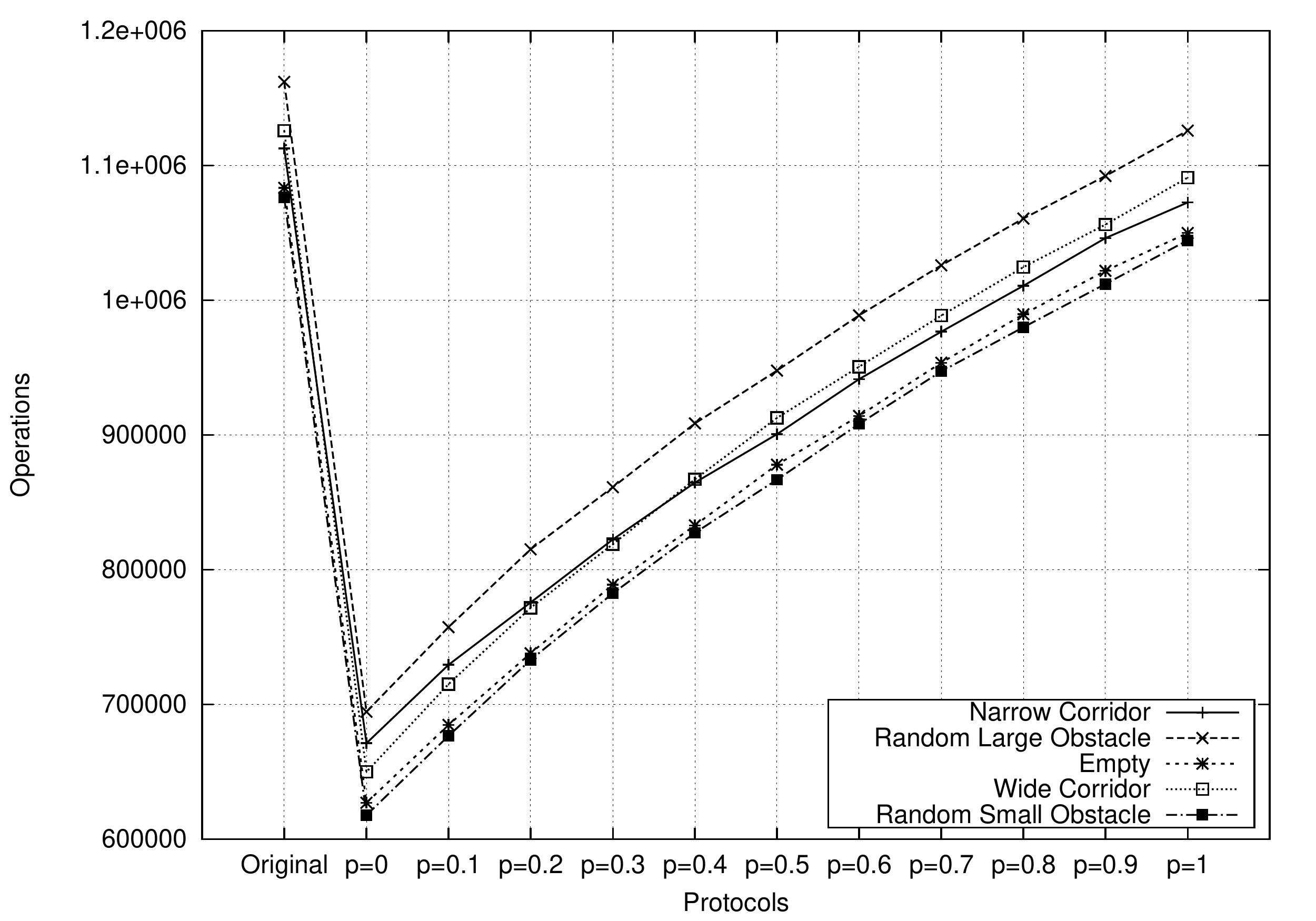}
  }
  \subfigure[Random movement - $10^4$ tags]{
  	\includegraphics[width=3.0in, angle=270]{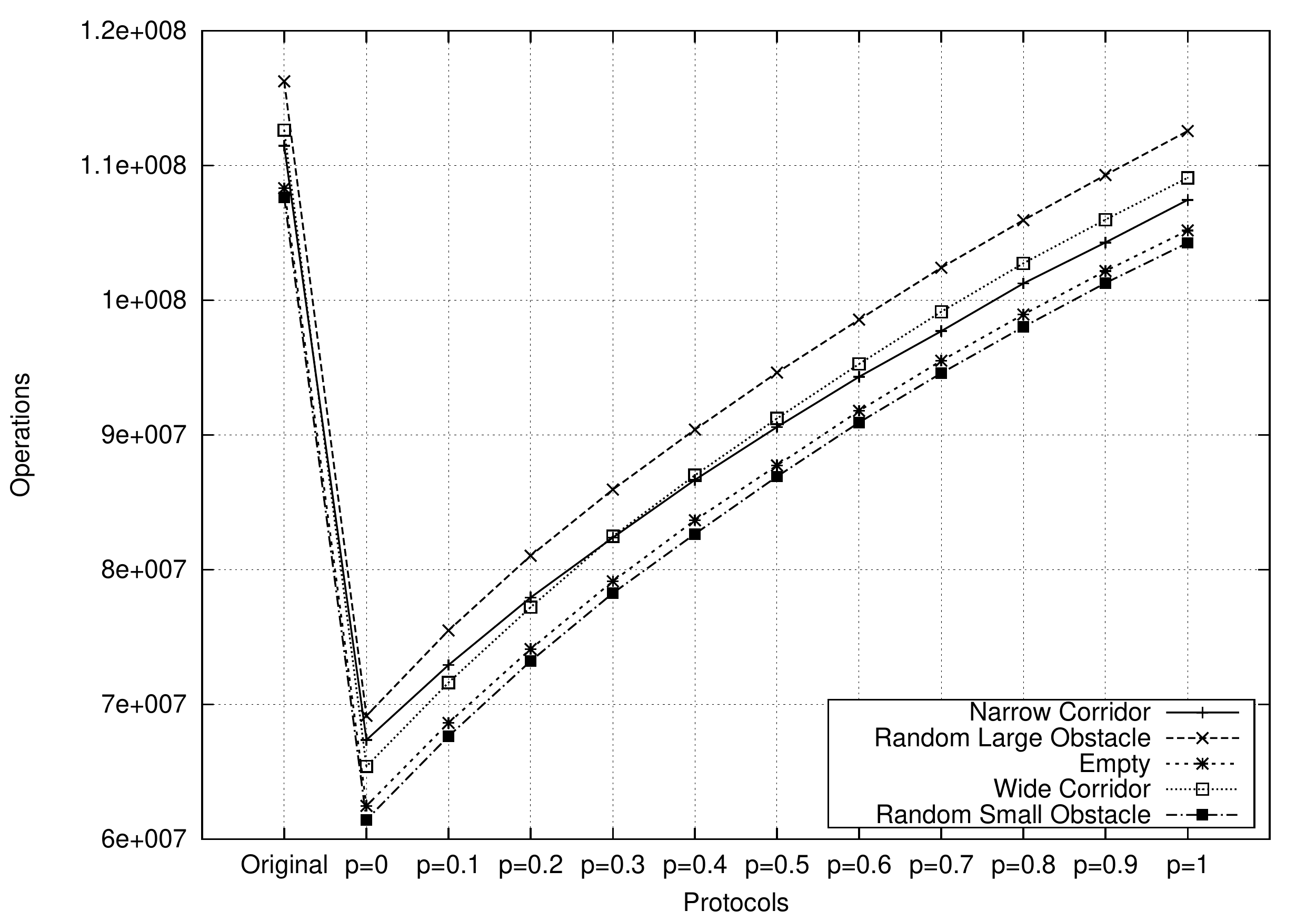}
  }
  \caption{Operations performed by the readers controlling
  $10^3$ and $10^4$ tags for different values of $p$ in all scenarios and considering 
  a random movement pattern. \textbf{The lower the better.}
  }
  \label{fig:cost1}
\end{figure}

\begin{figure}[p]
\centering
  \subfigure[Semi-directed movement - $10^3$ tags]{
  	\includegraphics[width=3.0in, angle=270]{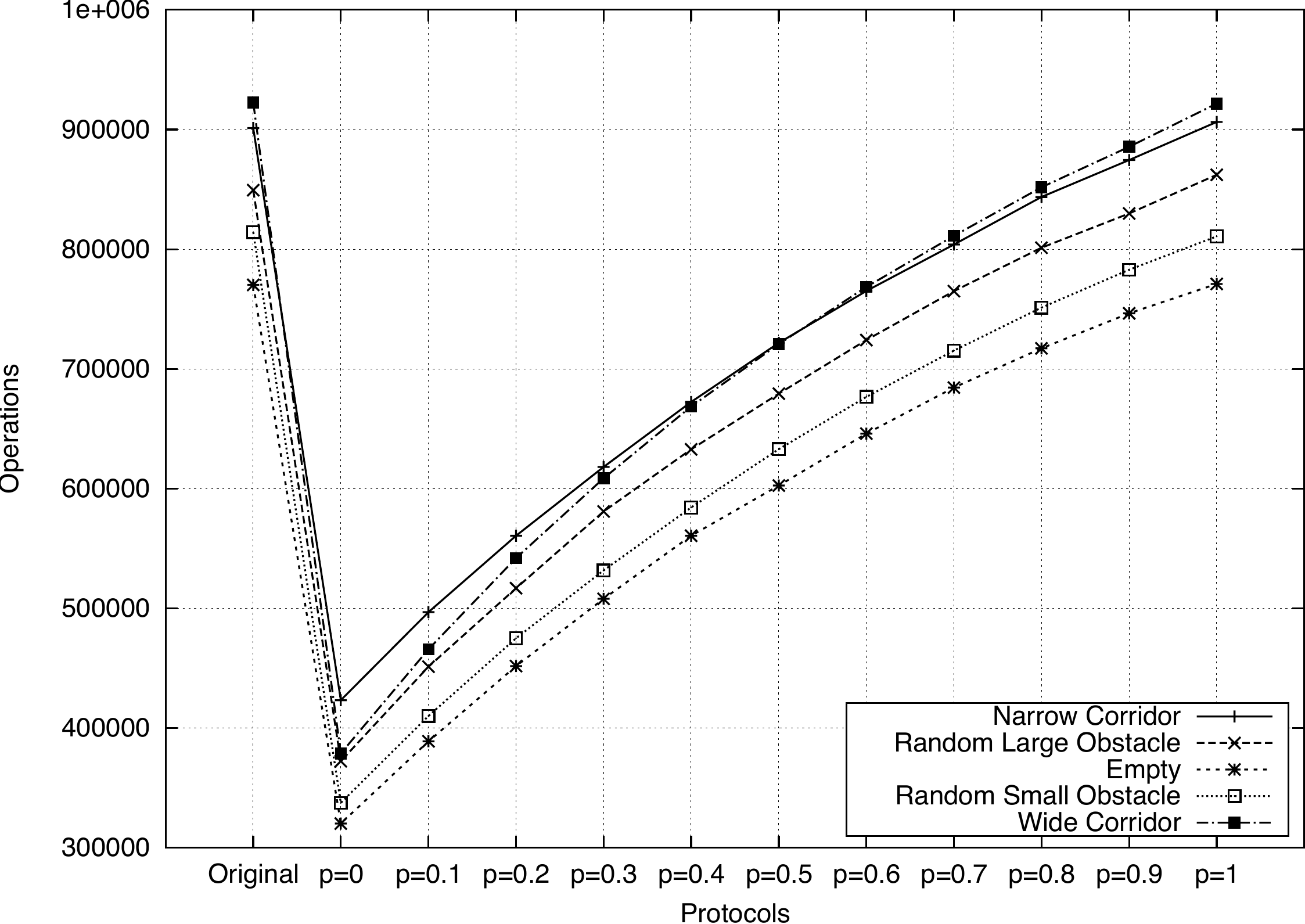}
  }
  \subfigure[Semi-directed movement - $10^4$ tags]{
  	\includegraphics[width=3.0in, angle=270]{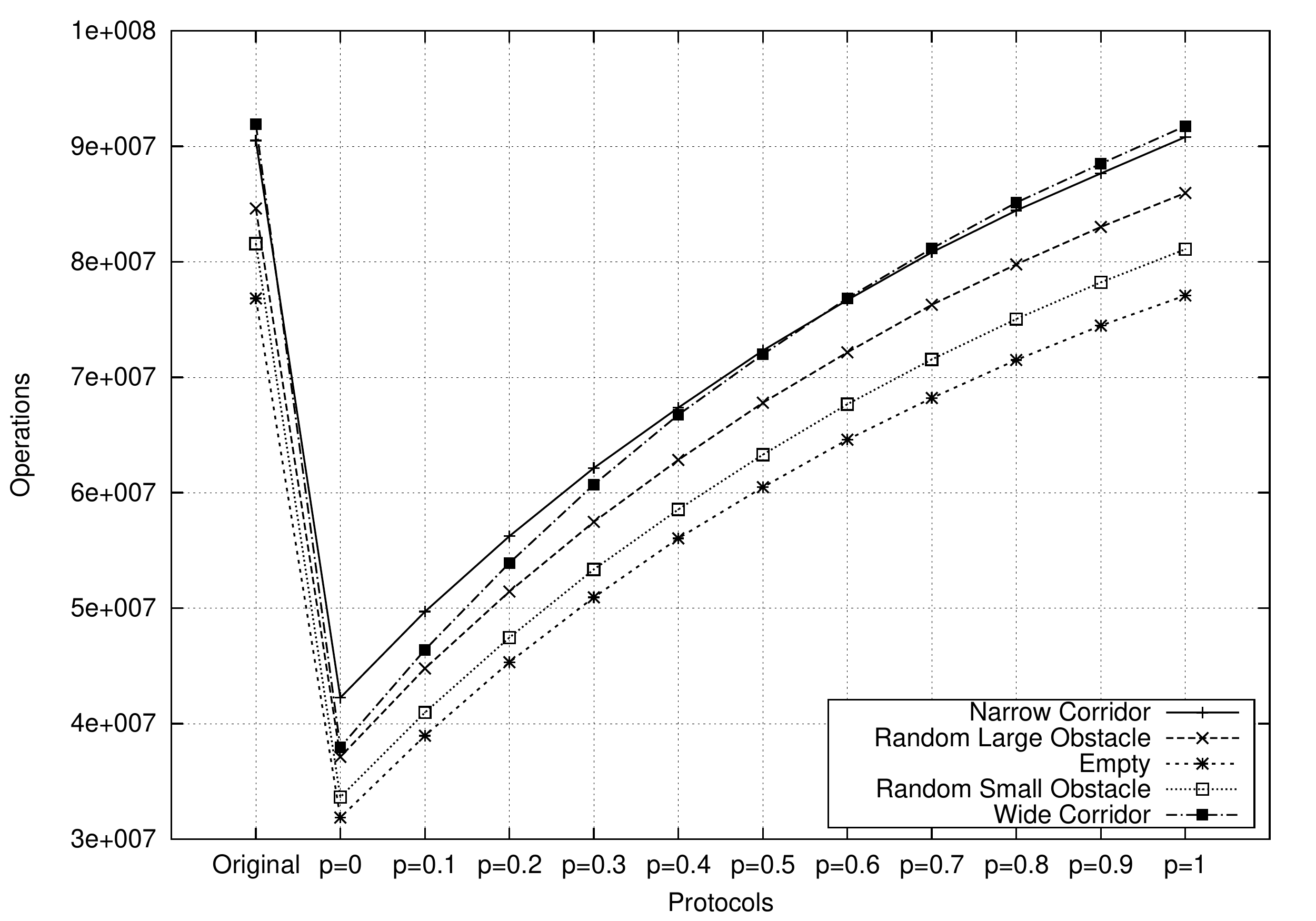}
  }
  \caption{Operations performed by the readers controlling
  $10^3$ and $10^4$ tags for different values of $p$ in all scenarios and considering 
  a semi-directed movement pattern. \textbf{The lower the better.}
  }
  \label{fig:cost2}
\end{figure}

\begin{figure}[p]
\centering
  \subfigure[Random movement - $10^3$ tags]{
  	\includegraphics[width=3.0in, angle=270]{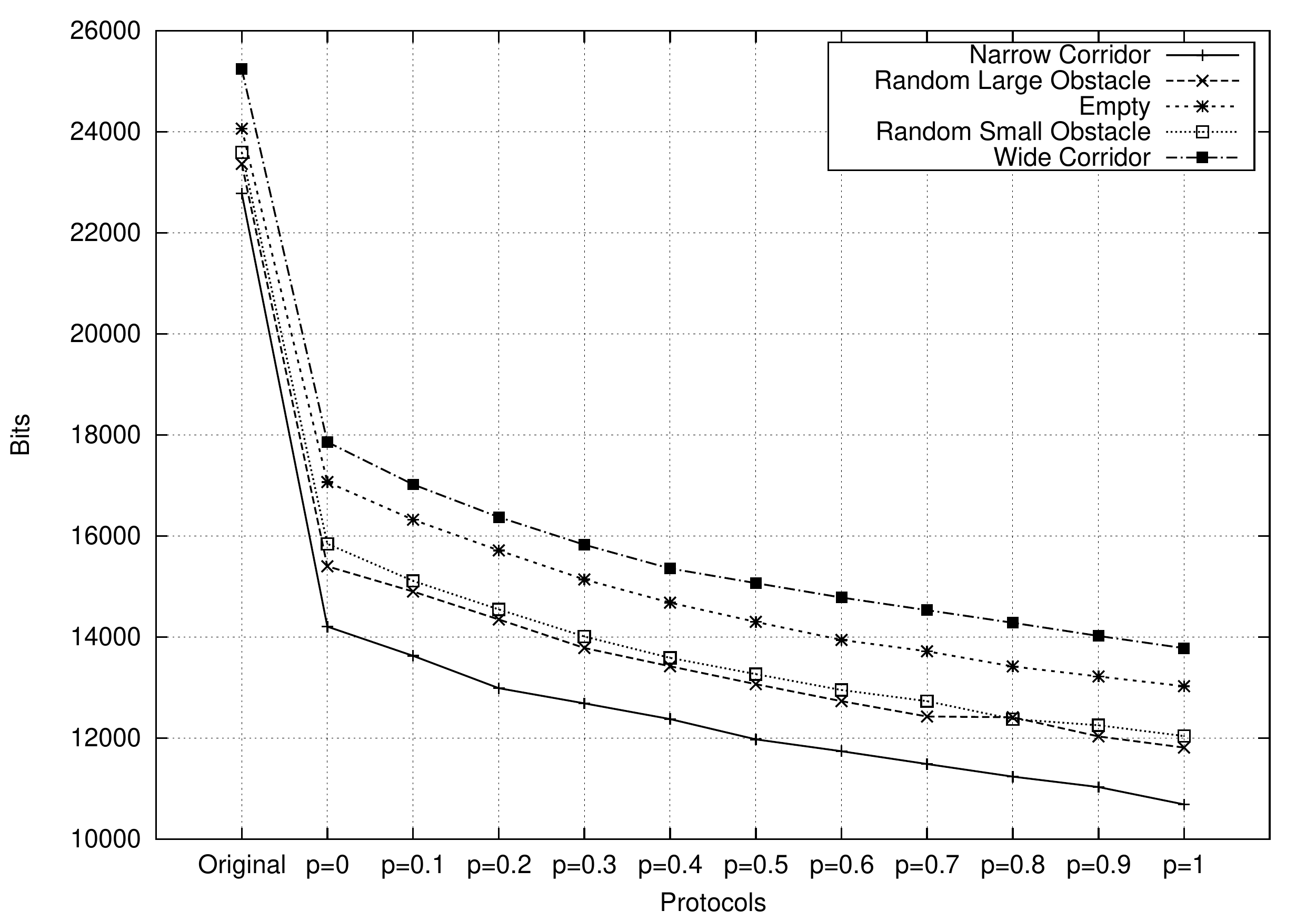}
  }
  \subfigure[Random movement - $10^4$ tags]{
  	\includegraphics[width=3.0in, angle=270]{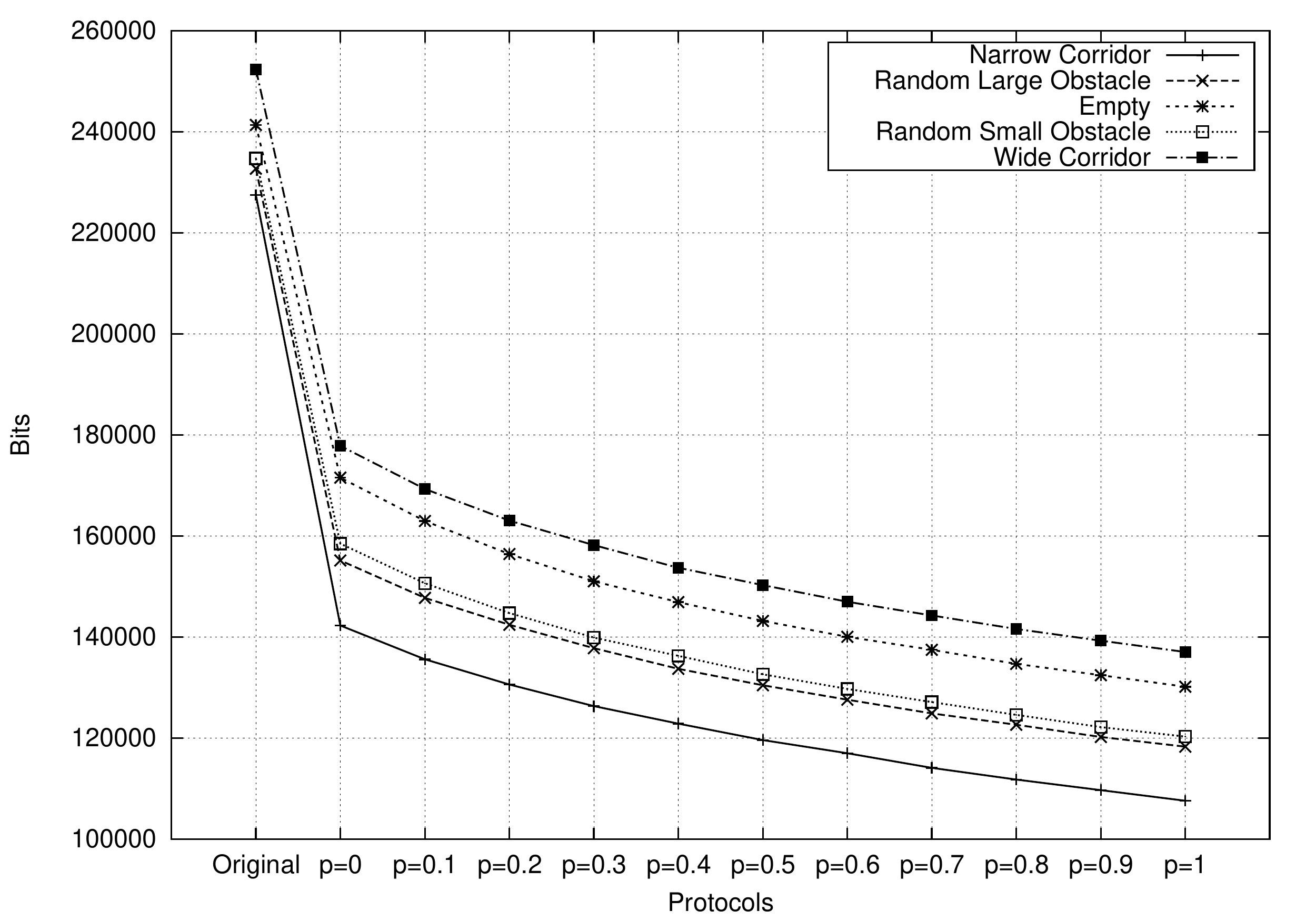}
  }
  \caption{Total number of bits transmitted by the readers controlling
  $10^3$ and $10^4$ tags for different values of $p$ in all scenarios and
  considering a random movement pattern. \textbf{The lower the better.}
  }
  \label{fig:bits1}
\end{figure}

\begin{figure}[p]
\centering
  \subfigure[Semi-directed movement - $10^3$ tags]{
  	\includegraphics[width=3.0in, angle=270]{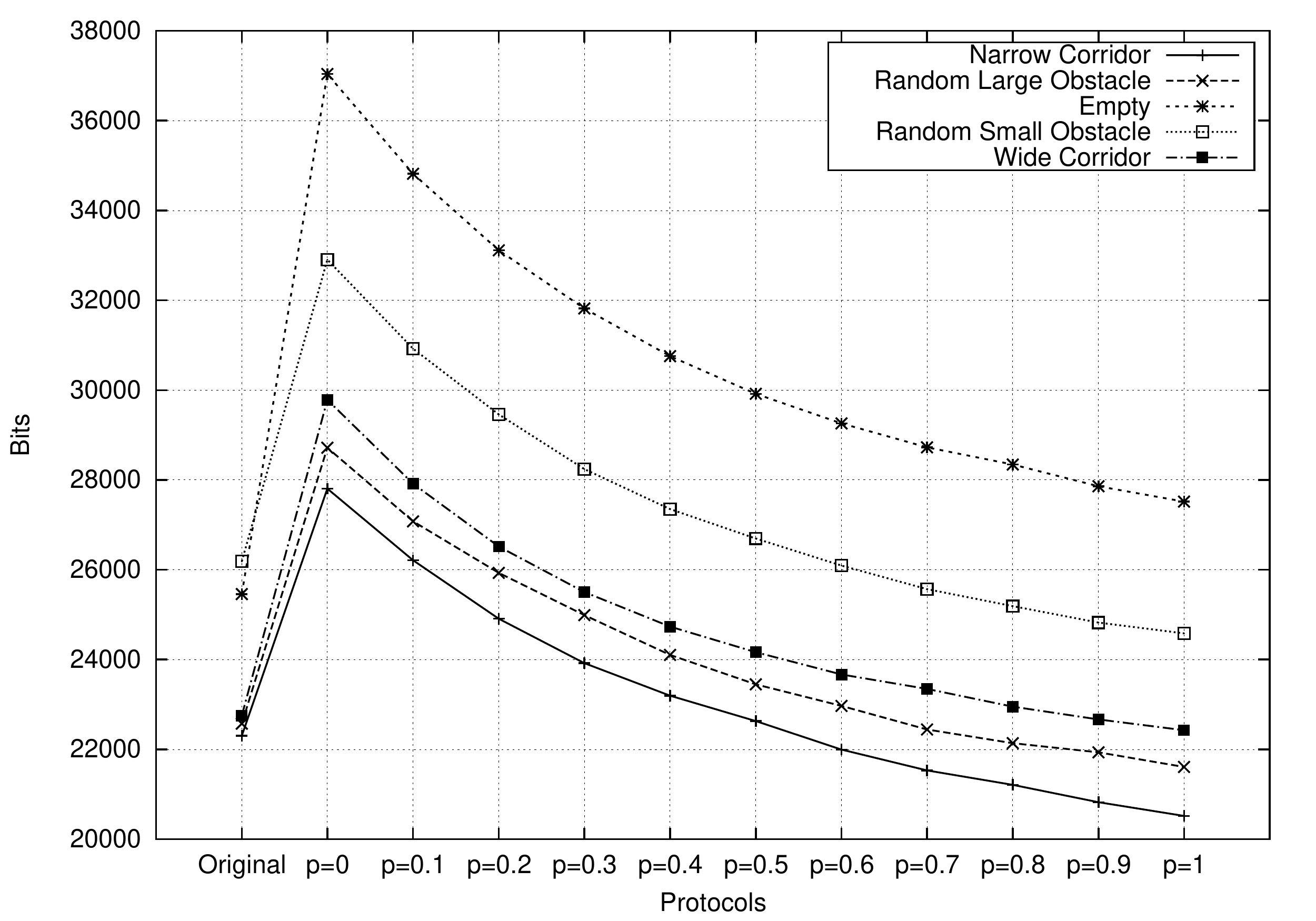}
  }
  \subfigure[Semi-directed movement - $10^4$ tags]{
  	\includegraphics[width=3.0in, angle=270]{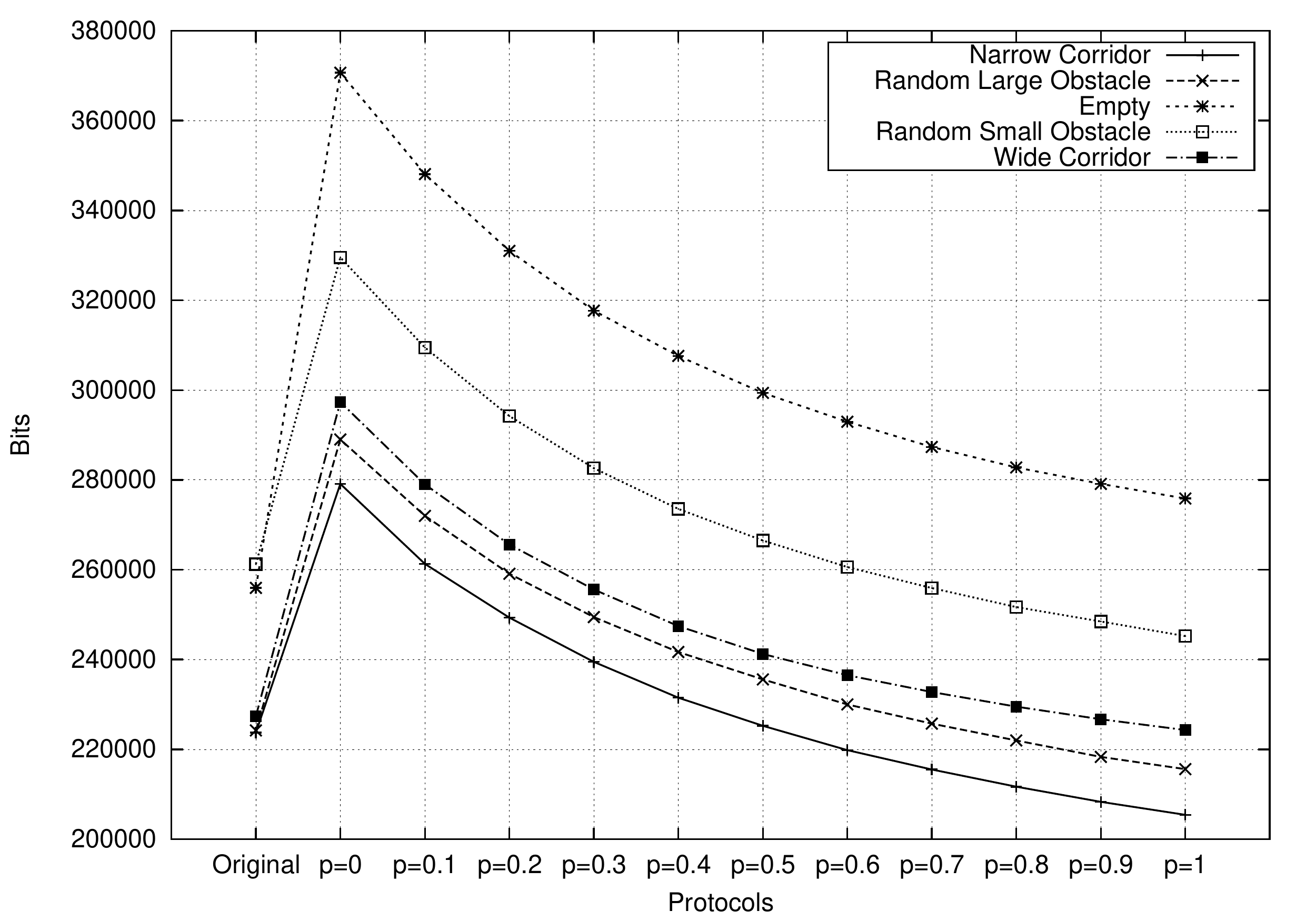}
  }
  \caption{Total number of bits transmitted by the readers controlling
  $10^3$ and $10^4$ tags for different values of $p$ in all scenarios and
  considering a semi-directed movement pattern. \textbf{The lower the better.}
  }
  \label{fig:bits2}
\end{figure}

For each scenario, we have concentrated on analysing the computational cost (in
terms of number of operations performed by readers) and the bandwidth usage (in
terms of total number of bits sent). Figures~\ref{fig:cost1} and~\ref{fig:cost2} show the results
for the computational cost and Figures~\ref{fig:bits1} and~\ref{fig:bits2} show the results for the
bandwidth usage. It can be observed that our protocol has a significantly lower
computation cost than the \textit{original} protocol. This is especially
apparent when the probability $p$ is low\footnote{Note that when the probability
$p$ tends to 1, our protocol tends to resemble the \textit{original} protocol in terms
of computational cost. However, it is still better in most cases.}.

Regarding the bandwidth usage, two different behaviours can be observed:
\begin{itemize}
	\item With random movements: Tags change from a cell
	to another with low probability (in our protocol). Thus, the number of
	required messages to update the state of the readers' caches is smaller.
	In this situation our protocol is clearly more efficient than the \textit{original} one.
	\item With semi-directed movements: Tags follow a clear path and change from one
	cell to another with a higher probability. In this case, our protocol requires
	more messages (especially in the case of using a low $p$). Thus, in this situation
	the \textit{original} protocol is more efficient for smaller $p$.
\end{itemize}

In general, the computational cost is the main concern in RFID identification
protocols and, as we have shown above, our proposal clearly outperforms the \textit{original}
protocol in this regard for all scenarios. Indeed, if bandwidth usage is not a
concern at all, our proposal with $p=0$ is the optimal solution. However, our protocol
requires more bandwidth to improve the computational cost.

Capturing the trade-off between computational cost and bandwidth is not trivial.
Note that the computational cost and the bandwidth usage are measured
in different units.
However, it is possible to define a measure in order to compare our proposal with the \textit{original} protocol in terms of both computational cost and bandwidth usage.

\begin{definition}[Trade-off measure]
Let $\alpha$ be a real value in the range $[0..1]$. Let $c$ and $b$ be the
computational cost and the bandwidth usage, respectively,
of the \textit{original} protocol
for a given configuration\footnote{A configuration will be defined by the number of
tags in the system, the number of readers and their distribution, the scenario,
etc.}. Let $c_p$ and $b_p$ be the computational cost and the bandwidth usage of our
protocol using the same configuration and $p$ the probability value. Then, the
trade-off measure that we propose is computed as follows:
$$
d(\alpha, p) = \left(\left(\frac{c_p}{c} - 1 \right)\times 100 \right)\times \alpha + \left(\left(\frac{b_p}{b} - 1 \right)\times 100\right)\times (1 - \alpha)
$$
\end{definition}

Intuitively, the proposed trade-off measure $d(\alpha, p)$ represents the
performance of the \textit{original} protocol with regard to our protocol using $p$ as the
probability value and considering $\alpha$ the weight given to the computational
cost and $1-\alpha$ the weight given to the bandwidth usage. Note that when
$\alpha = 0$ the bandwidth usage is the only concern, whilst when $\alpha = 1$ only
the computational cost is considered.

Figures~\ref{fig:3d_corridors1}, ~\ref{fig:3d_corridors2}, \ref{fig:3d_empty}, \ref{fig:3d_obstacles1} and \ref{fig:3d_obstacles2} depict the
performance of the \textit{original} protocol with regard to our protocol using the
trade-off measure described above. At the bottom of each figure there is a
three-dimensional chart showing the values of $d(\alpha, p)$ for each
$\alpha \in \{0, 0.1, \cdots ,0.9, 1\}$ and each $p \in \{0, 0.1, \cdots ,0.9, 1\}$.
Also, at the top left side and at the top right side of the figure there are the
projections of the three-dimensional charts for the x-axis and y-axis,
respectively.
In the x-axis projection, for each value of $\alpha$ the values of
$d(\alpha, p)$, $\forall p \in [0,1]$, are shown, whilst in the y-axis projection
the plot of the linear functions $d(\alpha, p)$ with $\alpha$ fixed is shown.

\begin{figure}[p]
\centering
  \subfigure[Random movement - ``Narrow corridor'' scenario]{
  	\includegraphics[width=3.5in, angle=270]{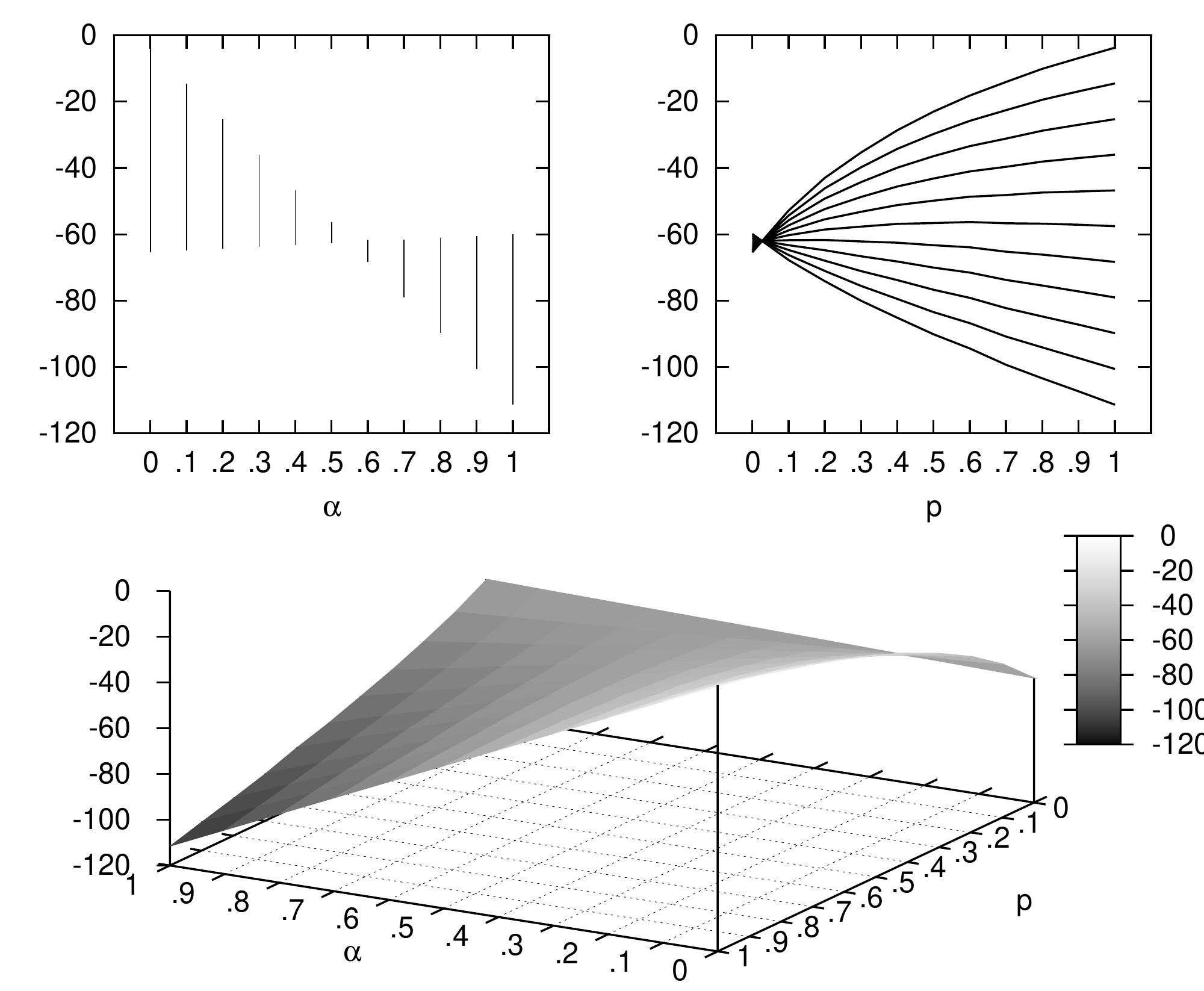}
  }
  \subfigure[Random movement - ``Wide corridor'' scenario]{
  	\includegraphics[width=3.5in, angle=270]{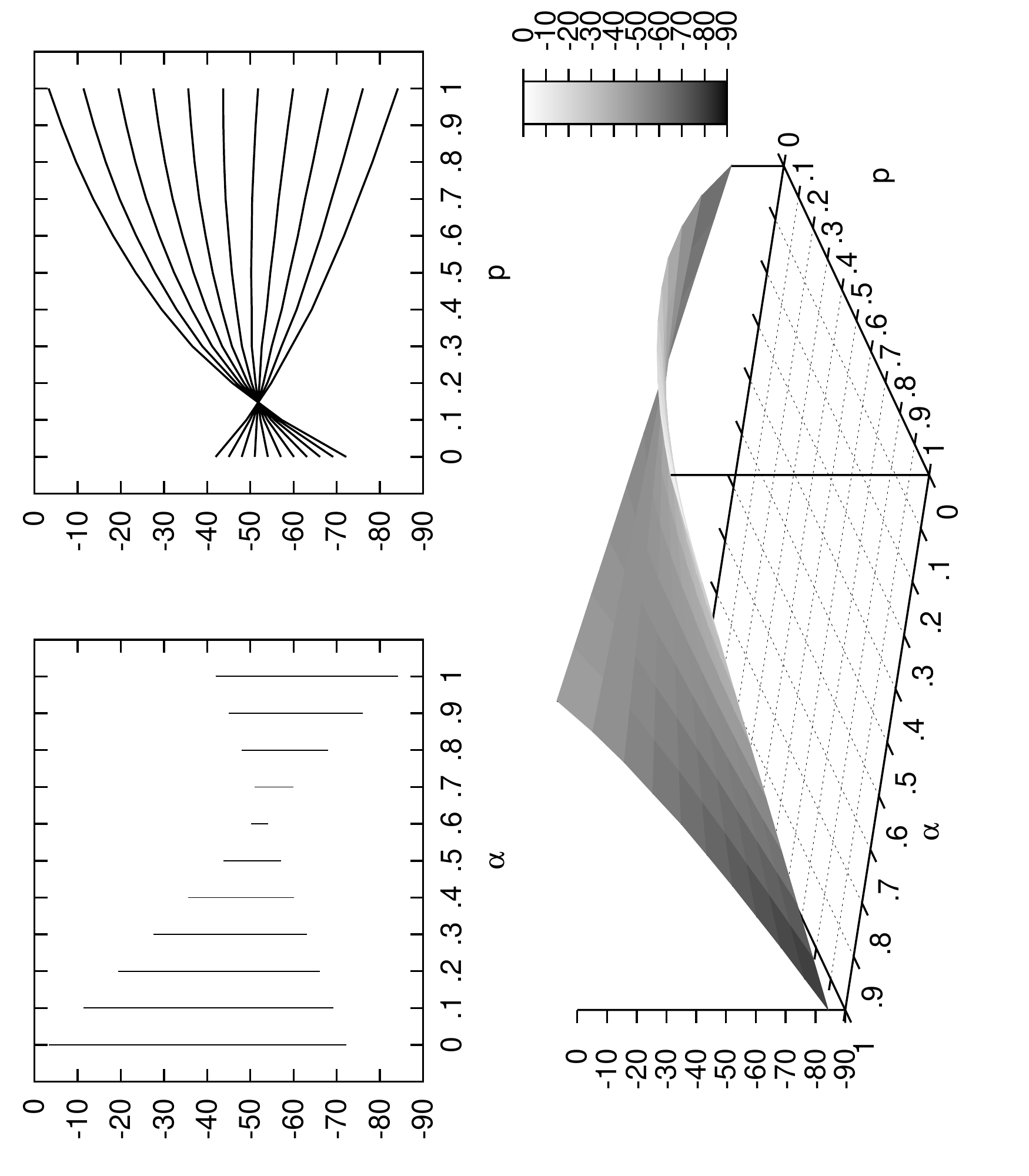}
  }
  \caption{$d(\alpha, p)$ results for $10^4$ tags and different
  values of $p$ and $\alpha$ in the scenarios with corridors and with a random movement pattern. \textbf{Values below
  zero indicate that our protocol is better with respect to the \textit{original} protocol}.}
  \label{fig:3d_corridors1}
\end{figure}

\begin{figure}[p]
\centering
  \subfigure[Semi-directed movement -``Narrow corridor'' scenario]{
  	\includegraphics[width=3.5in, angle=270]{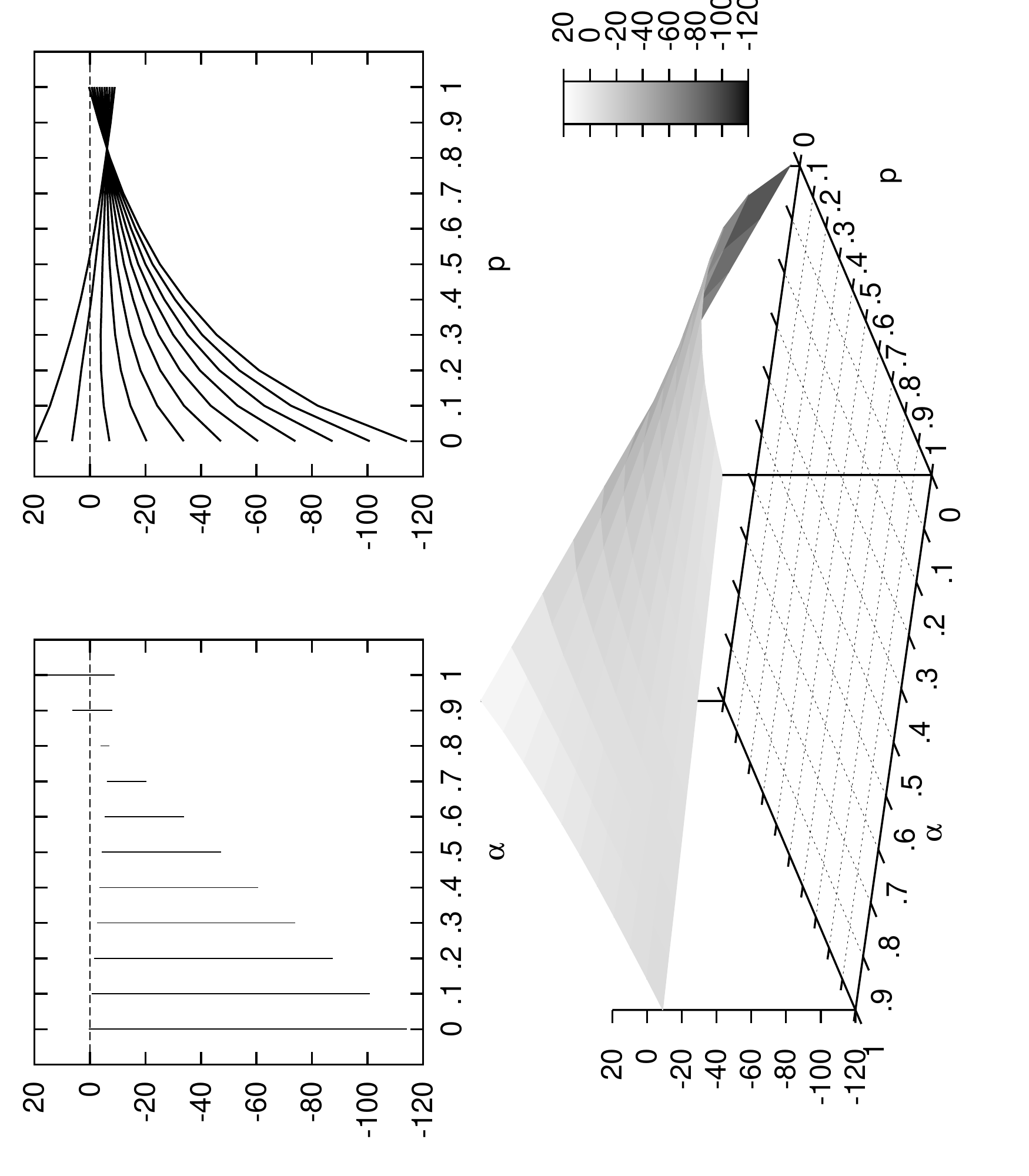}
  }
  \subfigure[Semi-directed movement - ``Wide corridor'' scenario]{
  	\includegraphics[width=3.5in, angle=270]{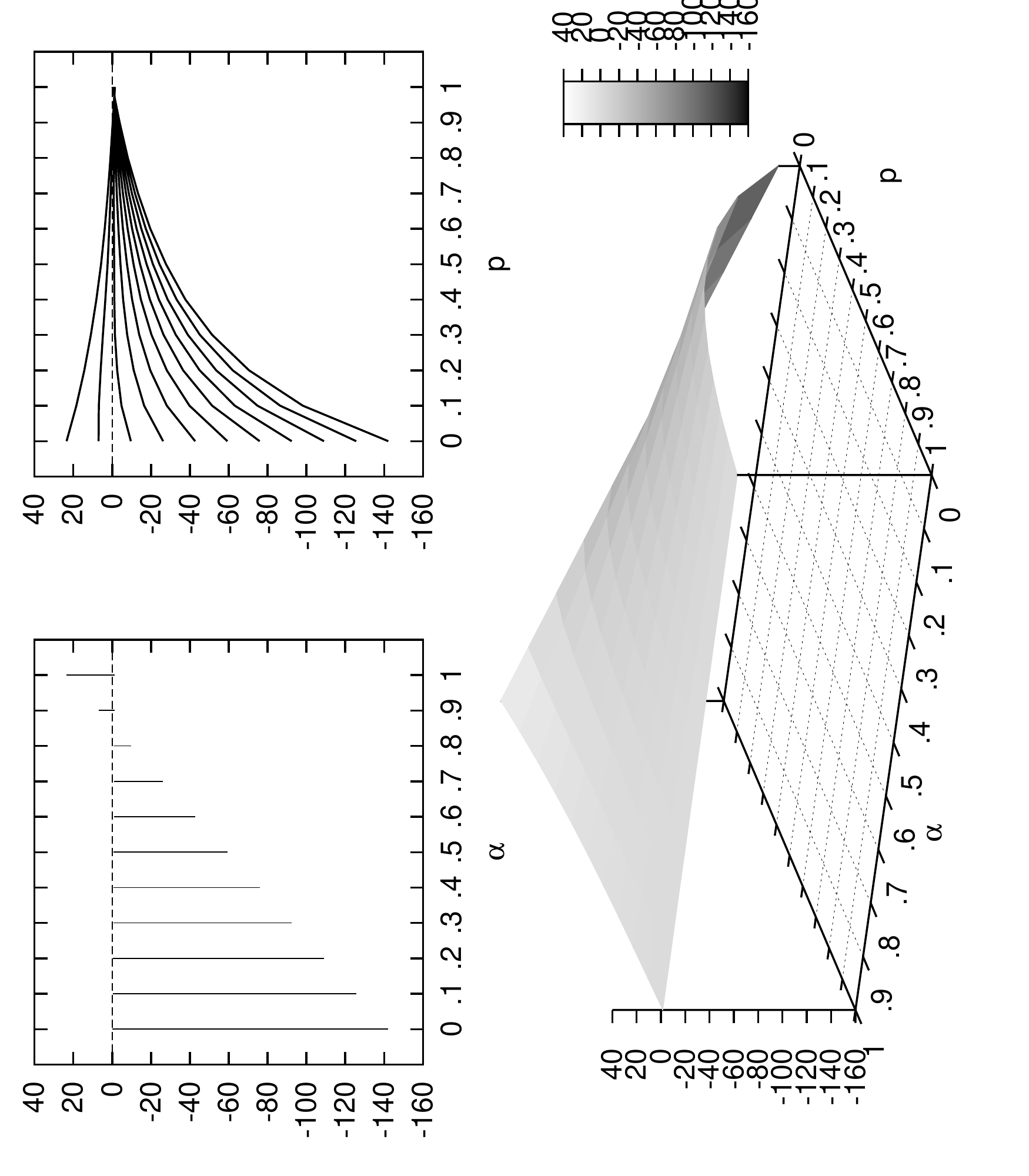}
  }
  \caption{$d(\alpha, p)$ results for $10^4$ tags and different
  values of $p$ and $\alpha$ in the scenarios with corridors and with a semi-directed movement pattern. \textbf{Values below
  zero indicate that our protocol is better with respect to the \textit{original} protocol}.}
  \label{fig:3d_corridors2}
\end{figure}

\begin{figure}[p]
\centering
  \subfigure[Random movement]{
  	\includegraphics[width=3.5in, angle=270]{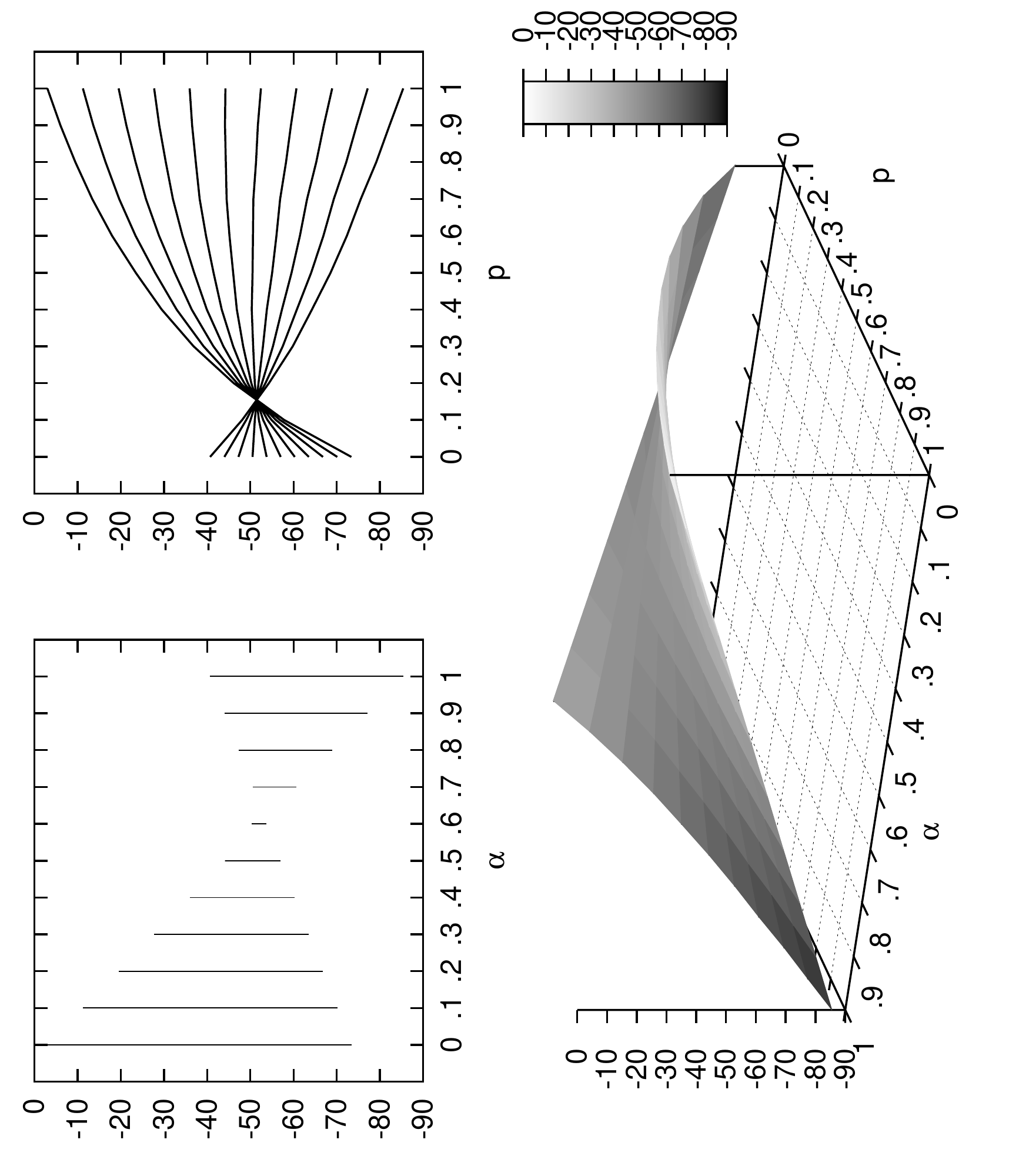}
  }
  \subfigure[Semi-directed movement]{
  	\includegraphics[width=3.5in, angle=270]{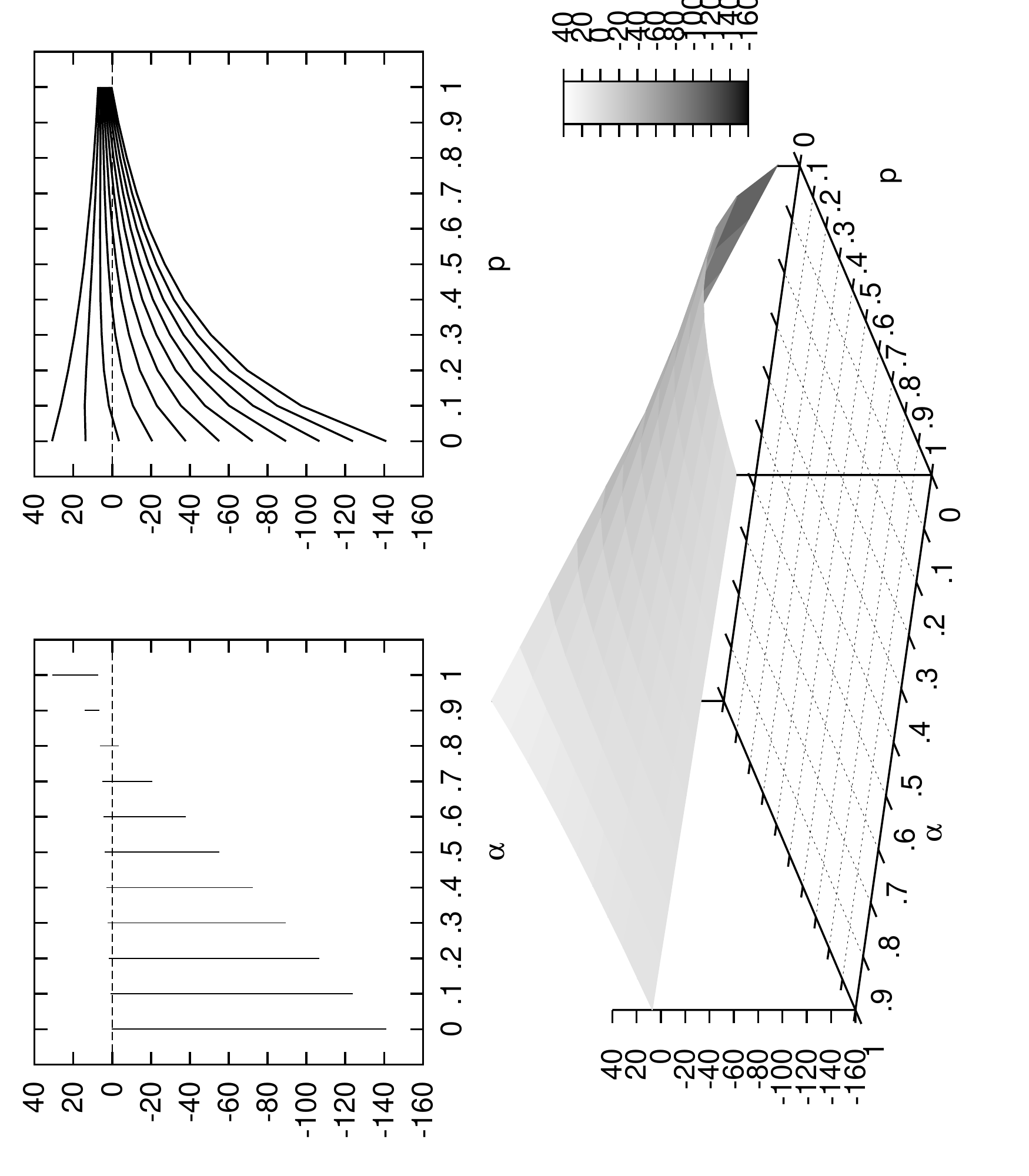}
  }
  \caption{$d(\alpha, p)$ results for $10^4$ tags and different
  values of $p$ and $\alpha$ in the empty scenario. \textbf{Values below
  zero indicate that our protocol is better with respect to the \textit{original} protocol}.}
  \label{fig:3d_empty}
\end{figure}

\begin{figure}[p]
\centering
  \subfigure[Random movement - ``Random Large Obstacles'' scenario]{
  	\includegraphics[width=3.5in, angle=270]{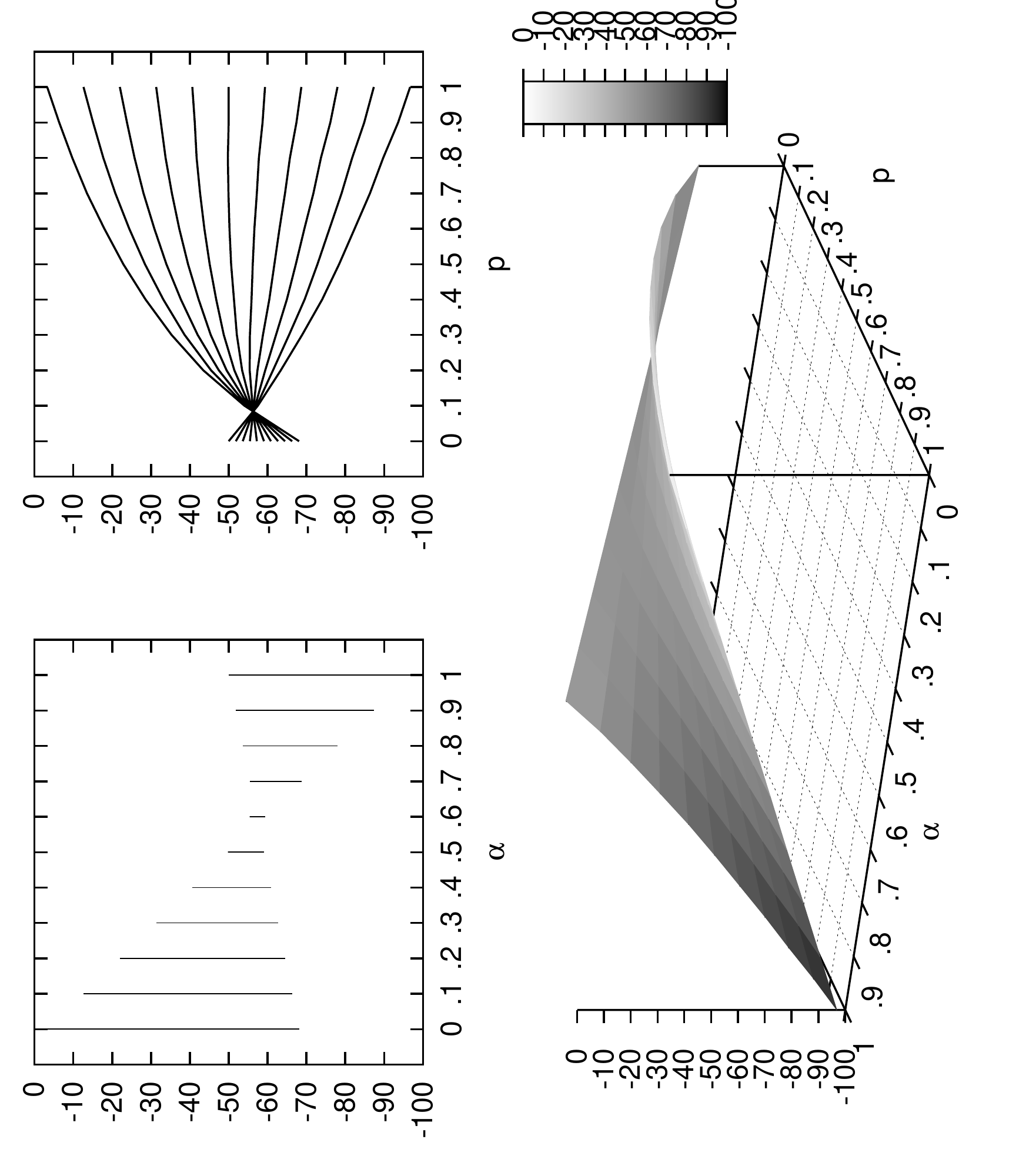}
  }
  \subfigure[Random movement - ``Random Small Obstacles'' scenario]{
  	\includegraphics[width=3.5in, angle=270]{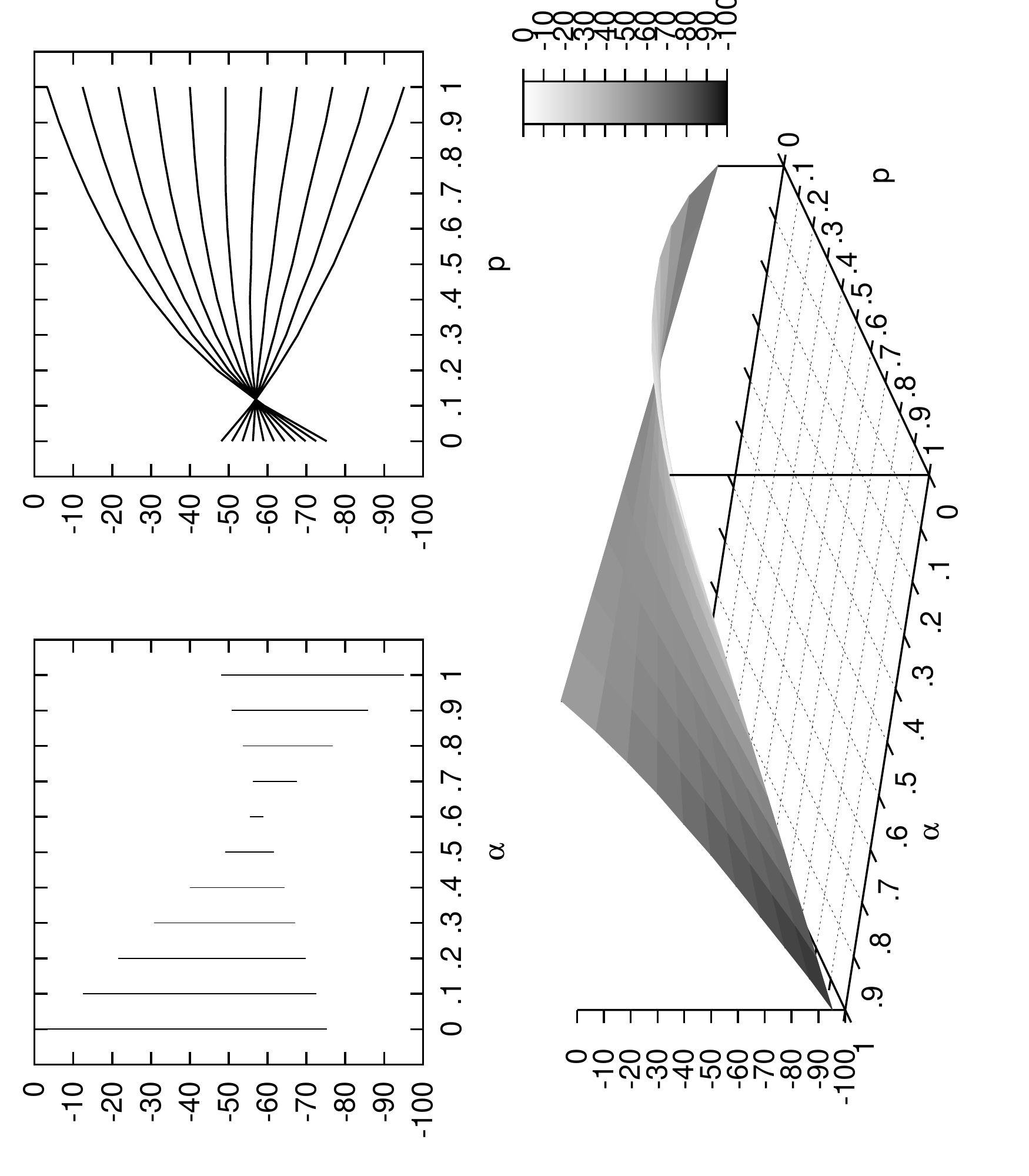}
  }
  \caption{$d(\alpha, p)$ results for $10^4$ tags and different
  values of $p$ and $\alpha$ in the scenarios with random obstacles and with a random movement pattern. \textbf{Values below
  zero indicate that our protocol is better with respect to the \textit{original} protocol}.}
  \label{fig:3d_obstacles1}
\end{figure}

\begin{figure}[p]
\centering
  \subfigure[Semi-directed movement - ``Random Large Obstacles'' scenario]{
  	\includegraphics[width=3.5in, angle=270]{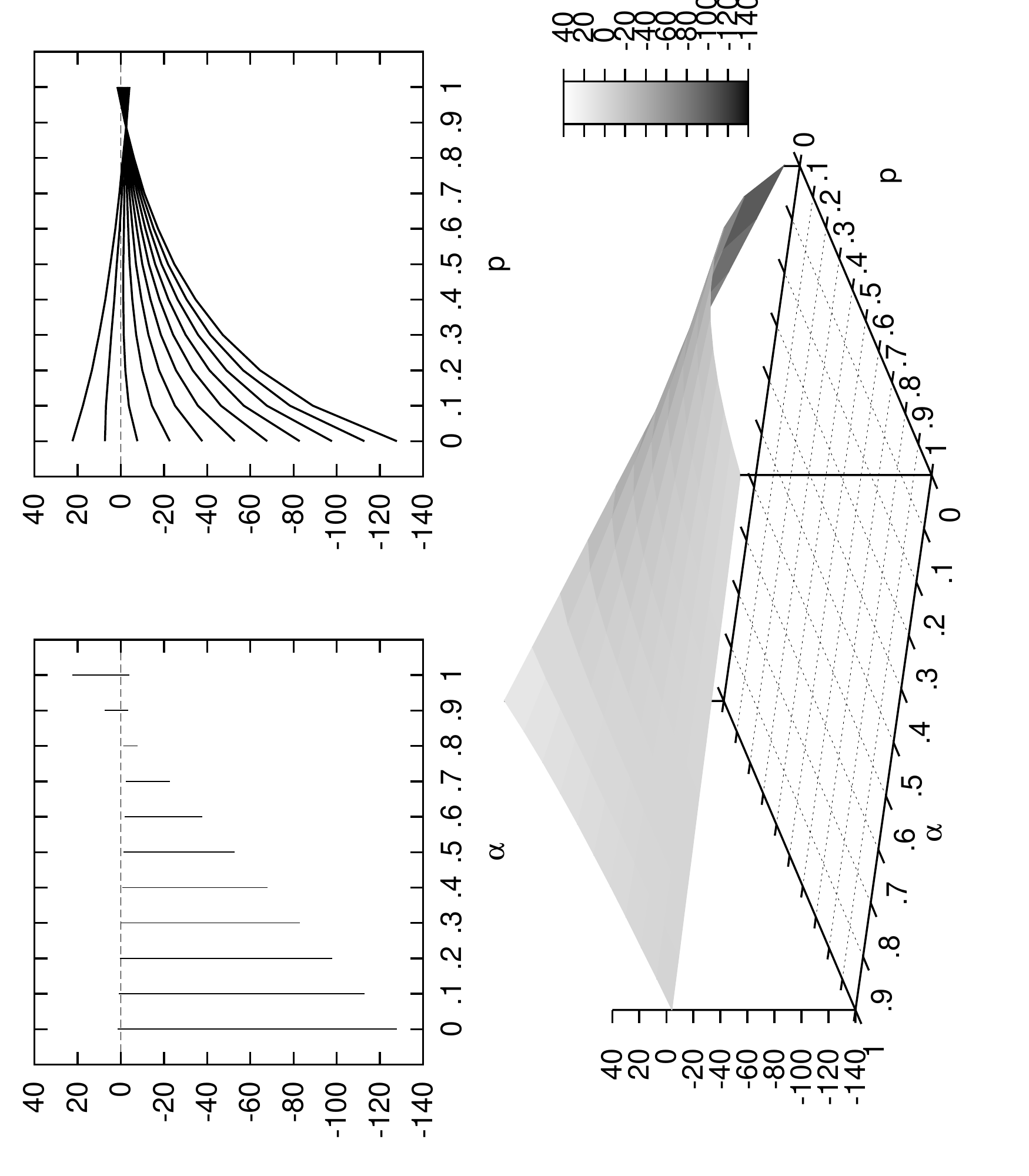}
  }
  \subfigure[Semi-directed movement - ``Random Small Obstacles'' scenario]{
  	\includegraphics[width=3.5in, angle=270]{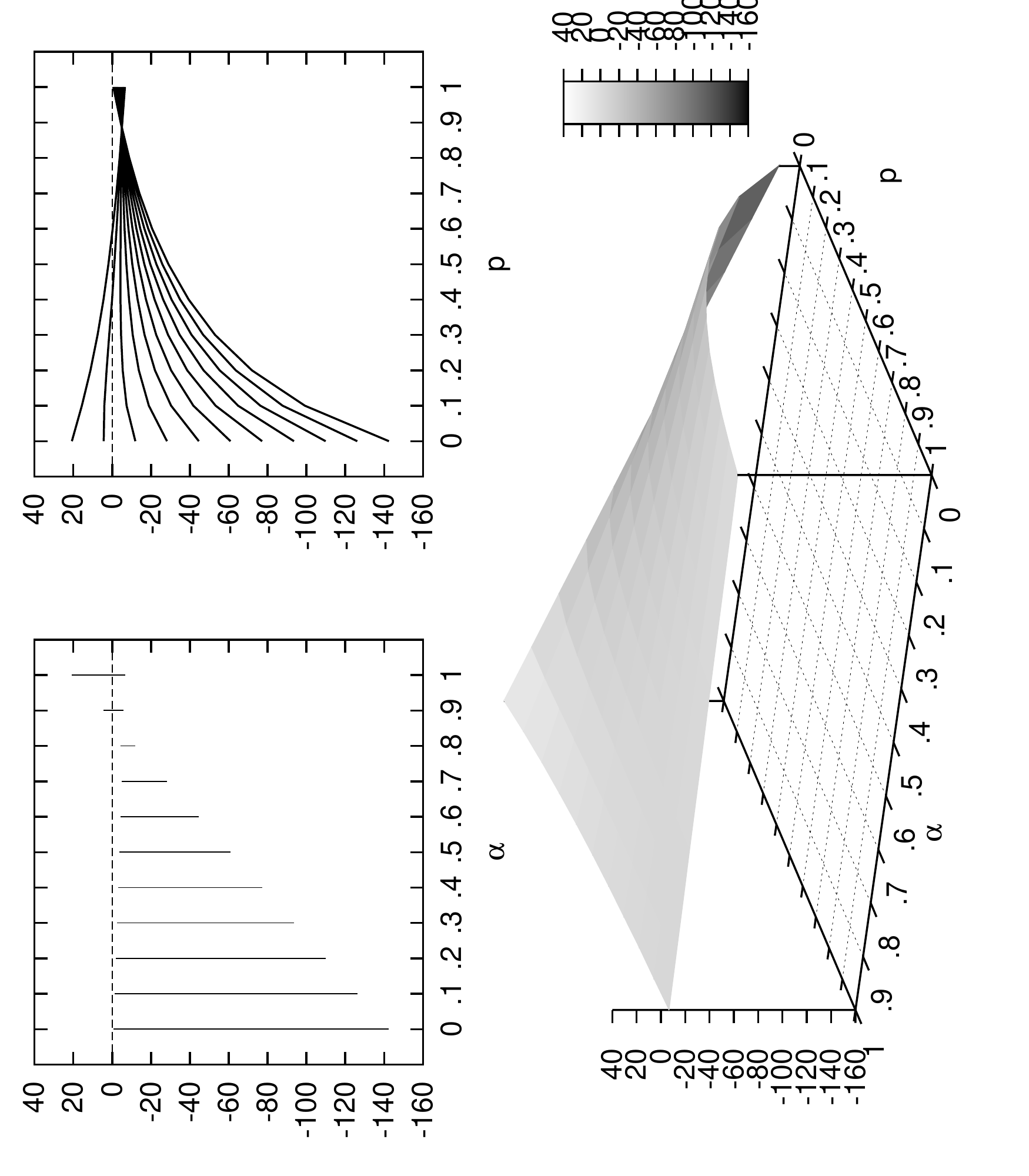}
  }
  \caption{$d(\alpha, p)$ results for $10^4$ tags and different
  values of $p$ and $\alpha$ in the scenarios with random obstacles and with a semi-directed movement pattern. \textbf{Values below
  zero indicate that our protocol is better with respect to the \textit{original} protocol}.}
  \label{fig:3d_obstacles2}
\end{figure}

It can be observed that our protocol outperforms the previous proposal in most
cases. When the movement of the tags is random, our protocol is always better for
all possible configurations. When the movement of the tags is semi-directed our
proposal is better in 81\% of the cases. That leads to a global improvement
in more than 90\% of all configurations.

\section{Conclusions}

In this chapter, we have presented an efficient communications protocol for collaborative RFID readers
to privately identify RFID tags. With the presented protocol, the centralised management
of tags can be avoided along with bottlenecks and undesired delays.

Our protocol is not a simple modification of previous proposals but a completely
different approach that clearly improves the efficiency and flexibility of the
whole system. In addition, due to the probabilistic nature of our protocol, the system
becomes very flexible, \emph{i.e.} the relation between computational cost and communications
overhead can be easily tuned by means of $p$. The simulation results confirm that our
protocol outperforms previous approaches like~\cite{SolanasDMD-2007-cn}.

Although the presented protocol is an improvement, there are some open issues that
should be considered in the future, namely (i) study the effect of the number of
neighbours, (ii)  propose methods to dynamically vary $p$ so as to adapt it to
the movements of tags, (iii) propose hybrid methods that mix hash-based solutions and tree-based
solutions with collaborative readers, etc.

\chapter{Predictive Protocol for Scalable Identification of RFID Tags
through Collaborative Readers}
\label{chap:4_5}

\emph{This chapter presents a natural improvement of previous RFID identification protocols based on collaborative readers. The described protocol improves the identification process be predicting the locations of the moving tags.}

\minitoc

Let us consider an RFID system intended for identification and tracking (\emph{e.g.} tracking of goods in a supply chain or luggage control in an airport). In such applications, several RFID readers are distributed over the system in order to identify tags passing through the RFID reader positions~\cite{SolanasDMD-2007-cn, Fouladgar:2008:SPP:1461464.1461467, Tavares-JPR, Bornhovd:2004:IAD:1316689.1316790, 10.1109/IDEAS.2006.47, Cao_architecturalconsiderations}. By doing so, it is possible to obtain the trajectory of a tag by concatenating the reader's positions where the tag has been identified. Even in applications without tracking purposes, it makes sense to distribute a set of readers covering strategic points or the whole monitored area~\cite{SolanasDMD-2007-cn} in order to identify the tags moving in it. Supermarkets with several
entry/exit doors or department stores are genuine examples of such applications.

Although there are several applications where many tags should be identified using some readers, to the best of our knowledge, only two protocols~\cite{SolanasDMD-2007-cn, Fouladgar:2008:SPP:1461464.1461467} exploiting this particular property have been proposed so far. The first of them~\cite{SolanasDMD-2007-cn}
introduced the idea of using multiple collaborative readers to make the
identification process scalable whilst maintaining the high level of privacy of
the IRHL scheme~\cite{JuelsW-2007-percom}. Their proposal is aimed at efficiently identifying tags in
applications where each tag must be continuously monitored while it remains
in the system. This implies that readers must cover the whole system. Under
this assumption, tags are constrained to move along neighbour
readers\footnote{Two readers are said to be neighbours if their cover areas are
not disjoint.} and therefore, neighbour readers collaborate in order to
guarantee efficiency during the identification process. Efficiency is achieved
by means of the so-called \emph{reader's cache}, which is defined as a storage
device where a reader saves tag identification data\footnote{This cache can
be either an external database securely connected to the reader or a database
internally managed by the reader itself.}. The protocol reduces the size of the
readers' cache by considering that only the closest reader to some tag and its
neighbours must store the identification information of this tag. By reducing
the size of the cache the identification procedure becomes more efficient.
Despite the benefits in terms of computational cost provided by this protocol,
assuming that readers are able to compute their accurate distance to tags is a
bit unrealistic.

On the other hand, in the context of using multiple readers (connected to a centralised back-end),
Fouladgar and Afifi~\cite{Fouladgar:2008:SPP:1461464.1461467} point out that, in many
applications, tags are usually queried by the same set of readers. Therefore,
they propose to cluster tags according to the readers that identify them more
often. This idea improves the group-based proposals in the sense that tags are
not randomly assigned to groups, but intelligently clustered according to the
spatial location of the readers that identify them. By doing so, when a reader
receives a tag's response, it first performs a search on the group of tags that
it usually identifies. If it does not succeed, an exhaustive search is performed
over the whole set of tag identifiers. The problem of this proposal is that
tags may have a long life-cycle and move through a wide variety of readers. In
this scenario, the protocol could scale as poorly as previous protocols based on
symmetric key cryptography~\cite{JuelsW-2007-percom}.

We show that the scalability problems of some private protocols can be alleviated not only distributing readers throughout the system, but also by exploiting the spatial location of tags. Indeed, a tagged item usually follows a pre-established life-cycle and then it could be intelligently identified according to its expected spatial location. In this chapter, we propose an adaptive and distributed architecture aimed at efficiently identifying RFID tags based on their expected spatial location. Unlike previous proposals~\cite{SolanasDMD-2007-cn}, our architecture is suitable for all possible scenarios and adapts itself to the type of tag movement. We show empirical results based on synthetic data confirming the superiority of our architecture with respect to previous proposals \cite{SolanasDMD-2007-cn} and \cite{Fouladgar:2008:SPP:1461464.1461467}.

\section{Trajectory-based RFID identification protocol}

In the Solanas {\em et al.} proposal~\cite{SolanasDMD-2007-cn}, the
readers' cache contains identification data of tags but it
lacks information about
the expected time at which the tags might next be identified by a reader or
where they were identified in the past. Assuming that it is possible to
approximately know the instant at which a tag will be identified by a given
reader, it is greatly beneficial to use this spatio-temporal information to
speed up the searching process in the readers' cache. Therefore, we propose to
structure the readers' cache as an ordered list where the expected time of
arrival (ETA) is the ordering criterion.
\begin{quotation}
\begin{definition}[Cache]
Given the set of tags $\mathcal{T}$ and readers $\mathcal{R}$ in the
system, the cache of a reader $R\in \mathcal{R}$ consists of a sequence of
ordered tuples
$$C(R) =\ <t_1, ID_1, R_{prev}^{ID_1}, R_{next}^{ID_1},Y|N>, \cdots, $$
$$,\cdots, <t_N, ID_N,R_{prev}^{ID_N}, R_{next}^{ID_N},Y|N>$$
where the order is given by the timestamps $t_1 \leq \cdots \leq t_N$. The tag
identifiers $ID_i \in \mathcal{T}, \ \forall \ 1 \leq i \leq N$, and
$R_{prev}^{ID_i} \in \mathcal{R}$ and $R_{next}^{ID_i}\in \mathcal{R}$ are the
reader that sent the $ID_i$ to $R$ and the reader that will receive the $ID_i$
from $R$, respectively. $Y|N$ is used as a flag to show whether the tag has been
already identified by this reader.\\
\end{definition}
\end{quotation}

From the above definition it can be observed that our protocol will use the
spatial information about the trajectory of the tags to predict
which reader will be the next reader to receive a given tag. Our protocol will also use the temporal
information of such trajectories to predict when a given tag will be read in the
future by the next reader. Table~\ref{tab:cache} is an example of the cache
of a reader. In this example, the reader $R_{512}$ expects to receive the tag
$T_{90876534}$ from reader $R_{1012}$ at time \textit{2011-07-28 11:31:38}, and
will forward the identification information to the next reader $R_{201}$. Also,
it can be seen that the tag has not been identified by the reader yet.

\begin{table}[tb]
\centering
\begin{tabular}{|c|c|c|c|c|}
\hline
\multicolumn{5}{|c|}{\textbf{Cache of Reader 512}}\\
\hline
    &        & Previous & Next   & First\\
ETA & Tag ID & Reader   & Reader & Time\\
\hline
2011-07-28 11:31:38 & 90876534 & 1012 & 201 & Yes\\
2011-07-28 11:41:33 & 10311299 & 1011 & 1201& No\\
$\cdots$ & $\cdots$ & $\cdots$ & $\cdots$   & $\cdots$\\
2011-07-30 22:01:08 & 21134211 & 1012 & 201 & No\\
\hline
\end{tabular}
\caption{Example of the cache of a reader.}
\label{tab:cache}
\end{table}

By using this ordered cache, when a tag response arrives at a given timestamp
$t$, a reader is able to optimise the searching process in its cache by first
considering the tags that it expects to identify at a timestamp $t'$ close to
$t$. Note that if the ETA is accurate, the identification of tags might be very
fast. The better the prediction, the faster the identification process. In the
worst case the computational cost is $O(n)$, where $n$ is the number of
identifiers in the cache of the reader.

\subsection{Trajectory prediction algorithms}
\label{sec:trajectory}

We propose to use trajectory predictors extensively so as to be able to
inform readers about which tags they will receive and when, before they actually
receive them. However, when this prediction fails, we propose to use other predictors to find the reader that
might have the information about a tag. These latter predictors consider the movement of all tags globally, \emph{i.e.} they look for global trends instead of predicting the moves of a single tag as does by the former predictors.

In general, a trajectory is understood as a timely ordered set of consecutive
points $(\mathcal{P})$ defined in an $n$-dimensional space $(\mathcal{S})$.
However, due to the fact that we can only control the location of the tags when
they are detected by a reader, we define our concept of trajectory as follows:
\begin{quotation}
\begin{definition}[Trajectory]
Given a set of readers $\mathcal{R}$ and tags $\mathcal{T}$. The trajectory of
a tag $T_i \in \mathcal{T}$ is defined as a sequence
$$S_i = <t_1, R_{1}>, <t_2, R_{2}>, \cdots, <t_{s(i)}, R_{s(i)}>$$
where $s(i)$ is the size of the sequence, $t_1 < t_2 < \cdots < t_{s(i)}$ are
timestamps and, $R_{j} \in \mathcal{R} \ \forall 1 \leq j \leq s(i)$ are the
readers that identified the tag $T_i$ at the timestamp $t_j$.
\\
\end{definition}
\end{quotation}

\begin{figure}[tb]
\centering
\includegraphics[scale=0.25,angle=0]{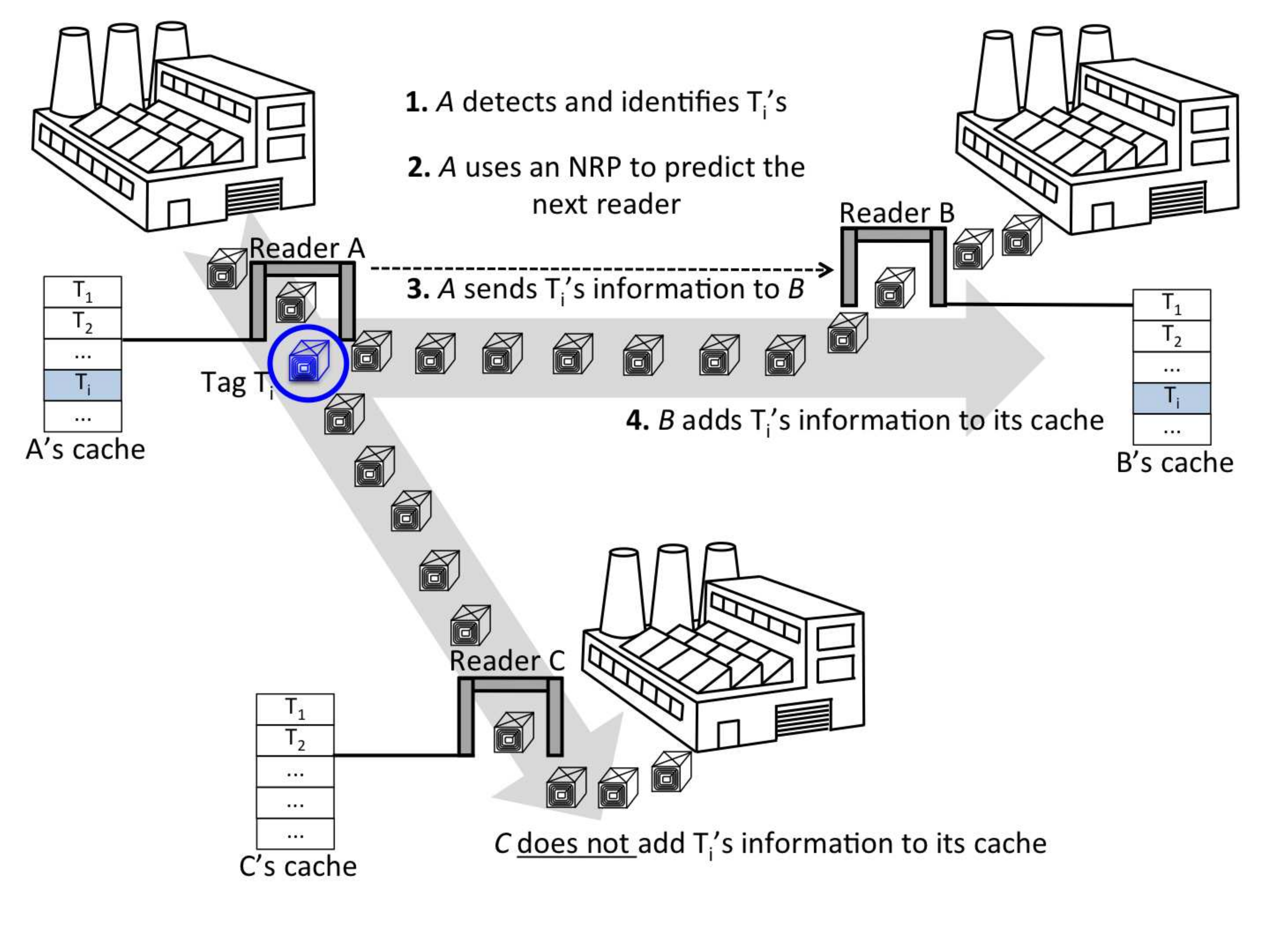}
\caption{Illustration of the identification success of the tag $T_i$ by reader
$A$. After successfully identifying $T_i$, the reader $A$ applies a Next Reader
Predictor (NRP) to predict the next reader (in this case, reader $A$ decides
that reader $B$ is the best candidate) and sends the $T_i$ identification
information to reader $B$. The reader $B$ stores the information about $T_i$
in its cache so as to be able to identify it (if necessary).}
\label{fig:NRP}
\end{figure}

When a tag arrives at the cover area of a reader, the reader tries to identify
it by applying the already explained IRHL protocol. During the identification
process two situations could arise:
\begin{enumerate}
	\item \textbf{Identification success}: The reader finds the identification
	information of the tag in its cache and can identify it. Then it has to
	decide to which reader should this information be forwarded (see the example
	of Figure~\ref{fig:NRP}).
	\item \textbf{Identification failure}: The cache of the reader does not
	contain the identification information of the tag and the reader cannot
	identify it. The reader has to decide which other reader to ask for help
	(see the example of Figure~\ref{fig:PRP}).
\end{enumerate}
In the first case (\textit{identification success}), after properly identifying
a tag, the reader will proceed by using a \emph{Next
Reader Predictor} (NRP) algorithm to determine which reader will be the next one to which
the tag will move. Once this next reader is determined, the current reader
sends the identification information of the tag to that reader. An NRP can be
defined as follows:

\begin{quotation}
\begin{definition}[Next Reader Predictor (NRP)]
Let $T_i$ be a tag of the system and let $S_i = <t_1, R_{1}>, <t_2, R_{2}>,
\cdots, <t_{j}, R_{j}>$ be its trajectory.
An \emph{NRP} is a polynomial-time algorithm (let us call it
$\mathcal{A}_{next}$) that, on input $T_i$ and $S_i$, outputs the pair
$<t_{j+1}, R_{j+1}>$.
$$ \mathcal{A}_{next}(T_i, S_i) \longrightarrow  <t_{j+1}, R_{j+1}>$$
This output pair means that it is expected that the tag $T_i$
will be identified at time $t_{j+1} > t_{j}$ by the reader $R_{j+1}$.
\end{definition}
\end{quotation}

Note that the result of the NRP is correct only with a probability that highly
depends on the utilised algorithm and the degree of regularity of the movement
of tags\footnote{It is apparent that in a chaotic system where no regularities
exist, the prediction of the next move of a tag would be extremely
inefficient.}.
Thus, if the prediction is wrong, the reader which is currently identifying the
tag $T_i$ will forward the identification information to a wrong reader. As a
consequence, when that tag reaches the next reader, the latter will not be able
to identify the tag (because the identification information will not be in its
cache) and will need the help of other readers to do so (this is the second
case enumerated above).

In the second case (\textit{Identification failure}), when a reader cannot
identify a tag, it proceeds by using a \emph{Previous
Reader Predictor} (PRP) algorithm to identify the reader that might have identified the
tag previously and might have the identification information of the tag. A PRP
can be defined as follows:

\begin{figure}[tb]
\centering
\includegraphics[scale=0.25,angle=0]{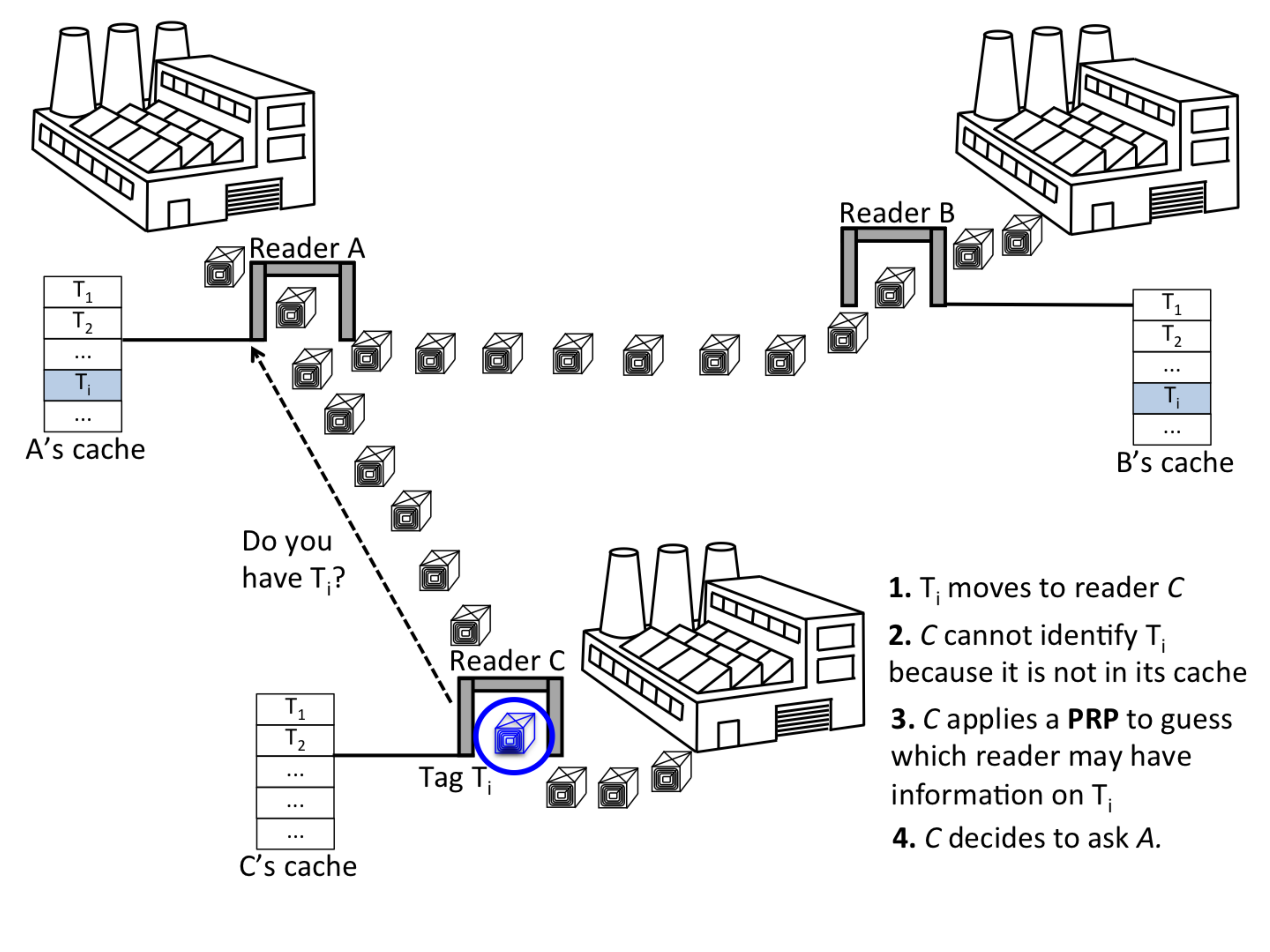}
\caption{Illustration of the identification failure. Reader $C$ tries to
identify $T_i$ but it fails because it has not got the information in its cache.
It uses a PRP to decide which reader to ask for help (in this case $C$ asks
$A$).}
\label{fig:PRP}
\end{figure}

\begin{quotation}
\begin{definition}[Previous Reader Predictor (PRP)]
Let $\mathcal{T}=\{T_1, \cdots, T_{N}\}$ be the set of tags in the system and
let $\mathcal{S} \ = \ \{S_1, \cdots, S_{N}\}$ be the set of trajectories of the
tags in $\mathcal{T}$ until a given time $t$. Let $\mathcal{T^R}\subset
\mathcal{T}$ be the subset of tags known by reader $\mathcal{R}$ and let
$\mathcal{S^R}\subset\mathcal{S}$ be the trajectories of the subset of tags
known by reader $\mathcal{R}$.  A \emph{PRP} is a polynomial-time algorithm
(let us call it $\mathcal{A}_{prev}$) that, on input a reader $\mathcal{R}$
and a set of trajectories of tags $\mathcal{S^R}$, outputs the sequence of $k$
readers $R_{1}, R_{2}, \cdots, R_{k}$ that are candidates to be the previous
reader that identified a tag:
$$ \mathcal{A}_{prev}(\mathcal{R}, \mathcal{S^R}) \longrightarrow
<R_{1}, R_{2}, \cdots, R_{k}>$$
\end{definition}
\end{quotation}

\begin{quotation}
\begin{remark}
The order of the sequence of candidate readers depends on the specific
implementation of the predictor. However, the following condition must hold:
$$P(success|R_1)\geq P(success|R_2)\geq \cdots \geq P(success|R_k)$$
This means that $R_1$ has more chances of being the actual previous reader than
$R_2$, etc.
\end{remark}
\end{quotation}
Note that we have defined the theoretical concept of
NRP and PRP algorithms. However, the specific
implementation of these algorithms would highly affect the performance of the
whole system. Further below, we will give details on the implementations
that we have used for our experimental analysis.

\subsection{Our protocol}
\label{sec:protocol}

We define our protocol as a distributed algorithm in the context of a set of
collaborative readers $\mathcal{R}$ that share identification information on
a number of tags $\mathcal{T}$. For the sake of completeness, in addition to
$\mathcal{R}$, we consider a special reader $\mathcal{O}_R$ that acts as an
oracle, \emph{i.e.} it has the same role of classical back-ends that have
the information of all tags in the system. $\mathcal{O}_R$ can
identify any tag in $\mathcal{T}$, hence no false negative identifications
occur. However,
in our collaborative context, the oracle
should be understood as the \textit{``last
resort''} to identify a tag if all the other mechanisms fail\footnote{Note that this situation might happen rarely and
probably it would be caused by a communication failure amongst the
collaborative readers or by an active attack.},
because
the computational cost associated to the identification of tags by the oracle
is very high.

\begin{figure}[tb]
\centering
\includegraphics[scale=0.20,angle=0]{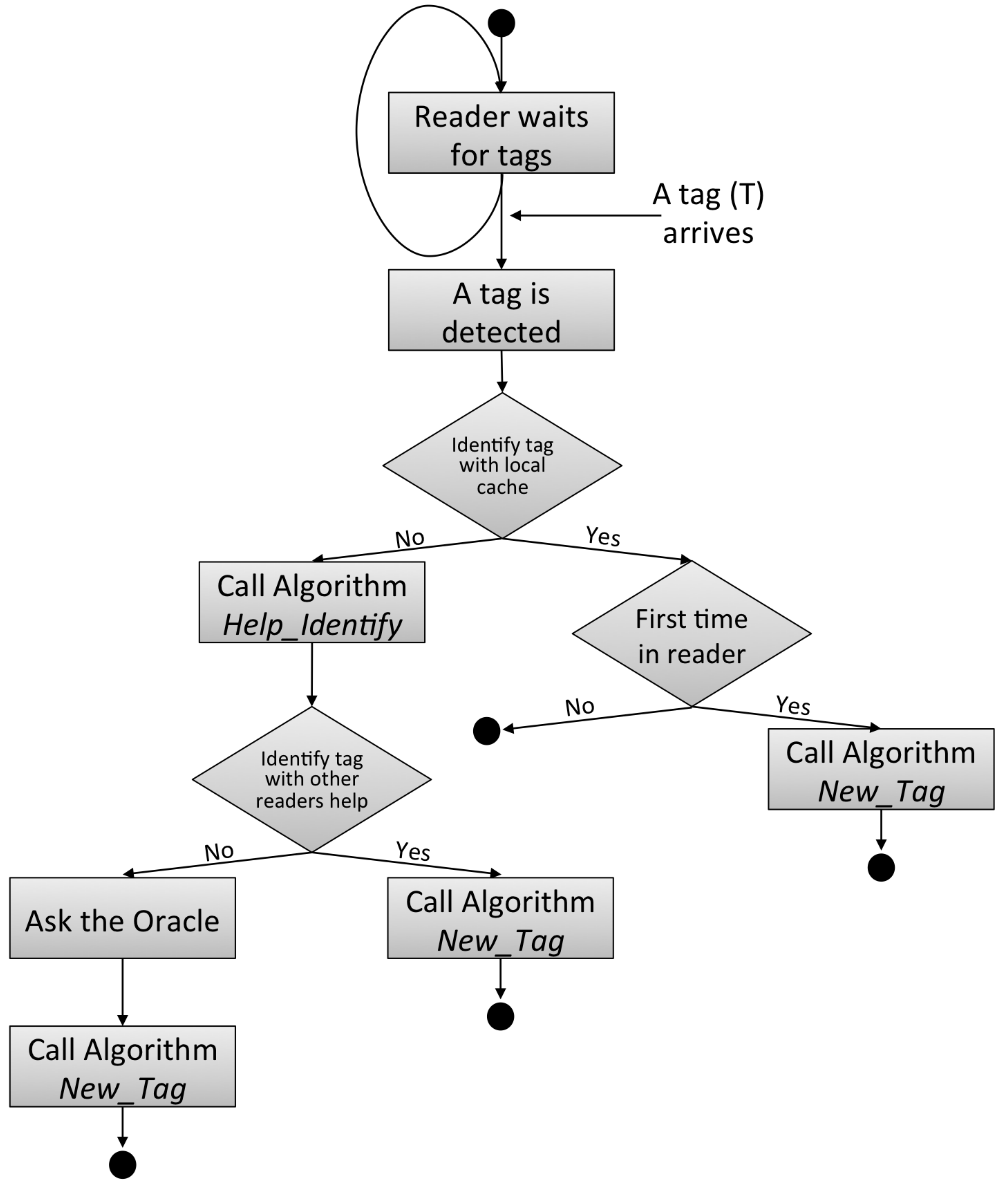}
\caption{Conceptual logical flow of the protocol.}
\label{fig:Protocol}
\end{figure}

\begin{algorithm}[!ht]
\caption{Main protocol} \label{alg:main_protocol}
\begin{algorithmic}[1]
\STATE \textbf{Require:} (i) A tag $T$ to be identified by a reader $R$ at time $t$; (ii) The set of tags' trajectories $\mathcal{S^R}$ known by $R$; (iii) The cache of reader $R$, $C(R)$.\\
$ $\\
\textit{- - Try to identify T using the local cache}
\FORALL {$t' \in \{t, t+1, t-1, t+2, t-2, \cdots\}$}
    \FORALL{$T_i \in C(R)$ with $ETA=t'$}
        \IF {$T$ is identified as $T_i$}
            \IF {$T_i$ is a ``first time'' tag}
                \STATE Call \emph{New\_Tag}
                ($R$, $T_i$, $t$, $R_{prev}^{T_i}$, $S_i$)
            \ENDIF
            \STATE \textbf{Return:} (\textit{Tag identified correctly});
        \ENDIF
    \ENDFOR
\ENDFOR \\
\textit{- - Try to identify the tag with the help of other readers}
\FORALL {$R'\in <R_1, R_2, \cdots, R_k> \longleftarrow \mathcal{A}_{prev}(R, \mathcal{S^R})$}
    \STATE Call \emph{Help\_Identify} ($T$,$R'$)
    \IF {$R'$ identifies $T$}
        \STATE $R$ receives $<t', T_i, R_{prev}^{T_i}, R_{next}^{T_i}>$ from $R'$.
        \STATE Call \emph{New\_Tag}
        ($R$, $T_i$, $t$, $R'$ and, $S_i$).
        \STATE \textbf{Return:} (\textit{Tag identified correctly});
    \ENDIF
\ENDFOR
\\
\textit{- - (\textit{last resort}) Ask the Oracle}
\IF {$\mathcal{O}_R$ identifies $T$}
    \STATE Call \emph{NewTag} ($R$, $T$, $t$, $\mathcal{O}_R$, $\emptyset$).
    \STATE \textbf{Return:} (\textit{Tag identified correctly});
\ENDIF
\STATE \textbf{Return:} \textbf{Invalid tag $T$ found}
\end{algorithmic}
\end{algorithm}

Algorithm~\ref{alg:main_protocol} shows a pseudocode description of our protocol
and Figure~\ref{fig:Protocol} depicts the logical flow of the proposal.
Our protocol works as follows. The reader $R$, that receives an identification
message from an unidentified tag $T$ at time $t$, tries to identify it by
following the Improved Randomised Hash Lock Scheme (IRHL)~\cite{JuelsW-2007-percom, Juels:2009:DSP:1609956.1609963} but using the identification information stored in its own cache only \textit{(lines 1 to
10 in Algorithm~\ref{alg:main_protocol})}).
In order to perform this identification efficiently, the
reader uses the cache structure described above. First, it tries to identify $T$
as one of the tags that were expected to arrive at time $t$. If the tag is not
identified amongst these candidate tags, the reader tries with tags that were
expected to arrive a bit later at time $t+1$ and a bit earlier at time
$t-1$, and so on. Searching in this way, if the ETA of $T$ was properly
predicted and forwarded, $T$ is identified almost instantly. However, if
the prediction was wrong, the reader $R$ might need to search over all its
cache. If $T$ is identified, $R$ checks whether it is the first time that this
tag enters its interrogation zone, \emph{i.e.} it is a ``first time'' tag. If it is, it calls the procedure
\textit{New\_Tag}($R, T_i, t, R^{T_i}_{prev},S_i$) and the identification
finishes. If the tag is not a ``first time'' tag, the identification procedure
simply finishes.

If the identification information of $T$ was not properly forwarded to
$R$\footnote{Note that this might happen due to a wrong prediction of the
next reader by the previous reader.}, it will search over all its cache and will
not be able to identify $T$. In this situation, it has to ask for help to the
other collaborative readers that might have the information it needs
(\textit{lines 11 to 18 in Algorithm~\ref{alg:main_protocol}}). To do so, $R$
calls the PRP algorithm $\mathcal{A}_{prev}(R, S)$ so as to obtain a list of
readers that may have information about $T$. For each reader $R'$ in the list
returned by $\mathcal{A}_{prev}$, the procedure \textit{Help\_Identify}
($T$,$R'$) is called. If this procedure succeeds in identifying $T$, the
collaborative reader that succeeds sends the tuple of its cache that contains
the information about $T$, \emph{i.e.}
the tuple $<t', T_i, R_{prev}^{T_i}, R_{next}^{Ti}>$) is sent to $R$.
By using the information in this tuple, the identification process correctly
finishes after calling the procedure
\textit{New\_Tag}($R$, $T_i$, $t$, $R'$, $S_i$).

Finally, if no reader $R'$ can identify $T$, $R$ asks the oracle
$\mathcal{O}_R$ (\textit{lines 19 - 23 in Algorithm~\ref{alg:main_protocol}}).
If $\mathcal{O}_R$ cannot identify $T$, the latter can be considered
an illegitimate tag\footnote{In this case, the proper actions are to be
taken, namely raise an alarm, locate and eliminate the tag, etc.}. Otherwise,
$R$ finishes successfully the identification process by calling procedure
\textit{New\_Tag} ($R$, $T$, $t$, $\mathcal{O}_R$, $\emptyset$).

\begin{algorithm}[!ht]
\caption{New\_Tag} \label{alg:new_tag}
\begin{algorithmic}[1]
\STATE \textbf{Require:} (i) A reader $R$ that has identified a tag $T_i$ at time $t$; (ii) The reader $R^i_{prev}$; (iii) The trajectory $S_i$ of $T_i$.\\
$ $\\
\STATE $R$ asks $R_{prev}^{T_i}$ to remove $T_i$ from its cache;
\STATE $R$ predicts the next reader and ETA
$<t_i, R_{next}^{T_i}> = \mathcal{A}_{next} (T_i, S_i)$;
\STATE $R$ asks $R_{next}^{T_i}$ to insert the record
$<t_i, T_i, R, null, Y>$ in its cache;
\STATE $R$ removes the record about $T_i$ from $C(R)$ \textit{(if it exists)};
\STATE $R$ inserts the record  $<t, T_i, R_{prev}^{T_i}, R_{next}^{T_i}, N>$
into $C(R)$;
\STATE $R$ adds $<t, R>$ to $T_i$'s trajectory ($S_i$);
\end{algorithmic}
\end{algorithm}

The main protocol described in Algorithm~\ref{alg:main_protocol} uses two
procedures (\textit{New\_Tag} and \textit{Help\_Identify}) to
update the state of the caches of other collaborative readers and to identify
tags for which the identifying reader has no information.

The \textit{New\_Tag} procedure, described in Algorithm~\ref{alg:new_tag}, is
called when a reader $R$ determines that a newly identified tag, $T_i$, has
entered its interrogation zone for the first time, \textit{i.e.} it is a
``First time'' tag, and thus, $T_i$'s trajectory must be updated. In this case,
$R$ sends a message to the previous reader $R_{prev}^{T_i}$ that identified
$T_i$ so as to let it remove the information it has about $T_i$\footnote{This
information is no longer necessary and removing it from the cache speeds up
the identification procedure.} (note that when $R_{pre}^{T_i} = \mathcal{O}_R$
this message is not sent). Then, $R$ uses an NRP to determine the next reader
that will be visited by $T_i$ and sends a message to it to let it insert the
tuple $<t_i, T_i, R, null, Y>$ in its cache (this way, when the tag reaches this
reader, it will be able to identify it efficiently). Finally, the record
corresponding to $T_i$ in $C(R)$ is updated with proper information about the
next reader $<t, T_i, R_{prev}^{T_i}, R_{next}^{T_i}, N>$.

\begin{algorithm}[!ht]
\caption{Help\_Identify} \label{alg:help}
\begin{algorithmic}[1]
\STATE \textbf{Require:} $T$ a tag to be identified by a reader $R$;
\STATE Determine $t_{old}$ the oldest timestamp in $C(R)$;
\FORALL {$t' \in \{t_{old}, t_{old}+1, t_{old}+2 \cdots\}$}
    \FORALL {$T_i\in C(R)$ with $ETA=t'$}
        \IF {$T$ is identified as $T_i$}
            \STATE $R$ asks $R_{prev}^{i}$ to remove the information of $T_i$
            from its cache;
            \STATE \textbf{Return:} $<t', T_i, R_{prev}^{T_i}, R_{next}^{T_i}>$
        \ENDIF
    \ENDFOR
\ENDFOR
\STATE \textbf{Return:} \textit{$<null>$ (Tag not identified)}
\end{algorithmic}
\end{algorithm}

The \textit{Help\_Identify} procedure, described in Algorithm~\ref{alg:help}, is
called when a reader $R$ cannot identify a tag with the information stored in
its cache. This procedure is executed by the readers that collaborate with $R$.
Due to the fact that these collaborative readers might have seen the unknown tag
quite in the past, they start searching tuples in their caches whose timestamps
are old. If a collaborative reader $R'$ identifies $T$ as $T_i$ it sends a
message to $R_{prev}^{T_i}$ in order to let it remove the information on $T_i$
from its cache. Finally, $R'$ returns the tuple about $T_i$ stored in its cache.

\subsection{Practical implementation of the predictors}
\label{sec:predictors}
Previously, we have theoretically defined the
concepts of Next Reader Predictor (NRP) and Previous Reader Predictor (PRP). Below, we propose practical implementations for each of
these predictors.

\subsubsection{Next reader predictor}
We propose to use a location prediction algorithm based on a Markov
model~\cite{SongEtAl-2004-Infocom}. A Markov-based predictor of order $k$,
$O(k)$, is defined over the sequence of the last $k$ locations of a given
moving entity. Let $L = \ell_1, \cdots, \ell_n$ be the location history of
a given entity and let $L(i, j) = \ell_i, \cdots, \ell_j$ be a subsequence of
$L$. Let $X_i$ be the random variable that represents a location at time $t$.
Then, the Markov assumption is that:
\begin{equation*}
\Pr(X_{n+1} = x | X_1 = \ell_1, \cdots, X_n = \ell_n) = \\
\end{equation*}
\begin{equation}\label{eq_4_5:1}
\Pr(X_{n+1} = x | X_{n-k+1} = \ell_{n-k+1}, \cdots, X_n = \ell_n)
\end{equation}

And that for every $i \in \{1, 2, \cdots, n-k\}$:

\begin{equation*}
\Pr(X_{n+1} = x | X_{n-k+1} = \ell_{n-k+1}, \cdots, X_n = \ell_n) =
\end{equation*}
\begin{equation}\label{eq_4_5:2}
\Pr(X_{i+k} = x | X_{i-k} = \ell_{n-k+1}, \cdots, X_{i+k-1} = \ell_{n})
\end{equation}

Simply stated, Equation~\ref{eq_4_5:1} says that the probability of being in
a given location depends on the previous $k$ locations only, whilst
Equation~\ref{eq_4_5:2} says that this probability is time independent.
Therefore, this probability can be represented by a transition matrix $M$
labelled with all possible sequences of locations of size $k$:

\begin{equation*}
\Pr(X_{n+1} = x | X_1 = \ell_1, \cdots, X_n = \ell_n) =
\end{equation*}
\begin{equation}
M(L(n-k+1, n), L(n-k+1, n)||x)
\end{equation}

And the value of $M(a, b)$ may be estimated by

\begin{equation}
M(a, b) = \frac{N(a, L)}{N(b, L)}
\end{equation}

Where $N(s_1,\ s_2)$ is the number of times the subsequence $s_1$ occurs in the
sequence $s_2$.

In our protocol, locations are represented by the readers $\mathcal{R}$ and
a next reader predictor (NRP) is only used by readers $\mathcal{R}$ once they
realise that a tag $T$ is in their interrogation zone. Thus, the last location
of $T$ is the current reader $R$, \emph{i.e.} $\ell_{n} = R$. Therefore, we
believe that a reader could be able to implement a Markov-based predictor of
order 1 or 2 using a reasonably small amount of memory. In addition, counting
the number of times that a tag is identified by a reader after having been
identified by another reader can be easily done when calling the
\textit{New\_Tag} procedure described in Algorithm~\ref{alg:new_tag}.
Our Markov-based predictor is computationally efficient. It has
a logarithmic computational cost with respect to the number of readers
$\mathcal{R}$.

Regarding the time prediction, we use a very simple approach. Let
$t_m$ be the average time in which a tag $T$ is identified by two consecutive
readers. Let $t$ be the current time in which $T$ is identified by a reader. We
estimate that the next reader will identify $T$ at time $t+t_m$. Note that the
readers store, share and update $t_m$. To update the value of $t_m$, the reader
applies the following equation:
$$
	t_m = \frac{t_m\times(c-1)+t-t_{last}}{c},
$$
where $c$ is the number of times that the tag has been identified and
$t_{last}$ is the last time in which that tag was identified.

\subsubsection{Previous reader predictor}

We propose the use of an heuristic to implement the previous reader predictor ($\mathcal{A}_{prev}$). In a nutshell, the proposed predictor works as follows: When a reader $R_i$
identifies a tag, it increments a counter $G(R_i,R_j)$, where $R_j$ is the last
reader that identified that tag. By doing so, when
$\mathcal{A}_{prev}(\mathcal{R}_i,\mathcal{S}^{\mathcal{R}_i})$ is called, it
outputs the sequence,
$$R_{1}, R_{2}, \cdots, R_{k}$$
such that
$$G(R_i, R_{1}) \geq
G(R_i, R_{2}) \geq \cdots \geq G(R_i, R_{k}).$$

The computational cost of $\mathcal{A}_{prev}$ is logarithmic
with respect to the number of readers $\mathcal{R}$. Note that there is no
need for sorting the output list every time the algorithm is called, \textit{i.e.} this
might lead to a computational complexity $O(|\mathcal{A}|\log |\mathcal{A}|)$.
On the contrary, the list could be stored already sorted and simply updated
after increasing the value of $G(R_i, R_j)$ for any pair of readers
$R_i$ and $R_j$.

Note that the PRP is essentially a ``global'' predictor in the sense that it
is based on the information of the trajectories of multiple tags. Consequently,
it can be seen as a trend analyser (\emph{e.g.} if most of the tags that are
identified by a reader $R_y$ move to a reader $R_x$, when the reader $R_x$ uses
the PRP, the first result will be $R_y$). On the contrary, the NRP previously described
is essentially ``local'' in the sense that it only
depends on the information of a single tag.


\section{Experimental results and evaluation}

We split this experimental section in two subsections.
The first is devoted to comparing
our proposal with both the Solanas {\em et al.}~\cite{SolanasDMD-2007-cn} and the Fouladgar and Afifi~\cite{Fouladgar:2008:SPP:1461464.1461467} proposals. The second subsection does not consider the Solanas {\em et al.}~\cite{SolanasDMD-2007-cn} proposal anymore because it was designed over assumptions quite different from ours. In turn, we show in this part of the experimental section how a good implementation of the next reader predictor algorithm ($\mathcal{A}_{next}$) improves the efficiency of the identification process.

\subsection{Experiments considering the Solanas {\em et al.} proposal}

As we explain above, the Solanas {\em et al.}~\cite{SolanasDMD-2007-cn} proposal considers a scenario where tags are continuously monitored by readers. To do so, readers must have a large interrogation field so as to cover the whole scenario. Consequently, a tag is likely to be identified several consecutive times by the same reader. Under this assumption, a next reader predictor algorithm ($\mathcal{A}_{next}$) does not make sense. Note that $\mathcal{A}_{next}(T_i, S_i)$ will output, with high probability, the last reader of the trajectory $S_i$ because that reader is likely to identify again $T_i$ in the next slot of time. Then, in this first part of the experiments, we remove $\mathcal{A}_{next}$ from our protocol and we recall this variant as \emph{Partial-predictive}. In turn, our proposal using both predictors is simply called \emph{predictive} and it will be evaluated in the second half of the experimental section.

\subsubsection{Scenarios}

With the aim to overcome the limitation of obtaining real datasets
of tag movement in a fine-grained fashion as required in~\cite{SolanasDMD-2007-cn}, we define two types of tag movements and three different scenarios with which we evaluate and compare our partial-predictive proposals with ~\cite{SolanasDMD-2007-cn} and ~\cite{Fouladgar:2008:SPP:1461464.1461467}.

The first scenario is an open area (cf. Figure \ref{fig:open_area}) where tags can freely move. The area is completely covered by 96 readers uniformly distributed over the whole area. By doing so, we meet the constraints of the Solanas {\em et al.} protocol w.r.t. the distribution of readers~\cite{SolanasDMD-2007-cn}.

\begin{figure}[tb]
\small
\centering
   \includegraphics[scale=1]{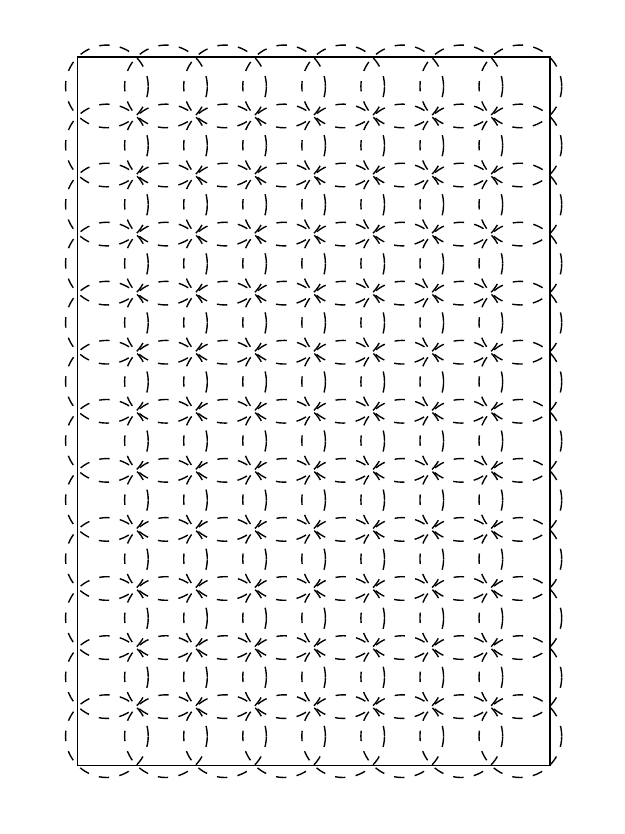}
\caption{Open area completely covered by 96 readers \label{fig:open_area}}
\end{figure}

The second and third scenarios are representations of the seven bridges of K\"{o}nigsberg\footnote{The seven bridges of K\"{o}nigsberg is a notable historical problem in mathematics. In 1735, Leonhard Euler proved that no Eulerian path existed for the K\"{o}nigsberg city. This result laid the foundations of graph theory.}. In these scenarios, people's movements are constrained by the river and thus, they can only use bridges in order to move to different sides of the city. Like people, we assume that tags should not be on the river and we design the second and third scenarios using two different distributions of readers.
The second scenario (cf. Figure~\ref{fig:covered_area}) is a representation of the K\"{o}nigsberg city where 14 readers cover the entire city. Note that this scenario also meets the constraints of the Solanas {\em et al.} protocol w.r.t. the reader distribution~\cite{SolanasDMD-2007-cn}. Since covering a city by 14 RFID readers can be not practical, we design a third scenario (cf. Figure~\ref{fig:not_covered_area}) that only differs from the previous one in the reading ranges and positions of the readers. Notice that in the second scenario a tag can be monitored in every part of the city, while in the third scenario a tag can only be read when passing through a bridge. However, due to the movement constraints in the city and the strategic position of the readers, it is easy to know in which side of the city each tag is located. This is a good example of how, by cleverly placing readers, it is possible to obtain accurate trajectories of
tags.

\begin{figure}[tb]
\small
\centering
   \includegraphics[scale=1]{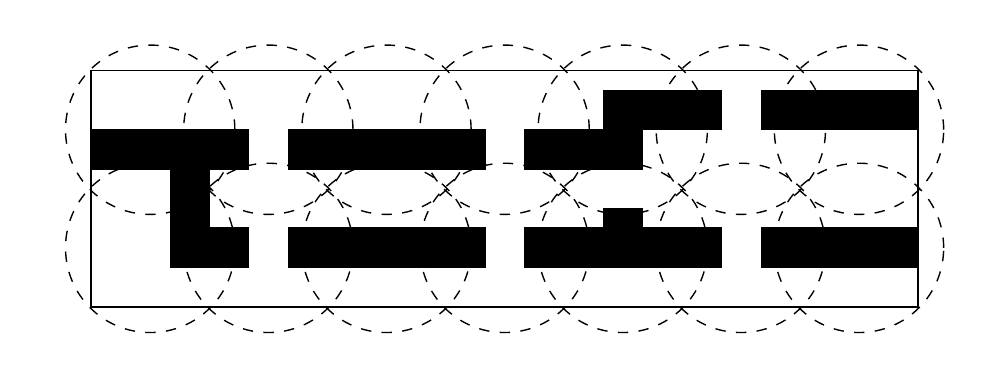}
\caption{K\"{o}nigsberg city representation where 14 readers cover the entire city. Black blocks represent the river water and the seven bridges are represented by the square spaces between black blocks. \label{fig:covered_area}}
\end{figure}

\begin{figure}[tb]
\small
\centering
   \includegraphics[scale=1]{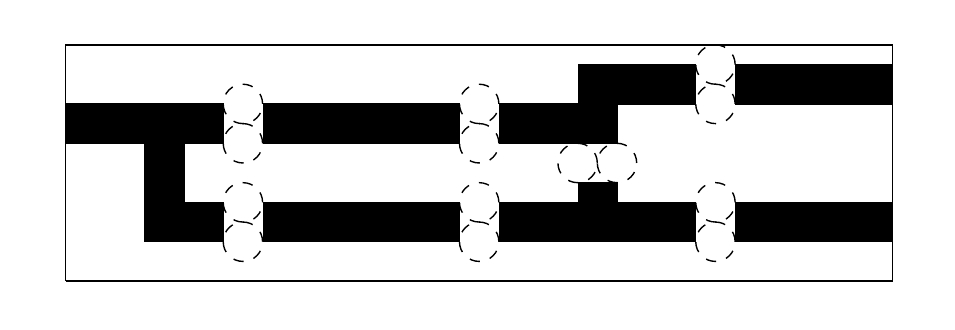}
\caption{K\"{o}nigsberg city representation where 14 readers are monitoring the
two ends of each bridge. Black blocks represent the river water and the seven bridges are represented by the square spaces between black blocks. \label{fig:not_covered_area}}
\end{figure}

\subsubsection{Movement of tags}

In this stage of the experiments, we should generate fine-grained moving data as required by~\cite{SolanasDMD-2007-cn}. To do so, we consider that tags move according to two types of movement:

\begin{itemize}
    \item \textbf{Random movement}. At each step, a tag chooses a random direction and moves in this direction.
    \item \textbf{Semi-directed movement}. In this movement, a tag always has a target point. Once the tag reaches its target, it changes the target point to a new random and valid point in the scenario. Then, at each step, with probability $0.5$ the tag chooses whether to move randomly or move in the target's direction.
\end{itemize}

Between both movements, semi-directed movement can be considered closer to real movement patterns of people. However, unpredictable movement patterns can be only evaluated using a random movement.

\subsubsection{Simulations}

In order to compare our partial-predictive proposal against the two previous proposals~\cite{SolanasDMD-2007-cn, Fouladgar:2008:SPP:1461464.1461467}, we perform simulations on the three scenarios defined above. For each scenario, different settings defined by the number of tags in the system and the type of movement are used. A simulation process consists of $10^4$ tags moving according to some pattern (random or semi-directed) in one of the three scenarios. For each simulation process, tags are identified using four different methods:

\begin{enumerate}
    \item The Fouladgar {\em et al.} method~\cite{Fouladgar:2008:SPP:1461464.1461467} assuming that
    each tag is in the cache of only one reader. We refer to this method as
    \textbf{Fouladgar 1-1}.
    \item The Fouladgar {\em et al.} method~\cite{Fouladgar:2008:SPP:1461464.1461467} assuming that
    each tag may be in the cache of several readers. The authors propose to
    store the tag data in the cache of those readers that may read it most
    often. As this is not possible for the two data sets considered in this
    work, we make the assumption that a tag will be in the cache of the readers
    that have identified it previously. We refer to this method as
    \textbf{Fouladgar 1-M}.
    \item The Solanas {\em et al.} method~\cite{SolanasDMD-2007-cn}. We refer to this method as
    \textbf{Solanas}.
    \item The previously mentioned \textbf{partial-predictive} proposal.
\end{enumerate}

In order to give statistically sound results, each simulation process is executed 30 times and the average number of cryptographic operations performed by each method is computed. Figure~\ref{fig:random_10000} and Figure~\ref{fig:semi_10000} show the experimental results obtained for $10^4$ tags moving according to the random movement and the semi-directed movement, respectively. In both figures, it can be observed that the Partial-predictive proposal improves on the previous ones by more than $50\%$. This means that, for any scenario and any type of movement, our partial-predictive proposal needs, in the worst case, half the number of cryptographic operations executed by previous proposals~\cite{SolanasDMD-2007-cn, Fouladgar:2008:SPP:1461464.1461467}.

\begin{figure}[!t]
\centering
\includegraphics[width=2.5in, angle = 270]{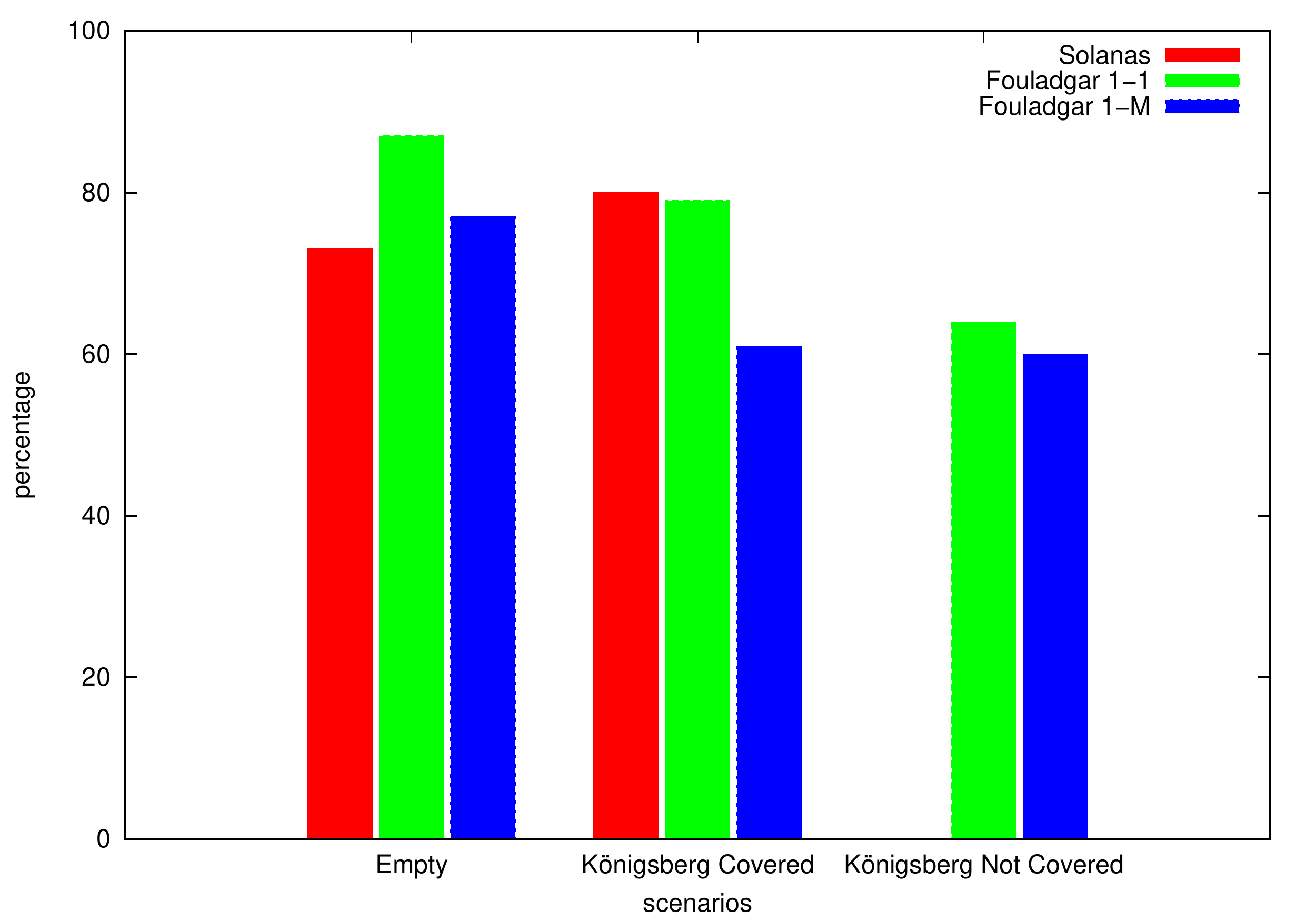}
\caption{Percentage of improvement of the partial-predictive proposal w.r.t. previous ones considering random movement and $10^4$ tags}
\label{fig:random_10000}
\end{figure}

\begin{figure}[!t]
\centering
\includegraphics[width=2.5in, angle = 270]{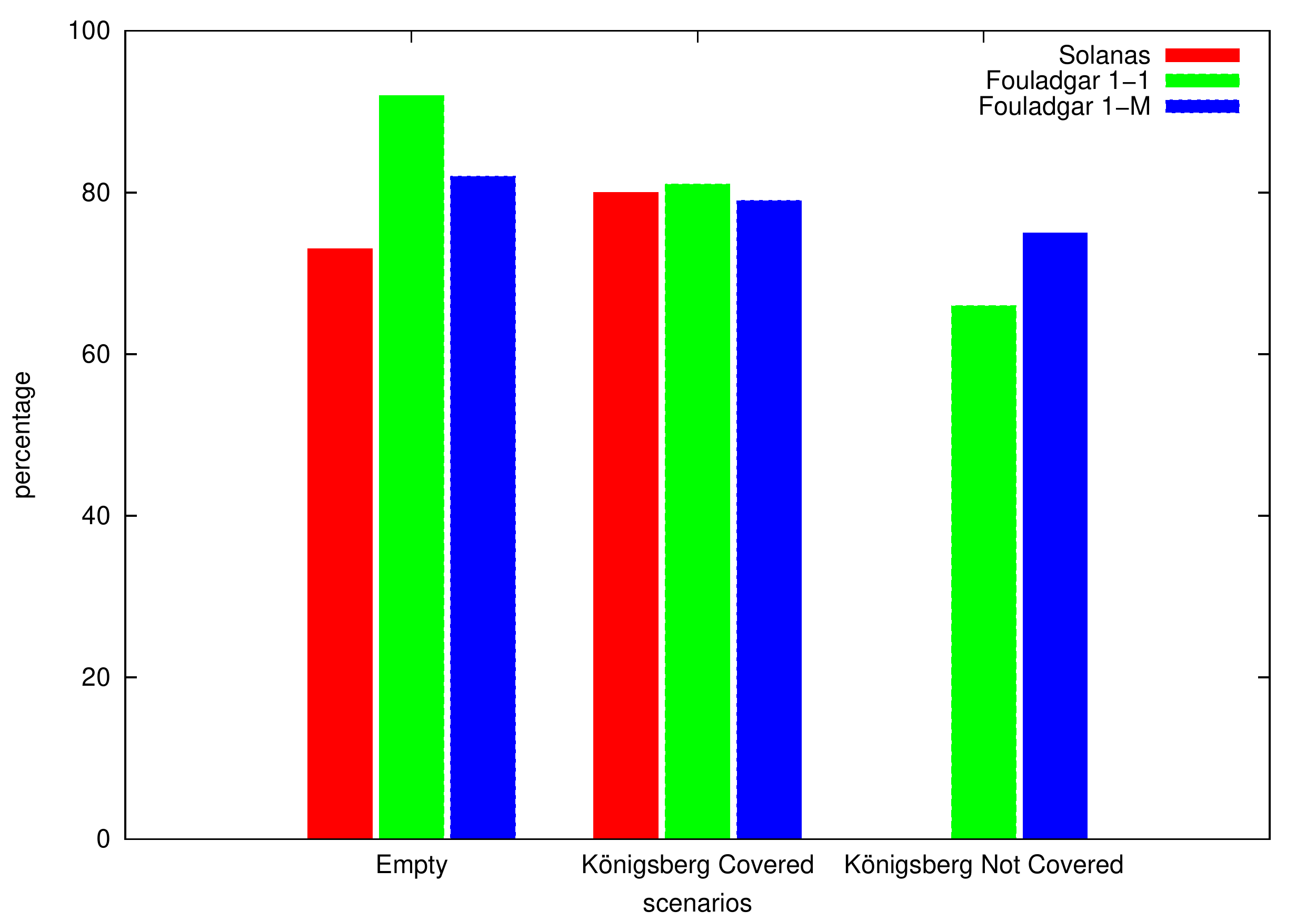}
\caption{Percentage of improvement of the partial-predictive proposal w.r.t. previous ones considering semi-directed movement and $10^4$ tags}
\label{fig:semi_10000}
\end{figure}

The partial-predictive proposal performs better than previous ones mainly due to three reasons: (i) after the identification of a tag, the reader saves in its own cache the tag's data in order to identify it faster in the future, (ii) the size of the caches of readers is minimised in such a way that two readers never share tag information, (iii) and when a reader can not identify a tag using its own cache, it is able to heuristically ($\mathcal{A}_{prev}$) find another reader that could identify this tag.

\subsection{Experiments considering the $\mathcal{A}_{next}$ algorithms}

In the second half of the experimental section, we consider coarse-grained datasets of tag movement (data sets of trajectories), in the style of
tracking data recorded by RFID systems. A data set of trajectories contains a historical log with all the identification
events produced by readers during the identification of tagged objects in a
given scenario. Thus, with these data sets, it is possible to determine the
precise moment in which a tagged object was identified by a given reader in
an exact location. By using these data sets of trajectories, whether real or
synthetic, we are able to measure the performance of our proposal in terms of
computational cost and bandwidth usage, and compare it to others without the
need for an expensive and very time consuming implementation of real prototypes.

However, obtaining real data sets of trajectories of RFID tagged objects
moving through, for example, supply chains is very difficult, \emph{i.e.}
these data are generally kept by private companies that are quite reluctant
to share them. Hence, the use of synthetic data obtained by means of
simulation is a common practice~\cite{GonzalezEtAl-2010-TKDE}
\cite{GonzalezEtAl-2006-ICDE}~\cite{GonzalezEtAl-2007-ICDE}. However, a
synthetic data set might fall short of capturing the real complexity of the
motion of objects. With the aim to lessen this problem and in order to perform
a comprehensive comparison of our proposal with previous ones, we use two
different data sets of trajectories:
\begin{enumerate}
\item A synthetic data set generated by simulating the movement of tagged
objects in supply chains. This data set has been generated by using techniques
proposed in previous articles~\cite{GonzalezEtAl-2010-TKDE}
\cite{GonzalezEtAl-2006-ICDE}~\cite{GonzalezEtAl-2007-ICDE}, which deal with
moving objects in supply chains.
\item A real data set consisting of a historical log of the movement of
wireless cards through several access points at Dartmouth
College~\cite{dartmouth-campus-2007-02-08}. This real data set of trajectories
captures the movement of students in the Dartmouth College when they connect to
the wireless access points of the campus.
\end{enumerate}

\subsubsection{Generating the synthetic data set}

As stated above, we generate a synthetic data set of moving objects in supply
chains. Similarly to~\cite{GonzalezEtAl-2006-ICDE}, we consider several
distribution centres or factories that may exchange tagged products/items in
both directions by means of input/output gates (controlled by RFID readers).
Once a distribution centre has $M$ items in any of its output gates, it sends
these items to another randomly selected distribution centre. Upon reception of
a set of items by a distribution centre, these items are processed according to
the distribution centre policy. Like in previous models
\cite{GonzalezEtAl-2010-TKDE}
\cite{GonzalezEtAl-2006-ICDE}~\cite{GonzalezEtAl-2007-ICDE},
the distribution centre policy is defined by a
graph. Locations where items arrive and depart are the nodes of the graph,
whilst the edges represent the possibility of moving between locations. In
particular, we define a random graph for each distribution centre and random
Poisson distributions to model the departure of items in each location. By doing
so, we simulate that items move in small groups or individually inside each
distribution centre whilst they move in large groups between distribution
centres. Note that this kind of movement is similar to the one given
in~\cite{LeeChung-2008-SIGMOD} where two types of data are considered: (i) groups
of items (GData) and (ii) single items (IData).

Similarly to~\cite{GonzalezEtAl-2010-TKDE}, we define five distribution centres
and twenty locations in each of them. For each distribution centre we define a
random graph using an Erd\H{o}s-R\'{e}nyi model $G(n, p)$ where $n = 20$ and
$p = 0.5$. Also, we assign to each location a Poisson distribution $P(\lambda)$
where $\lambda = 10$. Finally, the minimum number of items that are sent as a
group between distribution centres is defined as $M = 100$.

In order to define the movement pattern of items we consider that they have
different probabilities to departure towards different locations. For each
out-edge of the graph, each item has a probability of taking this edge to leave.
In our experiments, we have defined that for every node having $n$ out-edges,
the sequence of probability values assigned to these out-edges is a permutation
of the sequence $\{\frac{1}{2}, \frac{1}{2^2}, \cdots, \frac{1}{2^{n-1}},
\frac{1}{2^{n-1}}\}$ (note that any other probability distribution could be
defined.) Finally, considering all these settings, we generate a synthetic
data set with $10^5$ trajectories having an average length of $200$ points.

\subsubsection{Generating the real data set}
Dartmouth College has 566 Cisco 802.11b access points installed to cover
most of its campus. The college has about 190 buildings with 115 subnets so that
clients roaming between buildings can change their IP addresses. This roaming
information is recorded in different files for different clients by using
syslog events~\cite{dartmouth-campus-2007-02-08}. In total, more than $14,000$
trajectories collected over almost $2$ years can be found in this data set.

For our experiments, we have selected the shortest $10,000$ trajectories of this
data set. This subset of trajectories is created by parsing all the files having
less than $46$ Kb. We have selected the shortest trajectories because longer
trajectories have useless, larger gaps in the data, generally caused by power
failures, access points failures, or long periods of time in which clients
were not in the campus. Note that those big gaps should not appear in data
sets of items moving through supply chains because, in this scenario, items
cannot be considered lost for a long time. The trajectories of the resulting
data set have an average length of $400$ points.

\subsubsection{Implementing predictors}
In Section~\ref{sec:predictors}, we have defined an effective algorithm to
predict the next location of a moving object based on a Markov model. Also, we
have shown that it is possible to give an estimation of the time when an
object should visit the next location.

In order to provide a better evaluation of our proposal, we have run experiments
using two different predictors:

\begin{enumerate}
    \item A Markov-based predictor: The predictor described in
    Section~\ref{sec:predictors} used to estimate both the next location and
    the time when the object should visit that location.
    \item An Oracle predictor: A predictor that \underline{always correctly}
    guesses the next location and the time when the object should visit that
    location.
\end{enumerate}

It should be emphasised that the Oracle predictor is only possible because we
know in advance the trajectories of the data sets, otherwise it is not possible
to create it. The Oracle predictor can be understood as the optimal predictor,
\emph{i.e.} an upper bound in prediction accuracy.

The Markov-based predictor that we have implemented for our experiments guesses
correctly the next location and ETA of tags $41\%$ of the times with
synthetic data, and $67\%$ of the times with real data. The Oracle predictor has
$100\%$ of success for both data sets.

As it has previously been stated, the performance of our protocol in terms of
computational cost and bandwidth usage strongly depends on the accuracy of the
predictors. Although the obtained results outperform all previous proposals,
there is still room for improvement (\emph{e.g.} by developing better
predictors).

\subsubsection{Performance of protocols}

We will compare the performance of the following proposals:

\begin{enumerate}
    \item The previously mentioned \textbf{Fouladgar 1-1}, \textbf{Fouladgar 1-M}, and \textbf{Partial-predictive} proposals.
    \item Our proposal using a Markov-based predictor. We refer to this
    proposal as \textbf{Predictive (Markov)}.
    \item Our proposal using an Oracle predictor. We refer to this proposal
    as \textbf{Predictive (Oracle)}.
\end{enumerate}


From a scalability point of view, the number of cryptographic operations
performed on the server side is the main concern. Consequently, most of
the hash-based protocols are not considered scalable. However,
RFID protocols based on collaboration between readers have less
computational cost than hash-based protocols but may require more
bandwidth. Therefore, for all the studied protocols we compute
the number of cryptographic operations and, also, the number of messages sent
between readers.

With the aim to study both the computational cost and the bandwidth usage
simultaneously, we have defined a measure that for every protocol
outputs the percentage of \textit{closeness} of the protocol to the optimal
case; the higher \textit{(closer)} the better.

\begin{definition}[Trade-off measure] \label{def:trade-off}
Let $\mathcal{P}$ be the set of protocols under evaluation. Let $\alpha$ be
a real value in the range $[0..1]$. Let $P_c$ and $P_b$ be the number of
cryptographic operations and the number of sent messages of a given protocol
$P \in \mathcal{P}$. Let $min_c = \min(P^i_c), \ \forall \ P^i \in
\mathcal{P}$, $min_b = \min(P^i_b), \ \forall \ P^i \in \mathcal{P}$,
$max_c = \max(P^i_c), \ \forall \ P^i \in \mathcal{P}$, and
$max_b = \max(P^i_c), \ \forall \ P^i \in \mathcal{P}$. Then, the
trade-off measure that we propose is computed as follows:
$$
d(\alpha, P, \mathcal{P}) = \left(\frac{max_c - P_c}{max_c-min_c}\times
100 \right)\times \alpha + \left(\frac{max_b - P_b}{max_b - min_b}\times
100\right)\times (1 - \alpha)
$$
\end{definition}

Using this measure, it is possible to globally analyse the performance of all
protocols at the same time. In addition, thanks to the use of $\alpha$, it
is simple to weight the importance of either the computational cost or the
bandwidth usage. Thus, it can be easily observed which of the analysed protocols
performs best in given conditions.


\subsubsection{Experiments with the synthetic data set}

Figure~\ref{fig:random_hash} depicts the number of cryptographic operations
performed by each protocol over the synthetic data set.
In the beginning of the simulation (in the start-up phase) the ``Predictive
(Oracle)'' and the ``Predictive (Markov)'' have a performance similar to the
``Partial predictive'' protocol. However, after \emph{learning} the movement
pattern of items, they immediately outperform the ``Partial predictive'' protocol. It can also be observed that the predictive protocols and the
``Partial predictive'' protocol are clearly superior to the ``Fouladgar 1-1''
and the ``Fouladgar 1-M'', thus confirming the results
presented in the first part of the experimental section. Figure~\ref{fig:total_hash_weighted} shows the average number of cryptographic
operations per identification. From this figure, it is clear that our new
proposals outperform the previous ones in terms of computational cost and,
by extension, they improve scalability also.

\begin{figure}[p]
\centering
 \subfigure{
   \includegraphics[width=2in, angle = 270]
   {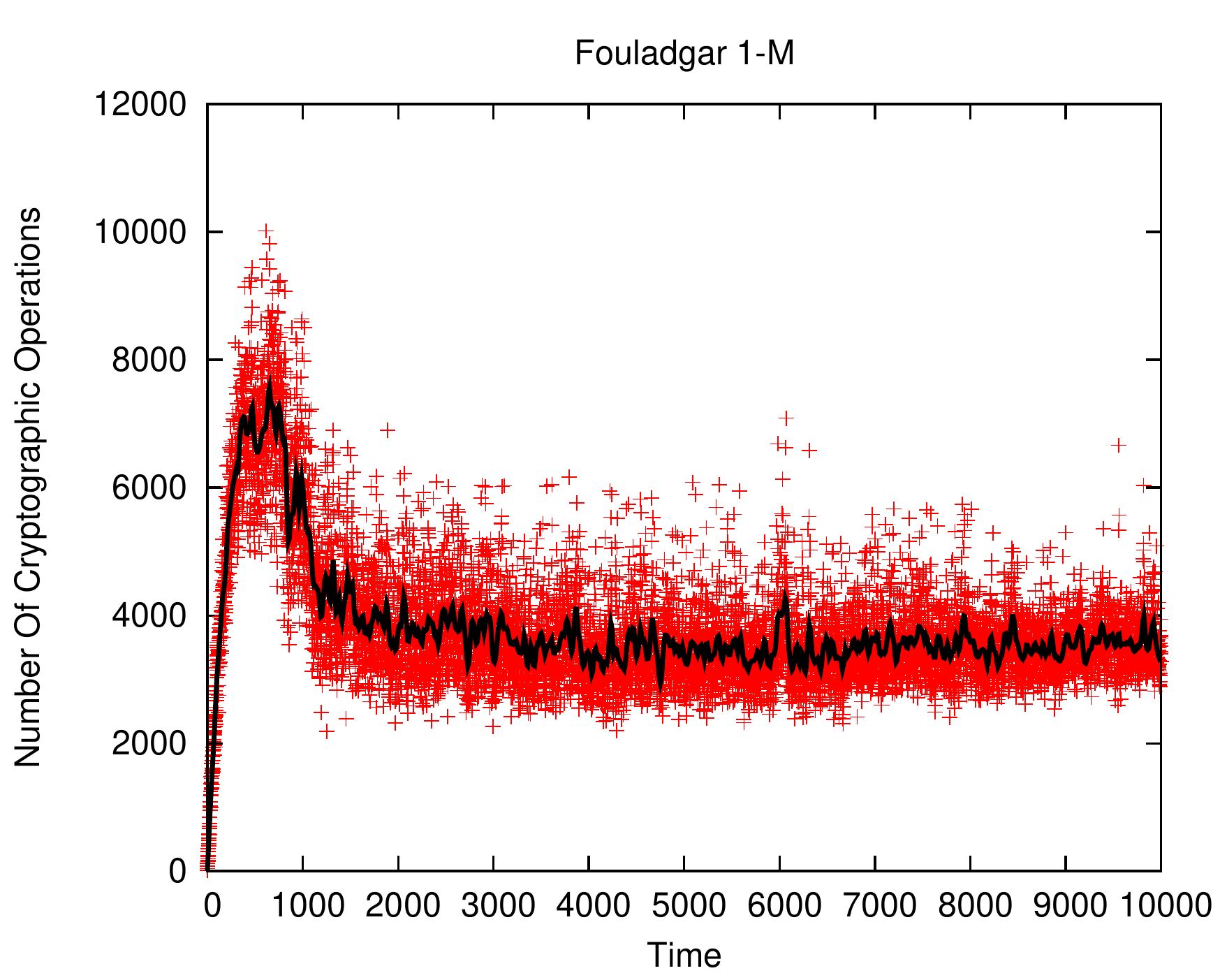}
 }
 \subfigure{
   \includegraphics[width=2in, angle = 270]
   {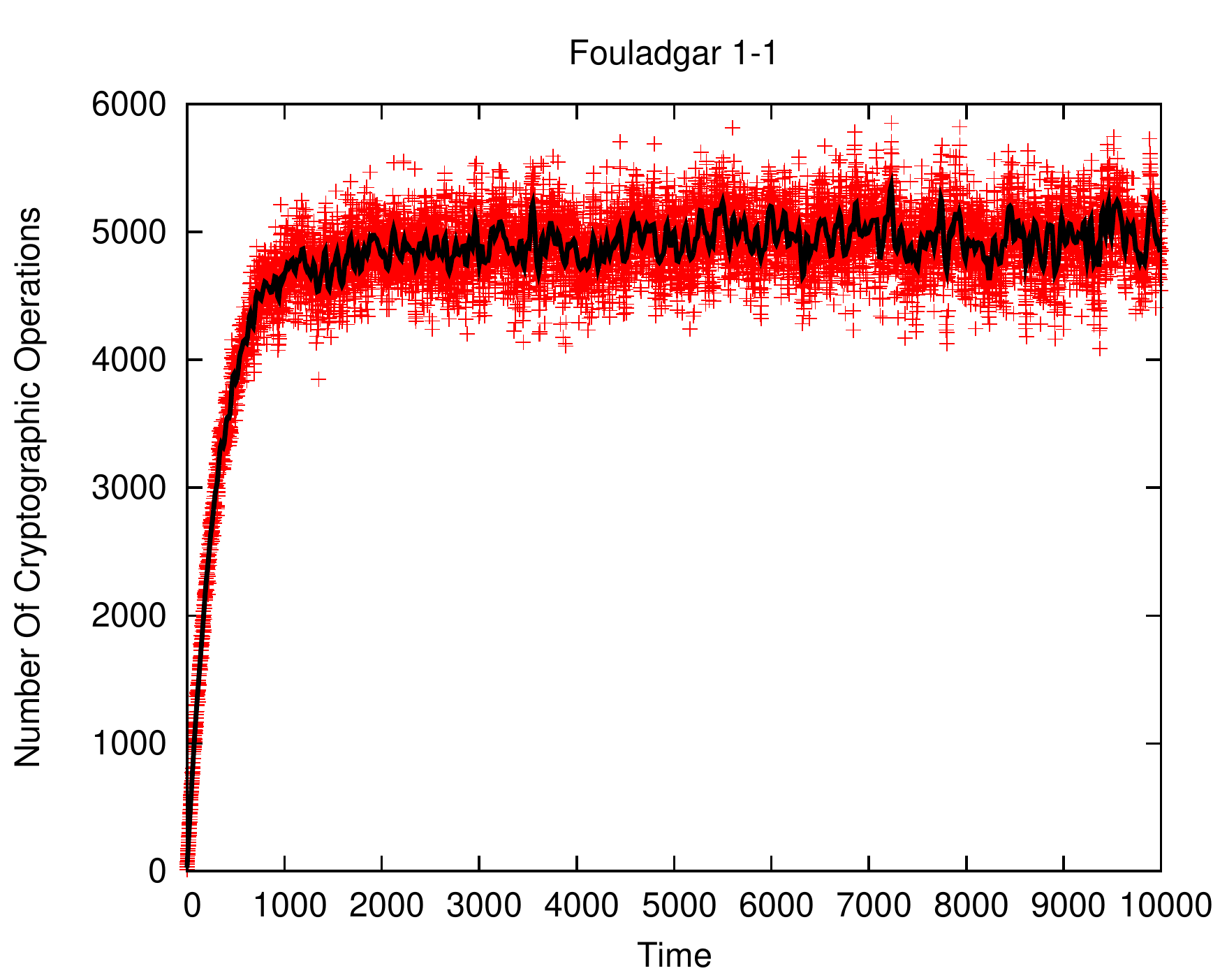}
 }
 \subfigure{
   \includegraphics[width=2in, angle = 270]
   {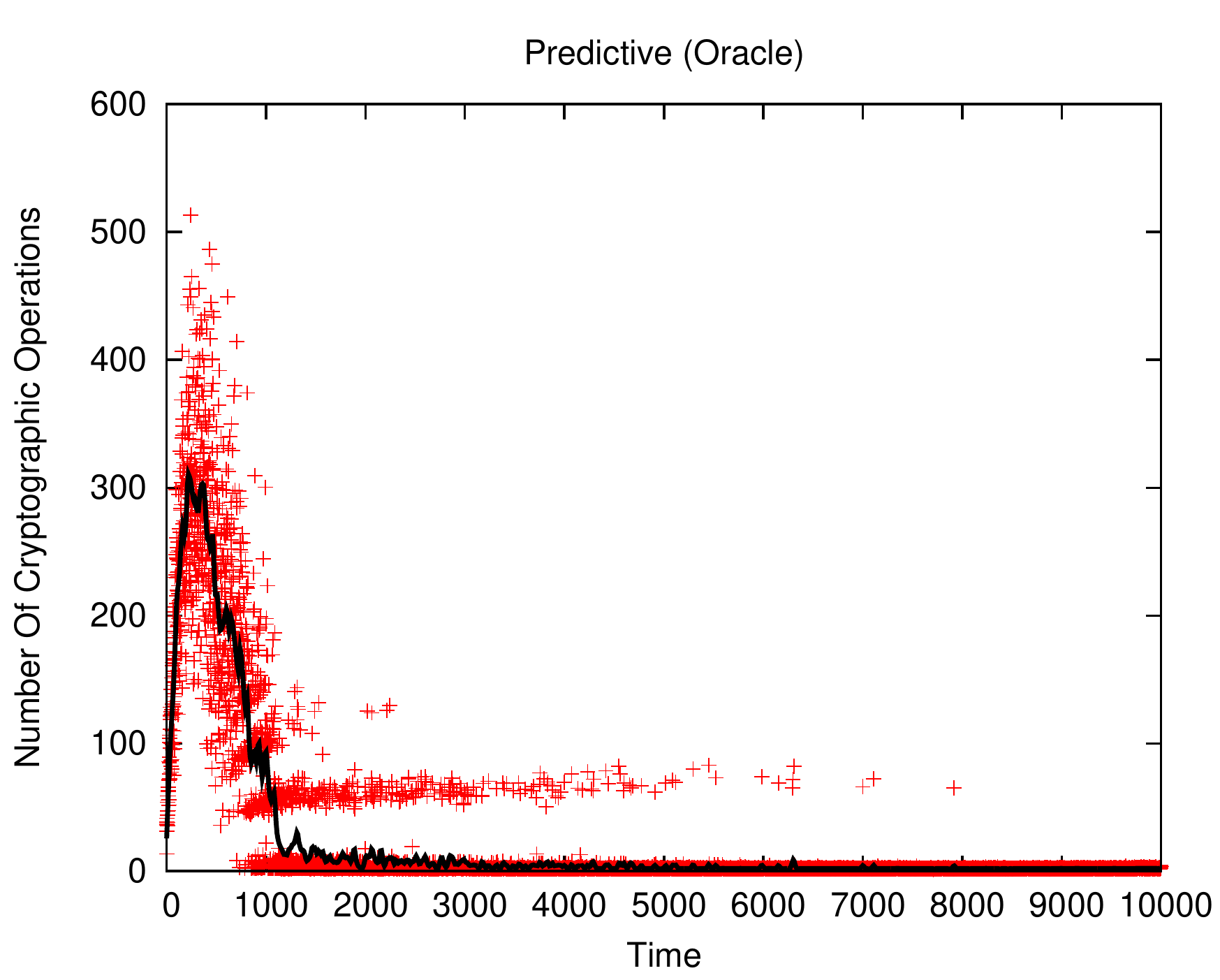}
 }
 \subfigure{
   \includegraphics[width=2in, angle = 270]
   {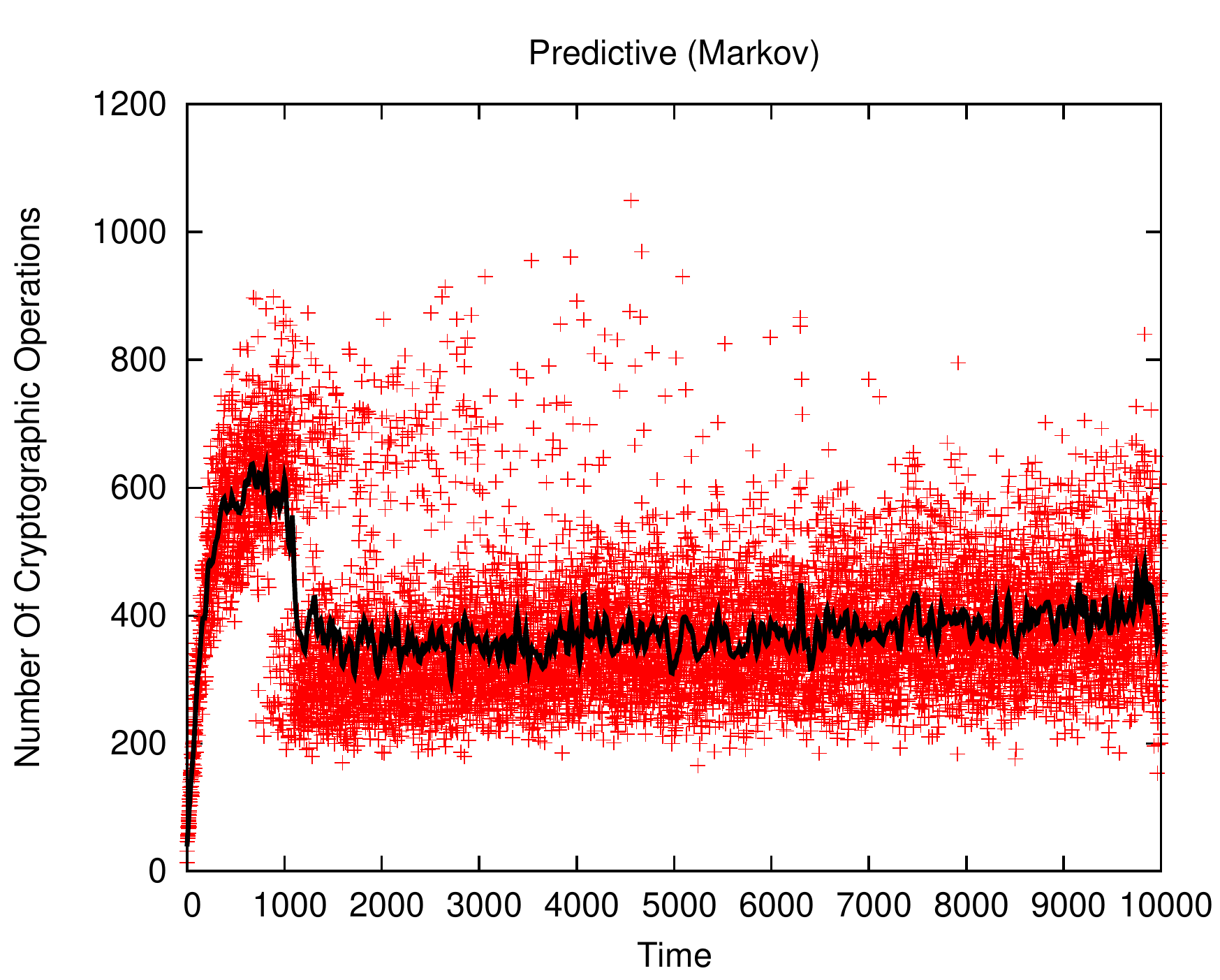}
 }
 \subfigure{
   \includegraphics[width=2in, angle = 270]
   {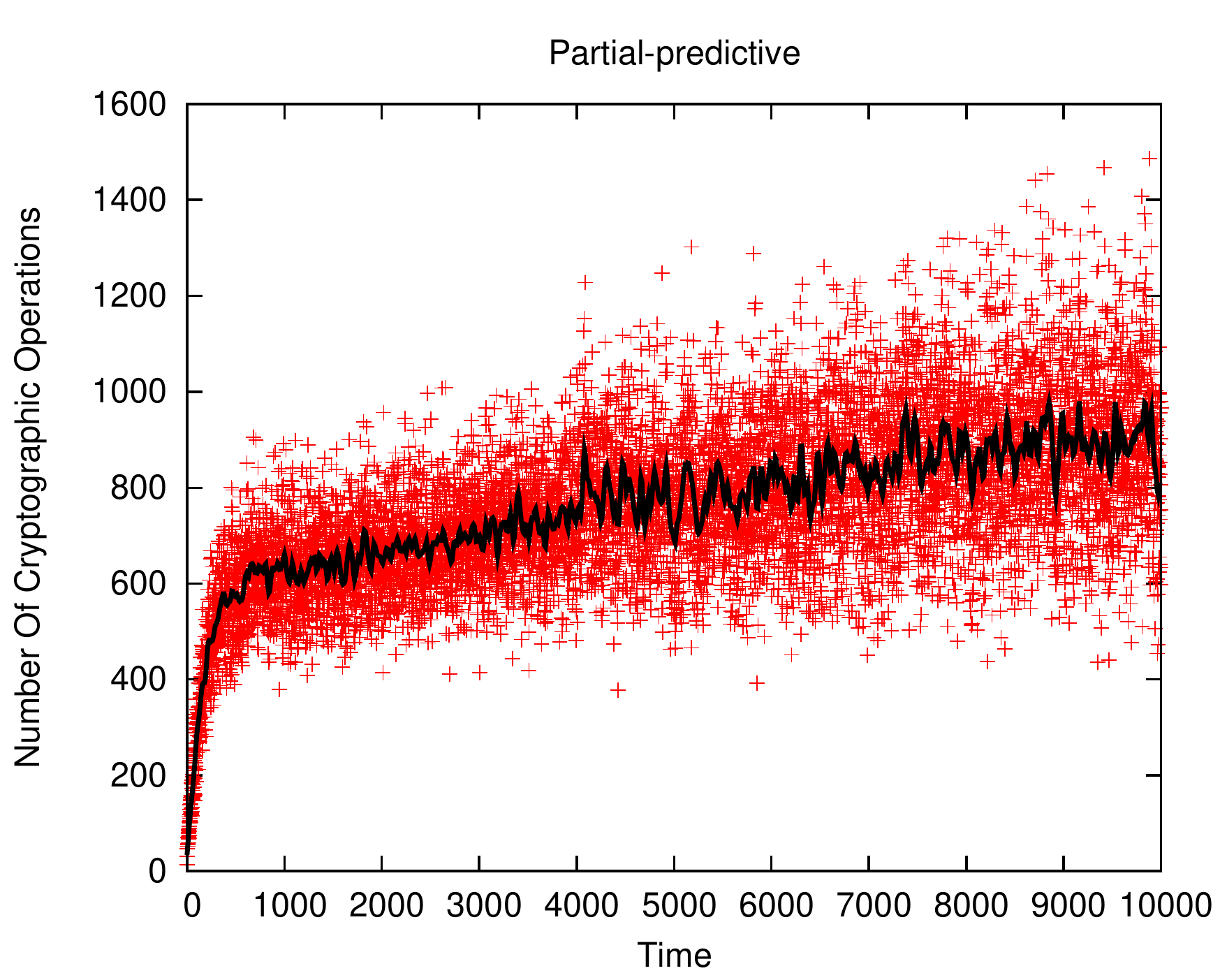}
 }
\caption{Average number of cryptographic operations performed by each
protocol during the complete simulation with the synthetic data set (in red).
The black line represents the moving average of those values in subsets of 100
elements. The time axis represents simulation steps.}
\label{fig:random_hash}
\end{figure}

\begin{figure}[!h]
\centering
    \includegraphics[width=2.5in, angle = 270]{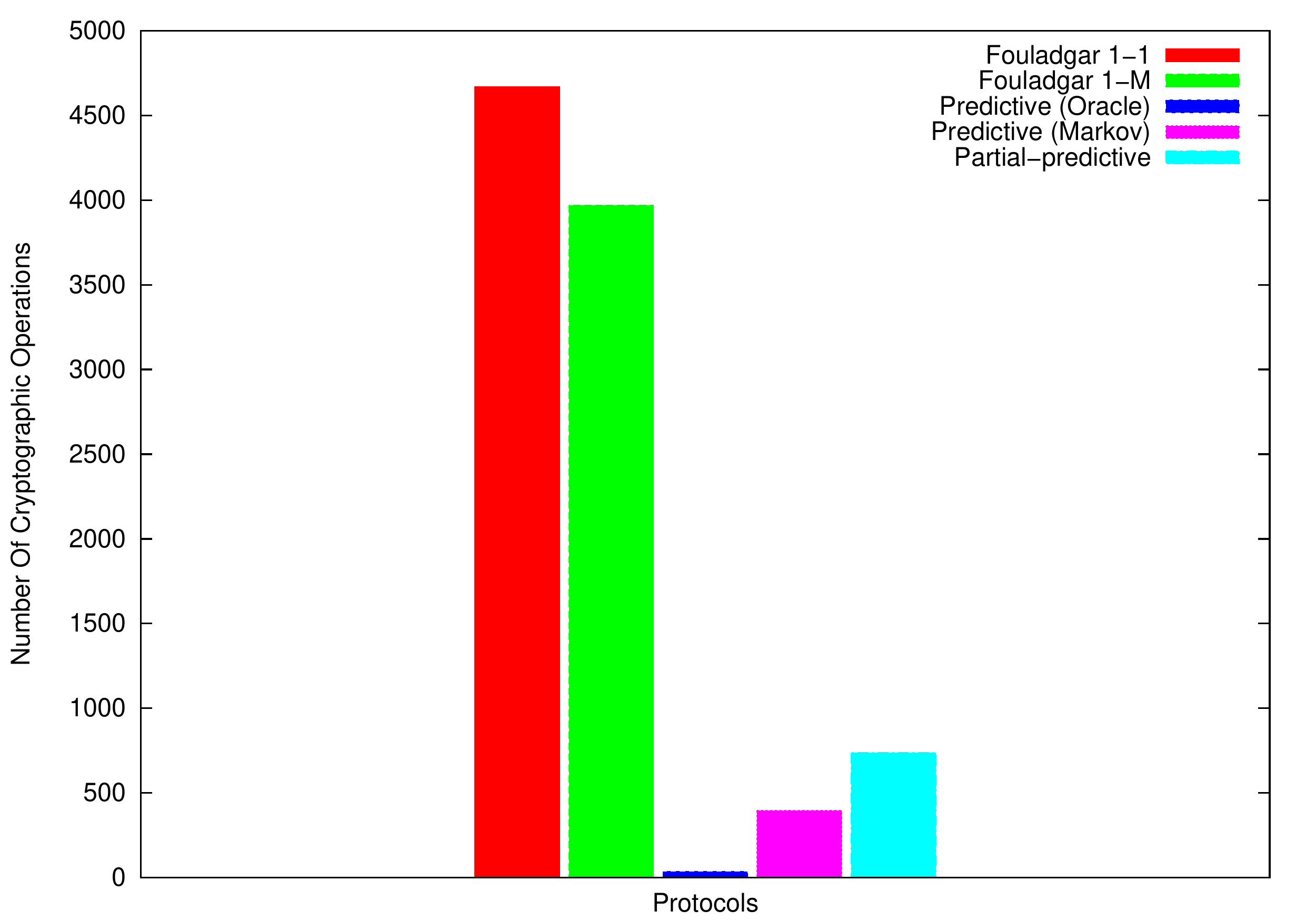}
    \caption{Average number of cryptographic operations per identification
    for the simulation with the synthetic data set. (The lower the better)}
\label{fig:total_hash_weighted}
\end{figure}

Figure~\ref{fig:weighted_msg} shows the number of messages sent by readers in
the studied protocols over the same data set, and
Figure~\ref{fig:total_msg_weighted} depicts the number of those messages on
average. It can be observed in both figures that the ``Fouladgar 1-M'' method
sends fewer messages because it replicates the identification information of
tags in several readers, at the cost of a poor scalability. It is also clear
that our new proposals send a very similar number of messages to the
``Fouladgar 1-M'' proposal but they perform significantly better in terms of
scalability.

\begin{figure}[p]
\centering
 \subfigure{
   \includegraphics[width=1.75in, angle = 270]
   {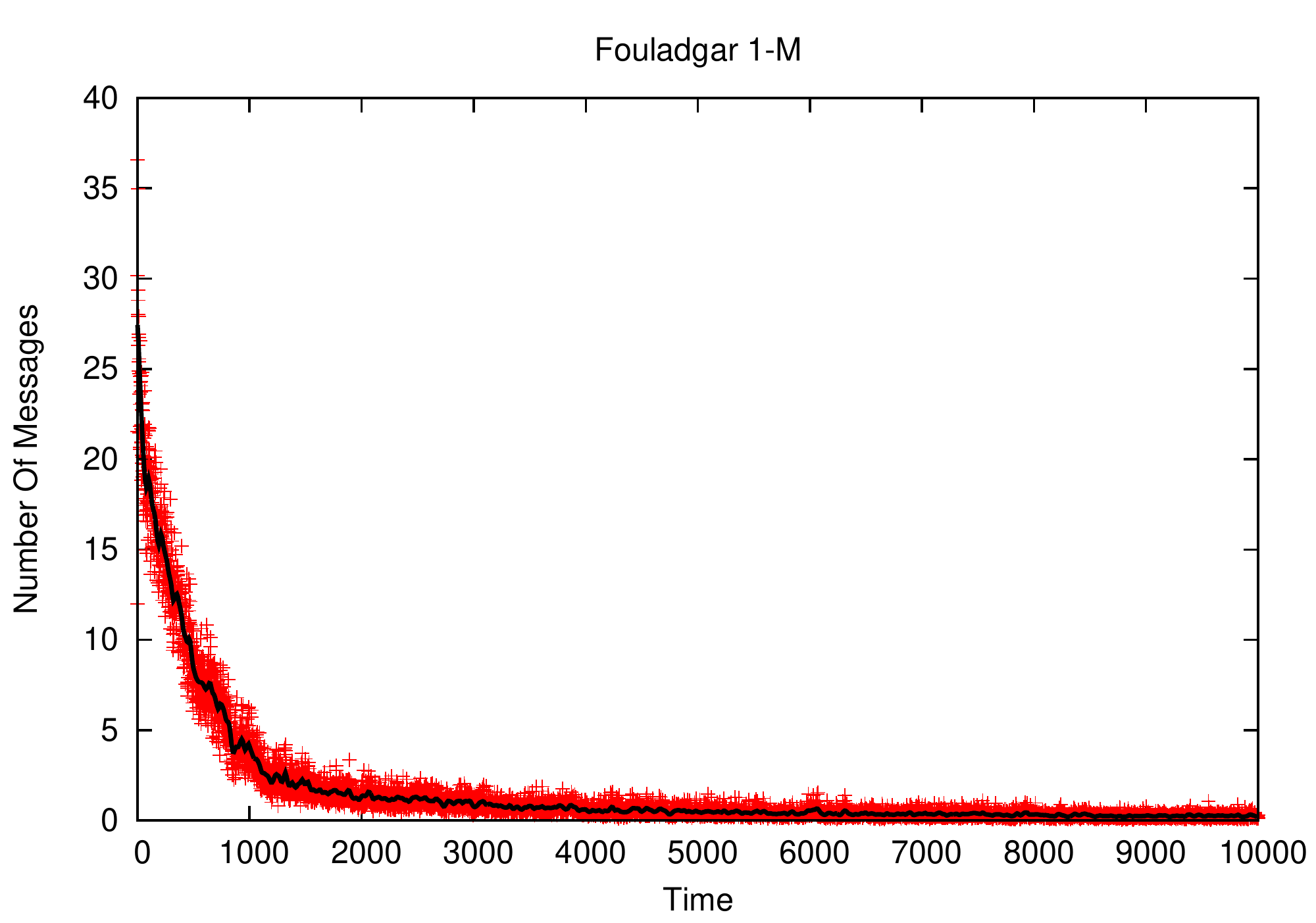}
 }
 \subfigure{
   \includegraphics[width=1.75in, angle = 270]
   {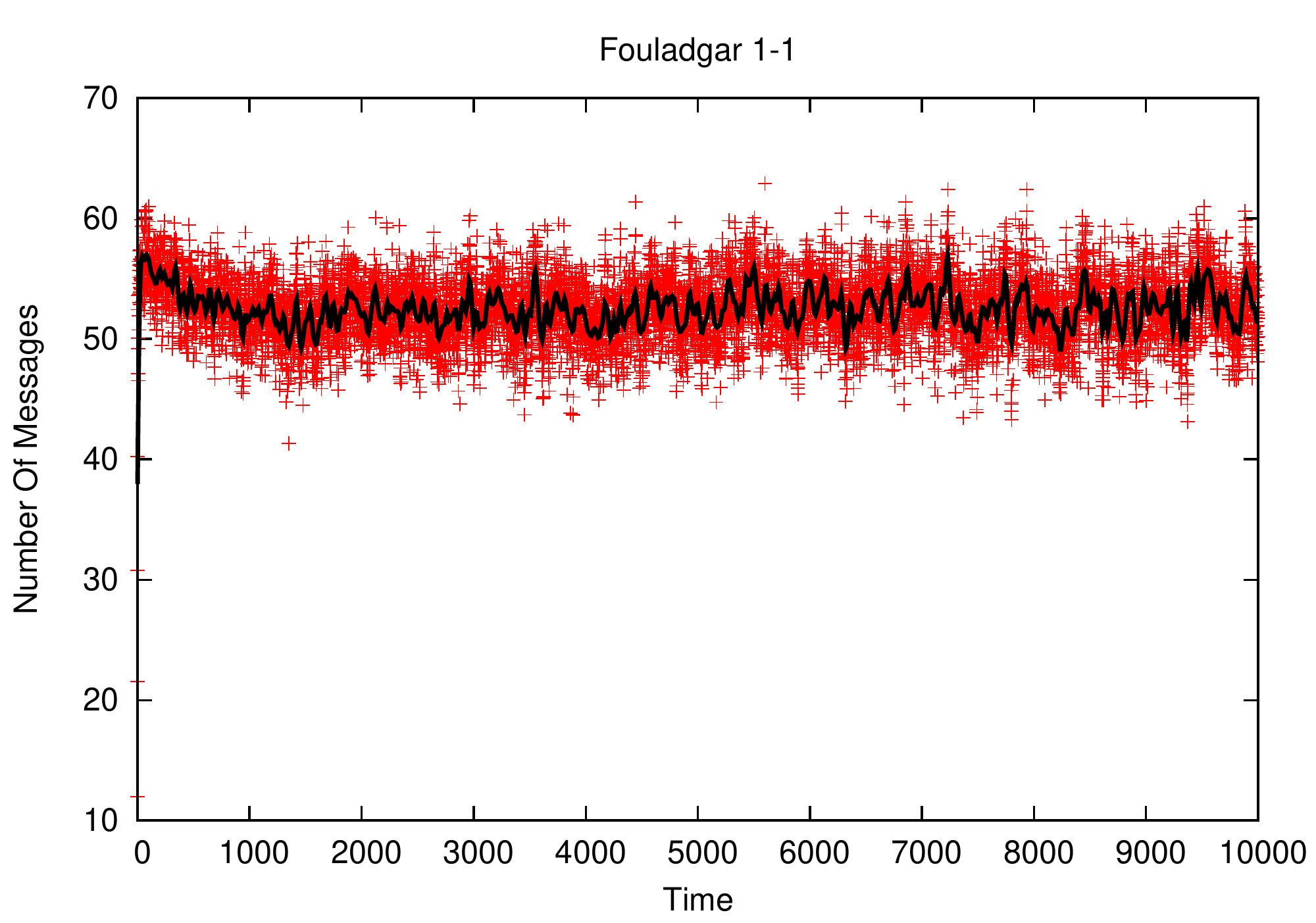}
 }
 \subfigure{
   \includegraphics[width=1.75in, angle = 270]
   {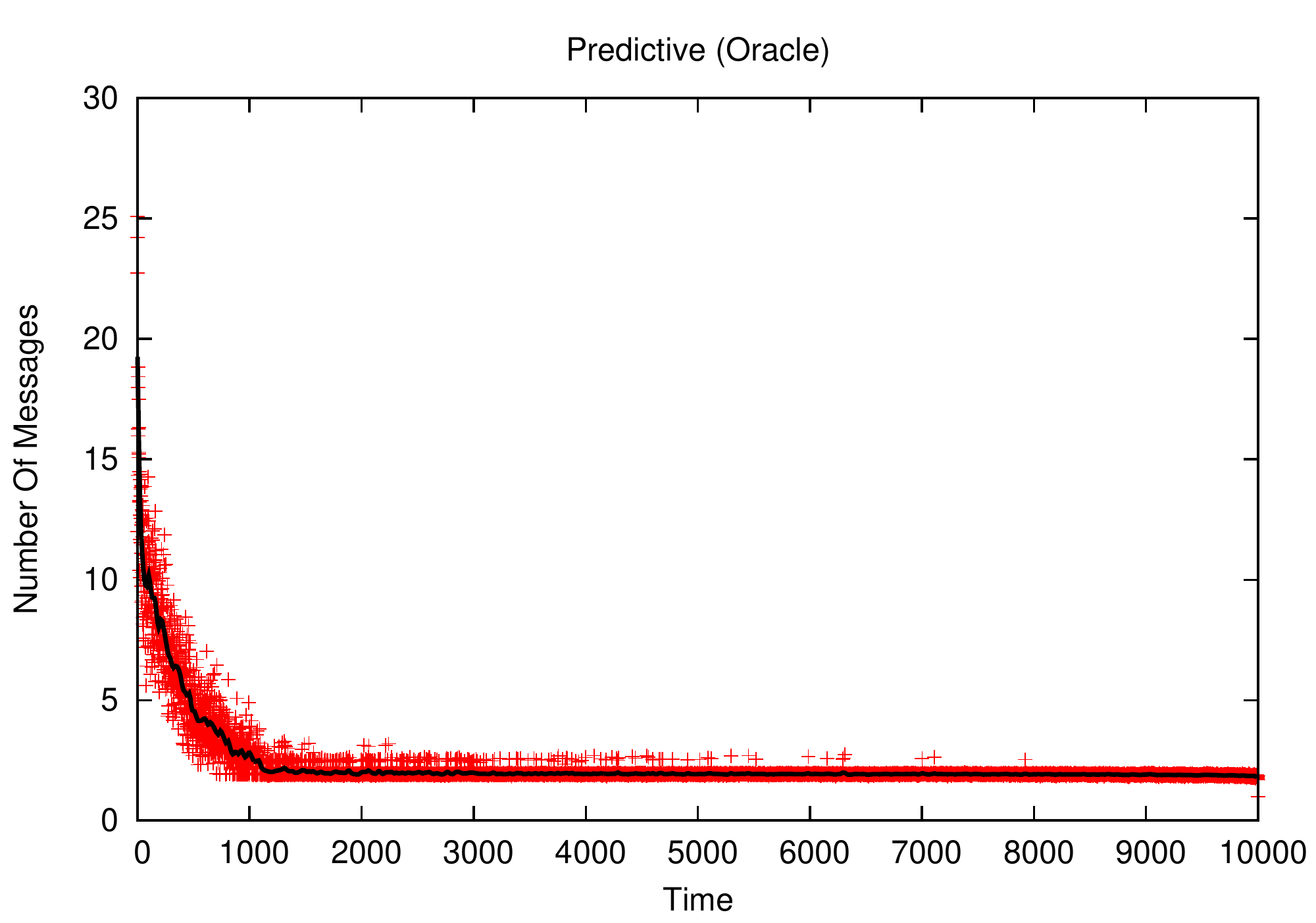}
 }
 \subfigure{
   \includegraphics[width=1.75in, angle = 270]
   {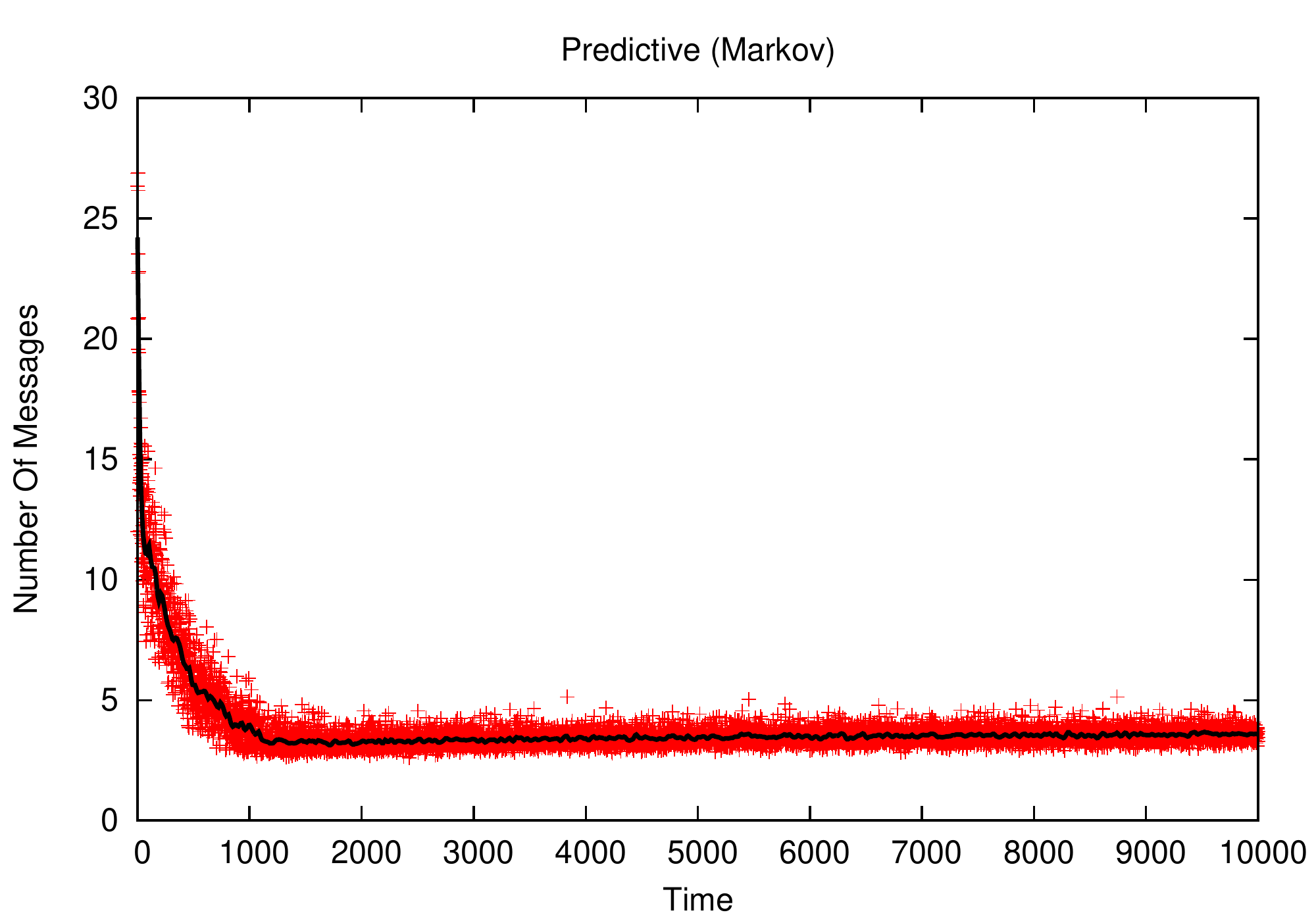}
 }
 \subfigure{
   \includegraphics[width=1.75in, angle = 270]
   {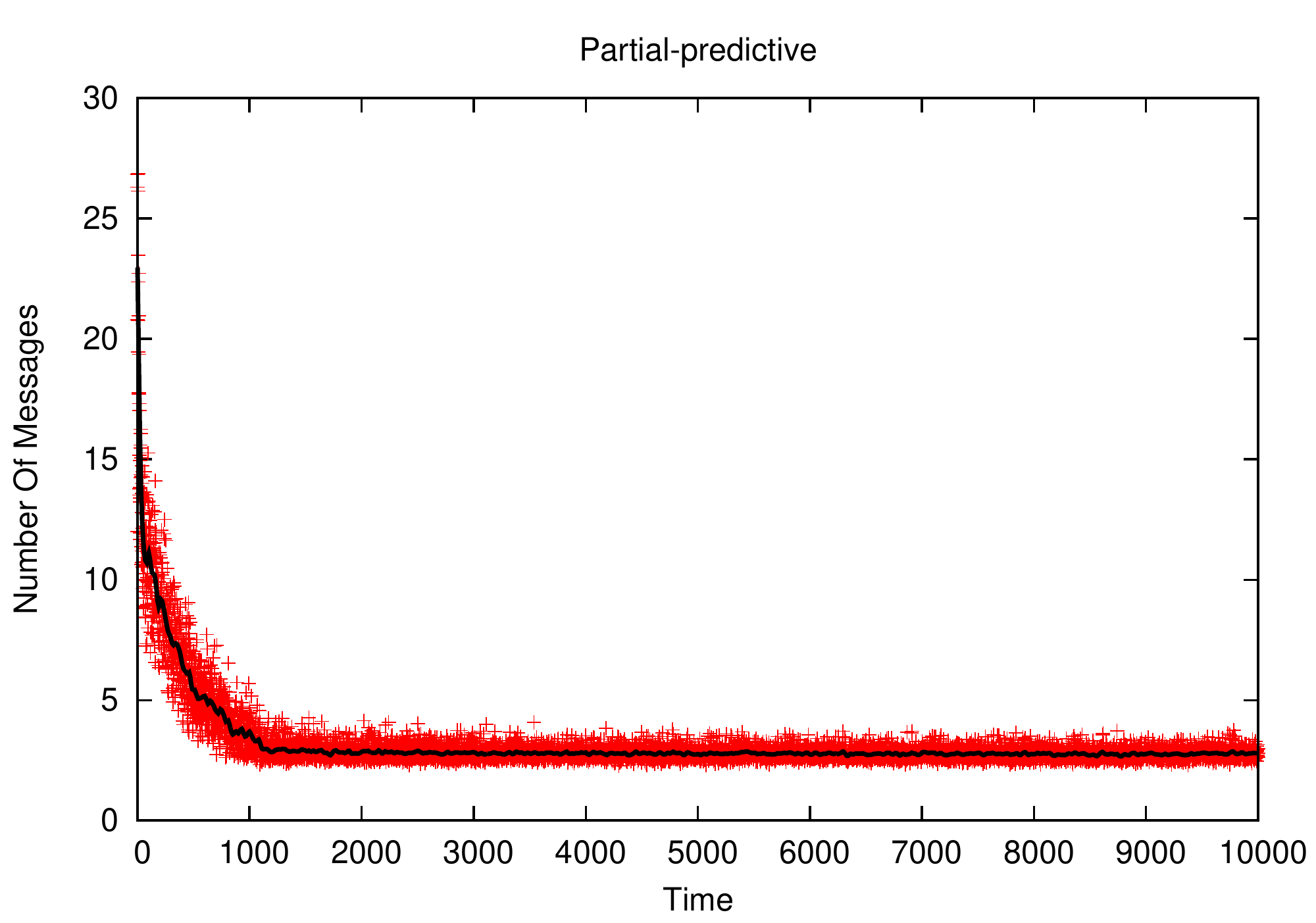}
  }
\caption{Average number of messages sent by each protocol during the
complete simulation with the synthetic data set (in red). The black line
represents the moving average of those values in subsets of 100 elements. The
time axis represents simulation steps.}
\label{fig:weighted_msg}
\end{figure}

\begin{figure}[p]
\centering
 \includegraphics[width=2.5in, angle = 270]{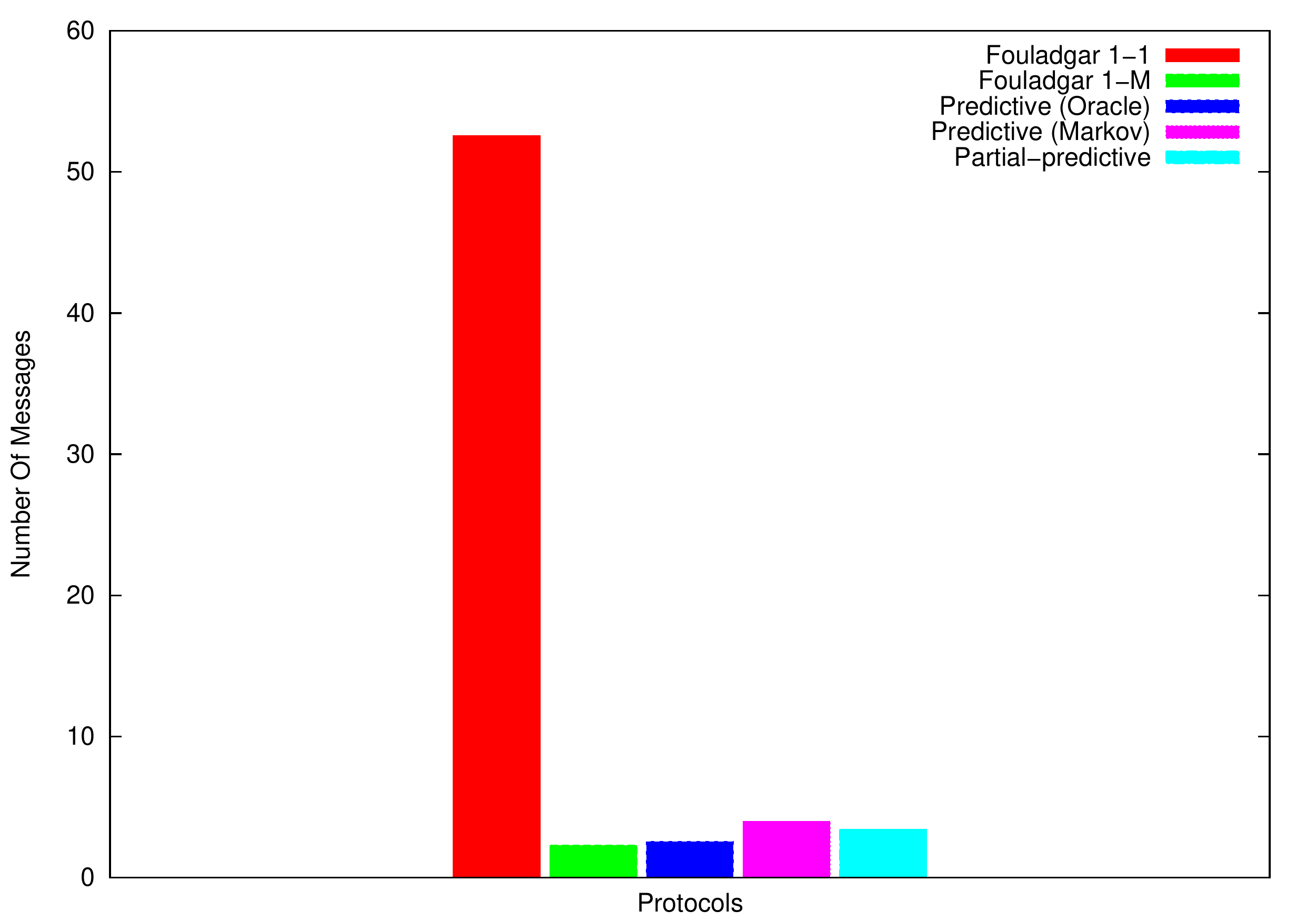}
\caption{Average number of messages sent per identification with the
synthetic data set}
\label{fig:total_msg_weighted}
\end{figure}


Using the trade-off measure described in Definition~\ref{def:trade-off} we
have compared all the protocols considering different values of $\alpha$ (cf.
Figure~\ref{fig:comparison_weighted}). It is apparent that the protocol
presented in this chapter (in its two variants) is the best for almost all
values of $\alpha$. Only in the region of $\alpha$ values very close to 0
(meaning that only the number of messages counts) our proposal is not the best.
Hence, we can conclude that our proposal is better than previous proposals for
the analysed synthetic data set.



\begin{figure}[p]
\centering
 \includegraphics[width=2.5in, angle = 270]
 {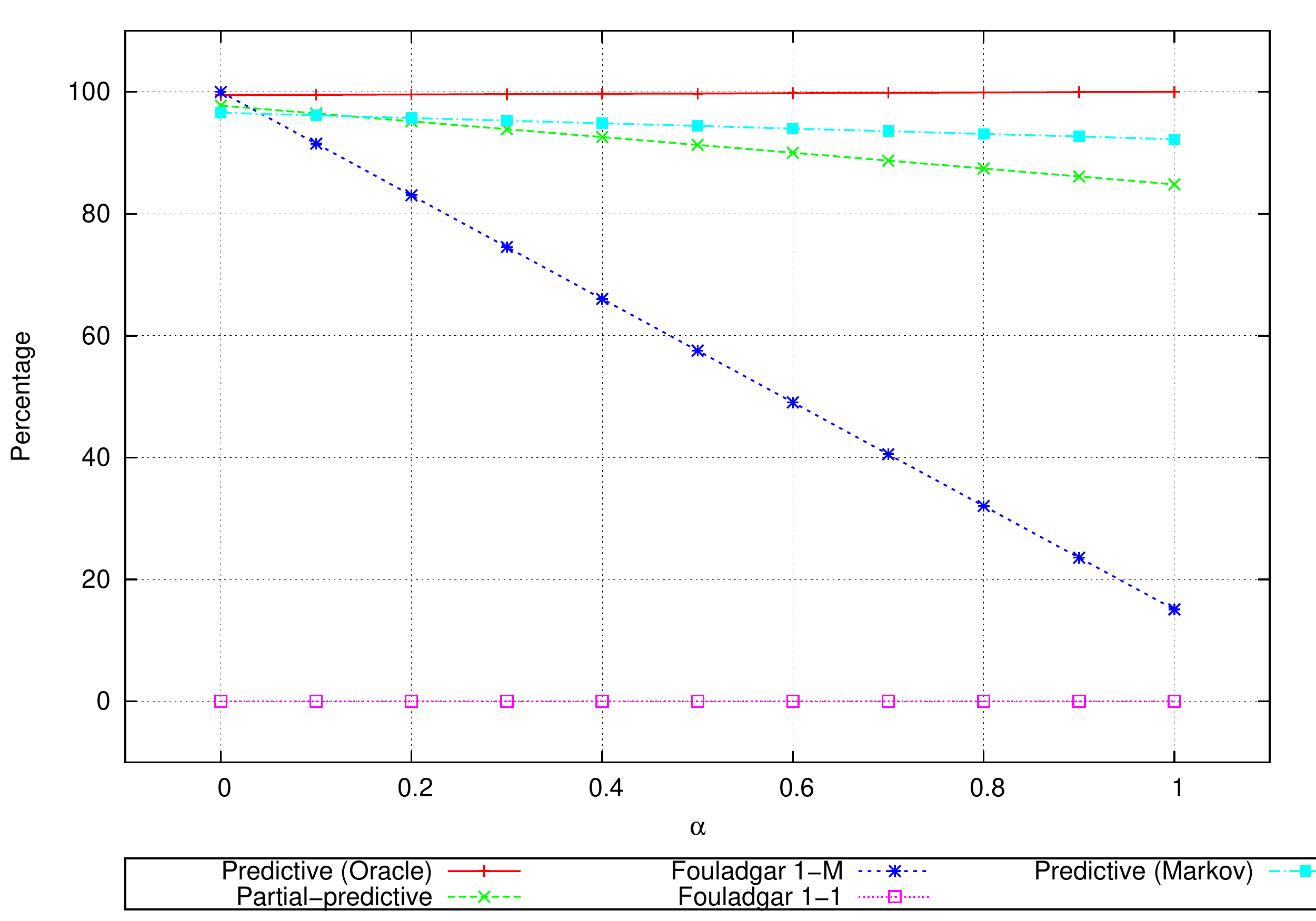}
\caption{The trade-off measure for each protocol and different values of
$\alpha = \{0, 0.1, 0.2, \cdots, 1.0\}$ for the
synthetic data set. The higher the better. (Note that the
colours assigned to the protocols do not coincide with the colours of
Figure~\ref{fig:total_msg_weighted}).}
\label{fig:comparison_weighted}
\end{figure}

\subsubsection{Experiments with the real data set}

In the case of the real data set, we consider the same measures described
above, \textit{i.e.} the number of cryptographic operations, the number of sent
messages, and the trade-off measure. Figure~\ref{fig:real_hash} and
Figure~\ref{fig:real_msg} show the number of cryptographic operations and the
number of sent messages for each protocol. Figure~\ref{fig:total_hash_real} and
Figure~\ref{fig:total_msg_real} show those values on average. Finally,
Figure~\ref{fig:comparison_real} depicts the \emph{closeness} of all protocols
to the optimal case by using the trade-off measure described in
Definition~\ref{def:trade-off}.

The results are very similar to the ones obtained with synthetic data. Again,
our proposal outperforms all previous proposals. Note that the different shape
of Figures 8 and 10 with respect to Figures 13 and 15 is due to the very nature
of the analysed data (\textit{i.e.} synthetic vs. real).
\begin{figure}[p]
\centering
 \subfigure{
   \includegraphics[width=2in, angle = 270]
   {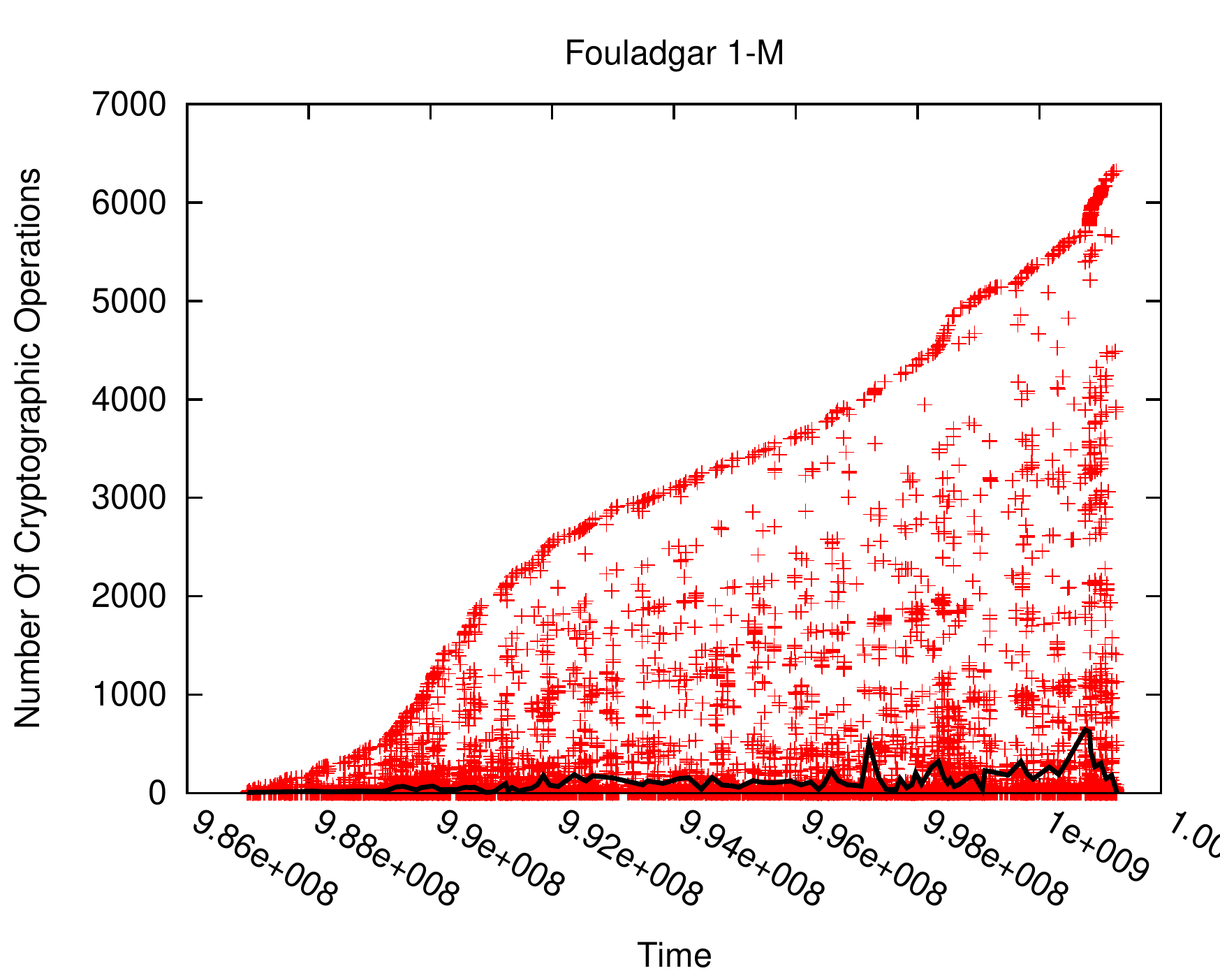}
 }
 \subfigure{
   \includegraphics[width=2in, angle = 270]
   {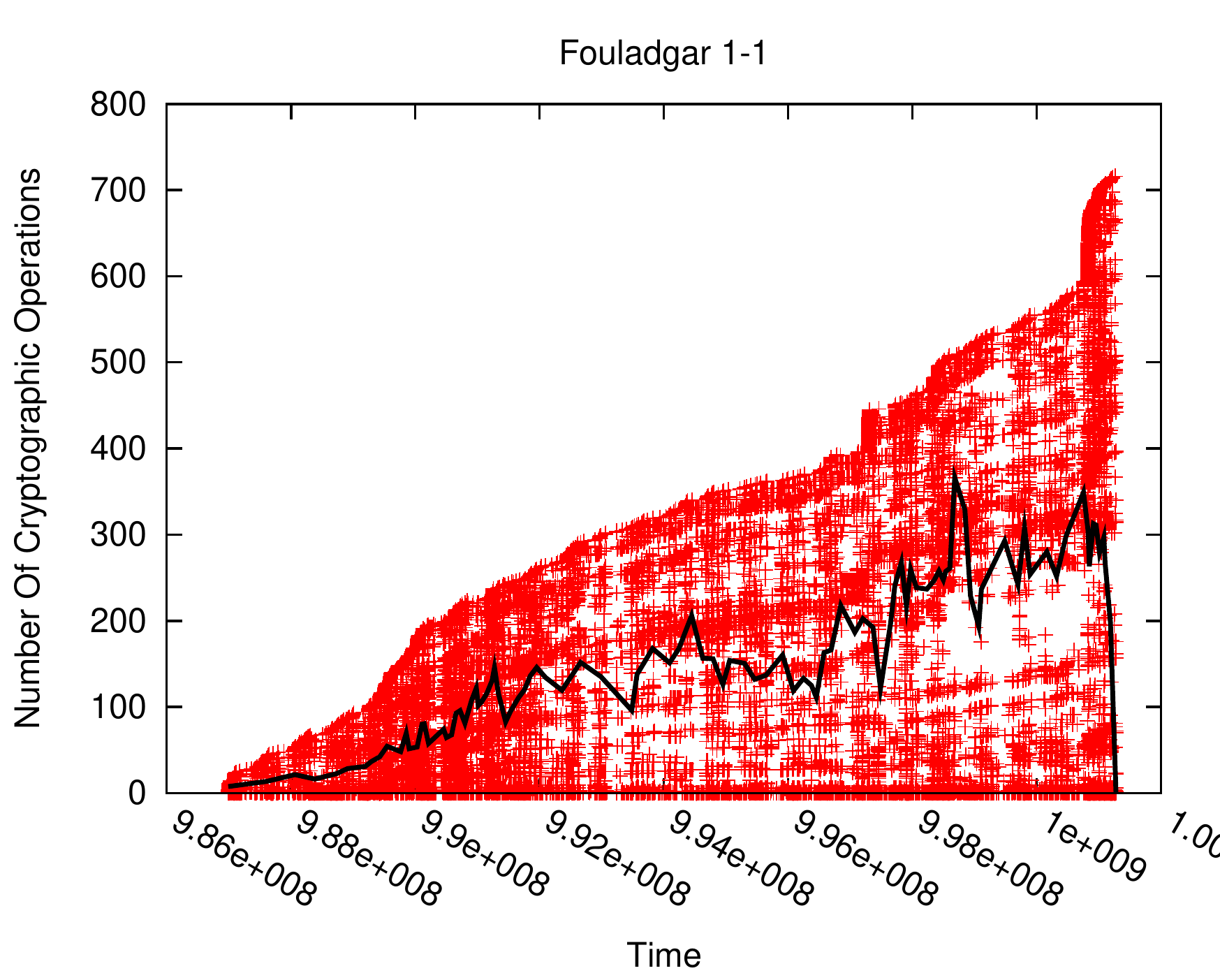}
 }
 \subfigure{
   \includegraphics[width=2in, angle = 270]
   {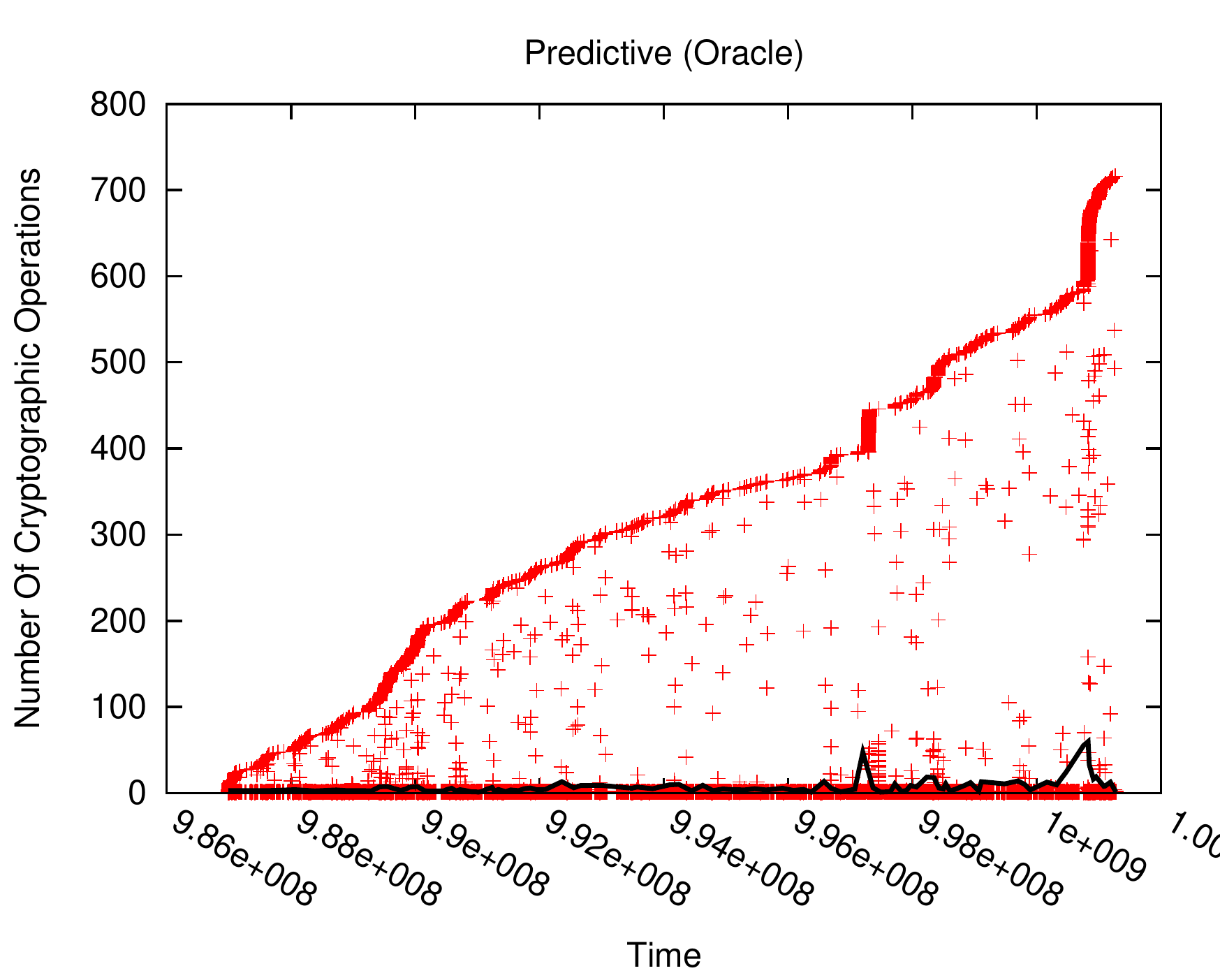}
 }
 \subfigure{
   \includegraphics[width=2in, angle = 270]
   {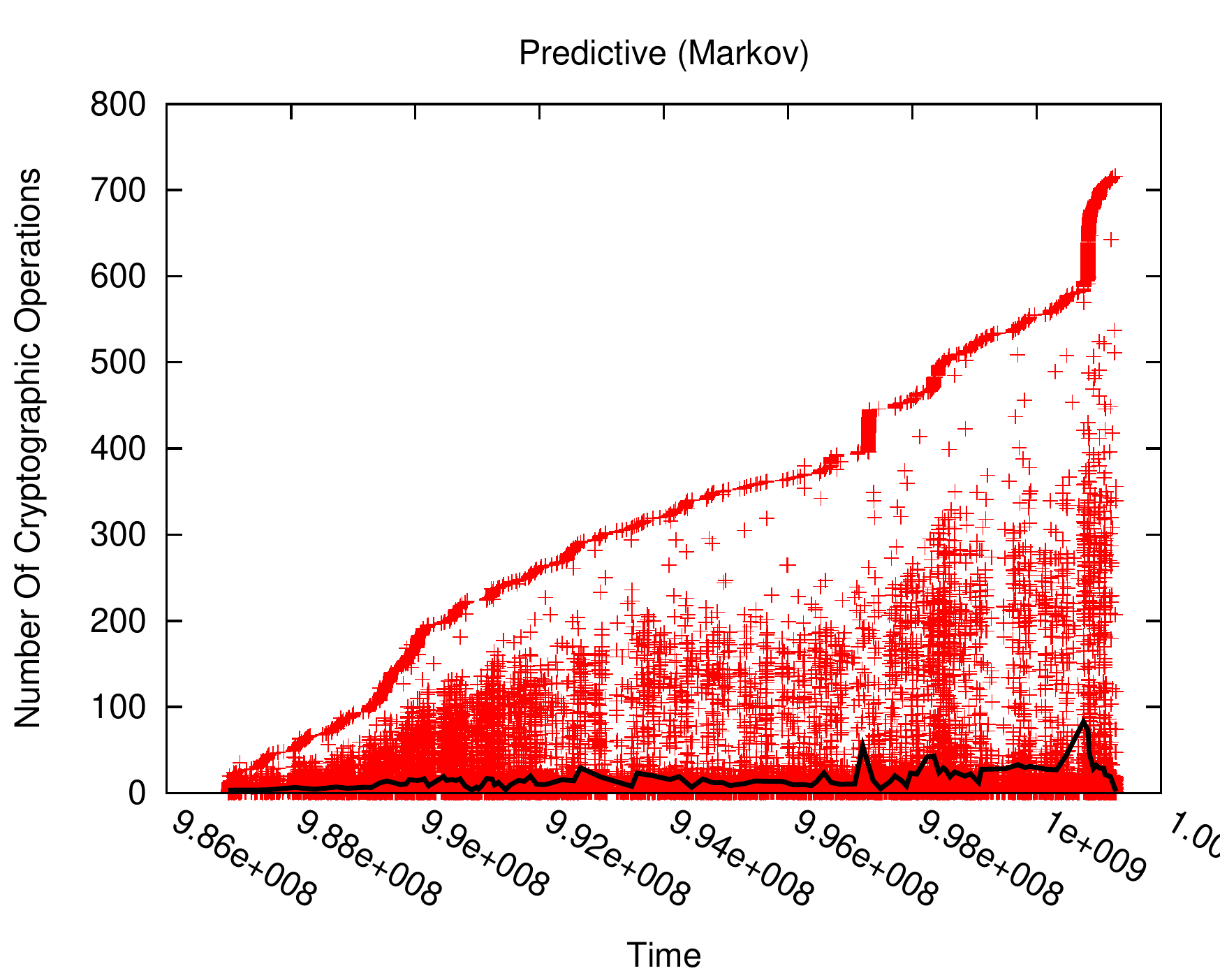}
 }
 \subfigure{
   \includegraphics[width=2in, angle = 270]
   {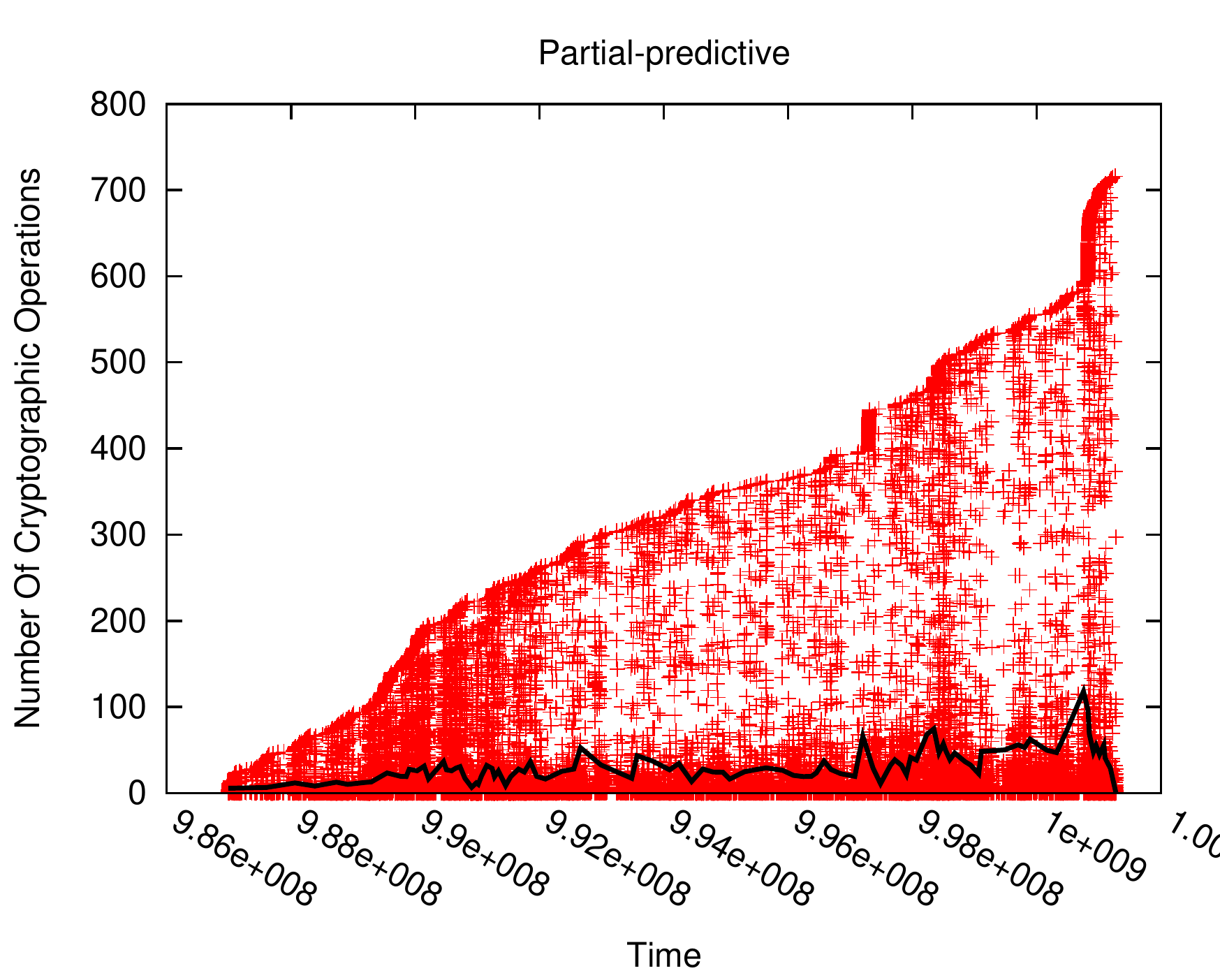}
 }
\caption{Average number of cryptographic operations performed by each
protocol during the complete simulation with the real data set (in red).
The black line represents the moving average of those values in subsets of 1000
elements. The time scale is in milliseconds.}
\label{fig:real_hash}
\end{figure}

\begin{figure}[p]
\centering
 \includegraphics[width=2.5in, angle = 270]
 {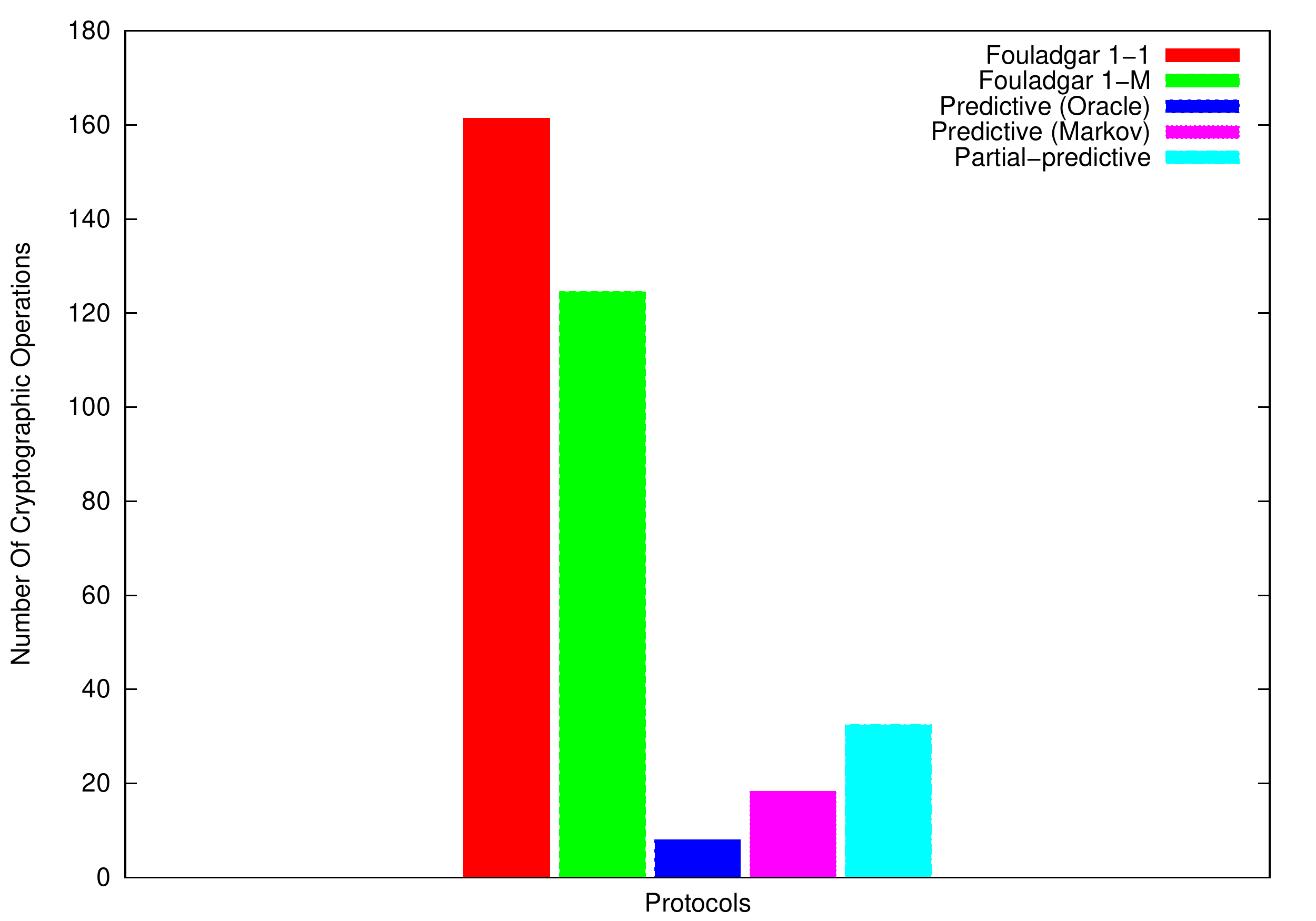}
\caption{Average number of cryptographic operations per identification with
the real data set}
\label{fig:total_hash_real}
\end{figure}

\begin{figure}[p]
\centering
 \subfigure{
   \includegraphics[width=2in, angle = 270]
   {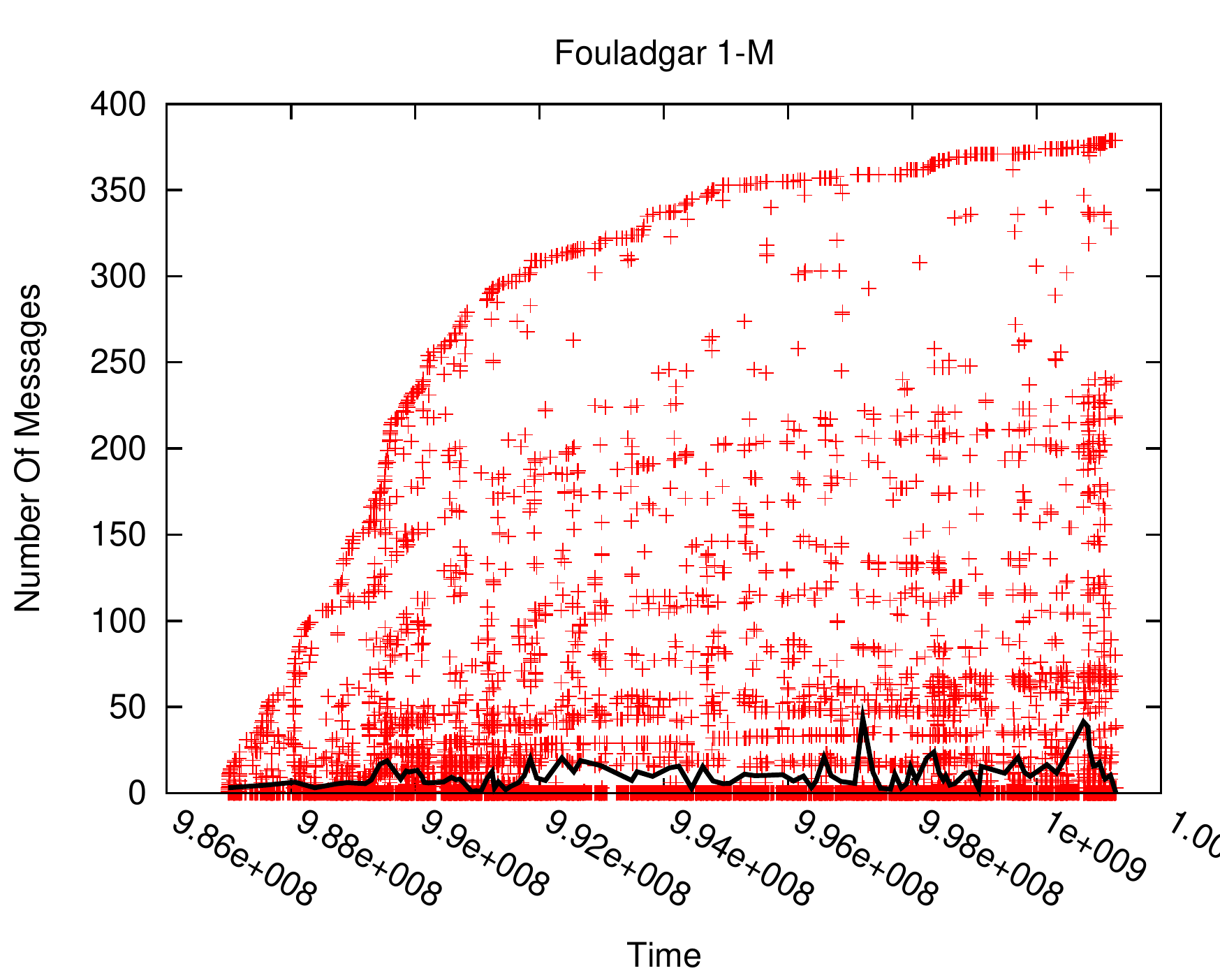}
 }
 \subfigure{
   \includegraphics[width=2in, angle = 270]
   {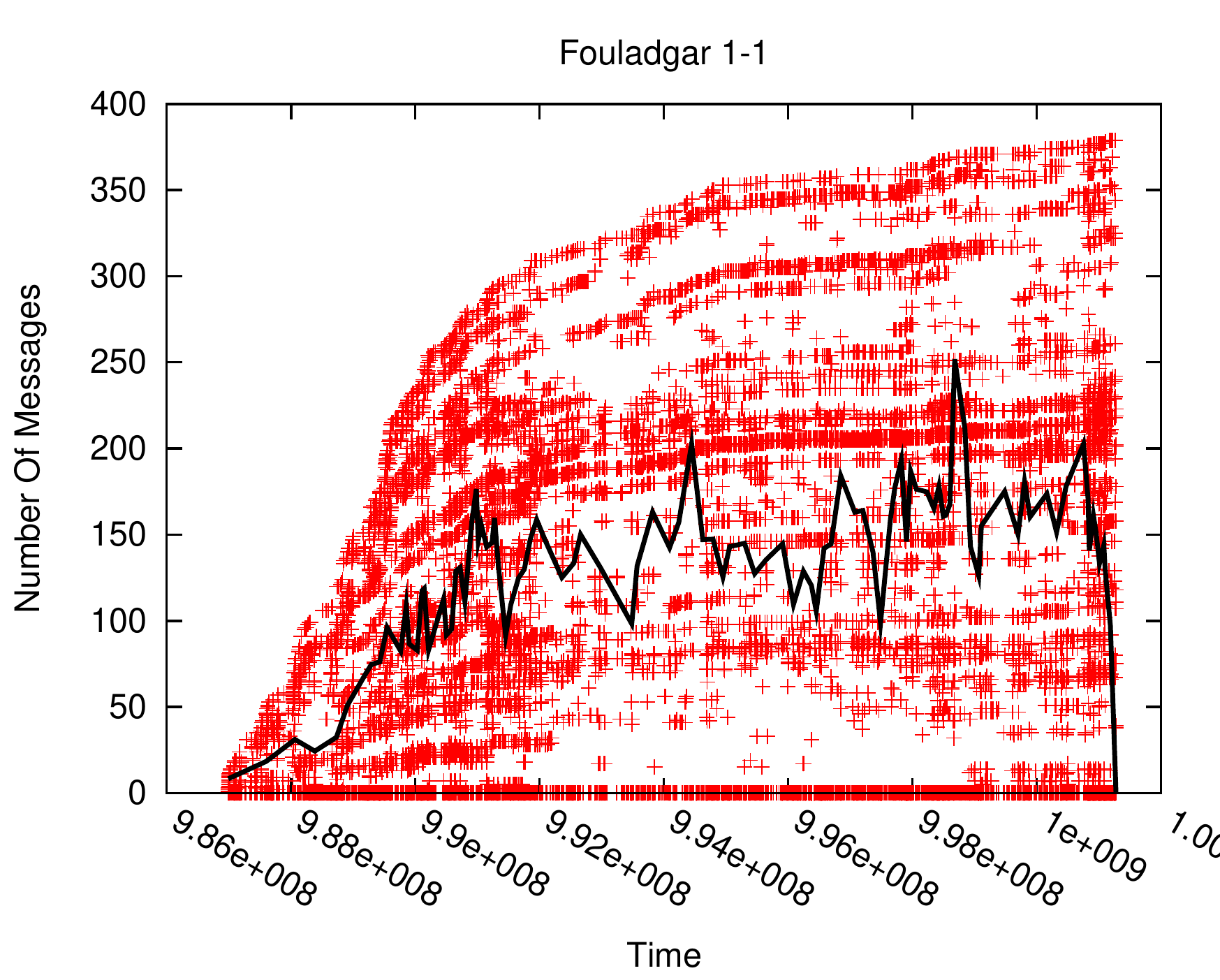}
 }
 \subfigure{
   \includegraphics[width=2in, angle = 270]
   {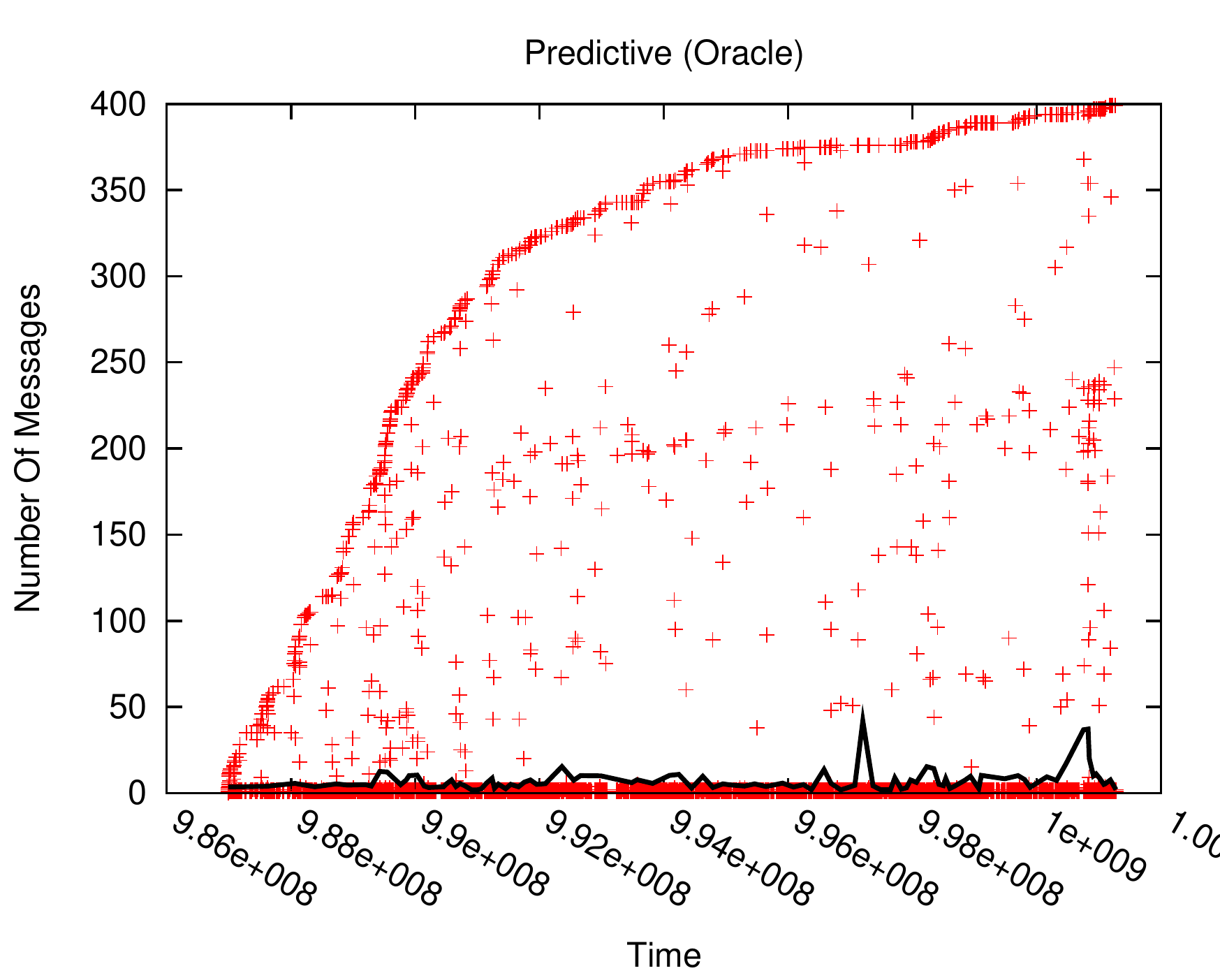}
 }
 \subfigure{
   \includegraphics[width=2in, angle = 270]
   {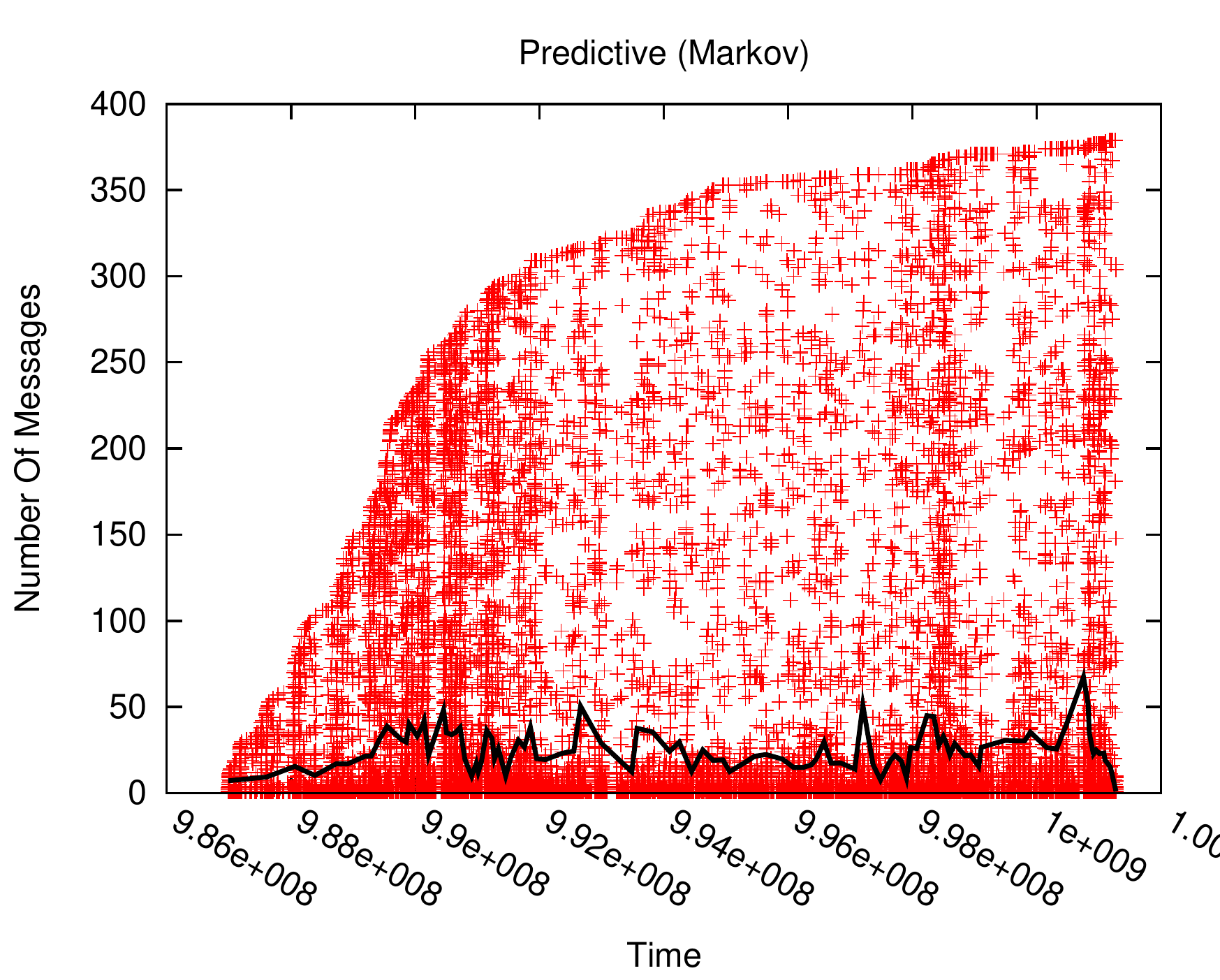}
 }
 \subfigure{
   \includegraphics[width=2in, angle = 270]
   {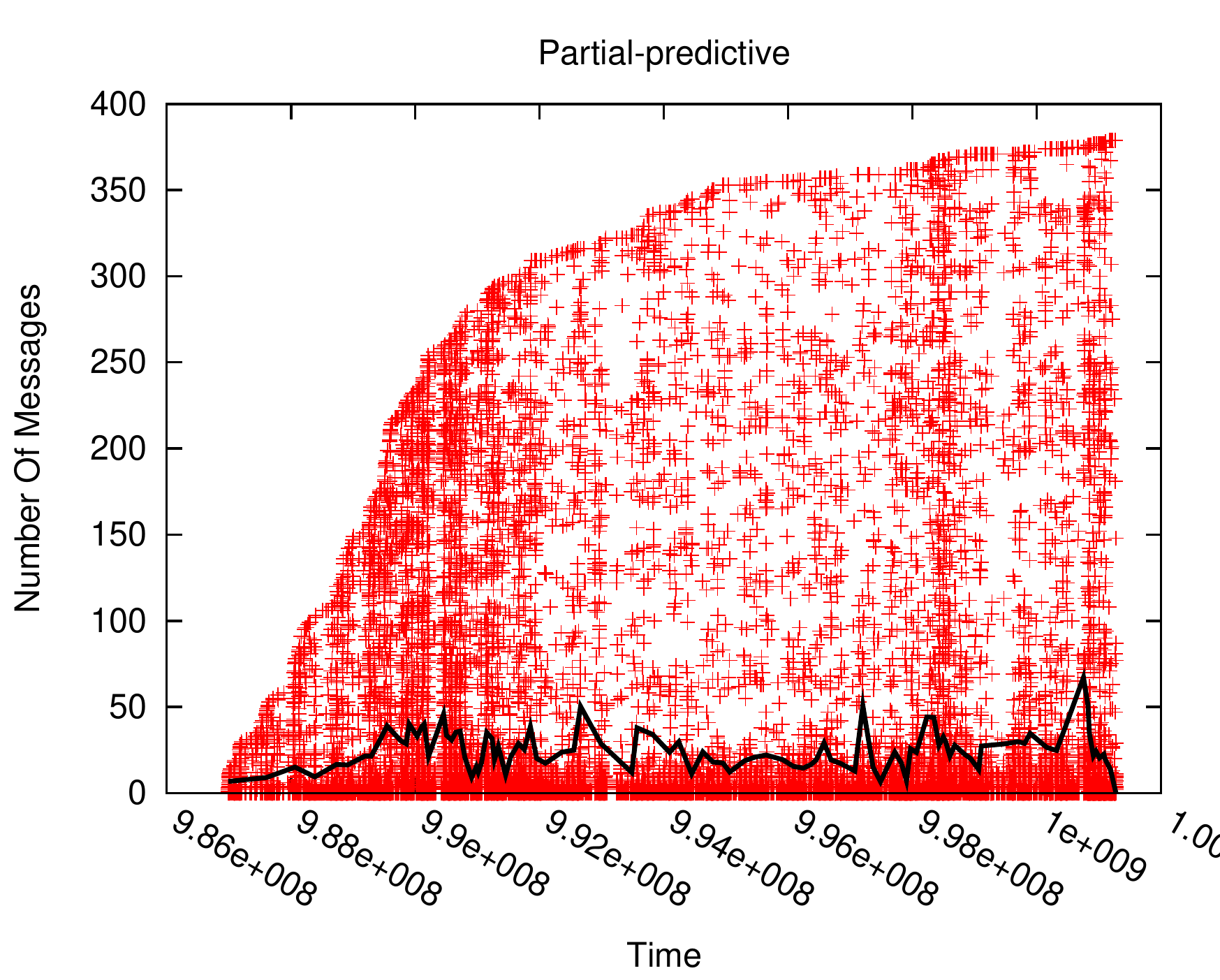}
 }
\caption{Average number of messages sent by each protocol during the
complete simulation with the real data set (in red). The black line represents
the moving average of those values in subsets of 1000 elements. The time scale
is in milliseconds.}
\label{fig:real_msg}
\end{figure}

\begin{figure}[p]
\centering
 \includegraphics[width=2.5in, angle = 270]
 {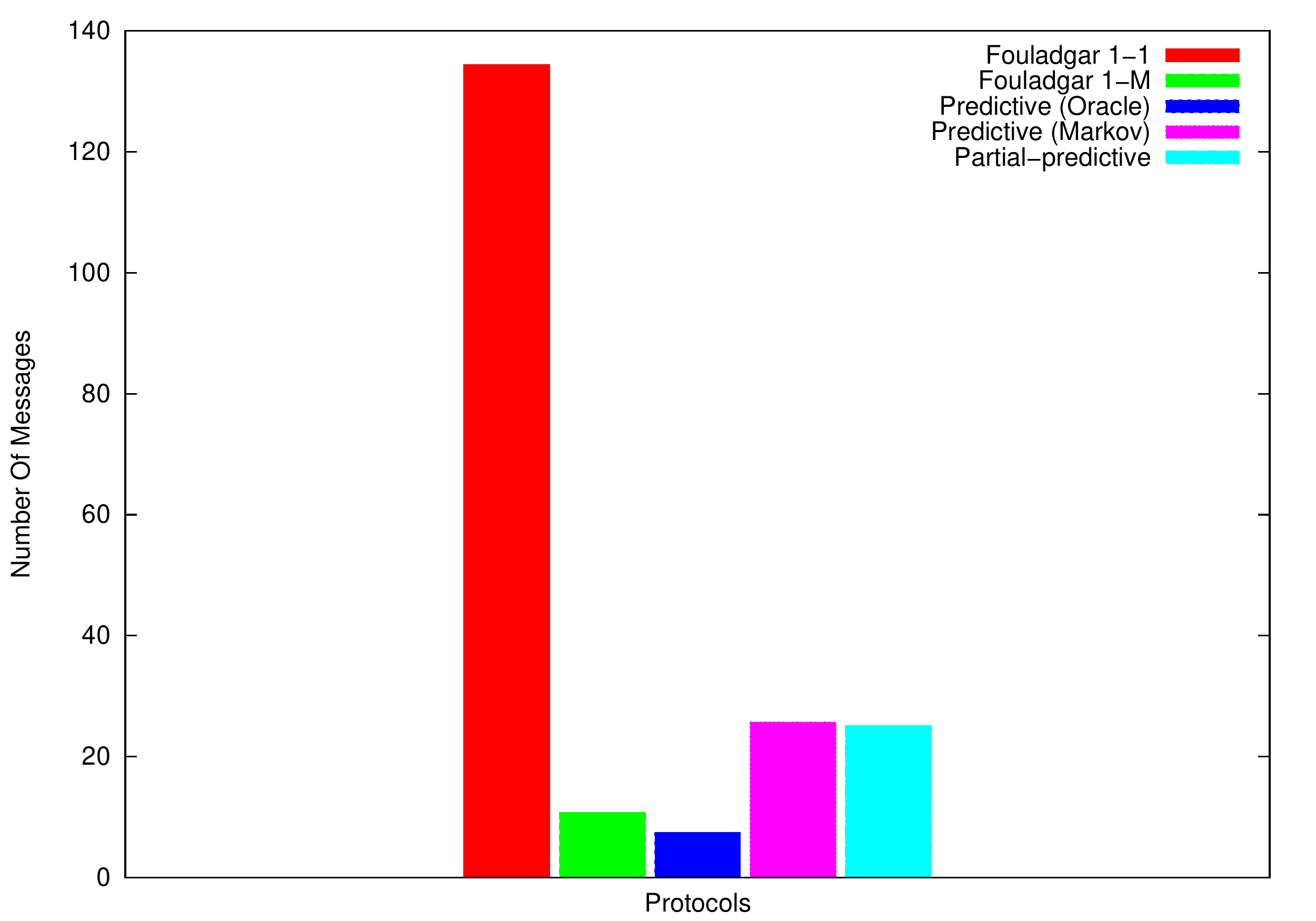}
\caption{Average number of messages sent per identification with the real
data set}
\label{fig:total_msg_real}
\end{figure}

\begin{figure}[p]
\centering
 \includegraphics[width=2.5in, angle = 270]
 {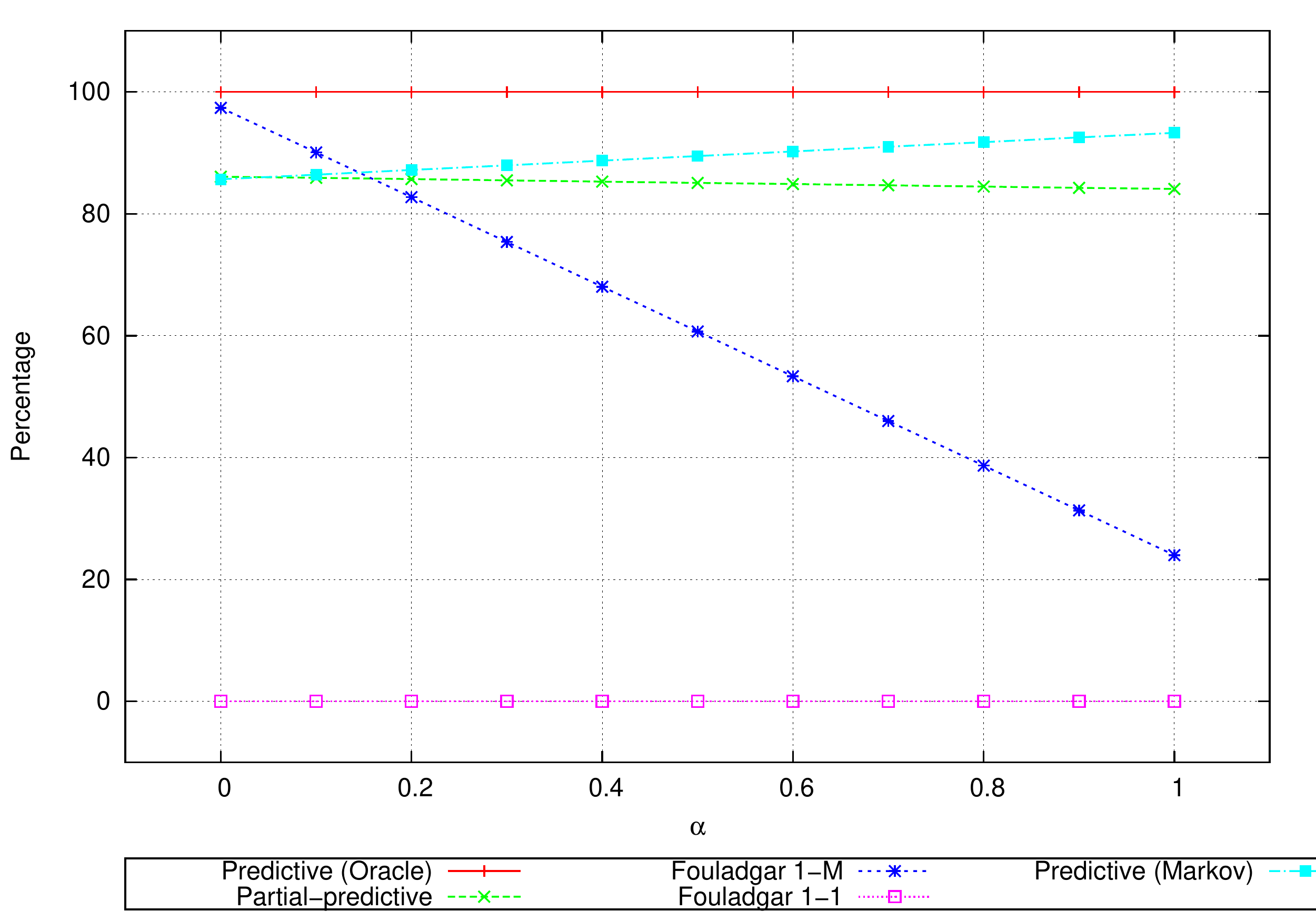}
\caption{The trade-off measure for each protocol and different values of
$\alpha = \{0, 0.1, 0.2, \cdots, 1.0\}$ for the
real data set. The higher the better. (Note that the
colours assigned to the protocols do not coincide with the colours of the
previous Figure).}
\label{fig:comparison_real}
\end{figure}


\section{Conclusions}
\label{sec:conclu}
In this chapter, we have presented a novel protocol that allows efficient identification of
RFID tags by means of a set of collaborative readers. Our proposal
uses location and time of arrival predictors to improve the efficiency of the
widely accepted IRHL scheme. We have shown that our protocol outperforms
previous proposals in terms of scalability whilst guaranteeing the same level
of privacy and security.

From the experimental results obtained, we can conclude that our proposal could be
comparable to highly scalable protocols like the tree-based protocols. However,
we do not sacrifice any privacy to achieve this goal.

Usually, algorithms aimed at location prediction work well in some scenarios,
but their performance decreases in others. Although we have provided some
practical implementations for the predictors, the definition of our protocol is
flexible enough to accept the use of any location predictor. Due to the fact
that the efficiency of our proposal highly depends on the accuracy of the
predictors, in the future we plan to study and compare a
variety of predictors in different
scenarios.

\chapter{The Poulidor Distance-bounding Protocol}
\label{chap:5}

\emph{This chapter describes a novel distance-bounding protocol resistant to both mafia and distance fraud. The experimental results show that this new 
proposal strikes a good balance of memory usage, mafia fraud resistance, and distance fraud resistance.}

\minitoc

The most widespread and low-cost tags are \textit{passive}, meaning that they do not have their own power source, and are supplied by the electromagnetic field of a reader. Although the capacities of such tags are quite limited, some of them benefit from cryptographic building blocks and secure authentication protocols. Nevertheless, Desmedt, Goutier and Bengio~\cite{DBLP:conf/crypto/DesmedtGB87} presented in 1987, an attack that defeated any authentication protocol. In this attack, called \emph{mafia fraud}, the adversary passes through the authentication process by simply relaying the messages between a legitimate reader (the verifier) and a legitimate tag (the prover). Thus she does not need to modify or decrypt any exchanged data. Later in 1993, Brands and Chaum~\cite{188361} proposed a countermeasure that prevents such attack by estimating the distance between the reader and the tag to be authenticated: the \textit{distance-bounding protocol}. They also introduced in~\cite{188361} a new kind of attack, named \emph{distance fraud}, where a dishonest prover claims to be closer to the verifier than she really is.

Since then, many distance-bounding protocols have been proposed to thwart these attacks. In 2005, Hancke and Kuhn~\cite{HanckeK-2005-securecomm} proposed the first distance-bounding protocol dedicated for RFID. The protocol is considered simple in the sense that it only requires an initial \emph{slow phase} followed by a \emph{fast phase} in order to perform both authentication and distance checking. Unfortunately, the adversary success probability regarding mafia and distance frauds is $(3/4)^n$ while one may expect $(1/2)^n$. As a result, many other protocols~\cite{AvoineT-2009-isc,KimA-2009-cans,KimAKSP-2008-icisc,MunillaP-2008-ntms,ReidGTS-2007-asiaccs,TuP-2007-rfidtechnology} have been proposed attempting to improve the Hancke and Kuhn proposal.

Amongst them, to the best of our knowledge, the Kim and Avoine protocol~\cite{KimA-2009-cans} and the Avoine and Tchamkerten protocol~\cite{AvoineT-2009-isc} have the best resistance considering only mafia fraud. However, the Kim and 
Avoine protocol~\cite{KimA-2009-cans} severely sacrifices the distance fraud security, whereas the Avoine and Tchamkerten proposal~\cite{AvoineT-2009-isc} requires an exponential amount of memory ($2^{n+1}-2$ in its standard configuration) to achieve such a high mafia fraud resistance. Neither the Hancke and Kuhn protocol nor the two latter protocols achieve a good balance between memory, mafia fraud resistance and distance fraud resistance.

In this chapter, we perform a detailed analysis of the mafia and distance fraud resistance of the protocols \cite{AvoineT-2009-isc} and \cite{KimA-2009-cans}. Then, we  introduce the concept of distance-bounding protocols based on graphs, and we propose a new distance-bounding protocol based on a particular type of graph. Our goal is not to provide the best protocol in terms of mafia fraud or distance fraud, but to design a protocol that ensures a good trade-off between these concerns, while still using a linear amount of memory with respect to the number of rounds. This means that our protocol is never the best one when considering only one property, but is a good option when considering the three properties altogether. This is why we name our protocol \emph{Poulidor} after a famous French bicycle racer known as \emph{The Eternal Second}: never the best in any race, but definitively the best on average.

\section{Previous proposals}

In terms of efficiency and resource consumption, our proposal is comparable to the Hancke and Kuhn~\cite{HanckeK-2005-securecomm} and Kim and Avoine~\cite{KimA-2009-cans} protocols. Therefore, we explain below those two proposals. We also detail the Avoine and Tchamkerten protocol~\cite{AvoineT-2009-isc} because our aim is to be as resilient as this protocol to mafia and distance frauds.

\subsection{Hancke and Kuhn's protocol}

Hancke and Kuhn's protocol (HKP)~\cite{HanckeK-2005-securecomm}, depicted in Figure~\ref{figure:HK}, is a key-reference protocol in terms of distance bounding devoted to RFID systems. HKP is a simple and fast protocol, but it suffers from a high adversary success probability.

\subsubsection{Initialisation}
The prover ($P$) and the verifier ($V$) share a secret $x$ and agree on: (i) a security parameter $n$, (ii) a public hash function $H$, whose output size is $2n$, and (iii) a certain timing bound $\Delta t_{\text{max}}$.

\subsubsection{Protocol} HKP consists of two phases: a slow one followed by a fast one. During the slow phase $V$ generates a random nonce $N_V$ and sends it to $P$. Reciprocally, $P$ generates $N_P$ and sends it to $V$. Both $V$ and $P$ compute $H^{2n}:=H(x,N_P,N_V)$. In what follows, $H_i$ ($1\leq i\leq 2n$) denotes the $i$-th bit of $H^{2n}$, and $H_i\dots H_j$ ($1\leq i<j\leq 2n$) denotes the concatenation of the bits from $H_i$ to $H_j$. Then $V$ and $P$ split $H^{2n}$ into two registers of length $n$: $R^0:=H_1\dots H_n$ and $R^1:=H_{n+1}\dots H_{2n}$. The fast phase then consists of $n$ rounds. In each of them, $V$ picks a random bit $c_i$ (the challenge) and sends it to $P$. The latter immediately answers $r_i := R^{c_i}_i$, the $i$-th bit of the register $R^{c_i}$.

\subsubsection{Verification} At the end of the fast phase, the verifier checks that the answers received from the prover are correct and that $\Delta t_i\leq\Delta t_{\text{max}}$  ($1\leq i\leq n$) .

\begin{figure}[!h]
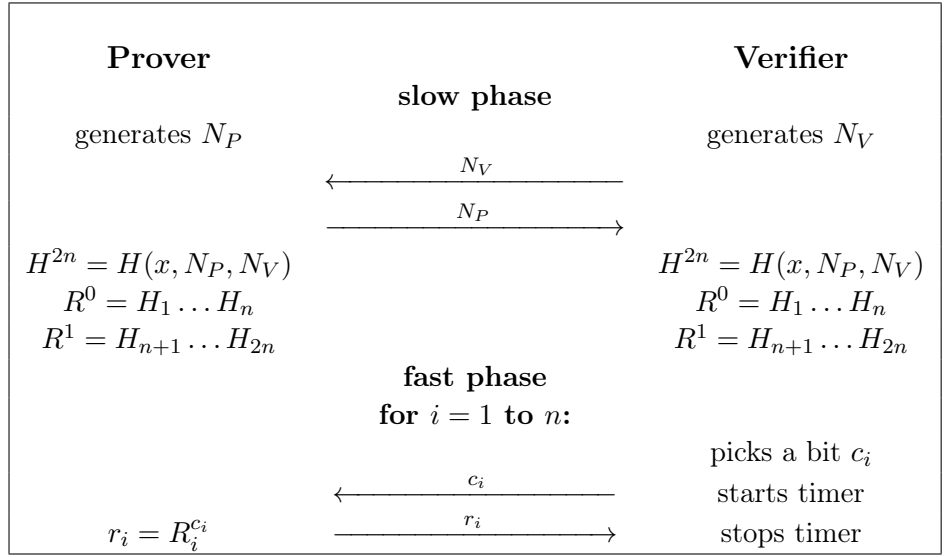
\centering
\begin{tabular}{|ccc|}
\hline
&& \\
\large\textbf{Prover}&&\large\textbf{Verifier}\\
&\textbf{slow phase}&\\
generates $N_{P}$&&generates $N_{V}$\\
&$\xleftarrow{\ \ \ \ \ \ \ \ \ \ \ \ \ N_{V}\ \ \ \ \ \ \ \ \ \ \ \ \ }$&\\
&$\xrightarrow{\ \ \ \ \ \ \ \ \ \ \ \ \ N_{P}\ \ \ \ \ \ \ \ \ \ \ \ \ }$&\\
$H^{2n}=H(x,N_{P},N_{V})$&&$H^{2n}=H(x,N_{P},N_{V})$\\
$R^0=H_{1}\dots H_{n}$&&$R^0=H_{1}\dots H_{n}$\\
$R^1=H_{n+1}\dots H_{2n}$&&$R^1=H_{n+1}\dots H_{2n}$\\
&\textbf{fast phase}&\\
&\textbf{for $i=1$ to $n$:}&\\
&&picks a bit $c_{i}$\\
&$\xleftarrow{\ \ \ \ \ \ \ \ \ \ \ \ \ c_i\ \ \ \ \ \ \ \ \ \ \ \ \ }$&starts timer\\
$r_{i}=R^{c_{i}}_{i}$&$\xrightarrow{\ \ \ \ \ \ \ \ \ \ \ \ \ r_i\ \ \ \ \ \ \ \ \ \ \ \ \ }$&stops timer\\
\hline
\end{tabular}
\caption{\label{figure:HK}Hancke and Kuhn's protocol}
\end{figure}

\subsection{Kim and Avoine's protocol}

Kim and Avoine's protocol (KAP)~\cite{KimA-2009-cans}, represented in Figure \ref{figure:AK}, basically relies on \emph{predefined} challenges. Predefined challenges allow the prover to detect that an attack occurs as follows: the prover and the verifier agree on some predefined 1-bit challenges; if the adversary sends in advance a challenge to the prover that is different from the expected predefined challenge, then the prover detects the attack and since then, it sends random responses to the adversary. The complete description of the KAP protocol is provided below.

\subsubsection{Initialisation} The prover (P) and the verifier (V) share a secret $x$ and agree on: (i) a security parameter $n$, (ii) a public hash function $H$, whose output size is $4n$, and (iii) a certain timing bound $\Delta t_{\text{max}}$.

\subsubsection{Protocol}
As previously, $V$ and $P$ exchange nonces $N_V$ and $N_P$. From these values they compute $H^{4n}=H(x,N_P,N_V)$, and split it in four registers. $R^0:=H_1\dots H_n$ and $R^1:=H_{n+1}\dots H_{2n}$ are the potential responses. The register $D:=H_{3n+1}\dots H_{4n}$ contains the potential predefined challenges. Finally, the register $T:=H_{2n+1}\dots H_{3n}$ allows the verifier to decide whether a predefined challenge should be sent: in round $i$, if $T_i=1$ then a random challenge is sent; if $T_i=0$ then the predefined challenge $D_i$ is sent instead.

\subsubsection{Verification} At the end of the fast phase, the verifier checks whether the answers received from the prover are correct and that $\Delta t_i\leq\Delta t_{\text{max}}$  ($1 \leq i \leq n$).

\begin{figure}[!h]
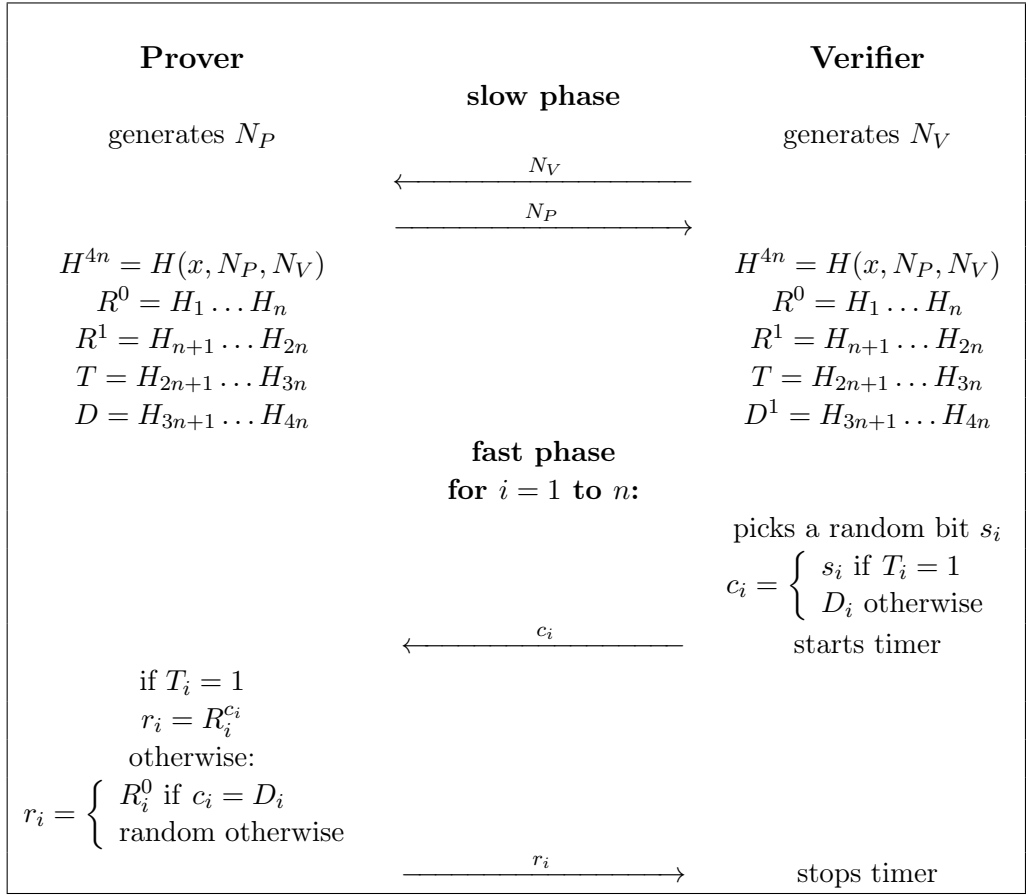
\centering
\begin{tabular}{|ccc|}
\hline
&& \\
\large\textbf{Prover}&&\large\textbf{Verifier}\\
&\textbf{slow phase}&\\
generates $N_{P}$&&generates $N_{V}$\\
&$\xleftarrow{\ \ \ \ \ \ \ \ \ \ \ \ \ N_{V}\ \ \ \ \ \ \ \ \ \ \ \ \ }$&\\
&$\xrightarrow{\ \ \ \ \ \ \ \ \ \ \ \ \ N_{P}\ \ \ \ \ \ \ \ \ \ \ \ \ }$&\\
$H^{4n}=H(x,N_{P},N_{V})$&&$H^{4n}=H(x,N_{P},N_{V})$\\
$R^0=H_1\dots H_n$&&$R^0=H_1\dots H_n$\\
$R^1=H_{n+1}\dots H_{2n}$&&$R^1=H_{n+1}\dots H_{2n}$\\
$T=H_{2n+1}\dots H_{3n}$&&$T=H_{2n+1}\dots H_{3n}$\\
$D=H_{3n+1}\dots H_{4n}$&&$D^1=H_{3n+1}\dots H_{4n}$\\
&\textbf{fast phase}&\\
&\textbf{for $i=1$ to $n$:}&\\
&&picks a random bit $s_{i}$\\
&&$c_i=\left\{\begin{array}{l}s_i\mathrm{\ if\  }T_i=1\\D_i\mathrm{\ otherwise\ }\end{array}\right.$\\
&$\xleftarrow{\ \ \ \ \ \ \ \ \ \ \ \ \ c_i\ \ \ \ \ \ \ \ \ \ \ \ \ }$&starts timer\\
if $T_i=1$&&\\
$r_i=R^{c_i}_i$&&\\
otherwise:&&\\
$r_i=\left\{\begin{array}{l}R^0_i\mathrm{\ if\  }c_i=D_i\\\mathrm{random\ otherwise}\end{array}\right.$&&\\
&$\xrightarrow{\ \ \ \ \ \ \ \ \ \ \ \ \ r_i\ \ \ \ \ \ \ \ \ \ \ \ \ }$&stops timer\\
\hline
\end{tabular}
\caption{\label{figure:AK}Kim and Avoine's protocol}
\end{figure}

\subsection{Avoine and Tchamkerten's protocol} \label{sec:avp}

The Avoine and Tchamkerten's protocol (ATP)~\cite{AvoineT-2009-isc} is slightly different from the other existing distance bounding protocols. This protocol is also based  on single bit challenge/response exchanges. However, the authors propose the use of a decision tree to set up the fast phase. Figure~\ref{figure:AT} depicts the protocol detailed below.

\subsubsection{Initialisation}
The prover and the verifier share a secret $x$, and they agree on: (i) two  security parameters $n=\alpha k$ and $m$, (ii) a pseudo-random function $PRF$ whose output size is at least $m+\alpha(2^{k+1}-2)$ bits, and (iii) a timing bound $\Delta t_{\text{max}}$.

\subsubsection{Protocol} The prover $P$  and the verifier $V$ generate two nonces $N_{P}$ and $N_{V}$ respectively. The verifier sends his nonce to $P$. Upon reception, the latter computes $PRF(x,N_{P},N_{V})$ and sends $[PRF(x,N_{P},N_{V})]_{1}^m$, the first  $m$ bits of $PRF(x,N_{P},N_{V})$, and $P$ also sends $N_P$. These bits are used for the authentication.

$P$ and $V$ use the remaining $\alpha(2^{k+1}-2)$ bits to label the nodes of $\alpha$ binary decision trees of depth $k$. Each node of the trees\footnote{Except the roots.} is labeled by one bit from $[PRF(x,N_{P},N_{V})]_{m+1}^{m+\alpha(2^{k+1}-2)}$ (the remaining bits) in a one-to-one way. These labels represent the prover's responses during the fast phase. The challenges are represented by the edges of the trees; left and right edges are labeled with 0 and 1 respectively.

Afterwards, the fast phase begins, for $1\leq i \leq\alpha$, and $1\leq j\leq k$, $V$ picks a bit $c_{j}^i$ at random, starts a timer and sends $c_j^i$ to $P$. The latter immediately answers  a bit $r_{j}^i=\mathrm{node}(c_{1}^i,\dots c_{j}^i)$, \emph{i.e.} the value of the node located in the $i$-th tree and reached from the root by taking the sequence of decision bits $c_{1}^i,\dots,c_{j}^i$. Once $V$ receives $P$'s response, he stops his timer and computes $\Delta t_j^i$.

\subsubsection{Verification}
The verifier authenticates the prover if the $m$ bits, sent during the slow phase, are correct. The prover succeeds in the distance-bounding stage, if all his responses are correct and if for all $1\leq i\leq\alpha$ and $1\leq j\leq k$, $\Delta t_{j}^i\leq\Delta t_{\text{max}}$.

\begin{figure}[!h]
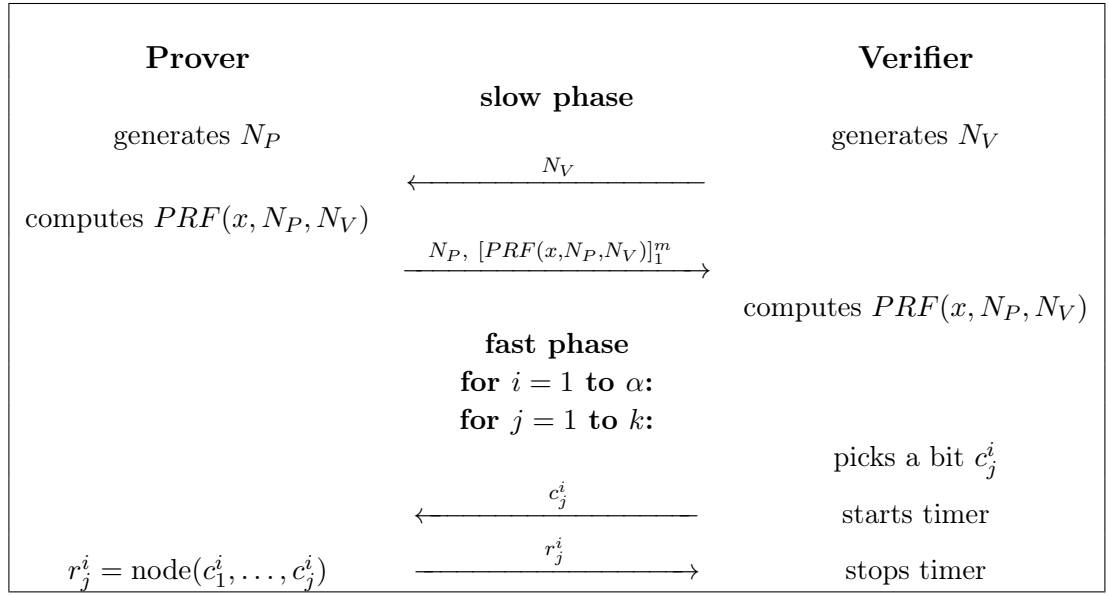
\centering
\begin{tabular}{|ccc|}
\hline
&& \\
\large\textbf{Prover}&&\large\textbf{Verifier}\\
&\textbf{slow phase}&\\
generates $N_{P}$&&generates $N_{V}$\\
&$\xleftarrow{\ \ \ \ \ \ \ \ \ \ \ \ \ N_{V}\ \ \ \ \ \ \ \ \ \ \ \ \ }$&\\
computes $PRF(x,N_{P},N_{V})$&&\\
&$\xrightarrow{\ \ N_{P},\ [PRF(x,N_{P},N_{V})]_{1}^m\ \ \ }$&\\
&&computes $PRF(x,N_{P},N_{V})$\\
&\textbf{fast phase}&\\
&\textbf{for $i=1$ to $\alpha$:}&\\
&\textbf{for $j=1$ to $k$:}&\\
&&picks a bit $c_{j}^i$\\
&$\xleftarrow{\ \ \ \ \ \ \ \ \ \ \ \ \ c_j^i\ \ \ \ \ \ \ \ \ \ \ \ \ }$&starts timer\\
$r^{i}_j=\mathrm{node}(c_{1}^i,\dots,c_{j}^i)$&$\xrightarrow{\ \ \ \ \ \ \ \ \ \ \ \ \ r_j^i\ \ \ \ \ \ \ \ \ \ \ \ \ }$&stops timer\\
\hline
\end{tabular}
\caption{\label{figure:AT}Avoine and Tchamkerten protocol}
\end{figure}


\section{Graph-based distance-bounding protocol} \label{sec:proposal}
\label{section:proposal}

The ATP protocol~\cite{AvoineT-2009-isc} in its standard configuration ($\alpha = 1$) relies on a binary tree. The amount of memory needed to build this binary tree is exponential regarding the number of rounds. Although the authors in~\cite{AvoineT-2009-isc} proposed to split the binary tree in order to reduce the memory requirements, they pointed out that this procedure leads to a significant decrease in the security level of the protocol. We go a step forward and propose protocols based on graphs rather than trees. The graph-based protocols, as presented below, provide a greater design flexibility, a high security level and a low memory consumption.

\subsection{Initialisation}

\subsubsection{Parameters}

The prover $P$ and the verifier $V$ agree on four public parameters: (i) a security parameter $n$ that represents the number of rounds in the protocol, (ii) a timing bound $\Delta t_{\text{max}}$, (iii) a pseudo random function $PRF$ whose output size is $4n$ bits, and (iv) a directed graph $G$ whose characteristics are discussed below. They also agree on a shared secret $x$.

\subsubsection{Graph}

 To achieve $n$ rounds, the proposed graph requires $2n$ nodes $\{q_0,q_1, \dots,q_{2n-1}\}$, and $4n$ edges $\{s_0, s_1, \cdots, s_{2n-1}, \ell_0, \ell_1, \cdots, \ell_{2n-1} \}$ such that $s_i$ $(0 \leq i \leq 2n-1)$ is an edge from $q_i$ to $q_{(i+1) \mod 2n}$, and $\ell_i$ $(0 \leq i \leq 2n-1)$ is an edge from $q_i$ to $q_{(i+2) \mod 2n}$. Figure~\ref{fig:graph} depicts the graph when $n = 4$.

\begin{figure}[!h]
\centering
\scalebox{0.5}{
\includegraphics{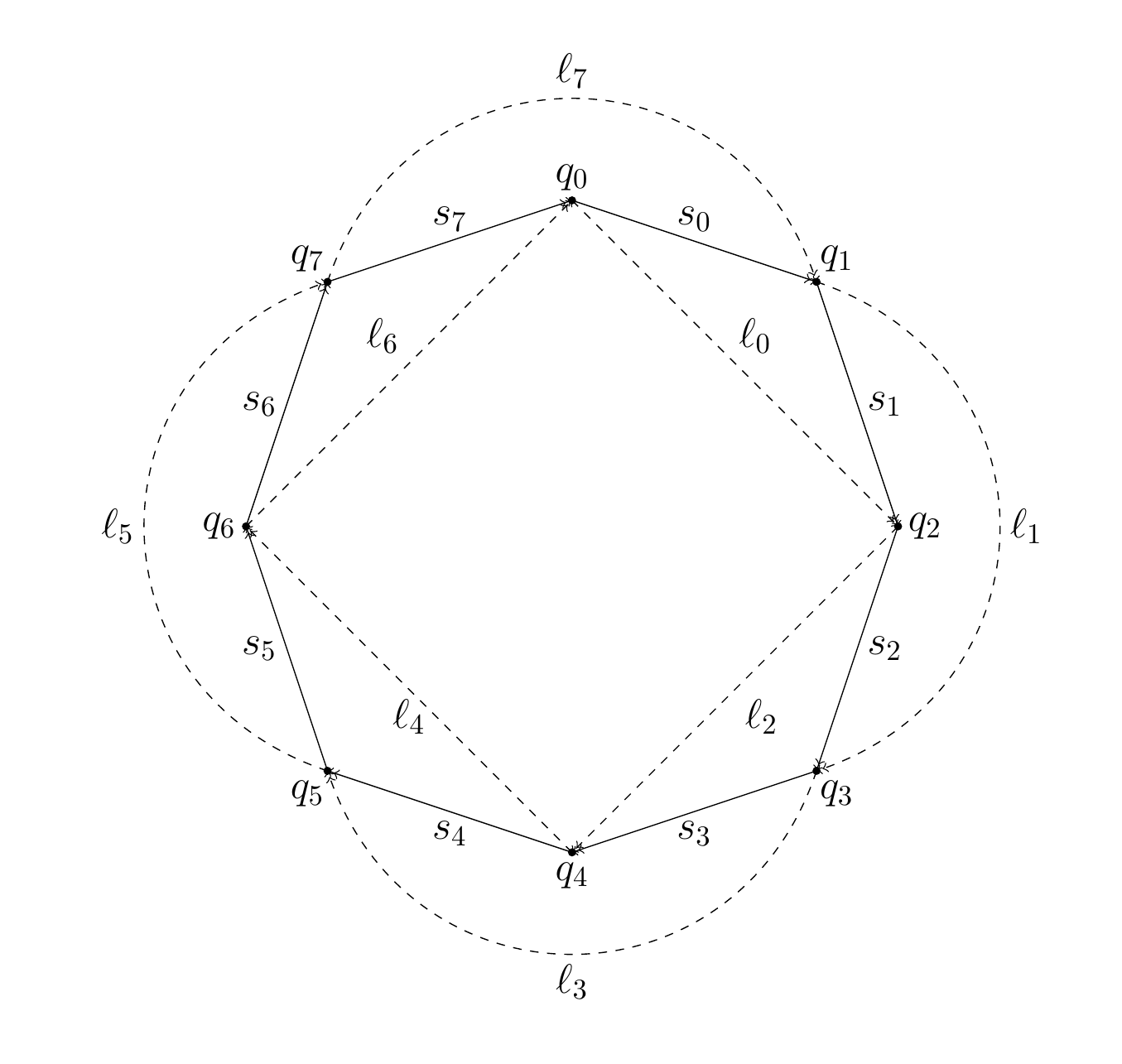}}
\caption{Graph when $n = 4$\label{fig:graph}}
\end{figure}

\subsection{Exchanges}

As described below, the protocol is divided in two phases, a slow one and a fast one. Figure~\ref{fig:proposal} summarises the protocol.
\paragraph{Slow phase --}
 $P$ and $V$  generate nonces $N_P$ and $N_V$, respectively, and exchange them. From these values and the secret $x$, they compute $H_{1}||\dots||H_{4n}=PRF(x,N_{P},N_{V})$ where $H_i$ denotes the $i$-th bit of the output of $PRF(x,N_{P},N_{V})$. The bits $H_{1},\dots,H_{4n}$ set up the graph $G$ as follows: the first $2n$ bits are used to value the nodes while the remaining bits are used to value the edges $s_i$ ($0 \leq i \leq 2n-1$); finally, $\ell_i = s_i \oplus 1$ ($0 \leq i \leq 2n-1$).

\paragraph{Fast phase --}

This phase consists of $n$ stateful rounds numbered from $0$ to $n-1$. In the $i$-th round $P$'s state and $V$'s state are represented by the nodes $q_{p_i}$ and $q_{v_i}$ respectively: initially $q_{p_0} = q_{v_0} = q_0$. Upon reception of the $i$-th challenge $c_i$, $P$ moves to the node $q_{p_i}$ to $q_{p_{i+1}}$ in the following way: $q_{p_{i+1}} = q_{(p_i+1) \mod 2n}$ if $s_i$ is labeled with $c_i$, otherwise $q_{p_{i+1}} = q_{(p_i+2) \mod 2n}$. Finally, the prover sends as response $r_i$ the bit-value of the node $q_{p_{i+1}}$. Upon reception of the prover's answer $r_{i}$, the verifier stops his timer, and computes $\Delta t_i$, \emph{i.e.} the round trip time spent for this exchange. Besides, $V$ moves to the node $q_{v_{i+1}}$ using the challenge $c_i$ (as the prover did but from the node $q_{v_i}$) and checks if $q_{v_{i+1}} = r_i$.

\begin{figure}[!h]
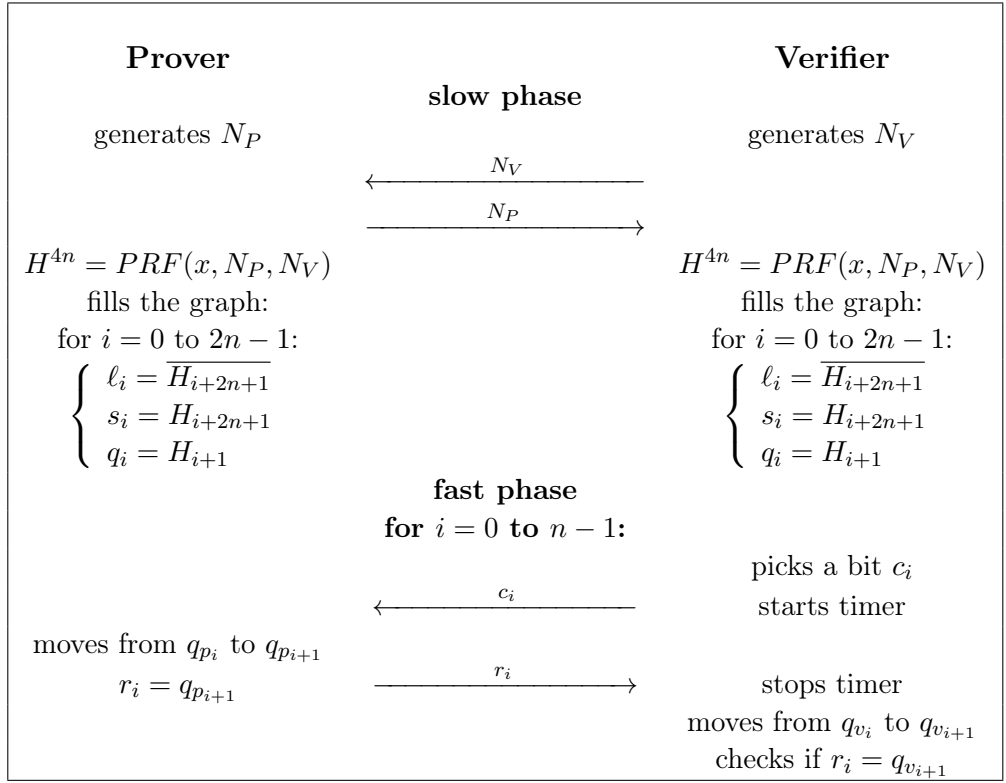

\centering
\begin{tabular}{|ccc|}
\hline
&&\\
\textbf{\large{Prover}}&&\textbf{\large{Verifier}}\\
&\textbf{slow phase}&\\
generates $N_{P}$&&generates $N_{V}$\\
&$\xleftarrow{\ \ \ \ \ \ \ \ \ \ \ \ N_{V}\ \ \ \ \ \ \ \ \ \ \ \ }$&\\
&$\xrightarrow{\ \ \ \ \ \ \ \ \ \ \ \ N_{P}\ \ \ \ \ \ \ \ \ \ \ \ }$&\\
$H^{4n}=PRF(x,N_{P},N_{V})$&&$H^{4n}=PRF(x,N_{P},N_{V})$\\
fills the graph:&&fills the graph:\\
for $i=0$ to $2n-1$:&&for $i=0$ to $2n-1$:\\
 $\left\{\begin{array}{l}

\ell_i = \overline{H_{i+2n+1}} \\
s_i=H_{i+2n+1} \\

q_i=H_{i+1}

\end{array}\right.$&& $\left\{\begin{array}{l}

\ell_i = \overline{H_{i+2n+1}} \\
s_i=H_{i+2n+1} \\

q_i=H_{i+1}
\end{array}\right.$\\
&\textbf{fast phase}&\\
&\textbf{for $i=0$ to $n-1$:}&\\
&&picks a bit $c_{i}$\\
&$\xleftarrow{\ \ \ \ \ \ \ \ \ \ \ \ c_i\ \ \ \ \ \ \ \ \ \ \ \ }$&starts timer\\
moves from $q_{p_i}$ to $q_{p_{i+1}}$&&\\
$r_i=q_{p_{i+1}}$&$\xrightarrow{\ \ \ \ \ \ \ \ \ \ \ \ r_i\ \ \ \ \ \ \ \ \ \ \ \ }$&stops timer\\
&&moves from $q_{v_i}$ to $q_{v_{i+1}}$\\
&&checks if $r_i = q_{v_{i+1}}$\\
\hline
\end{tabular}
\caption{The new graph-based proposal\label{fig:proposal}}
\end{figure}

\subsection{Verification}

The authentication succeeds if all the responses are correct, and each round is completed within the time bound $\Delta t_{\text{max}}$.


\section{Security analysis of the graph-based protocol} \label{sec:security}

As stated in the introduction, mafia fraud and distance fraud are the two main security concerns when considering distance bounding protocols. We analyse in this section the graph-based protocol with respect to these frauds.

\subsection{Mafia fraud} \label{sec:mafia}

To analyse the mafia fraud we consider the adversary abilities complying with the models provided in \cite{AvoineT-2009-isc}, \cite{HanckeK-2005-securecomm} and \cite{KimA-2009-cans}. Below, we define the \emph{head node} and rephrase the well-known pre-ask strategy (see for example \cite{MunillaOP-2006-rfidsec}) with our terminology.

\begin{definition}[Head node] \label{def:head}
Given a sequence of challenges $\{c_1, c_2, \cdots, c_i\}$ ($1 \leq i \leq n$), the head node is the node that should be used by the prover to send the response to the verifier according to this sequence of challenges. The head node is denoted as $\Omega(c_1, c_2, \cdots, c_i)$.
\end{definition}

\begin{definition}[Pre-ask strategy]  \label{def:adv_aby}
 The pre-ask strategy begins at the end of the slow phase and before the beginning of the fast phase. First, the adversary sends a sequence of challenges $\{ \tilde{c_1}, \tilde{c_2}, \cdots, \tilde{c_n}\}$ to the prover and receives a sequence of responses $\{\Omega(\tilde{c_1}), \Omega(\tilde{c_1}, \tilde{c_2}), \cdots , \Omega(\tilde{c_1}, \tilde{c_2}, \cdots, \tilde{c_n})\}$. \\Later, during the fast phase, the adversary tries to use the information obtained from the prover in the best way. Let us consider $\{c_1, c_2, \cdots c_i\}$ the challenges sent by the verifier until the $i$-th round during the fast phase. If $\forall j$ s.t. $1 \leq j \leq i$, we have $c_j = \tilde{c_j}$ then the adversary sends as response $\Omega(\tilde{c_1}, \tilde{c_2}, \cdots, \tilde{c_i})$. Otherwise she sends as response the value $\Omega(\tilde{c_1}, \tilde{c_2}, \cdots, \tilde{c_j})$ where $j$ is selected according to some rule that will be defined later.
\end{definition}

\begin{remark} \label{rem:adv}
Sending a combination of two or more values as response is completely useless for the adversary because the nodes' values in the graph are independent from each other. Furthermore, in the graph-based protocol one node is never used twice to send a response. Therefore, the adversary can neither obtain nor infer more information than the one obtained from the prover. Finally, note that in the security analysis of previous protocols \cite{AvoineT-2009-isc}, \cite{HanckeK-2005-securecomm} and \cite{KimA-2009-cans}, the best adversary strategy is to pick $j = i$ for every round, \emph{i.e.} the adversary sends exactly what she received from the prover in the $i$-th round. However, as we explain below, in the graph-based protocol it makes sense to send a value received in a different round.
\end{remark}

While the challenges sent by the adversary match the challenges sent by the verifier, the adversary is able to send the correct response. However, after the first \emph{incorrect} adversary challenge, she can no longer be convinced about the correctness of her response. Consequently, we analyse below the adversary success probability when the adversary sends at least an \emph{incorrect} challenge to the prover during the pre-ask strategy.

\begin{theorem} \label{theo:first}
Let $(c_1, c_2, \cdots, c_i)$ be the sequence of verifier challenges until the $i$-th round, and let $(\tilde{c_1}, \tilde{c_2}, \cdots, \tilde{c_n})$  be the sequence of adversary challenges in the pre-ask strategy.  Let $F$ be the random variable  representing the first round in which $c_t \neq \tilde{c_t}$ $(1 \leq t \leq n)$. Given $\Omega(\tilde{c_1}, \tilde{c_2}, \cdots, \tilde{c_j})$, the adversary response in the $i$-th round for some $(1 \leq j \leq n)$, we have:

$$
\Pr(\Omega(\tilde{c_1}, \tilde{c_2}, \cdots, \tilde{c_j}) = \Omega(c_1, c_2, \cdots, c_i)|F = t) = \left\{\begin{array}{l l}
1 &  \quad \text{if} \quad i < t \quad \text{and} \quad i = j,\\
\frac{1}{2} & \quad \text{if} \quad i < t \quad \text{and} \quad i \neq j,\\
\frac{1}{2} & \quad \text{if} \quad i \geq t \quad \text{and} \quad j < t,\\
p(t) & \quad \text{if} \quad i \geq t \quad \text{and} \quad j \geq t,
\end{array}\right.
$$
where $p(t) = \frac{1}{2}+\frac{1}{2^{i+j-2t+2}}\sum_{k = 0}^{k = 2n-1} \left (A^{i-t}[1, k]A^{j-t}[2, k] + A^{i-t}[2, k]A^{j-t}[1, k] \right)$, and $A$ is the adjacency matrix of the graph which represents the graph-based protocol.
\end{theorem}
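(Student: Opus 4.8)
The plan is to reduce the whole statement to one combinatorial quantity — the probability that the adversary's head node and the verifier's correct head node coincide \emph{as nodes} — and then to evaluate that quantity with the adjacency matrix $A$. Write $P = \Omega(c_1,\dots,c_i)$ for the correct head node of round $i$ and $Q = \Omega(\tilde c_1,\dots,\tilde c_j)$ for the value the adversary replays. Since the $2n$ node bits $H_1,\dots,H_{2n}$ are uniform, mutually independent, and independent of the $2n$ edge-label bits, conditioning on the two head nodes gives: if $P$ and $Q$ are the same node the replayed bit is correct with probability $1$, and if they are distinct nodes it is correct with probability $\tfrac12$ (two independent uniform node bits agree half the time). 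Hence for every case
$$\Pr\bigl(\text{reply correct}\mid F=t\bigr)=\tfrac12+\tfrac12\,\Pr\bigl(P=Q\mid F=t\bigr),$$
so it suffices to compute $\Pr(P=Q\mid F=t)$ over the edge labels and the verifier's challenges.

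Two structural facts drive the argument. First, a walk starting at $q_0$ of length $m$ reaches $q_{\mathrm{pos}}$ with $\mathrm{pos}$ strictly increasing by $1$ or $2$ per step, so $\mathrm{pos}\in[m,2m]$; because $i,j\le n$ and $t\ge 1$ we only ever deal with walks of length $\le n-1$, and such walks never wrap around and hence never revisit a node. Consequently two responses produced at \emph{different} rounds of a single walk are always distinct nodes. Second, at the divergence round $t$ both parties sit at a common node $q_a$, and since $c_t\neq\tilde c_t$ while the two out-edges of any node carry complementary labels ($\ell_i=\overline{s_i}$), the verifier and the adversary leave $q_a$ along \emph{opposite} edges, landing on $\{q_{a+1},q_{a+2}\}$ in the two orders, each occurring with probability $\tfrac12$ from the uniform label $s_a$.

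The three easy cases now follow immediately from the first fact. If $i<t$ and $i=j$, then $c_s=\tilde c_s$ for $s\le i$ and $j=i$, so $P$ and $Q$ are literally the same node and the probability is $1$. If $i<t$ and $i\neq j$, both $P$ and $Q$ lie on the adversary's pre-ask walk at the distinct rounds $i$ and $j$, hence are distinct nodes and the probability is $\tfrac12$. If $i\ge t$ and $j<t$, then $Q=\Omega(c_1,\dots,c_j)$ sits on the verifier's walk at round $j<t\le i$ while $P$ sits on the same walk at round $i$; distinct rounds again force distinct nodes, giving $\tfrac12$.

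The remaining case $i\ge t,\ j\ge t$ is the heart of the proof. Relabel indices by the translation $q_a\mapsto q_0$, so that after round $t$ the two parties occupy nodes $1$ and $2$ (in one of the two equiprobable orders). I would then model the residual $i-t$ verifier steps and $j-t$ adversary steps as two random walks taking $+1$ or $+2$ with probability $\tfrac12$, and argue they are \emph{independent} with fair steps, so that $\Pr(\text{length-}m\text{ walk } p\to k)=A^{m}[p,k]/2^{m}$. Internal independence of each walk is clear because a walk of length $\le n-1$ visits pairwise distinct nodes, whose labels are independent bits. The genuinely delicate point — the main obstacle — is the cross-correlation arising from \emph{reconvergence}: the two walks may share a node and thus read a common label, which a priori couples their steps. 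This is resolved by observing that every verifier step at a round $s>t$ is governed by a fresh verifier challenge $c_s$ which remains uniform and independent even after conditioning on $F=t$; since $[\,s_v=c_s\,]$ is then a fair coin independent of $s_v$ and of everything the adversary does, each verifier step decouples from the labels and from the adversary walk, neutralising the shared-label correlation. Granting this, averaging the product $\sum_k \Pr(\text{verifier}\to k)\Pr(\text{adversary}\to k)$ over the two equiprobable divergence orders yields
$$\Pr\bigl(P=Q\mid F=t\bigr)=\frac{1}{2^{\,i+j-2t+1}}\sum_{k=0}^{2n-1}\bigl(A^{i-t}[1,k]\,A^{j-t}[2,k]+A^{i-t}[2,k]\,A^{j-t}[1,k]\bigr),$$
and substituting this into the reduction of the first paragraph gives exactly $p(t)$, completing the proof.
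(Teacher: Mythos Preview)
Your proof is correct and follows essentially the same route as the paper: both reduce to $\tfrac12+\tfrac12\Pr(\text{same node})$, dispatch the easy cases by observing that distinct rounds on a single walk yield distinct nodes, and handle the main case via translation invariance and adjacency-matrix walk counts from the two post-divergence starting points. Your explicit treatment of the reconvergence issue---using the freshness of the verifier challenges $c_s$ for $s>t$ to decouple the verifier's walk from the shared edge labels---is actually more careful than the paper, which simply writes the product $\Pr(v_i=k)\Pr(a_j=k)$ without justifying independence.
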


\begin{proof}
We analyse the problem by cases:

\begin{case}[$i < t$ \text{and} $i = j$]
As $i<t$ then $\forall 1 \leq k \leq i$, $\tilde{c_k} = c_k$, therefore $\Omega(\tilde{c_1}, \tilde{c_2}, \cdots, \tilde{c_j}) = \Omega(c_1, c_2, \cdots, c_i)$.
\end{case}

\begin{case}[$i < t$ \text{and} $i \neq j$] \label{case:2}
As $i<t$ then $\Omega(\tilde{c_1}, \tilde{c_2}, \cdots, \tilde{c_i}) = q_{v_i} =\Omega(c_1, c_2, \cdots, c_i)$. On the other hand, as $i \neq j$ then $q_{v_i}$ and $\Omega(\tilde{c_1}, \tilde{c_2}, \cdots, \tilde{c_j})$ are not the same node in the graph. As the node values in the graph are independent, we conclude that, $\Pr(\Omega(\tilde{c_1}, \tilde{c_2}, \cdots, \tilde{c_j}) = \Omega(c_1, c_2, \cdots, c_i)) = \frac{1}{2}$.
\end{case}

\begin{case}[$i \geq t$ and $j < t$]
This case is analog to Case \ref{case:2}.
\end{case}

\begin{case}[$i \geq t$ and $j \geq t$]
Let be $q_{v_i} = \Omega(c_1, c_2, \cdots, c_i)$ and \\$q_{a_j} = \Omega(\tilde{c_1}, \tilde{c_2}, \cdots, \tilde{c_j})$, so:

\begin{equation} \label{eq_5:1}
\Pr(\Omega(\tilde{c_1}, \tilde{c_2}, \cdots, \tilde{c_j}) = \Omega(c_1, c_2, \cdots, c_i)) = \Pr(q_{v_i} = q_{a_j}) \; .
\end{equation}

Now, $\Pr(q_{v_i} = q_{a_j}) = \Pr(q_{v_i} = q_{a_j}| v_i = a_j)\Pr(v_i = a_j) + \Pr(q_{v_i} = q_{a_j}| v_i \neq a_j)\Pr(v_i \neq a_j)$ where $\Pr(q_{v_i} = q_{a_j}| v_i = a_j) = 1$ by definition of the graph-based protocol. On the other hand, $\Pr(q_{v_i} = q_{a_j}| v_i \neq a_j) = \frac{1}{2}$ because the node values are selected at random in the protocol. Then

\begin{equation} \label{eq_5:2}
\Pr(q_{v_i} = q_{a_j}) = \frac{1}{2}+\frac{\Pr(v_i = a_j)}{2} \; .
\end{equation}

As $0 \leq v_i, a_j \leq 2n-1$ then

\begin{equation} \label{eq:tp}
\Pr(v_i = a_j) = \sum_{k = 0}^{k = 2n-1}\Pr(v_i = k)\Pr(a_j = k) \; .
\end{equation}

As $c_t \neq \tilde{c_t}$ for the first time, then two equally probable cases occur: 1) $\Omega(c_1, \cdots, c_t) = q_x$ and $\Omega(\tilde{c_1}, \cdots, \tilde{c_t}) = q_{x+1}$, 2) $\Omega(c_1, \cdots, c_t) = q_{x+1}$ and $\Omega(\tilde{c_1}, \cdots, \tilde{c_t}) = q_x$, where ($0 \leq x \leq 2n-1$) and $\forall x$, $x+1 = (x+1) \mod 2n$. Using these two events in Equation \ref{eq:tp} we obtain:

\small{
    \begin{equation*}
    \Pr(v_i = a_j) =  \frac{1}{2}\left(\sum_{k = 0}^{k = 2n-1}\Pr(v_i = k|\Omega(c_1, \cdots, c_t) = q_x)\Pr(a_j = k|\Omega(c_1, \cdots, c_t) = q_{x+1})\right.\end{equation*}
    \begin{equation}\left.+\sum_{k = 0}^{k = 2n-1}\Pr(v_i = k|\Omega(c_1, \cdots, c_t) = q_{x+1})\Pr(a_j = k|\Omega(c_1, \cdots, c_t) = q_{x})\right) \label{eq:equal}.
    \end{equation}
}

\normalsize

As $A^y[x, k]$ represents the number of walks of size $y$ between nodes $x$ and $k$, then $\Pr(v_i = k | \Omega(c_1, \cdots, c_t) = q_x) = \frac{A^{i-t}[x, k]}{2^{i-t}}$ and $\Pr(v_i = k | \Omega(c_1, \cdots, c_t) = q_{x+1}) = \frac{A^{i-t}[x+1, k]}{2^{i-t}}$; in the same way $\Pr(a_j = k| \Omega(c_1, \cdots, c_t) = q_{x}) = \frac{A^{j-t}[x, k]}{2^{j-t}}$ and $\Pr(a_j = k| \Omega(c_1, \cdots, c_t) = q_{x+1}) = \frac{A^{j-t}[x+1, k]}{2^{j-t}}$. Then using Equation \ref{eq:equal}:

\begin{equation} \label{eq:equal2}
\Pr(v_i = a_j) = \frac{1}{2^{i+j-2t+2}}\sum_{k = 0}^{k = 2n-1} \left (A^{i-t}[x, k]A^{j-t}[x+1, k] + A^{i-t}[x+1, k]A^{j-t}[x, k] \right) \; .
\end{equation}

Given the graph characteristics, we have  $A^{y}[x,k] = A^y[(x-z) \mod 2n, (k-z) \mod 2n]$ for any $z \in \mathbb{N}$. Therefore, $A^{i-t}[x, k] = A^{i-t}[1, (k-x+1) \mod 2n]$ and $A^{i-t}[x+1, k] = A^{i-t}[2, (k-x+1) \mod 2n]$, in the same way, $A^{j-t}[x, k] = A^{j-t}[1, (k-x+1) \mod 2n]$ and $A^{j-t}[x+1, k] = A^{j-t}[2, (k-x+1) \mod 2n]$. So:

\begin{equation*}
\sum_{k = 0}^{2n-1} \left (A^{i-t}[x, k]A^{j-t}[x+1, k] + A^{i-t}[x+1, k]A^{j-1}[x, k] \right) =
\end{equation*}
\begin{equation}
\sum_{k = 0}^{2n-1} \left (A^{i-t}[1, k]A^{j-t}[2, k] + A^{i-t}[2, k]A^{j-t}[1, k] \right) \label{eq:equal3} .
\end{equation}

Equations \ref{eq_5:1}, \ref{eq_5:2}, \ref{eq:equal2}, and \ref{eq:equal3} yield the expected result.

\end{case}

\end{proof}

\begin{remark}
Using Theorem \ref{theo:first}, assuming $c_1 \neq \tilde{c_1}$, for  $i = 1$ we obtain that $\Pr(\Omega(\tilde{c_1}, \tilde{c_2}) = \Omega(c_1)) = \frac{5}{8} > \Pr(\Omega(\tilde{c_1}, \tilde{c_2}, \cdots, \tilde{c_j}) = \Omega(c_1))$ for every $j \neq 2$. This means that in this case it is better for the adversary to send the second response of the prover ($\Omega(\tilde{c_1}, \tilde{c_2}$)). These results only reinforce the ideas shown in Remark \ref{rem:adv}, that the best adversary strategy is not always to pick $j = i$ in the graph-based protocol.
\end{remark}

\begin{corollary} \label{cor:str}
Given $r_i  = \Omega(\tilde{c_1}, \tilde{c_2}, \cdots, \tilde{c_i})$ and $c_i' = \Omega(c_1, c_2, \cdots, c_i)$ for every $1 \leq i \leq n$, the best adversary success probability in the mafia fraud is:

$$
\sum_{t = 1}^{t = n} \frac{1}{2^t}\left( \prod_{i = t}^{i = n} \max(\Pr(r_1 = c_i' | F = t), \cdots, \Pr(r_{n} = c_i'| F = t)) \right) + \frac{1}{2^n}
$$
where $\Pr(r_j = c_i'|F = t)$ is defined in Theorem \ref{theo:first}.

\end{corollary}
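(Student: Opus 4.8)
The plan is to decompose the adversary's success event according to the random variable $F$ introduced in Theorem~\ref{theo:first}, namely the index of the first round at which the verifier's challenge differs from the challenge submitted to the prover during the pre-ask phase. Since the verifier picks each $c_i$ uniformly at random and independently of the pre-ask challenges $\tilde{c_1},\dots,\tilde{c_n}$, the first mismatch occurs at round $t$ with probability $\Pr(F=t)=\tfrac{1}{2^t}$ for $1\leq t\leq n$, while no mismatch at all occurs with probability $\Pr(F>n)=\tfrac{1}{2^n}$. In the latter event every verifier challenge coincides with a pre-ask challenge, so the values harvested from the prover are exactly the ones the verifier expects and the attack succeeds with certainty; this accounts for the trailing $+\tfrac{1}{2^n}$ term.

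Conditioning on $F=t$, I would split the fast phase into the rounds $i<t$ and the rounds $i\geq t$. For $i<t$ the submitted and verifier challenges agree up to round $i$, so forwarding the $i$-th harvested response $r_i=\Omega(\tilde{c_1},\dots,\tilde{c_i})$ reproduces $c_i'=\Omega(c_1,\dots,c_i)$ with probability $1$; this is precisely the first line of Theorem~\ref{theo:first} and contributes a factor $1$. For each round $i\geq t$ the adversary must commit to replaying one of her stored values $r_j$, and by Theorem~\ref{theo:first} the probability that it matches $c_i'$ is $\Pr(r_j=c_i'\mid F=t)$, whose best value over $j$ is the per-round maximum $\max\bigl(\Pr(r_1=c_i'\mid F=t),\dots,\Pr(r_n=c_i'\mid F=t)\bigr)$. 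Remark~\ref{rem:adv} is exactly the warning that this maximiser need not be $j=i$, so all $n$ candidates must be retained in the maximisation.

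The heart of the argument, and the step I expect to be the main obstacle, is to justify that the probability of passing \emph{all} rounds under $F=t$ equals the \emph{product} over $i=t,\dots,n$ of these per-round maxima, and hence that the round-by-round greedy choice is globally optimal. Here I would use the structure of the graph together with Remark~\ref{rem:adv}: during the fast phase the verifier's state advances by one or two nodes at each of the $n$ rounds on a cycle of $2n$ nodes, so the head nodes $q_{v_1},\dots,q_{v_n}$ are pairwise distinct, and since the node values are drawn independently in the slow phase, the correctness events of distinct rounds are mutually independent. Because the adversary's decision at round $i$ (which $r_j$ to forward) influences only the event $[r_j=c_i']$ and no other round, the joint success probability factorises across rounds, and maximising each factor separately maximises the whole product. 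Care is needed because the conditional probabilities of Theorem~\ref{theo:first} already average over the challenge continuations consistent with $F=t$, so I would phrase the independence claim at the level of these conditional distributions rather than for a fixed challenge sequence.

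Finally I would assemble the pieces: weighting each conditional success probability $\prod_{i=t}^{n}\max_j\Pr(r_j=c_i'\mid F=t)$ by $\Pr(F=t)=\tfrac{1}{2^t}$, summing over $t=1,\dots,n$, and adding the $\tfrac{1}{2^n}$ contribution from the event $F>n$, which yields exactly the expression claimed in Corollary~\ref{cor:str}.
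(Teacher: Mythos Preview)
Your approach is essentially identical to the paper's: decompose by the first-mismatch index $F$, use $\Pr(F=t)=2^{-t}$ and $\Pr(F>n)=2^{-n}$, and write $\Pr(\text{success}\mid F=t)$ as the product of per-round maxima from Theorem~\ref{theo:first}. The paper's proof simply asserts this product factorisation (your ``main obstacle'') as a direct consequence of the pre-ask strategy without further argument, so your attempt to justify it via the distinctness of the verifier's head nodes and the independence of node labels actually goes beyond what the paper provides.
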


\begin{proof}

The adversary success probability in the mafia fraud is:

\begin{equation} \label{eq:total}
\sum_{t = 1}^{t = n} \left(\Pr(\text{success} | F = t)\Pr(F = t)\right) + \Pr(c_1 = \tilde{c_1}, c_2 = \tilde{c_2}, \cdots, c_n = \tilde{c_n}) \; .
\end{equation}

As the challenges are selected at random, then:

\begin{equation} \label{eq:cha}
\begin{array}{l}
\Pr(F = t) = \frac{1}{2^t}  \; .\\
\Pr(c_1 = \tilde{c_1}, c_2 = \tilde{c_2}, \cdots, c_n = \tilde{c_n}) = \frac{1}{2^n}  \; .\\
\end{array}
\end{equation}

Considering the pre-ask attack strategy in Definition \ref{def:adv_aby}:

\begin{equation} \label{eq:tj}
\Pr(\text{success} | F = t) = \prod_{i = t}^{i = n} \max(\Pr(r_1 = c_i'| F = t), \cdots, \Pr(r_{n} = c_i'| F = t)) \; .
\end{equation}

Equations \ref{eq:total}, \ref{eq:cha}, and \ref{eq:tj} yield the expected result.

\end{proof}

\subsection{Distance fraud}

The  distance fraud analysis for most of the distance-bounding protocols is not a hard task. However, for the ATP~\cite{AvoineT-2009-isc} protocol, to the best of our knowledge, nobody has computed the distance fraud success probability. Unfortunately, in the graph-based protocol which has some similarities with the ATP protocol, distance fraud analysis is also not trivial. Then, in this chapter we provide an upper bound on the distance fraud for a sub-family of the distance-bounding protocols, which will be useful for the ATP protocol and for the graph-based protocol.

\begin{definition}[Distance-bounding protocol sub-family] \label{def:sub}
Let us consider $\mathcal{P}$ a distance bounding protocol. $\mathcal{P}$ belongs to the distance-bounding protocol sub-family if it fulfills the following requirements: \begin{itemize}
    \item During the fast phase, in each round the verifier sends a bit as challenge and the prover answers with a bit alike.
    \item There is no final phase.
	\item After the slow phase, it should be possible to build a function $f: \{0,1\}^n \rightarrow \{0,1\}^n$ such that, given any sequence of challenges $\{c_1, c_2, \cdots, c_n\}$, then \\$f(c_1, c_2, \cdots, c_n)$ is the correct response sequence for the verifier. From now on, we are going to call this function ``prover function''.
\end{itemize}

\end{definition}

\begin{definition}[Prover function pre-image] \label{def:pfi}
For a sequence $y \in \{0,1\}^n$ and a prover function $f$, the prover function pre-image is the set $I_{y} = \{ x \in \{0,1\}^n |  f(x) = y \}$.
\end{definition}


\begin{definition}[Adversary capability in the distance fraud attack]
The adversary capability in the distance fraud is twofold:
\begin{enumerate}
	\item The adversary has access to the prover function.
	\item The adversary can send in advance a sequence $y \in \{0,1\}^n$ to the verifier, trying to maximise $\Pr(f(c_1, c_2, \cdots, c_n) = y)$ where $\{c_1, c_2, \cdots, c_n\}$ is a random sequence of challenges.
\end{enumerate}
\end{definition}

\begin{proposition} \label{cor:alg}
Let $y$  be the sequence sent by the adversary in advance,  then the success probability in the distance fraud is $\frac{|I_y|}{2^n}$.
\end{proposition}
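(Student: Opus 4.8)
The plan is to reduce the claim to a one-line probability computation over the uniformly random challenge sequence, using the two adversary capabilities granted in the distance-fraud model together with the defining properties of the sub-family (Definition~\ref{def:sub}).

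First I would pin down the adversary's behaviour. By the adversary-capability definition, the dishonest prover commits in advance to a single response sequence $y = (y_1, \ldots, y_n) \in \{0,1\}^n$ and transmits it before the fast phase begins. This commitment is forced by the scenario: the prover lies outside the verifier's neighbourhood, and since the protocol belongs to the sub-family it has no final phase, so the only way each round can still meet the timing bound $\Delta t_{\text{max}}$ is for the response $r_i$ to be emitted before the challenge $c_i$ could possibly have reached the prover. Consequently the whole vector $y$ is chosen independently of the challenges.

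Next I would describe the verifier's side. Each challenge $c_i$ is an independent uniform bit, so the challenge vector $(c_1, \ldots, c_n)$ is uniformly distributed over $\{0,1\}^n$ and all $2^n$ vectors are equally likely. Because the protocol is in the sub-family and has no final phase, the verifier accepts precisely when every round's answer is correct; the correct answers for a given challenge vector are, by definition, $f(c_1, \ldots, c_n)$. Hence acceptance is equivalent to the single event $y = f(c_1, \ldots, c_n)$. The computation is then immediate:
\[
\Pr(\text{success}) = \Pr\bigl(f(c_1, \ldots, c_n) = y\bigr) = \Pr\bigl((c_1, \ldots, c_n) \in I_y\bigr) = \frac{|I_y|}{2^n},
\]
where the middle equality is just the definition of the pre-image $I_y = \{\, x \in \{0,1\}^n : f(x) = y \,\}$ (Definition~\ref{def:pfi}) and the last equality uses the uniformity of the challenge vector over its $2^n$ possible values.

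I expect no genuine obstacle here; the only point requiring care is the modelling justification that the committed $y$ is independent of the challenges, which is exactly what the distance-fraud adversary-capability definition already grants, so everything else is a counting argument over the uniform challenge space. I would note in passing that the \emph{optimal} distance-fraud probability would be obtained by additionally maximising $|I_y|/2^n$ over all $y$, but the proposition as stated concerns only the success probability of a fixed committed sequence $y$.
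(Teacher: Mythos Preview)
Your argument is correct and is essentially the only natural one: uniform challenges plus the pre-image definition give the counting identity immediately. The paper in fact offers no proof at all for this proposition --- it is stated and then followed directly by the remark that the optimal strategy is to maximise $|I_y|$ --- so your write-up is strictly more detailed than what the authors provide, while taking the same implicit route.
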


Undoubtedly, the best adversary strategy is to find and send a sequence $y \in \{0,1\}^n$ such that for any sequence $x \in \{0,1\}^n$ it holds that $|I_y| \geq |I_x|$.

\begin{theorem} \label{theo:ineq}
Given $x, y \in \{0,1\}^n$ two random sequences, and a prover function $f$, then, for any sequence $z \in \{0,1\}^n$ such that $I_z \neq \emptyset$ we have:

$$\Pr(x \in I_z) \leq \frac{\frac{1}{2^n}+\sqrt{\frac{1}{2^{2n}}-\frac{4}{2^n}+4\Pr(f(x) = f(y))}}{2}$$
\end{theorem}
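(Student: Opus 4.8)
The plan is to recast the whole statement in terms of the fibres of the prover function $f$ and then to recognise the claimed bound as the larger root of a quadratic. Write $N = 2^n$ and, for every $w \in \{0,1\}^n$, set $p_w = \Pr(f(x) = w) = |I_w|/N$. Since the pre-images partition the domain, $\sum_w p_w = 1$. With this notation the defining observation behind Proposition~\ref{cor:alg} (using that $x$ is uniform) gives $\Pr(x \in I_z) = |I_z|/N = p_z$, while independence of $x$ and $y$ together with the fact that $f(x)$ and $f(y)$ are identically distributed gives the collision-probability identity $\Pr(f(x) = f(y)) = \sum_w p_w^2$. Thus the theorem reduces to a purely numerical inequality relating $p_z$ to $\sum_w p_w^2$.

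The key step, and the source of the precise shape of the bound, is an integrality argument. Each $|I_w|$ is a nonnegative integer, so each $p_w$ is a nonnegative integer multiple of $1/N$; hence $p_w^2 \geq p_w/N$ for every $w$ (trivially when $p_w = 0$, and when $p_w \geq 1/N$ by multiplying the inequality $p_w \geq 1/N$ by $p_w \geq 0$). Summing over all $w \neq z$ and using $\sum_{w \neq z} p_w = 1 - p_z$ yields $\sum_{w \neq z} p_w^2 \geq (1 - p_z)/N$. Isolating the $z$-term of the collision probability then gives
$$\Pr(f(x) = f(y)) = p_z^2 + \sum_{w \neq z} p_w^2 \geq p_z^2 - \frac{p_z}{N} + \frac{1}{N}.$$

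Finally I would read this as a quadratic constraint on $a := p_z$. Setting $C = \Pr(f(x) = f(y))$, the previous display rearranges to $a^2 - \frac{a}{N} + \frac{1}{N} - C \leq 0$. The left-hand side is a quadratic in $a$ with positive leading coefficient, so $a$ cannot exceed its larger root; substituting $N = 2^n$, that larger root is exactly $\frac{\frac{1}{2^n} + \sqrt{\frac{1}{2^{2n}} - \frac{4}{2^n} + 4C}}{2}$, which is the asserted bound (and its discriminant is automatically nonnegative, since a genuine real value $a$ satisfies the inequality). The main obstacle is conceptual rather than computational: one must notice that the target bound is precisely the root of $a^2 - a/N + 1/N - C$, and that the integrality estimate $p_w^2 \geq p_w/N$ is exactly what manufactures the linear term $1/N - p_z/N$. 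The cruder Cauchy--Schwarz bound $C \geq p_z^2$ on its own only gives $a \leq \sqrt{C}$, which is too weak to reach the stated inequality.
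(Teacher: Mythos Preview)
Your argument is correct and is essentially the same as the paper's: both derive the quadratic inequality $\Pr(f(x)=f(y)) \geq p_z^2 + (1-p_z)/2^n$ and then solve for $p_z$. The paper reaches this inequality by conditioning on whether $y \in I_z$ and invoking $\Pr(f(x)=f(y)\mid y\notin I_z)\geq 1/2^n$, which is exactly your integrality observation $p_w^2 \geq p_w/2^n$ rephrased; your collision-probability decomposition $C=\sum_w p_w^2$ makes that step a bit more transparent, but the substance is identical.
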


\begin{proof}
Given that $I_z \neq \emptyset$, we have:

\begin{eqnarray} \label{eq:des_eq1}
\Pr(f(x) = f(y)) &=& \Pr(f(x) = f(y)|y \in I_z)\Pr(y \in I_z) \nonumber\\
&&{}+  \Pr(f(x) = f(y)|y \notin I_z)\Pr(y \notin I_z)
\end{eqnarray}

But, $\Pr(f(x) = f(y)|y \in I_z) = \Pr(x \in I_z) = \Pr(y \in I_z)$ because $x$ and $y$ are random sequences. On the other hand, $\Pr(f(x) = f(y)|y \notin I_z) \geq \frac{1}{2^n}$ because of the ``prover function'' definition. Therefore, using these results in Equation \ref{eq:des_eq1}, we obtain:

\begin{equation} \label{eq:des_eq2}
\Pr(f(x) = f(y)) \geq \Pr(x \in I_z)^2 +  \frac{1}{2^n}(1 - \Pr(x \in I_z))  \; .
\end{equation}

By calculating the discriminant of this quadratic inequality, and obtaining its solutions, we conclude the proof. Note that, this quadratic inequality has real solutions because $\Pr(f(x) = f(y)) \geq \frac{1}{2^n}$, and in this case, the discriminant value is always positive.
\end{proof}

\begin{corollary} \label{cor:ineq}
For every distance-bounding protocol that complies with Definition \ref{def:sub}, the adversary success probability in the distance fraud is upper-bounded by:

$$\frac{\frac{1}{2^n}+\sqrt{\frac{1}{2^{2n}}-\frac{4}{2^n}+4\Pr(f(x) = f(y))}}{2}  \; .$$
\end{corollary}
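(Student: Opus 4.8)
The plan is to derive Corollary \ref{cor:ineq} as an immediate consequence of Proposition \ref{cor:alg} and Theorem \ref{theo:ineq}, so that no new analytic work is required beyond correctly identifying the optimal adversary. First I would recall that, by Proposition \ref{cor:alg}, an adversary who commits in advance to the response sequence $y \in \{0,1\}^n$ wins the distance fraud with probability exactly $|I_y|/2^n$. Hence the \emph{best} adversary picks the sequence $z$ that maximises $|I_y|$ over all $y \in \{0,1\}^n$, and its optimal success probability equals $|I_z|/2^n$.

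Next I would re-express this quantity probabilistically. For a challenge sequence $x$ drawn uniformly at random from $\{0,1\}^n$ we have $\Pr(x \in I_z) = |I_z|/2^n$, so the optimal success probability is precisely $\Pr(x \in I_z)$ for the maximising $z$. Since $f$ is a total prover function on $\{0,1\}^n$ (cf. Definition \ref{def:sub}), its image is non-empty, and therefore the maximiser $z$ satisfies $I_z \neq \emptyset$; this is exactly the hypothesis required to invoke Theorem \ref{theo:ineq}.

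Finally, applying Theorem \ref{theo:ineq} to this particular $z$ yields
$$\Pr(x \in I_z) \leq \frac{\frac{1}{2^n}+\sqrt{\frac{1}{2^{2n}}-\frac{4}{2^n}+4\Pr(f(x) = f(y))}}{2},$$
and since the left-hand side is the optimal adversary success probability, the corollary follows. I would emphasise that the right-hand side does not depend on the particular $z$ chosen --- it involves only the collision probability $\Pr(f(x)=f(y))$ of the prover function evaluated on two independent uniform inputs --- so the same bound holds uniformly over every admissible advance sequence, in particular the best one.

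There is essentially no hard step here: the analytic core, namely solving the quadratic inequality relating $\Pr(x \in I_z)$ to the collision probability, was already discharged in the proof of Theorem \ref{theo:ineq}. The only points demanding care are bookkeeping ones: correctly identifying the optimal adversary with the maximiser of $|I_y|$ through Proposition \ref{cor:alg}, rewriting $|I_z|/2^n$ as $\Pr(x\in I_z)$ for uniform $x$, and verifying the non-emptiness hypothesis $I_z \neq \emptyset$ so that Theorem \ref{theo:ineq} is legitimately applicable.
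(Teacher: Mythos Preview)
Your proposal is correct and matches the paper's approach exactly: the paper states Corollary~\ref{cor:ineq} without an explicit proof, treating it as an immediate consequence of Proposition~\ref{cor:alg} (which identifies the success probability as $|I_y|/2^n$), the remark following it about the optimal adversary maximising $|I_y|$, and Theorem~\ref{theo:ineq} (which bounds $\Pr(x\in I_z)$). Your write-up simply makes these implicit steps explicit, including the bookkeeping check that $I_z\neq\emptyset$ so that Theorem~\ref{theo:ineq} applies.
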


With this result, we are giving a way to compute an upper bound of a sub-family of the distance-bounding protocols. We show below how to apply this result to the graph-based protocol, and later we apply the same result to the ATP protocol.

\begin{theorem} \label{theo:df}
The distance fraud success probability for the graph-based protocol is upper bounded by:

$$\frac{\frac{1}{2^n}+\sqrt{\frac{1}{2^{2n}}-\frac{4}{2^n}+4p\ }}{2}$$
where $$p = \prod_{i = 1}^{i = n} \left( \frac{1}{2}+\frac{1}{2^{2i+1}}\sum_{k = 0}^{k = 2n-1}(A^i[0, k])^2 \right)  \; .$$

\end{theorem}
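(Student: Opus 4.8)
The plan is to reduce the statement to Corollary~\ref{cor:ineq}, which already upper bounds the distance fraud success probability of \emph{any} protocol in the sub-family of Definition~\ref{def:sub} by the quantity depending only on $\Pr(f(x)=f(y))$. Hence the work splits into two tasks: (i) checking that the graph-based protocol belongs to that sub-family, and (ii) evaluating $\Pr(f(x)=f(y))$ and showing it equals the product $p$ stated in the theorem. Once (ii) is done, substituting $p=\Pr(f(x)=f(y))$ into the bound of Corollary~\ref{cor:ineq} gives the claim immediately.

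First I would verify membership in the sub-family. The three requirements of Definition~\ref{def:sub} are visibly met: each round of the fast phase exchanges a single challenge bit $c_i$ and a single response bit $r_i=q_{p_{i+1}}$; the protocol terminates right after the fast phase, with no final slow phase; and, once the slow phase has fixed the labelled graph $G$, the entire response sequence is a deterministic function of the challenge sequence, namely the prover function $f:(c_1,\dots,c_n)\mapsto(r_1,\dots,r_n)$ obtained by walking $G$ from $q_0$ (taking edge $s_i$ or $\ell_i$ according to $c_i$) and reading off node values. Thus Corollary~\ref{cor:ineq} applies verbatim, and it only remains to identify $\Pr(f(x)=f(y))$ with $p$.

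The core computation is (ii). Writing $v_i$ and $a_i$ for the nodes reached from $q_0$ by the length-$i$ challenge prefixes of $x$ and $y$, we have $f(x)=f(y)$ iff at every round the two walks land on equally-valued nodes, i.e. $q_{v_i}=q_{a_i}$ for all $1\le i\le n$. Here I would reuse the walk-counting machinery already developed in the proof of Theorem~\ref{theo:first}: since the challenges are uniform and each node has out-degree two, every length-$i$ walk from $q_0$ is equiprobable, so $\Pr(v_i=k)=A^i[0,k]/2^i$ and likewise for $a_i$, whence $\Pr(v_i=a_i)=\tfrac{1}{2^{2i}}\sum_{k=0}^{2n-1}(A^i[0,k])^2$. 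Because the node labels are independent uniform bits, conditioning on whether the two heads coincide yields the per-round factor $\Pr(q_{v_i}=q_{a_i})=\tfrac12+\tfrac12\Pr(v_i=a_i)=\tfrac12+\tfrac{1}{2^{2i+1}}\sum_{k=0}^{2n-1}(A^i[0,k])^2$, which is precisely the $i$-th factor of $p$.

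The main obstacle is the passage from these single-round factors to their product, i.e. justifying $\Pr\bigl(\bigcap_{i=1}^{n}\{q_{v_i}=q_{a_i}\}\bigr)=\prod_{i=1}^{n}\Pr(q_{v_i}=q_{a_i})$. The difficulty is that all rounds read labels from the same pool of $2n$ shared nodes, so the events $\{q_{v_i}=q_{a_i}\}$ are not independent \emph{a priori}: although the nodes visited within a single walk are pairwise distinct (the unwrapped positions strictly increase and stay below $2n$, so no node is revisited by one prover), a node used by $x$ at one round may be reused by $y$ at another, coupling the equality constraints. I would resolve this by a lazy-revelation argument on the node labels---revealing a bit only when a paired head first reaches its node and checking that each round contributes a fresh, independent equality test---and by averaging over the random challenge prefixes so that these coincidence probabilities collapse to the walk counts $A^i[0,k]$. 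Making this independence step fully rigorous, rather than heuristic, is where the real care lies; the remaining algebra and the final substitution into Corollary~\ref{cor:ineq} are routine.
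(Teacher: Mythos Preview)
Your approach mirrors the paper's proof exactly: the paper also (i) implicitly relies on Corollary~\ref{cor:ineq}, (ii) writes $\Pr(f(x)=f(y))=\prod_{i=1}^{n}\Pr(\Omega(x_1,\dots,x_i)=\Omega(y_1,\dots,y_i))$ directly, (iii) reuses the node-value decomposition $\Pr(q_{v_i}=q_{a_i})=\tfrac12+\tfrac12\Pr(v_i=a_i)$ from Theorem~\ref{theo:first}, and (iv) evaluates $\Pr(v_i=a_i)$ via the walk counts $A^i[0,k]/2^i$. The verification that the protocol lies in the sub-family of Definition~\ref{def:sub} is something you add that the paper leaves implicit.

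The one substantive difference is that you flag the product step as the main obstacle, whereas the paper simply asserts it (Equation~\ref{eq:distance1}) ``by the definition of the graph-based protocol and the definition of Prover Function'' without further argument. Your concern is legitimate: as you observe, a node visited by the $x$-walk at round $i$ can coincide with a node visited by the $y$-walk at a different round $j$, so the events $\{q_{v_i}=q_{a_i}\}$ are not obviously independent even conditionally on $(x,y)$. The paper does not supply the lazy-revelation argument you sketch, nor any substitute; so on this point you are being more careful than the paper itself, but you are not diverging from its strategy.
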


\begin{proof}
Let us consider two random sequences $x = \{x_1, x_2, \cdots, x_n\}$ and $y = \{y_1, y_2, \cdots, y_n\}$, then by the definition of the graph-based protocol and the definition of ``Prover Function'':

\begin{equation} \label{eq:distance1}
\Pr(f(x) = f(y)) = \prod_{i = 1}^{i = n}\Pr(\Omega(x_1, \cdots, x_i) = \Omega(y_1, \cdots, y_i)) \; .
\end{equation}

Let be $q_{x_i} = \Omega(x_1, \cdots, x_i)$ and $q_{y_i} = \Omega(y_1, \cdots, y_i)$, then, like in Theorem\ref{theo:first}, we can obtain that

\begin{equation} \label{eq:distance2}
\Pr(q_{x_i} = q_{y_i}) = \frac{1}{2}+\frac{\Pr(x_i = y_i)}{2} 
\end{equation}

and

\begin{equation} \label{eq:distance3}
\Pr(x_i = y_i) = \sum_{k = 0}^{k = 2n-1} \Pr(x_i = k)\Pr(y_i = k) \; .
\end{equation}

Once again, as $A^i[j,k]$ represents the number of walks of size $i$ between the nodes $j$ and $k$, where $A$ is the adjacency matrix of the graph, then $\Pr(x_i = k) = \frac{A^i[0, k]}{2^i} = \Pr(y_i = k)$. Therefore, using Equation \ref{eq:distance3}:

\begin{equation} \label{eq:distance4}
\Pr(x_i = y_i) = \sum_{k = 0}^{k = 2n-1} \left( \frac{A^i[0, k]}{2^i} \right)^2 \; .
\end{equation}

Equations \ref{eq:distance1}, \ref{eq:distance2} and \ref{eq:distance4}
yield

\begin{equation} \label{eq:distance5}
\Pr(f(x) = f(y)) = \prod_{i = 1}^{i = n} \left( \frac{1}{2}+\frac{1}{2^{2i+1}}\sum_{k = 0}^{k = 2n-1}(A^i[0, k])^2 \right) \; .
\end{equation}

By applying Equation \ref{eq:distance5} to Corollary \ref{cor:ineq}, considering that $p = \Pr(f(x) = f(y))$, we conclude the proof of this theorem.
\end{proof}

\section{Experimental results and evaluation} \label{sec:comparison}

We analyse mafia fraud resistance, distance fraud 
resistance and memory consumption. Therefore, we need 
to measure the above features 
for each of the previous protocols. We have detected that the mafia fraud
success probability for the KAP protocol provided in~\cite{KimA-2009-cans} is not correct. Also, as we previously said, the distance fraud success
probability of ATP was not presented in~\cite{AvoineT-2009-isc}. Therefore, we first provide both a correct calculation of the mafia fraud success probability of the KAP protocol and an upper bound 
for the distance fraud success probability of the ATP protocol.

\subsection{Mafia fraud success probability for KAP}\label{ap:kim}
In the Kim and Avoine protocol, the adversary success probability in the mafia fraud depends on the predefined challenges probability ($p_d$). Define the following events:

\begin{itemize}
	\item $L_i$ is the event ``the adversary wins the $i$-th round'';
	\item $D_i$ is the event ``the adversary is detected in 
	the $i$-th round by the tag for the first time'';
	\item $N_i$ is the event ``the adversary is detected by the tag in 
	the $i$-th round'';
	\item $N$ is the event ``the adversary is never detected''.
\end{itemize}
The notation $\bar{A}$ denotes the complement of event $A$.

By the law of total probability:

\begin{equation} \label{eq:mafiaKimMain}
P(\text{success}) = \sum_{i = 1}^{i=n}\Pr(\text{success}|D_i)\Pr(D_i) + \Pr(\text{success}|N)\Pr(N) \; .
\end{equation}

Since $\Pr(N_i) = \frac{p_d}{2}$,

 \begin{equation} \label{eq:yoquese}
 \Pr(N) = (1-\frac{p_d}{2})^n \; .
 \end{equation}

The probability of being detected in the $i$-th round for the first time is:

\begin{equation} \label{eq:di}
 \Pr(D_i) = \prod_{j = 1}^{j = i-1}\Pr(\bar{N_j})\Pr(N_i) = \left(\frac{2-p_d}{2} \right)^{i-1}\left (\frac{p_d}{2} \right) \; .
 \end{equation}

On the other hand

\begin{equation} \label{eq:sucess}
\Pr(\text{success}|D_i) = \prod_{j = 1}^{j = i-1}\Pr(L_j|\bar{N_j})\prod_{j = i}^{j = n}\Pr(L_j|N_j)
\end{equation}

where $\Pr(L_j|N_j) = \frac{1}{2}$, and

\begin{equation} \label{eq:conj}
\Pr(L_j|\bar{N_j}) = \frac{\Pr(L_j \cap \bar{N_j})}{\Pr(\bar{N_j})} 
\end{equation}

where $\Pr(L_j \cap \bar{N_j}) = \Pr(L_j \cap \bar{N_j}|p_d)p_d + \Pr(L_j \cap \bar{N_j}|p_r)p_r$. But $\Pr(L_j \cap \bar{N_j}|p_d) = \frac{1}{2}$, because the adversary must send the correct challenges $c_j$ in this round. And $\Pr(L_j \cap \bar{N_j}|p_r) = \frac{3}{4}$, because this is the same case as in the Hancke and Kuhn protocol. Therefore, $\Pr(L_j \cap \bar{N_j}) = \frac{1}{2}p_d+\frac{3}{4}p_r = \frac{3-p_d}{4}$. Using this result in Equation \ref{eq:conj} we obtain:

\begin{equation} \label{eq:wndi1}
\Pr(L_j|\bar{N_j}) = \frac{3-p_d}{4-2p_d} \; .
\end{equation}

Using Equations \ref{eq:sucess} and \ref{eq:wndi1} we obtain:

\begin{equation} \label{eq:wndi}
\Pr(\text{success}|D_i) = \left(\frac{3-p_d}{4-2p_d} \right)^{i-1}\left(\frac{1}{2}\right)^{n-i+1} \; ,
\end{equation}
and
\begin{equation} \label{eq:wnd2}
\Pr(\text{success}|N) = \left(\frac{3-p_d}{4-2p_d} \right)^{n} \; .
\end{equation}

Using Equations \ref{eq:mafiaKimMain}, \ref{eq:yoquese}, \ref{eq:di}, \ref{eq:wndi} and \ref{eq:wnd2}, we obtain the adversary success probability for the mafia fraud in the Kim and Avoine protocol:

\begin{equation} \label{eq:kim}
P(\text{success}) = \frac{p_d}{2}\sum_{i = 1}^{i=n}\left(\frac{3-p_d}{4}\right)^{i-1}\left(\frac{1}{2}\right)^{n-i+1} + \left(\frac{3-p_d}{4}\right)^{n}.
\end{equation}

\subsection{Distance fraud success probability for ATP} \label{ap:avp}

To find an upper bound for the adversary success probability in the distance fraud for the ATP protocol, we use the result of Theorem \ref{theo:df}. Indeed, this protocol behaves like the graph-based protocol. The only difference between them is that the ATP protocol creates a full tree as a graph. Therefore, in the ATP protocol the distance fraud success probability  is upper bounded by:

$$\frac{\frac{1}{2^n}+\sqrt{\frac{1}{2^{2n}}-\frac{4}{2^n}+4p\ }}{2},$$
where $$p = \prod_{i = 1}^{i = n} \left( \frac{1}{2}+\frac{1}{2^{2i+1}}\sum_{k = 0}^{k = 2n-1}(A^i[0, k])^2 \right).$$

To give a complete equation, we define $A^i[0,k]$ for a tree. For this purpose, we  consider that the nodes in the tree are labeled between $0$ and $2^n-1$ using a breadth-first algorithm. Then:

$$
A^i[0,k] = \left\{\begin{array}{l l}
1 &  \quad \text{if} \quad \mbox{$2^i-1 \leq k < 2^{i+1}-1$},\\
\\
0 & \quad \text{otherwise.}
\end{array}\right.
$$

Finally we obtain:

$$p = \prod_{i = 1}^{i = n} \left( \frac{1}{2}+\frac{1}{2^{i+1}}\right).$$

\subsection{Comparison}

Since memory is a scarce resource in RFID tags and thus it is one of the main concerns in distance-bounding protocols, we relax the ATP protocol to operate with linear memory. As noted in~\cite{AvoineT-2009-isc}, reducing memory in the ATP protocol increases the adversary success probability for both types of fraud. Hence, we pick $\alpha=\frac{n}{3}$, in which case the memory consumption equals $\frac{14n}{3} \approx 5n$, while a sufficient security is still ensured. 
Note that this memory consumption is in the range of the other studied protocol. This instance of the ATP protocol is named ``ATP3''.

\begin{table}
\centering
\begin{tabular}{|c | c | c | c |}
\hline
& Memory & Mafia Fraud & Distance Fraud \\
\hline
HKP & $2n$ \cite{HanckeK-2005-securecomm}& $\left(\frac{3}{4} \right)^n$ \cite{HanckeK-2005-securecomm}& $\left(\frac{3}{4} \right)^n$ \footnotemark[2]\\
\hline
KAP & $4n$ \cite{KimA-2009-cans}& Section \ref{ap:kim} & $\left(\frac{3}{4}+\frac{p_d}{4} \right)^n$ \cite{KimA-2009-cans}\\
\hline
ATP & $2^{n+1}-2$ \cite{AvoineT-2009-isc}& $\left(\frac{1}{2} \right)^n(\frac{n}{2}+1)$ \cite{AvoineT-2009-isc}& Section \ref{ap:avp} \\
\hline
ATP3 & $\frac{14n}{3}$ \cite{AvoineT-2009-isc}& $\left(\frac{1}{2} \right)^n\left(\frac{5}{2} \right)^{\frac{n}{3}}$ \cite{AvoineT-2009-isc} & $(0.3999)^{\frac{n}{3}}$ \footnotemark[3]\\
\hline
GRAPH & $4n$ & Corollary \ref{cor:str} & Theorem \ref{theo:df} \\
\hline
\end{tabular}
\caption{Memory consumption, mafia fraud success probability  and distance fraud success probability for the HKP protocol, the KAP protocol, the ATP protocols (ATP and ATP3), and the graph-based protocol (GRAPH).} \label{tb:values}
\end{table}

Table \ref{tb:values} depicts the values of the three parameters for each protocol that we are considering. In terms of memory, the Hancke and Kuhn protocol is, undoubtedly, the best protocol. As can be seen in Figure~\ref{fig:mafia}, when  considering only mafia fraud resistance, the KAP and the ATP protocols are the best ones.  Only in terms of distance fraud, the lowest adversary success probability is reached by the ATP protocol (cf. Figure~\ref{fig:dist}).

\footnotetext[2]{The distance fraud probability for the HKP protocol is computed using the distance fraud probability for the KAP protocol. Note that the KAP protocol with $p_d = 0$ and the HKP protocol are the same.}

\footnotetext[3]{The distance fraud probability for the ATP3 protocol is an accurate value, not an upper bound like in ATP or GRAPH. It was computed by brute force, \emph{i.e.} for a given instance, we computed the adversary success probability.  Then, considering all possible instances we deduce the probability in the average case.}

\begin{figure}[!ht]
  \begin{center}
       \includegraphics[width=0.7\textwidth, angle=270]{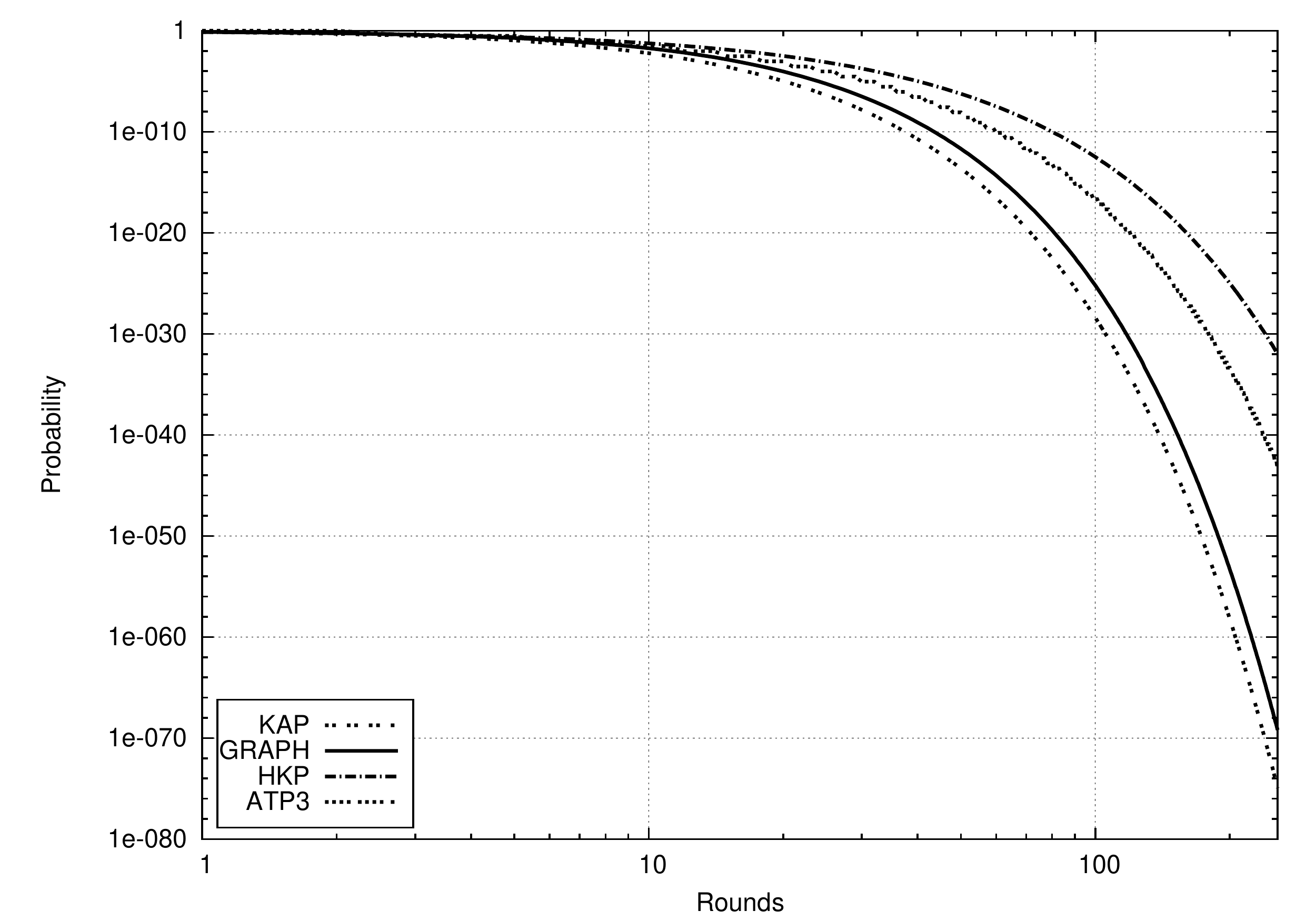}
  \end{center}
  \caption{Adversary success probability in the mafia fraud against the GRAPH protocol, the HKP protocol and the ATP3 protocol. The ATP protocol in its standard configuration is not represented in this chart because it has the same mafia fraud probability as the KAP protocol.
  \label{fig:mafia}}
\end{figure}

\begin{figure}[!ht]
 \begin{center}
       \includegraphics[width=0.7\textwidth, angle=270]{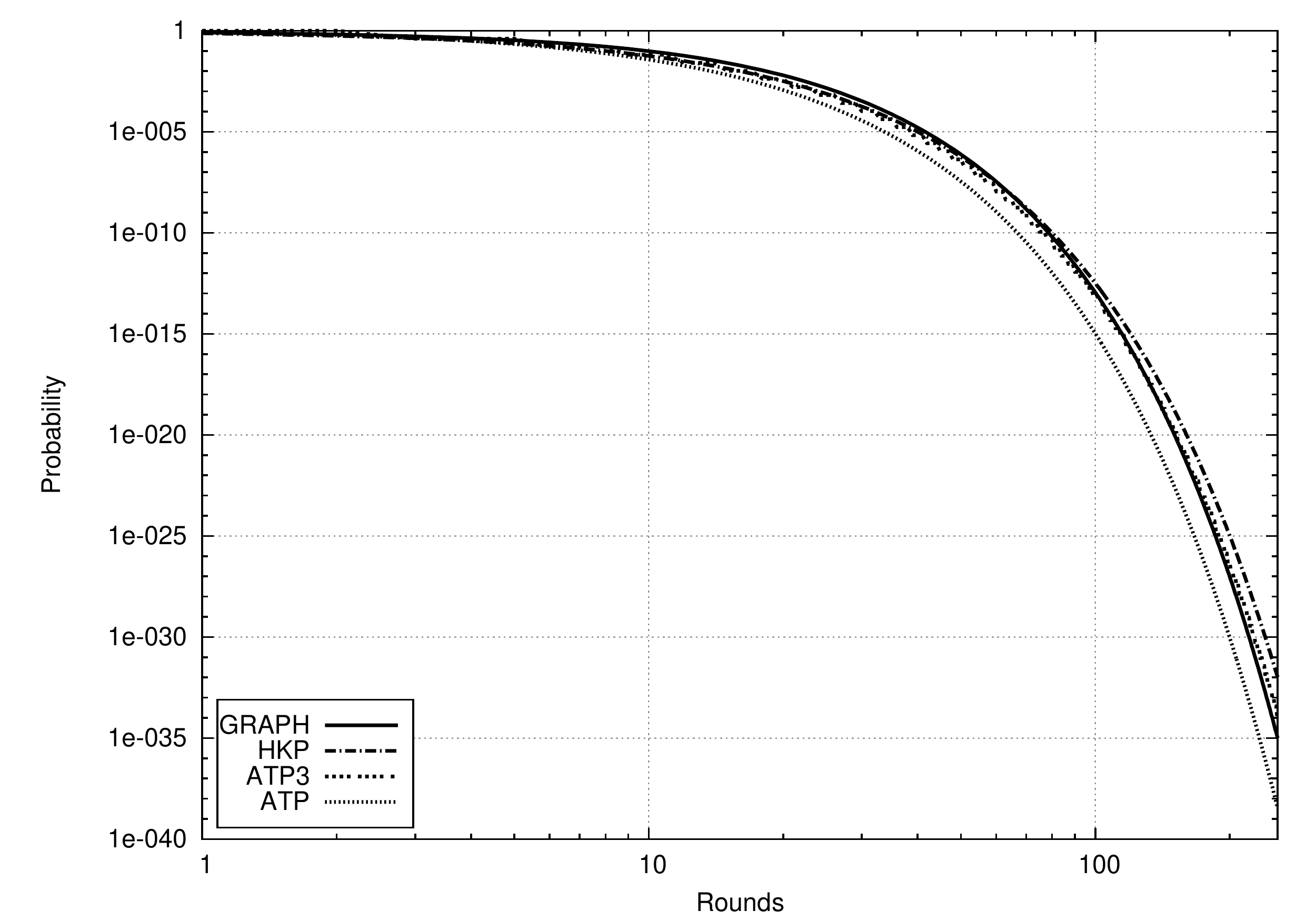}
  \end{center}
  \caption{Adversary success probability in the distance fraud against the GRAPH protocol, the HKP protocol, and the ATP protocols (ATP and ATP3). The KAP protocol is not represented in this chart because it has the same distance fraud probability as the HKP protocol in the best case.
  \label{fig:dist}}
\end{figure}

\begin{figure}[!ht]
  \begin{center}
       \includegraphics[width=0.7\textwidth, angle=270]{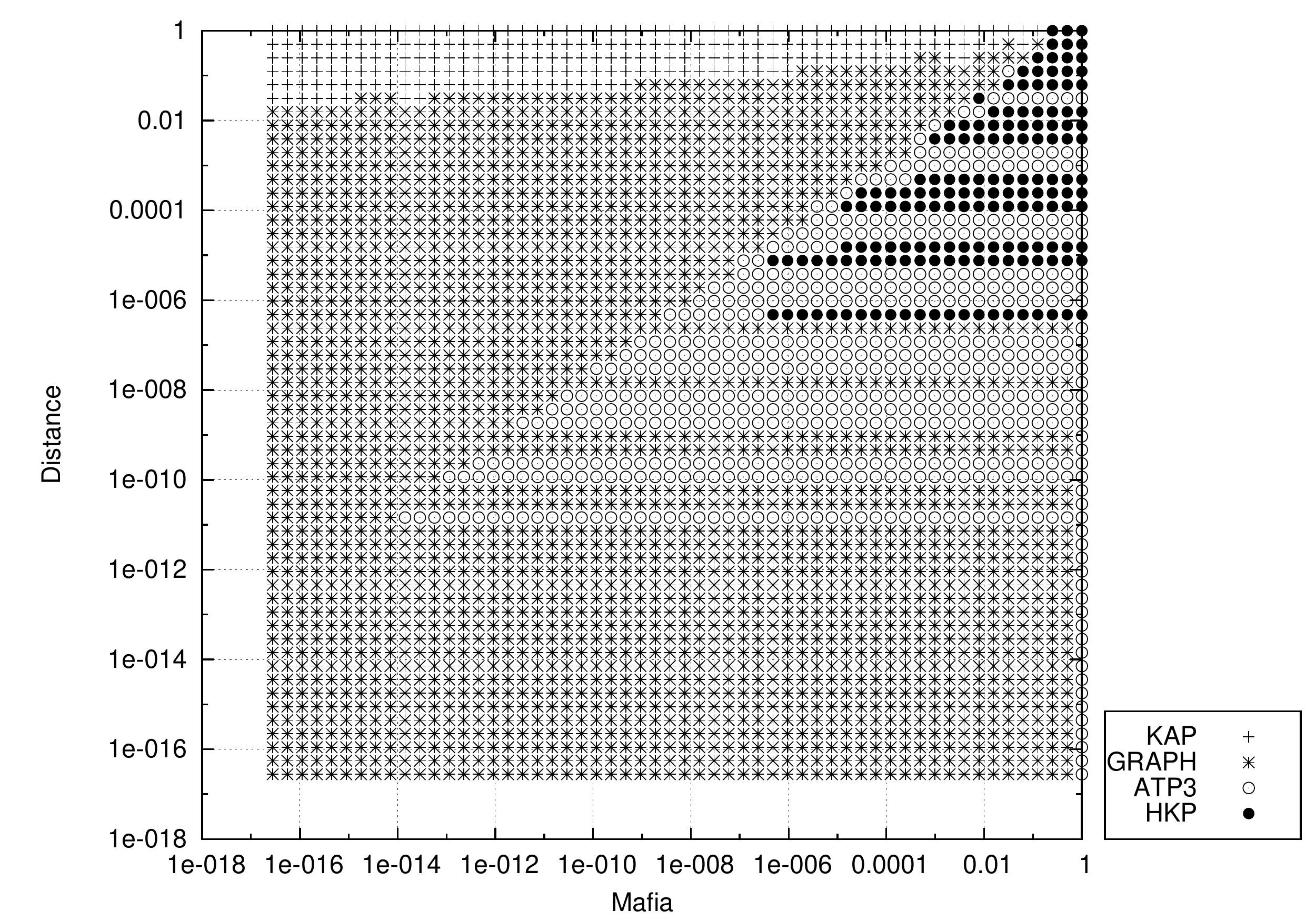}
  \end{center}
  \caption{Best protocols in terms of number of rounds given different values of mafia fraud probability and distance fraud probability. The considered protocols are: the graph-based protocol (GRAPH), the Hancke and Kuhn protocol (HKP), the Kim and Avoine protocol (KAP), and the Avoine and Tchamkerten protocol (ATP3). The ATP protocol in its standard configuration is not considered in this chart because we are comparing only protocols with linear memory consumption.}
  \label{fig:trade_off}
\end{figure}

However, our aim is to find the best protocol given a security level in terms of mafia fraud and distance fraud. To that end, Figure \ref{fig:trade_off} represents, for each pair of mafia and distance fraud success probabilities, the protocol needing a lowest number of rounds to reach these probabilities. As it can be seen in Figure \ref{fig:trade_off}, the graph-based protocol is, in general, the best option when considering memory consumption, distance, and mafia fraud at the same time. In particular, if one requires low success probabilities for both mafia and distance fraud, we stress the particularly good behaviour of the graph-based protocol. It should be remarked that in some cases more than one protocol is optimal in terms of number of rounds; in this case, the best one in terms of memory is chosen.

\section{Conclusions} \label{sec:conclusion}

In this chapter, we contribute to balancing 
mafia fraud resistance, distance fraud resistance and memory consumption
for distance-bounding protocols. In particular, we provide a way to compute an upper bound on the distance-fraud probability, which is useful for analysing previous protocols and designing future ones. In addition, we propose a new distance-bounding protocol, and we show that the achieved security level is better than all previously published distance-bounding protocols when considering 
mafia fraud, distance fraud and memory at the same time.

We do not only provide a simple, fast, and flexible protocol, but we also introduce the graph-based protocol concept and several new open questions. An interesting question is to know if there are graph-based protocols that behave still better than the one presented here. In particular, if the number of rounds is not a critical parameter, prover and verifier may be allowed to increase the number of rounds while keeping a $2n$-node graph. This means that some nodes may be used twice. In such a case, the security analysis provided in this chapter must be refined. On the other hand, although a bound on the distance fraud success probability is provided, calculating the exact probability of success is still cumbersome.

\chapter{Microaggregation- and Permutation-Based Anonymisation of Movement Data}
\label{chap:7}

\emph{This chapter describes a novel distance measure between trajectories not
necessarily defined over the same time span.
By using it, two permutation-based trajectory anonymisation algorithms are proposed. Both algorithms preserve the true original locations of trajectories and provide better utility properties than previous algorithms.}

\minitoc

Various technologies such as GPS, RFID, GSM, etc., can sense and track
the whereabouts of objects (cars, parcels, people, etc.).
In addition, the current storage capacities allow
collecting such object movement data in huge spatio-temporal
databases. Analysing this kind of databases containing
the trajectories of objects
can lead to useful and previously unknown knowledge. Therefore, it is
beneficial to share and publish such databases and let the analysts derive
useful knowledge from them ---knowledge that can be applied, for example, to
intelligent transportation, traffic monitoring, urban and road planning,
supply chain management, sightseeing improvement, etc.

However, the privacy of individuals
may be affected by the publication or the outsourcing
of databases of trajectories. Several kinds of privacy threats
exist. Simple de-identification
realised by removing identifying attributes is insufficient to protect the
privacy of individuals. The biggest threat with trajectories is the
``sensitive location disclosure''. In this scenario, knowing the times at
which an individual visited a few locations can help an adversary to
identify the individual's trajectory in the published database, and
therefore learn the individual's other locations at other times.
Privacy preservation in this context means that no sensitive location
ought to be linkable to an individual.

The risk of sensitive location disclosure is also affected by how much the
adversary knows. The adversary may have access to auxiliary
information~\cite{kaplan10}, sometimes called side knowledge,
background knowledge or external knowledge. The adversary can link such
background knowledge obtained from other sources
to information in the published database.
Estimating the amount and extent of auxiliary information available to the
adversary is a challenging task.

There are quite a few differences between spatio-temporal data and
microdata, {\em i.e.} records describing individuals in a standard
database with no movement data.
One real difference becomes apparent when considering
privacy. Unfortunately, the traditional anonymisation and sanitisation
methods for microdata~\cite{fung10} cannot be directly applied to
spatio-temporal data without considerable expense in computation time and
information loss. Hence, there is a need for specific anonymisation methods
to thwart privacy attacks and therefore reduce privacy risks associated with
publishing trajectories.

Trajectories can be modeled and represented in many ways~\cite{forlizzi00}.
Without loss of generality, we
consider a trajectory to be a timestamped path in a plane. By assuming
movements on the surface of the Earth, the altitude of each location
visited by a trajectory stays implicit; it could be explicitly restored
if the need arose. More formally,
let \emph{timestamped location} be a triple $(t,x,y)$ with $t$ being a
timestamp and
$(x,y)$ a \emph{location} in $\mathbb{R}^2$. Intuitively, the
timestamped location denotes
that at time $t$ an object is at location $(x,y)$.

\begin{definition}[Trajectory] \label{def:traj}
A \emph{trajectory} is an ordered set of timestamped locations
\begin{equation} \label{eq:traj}
T = \{ (t_1,x_1,y_1), \ldots, (t_n,x_n,y_n) \} \enspace,
\end{equation}
where $t_i < t_{i+1}$ for all $1 \leq i < n$.
\end{definition}

\begin{definition}[Sub-trajectory]\label{def:subtraj}
A trajectory $S = \{ (t'_1,x'_1,y'_1), \ldots, (t'_m,x'_m,y'_m) \}$ is a
\emph{sub-trajectory} of $T$ in Expression~\ref{eq:traj}, denoted
$S \preceq T$, if there exist
integers $1 \leq i_1 < \ldots < i_m \leq n$ such that $(t'_j,x'_j,y'_j) =
(t_{i_j},x_{i_j},y_{i_j})$ for all $1 \le j \le m$.
\end{definition}

Hereinafter, we will use {\em triple} as a synonym for
timestamped location.
When there is no risk of ambiguity,
we also say just ``location'' to denote a timestamped location.

We present two heuristic methods for preserving the
privacy of individuals when releasing
trajectories. Both of them exactly preserve original locations
in the sense that the anonymised trajectories contain no fake, perturbed
or generalised trajectories.
The first heuristic is based on microaggregation~\cite{domingo02} of
trajectories and permutation of locations.
Microaggregation has been successfully used in microdata
anonymisation to achieve $k$-anonymity~\cite{samarati98,sweeney02a,domingo05}.
We use it here for trajectory $k$-anonymity (whereby an adversary
cannot decide which of $k$ anonymised trajectories corresponds
to an original trajectory which she partly knows),
first by grouping the trajectories into clusters of size
at least $k$ based on their similarity and then transforming
via location permutation the
trajectories inside each cluster to preserve privacy.
The second heuristic aims no longer at trajectory $k$-anonymity,
but at location $k$-diversity (whereby knowing a sub-trajectory
$S$ of a certain original trajectory $T$ allows an adversary to discover
a location in $T \setminus S$ with probability no greater than $1/k$);
this second heuristic is based on location permutation
and its strong point is that
it takes reachability constraints into account:
movement between locations must follow the edges of an underlying graph
({\em e.g.}, urban pattern) so that not all locations are reachable
from any given location.
Experimental results show that
achieving trajectory $k$-anonymity with reachability constraints
may not be possible without discarding
a substantial fraction of locations, typically those which are rather
isolated. This is the motivation for our second heuristic: it still
considers reachability but it reduces the number of discarded
locations by replacing $k$-anonymity at the trajectory
level by $k$-diversity at the location level.

For clustering purposes, we propose a new distance for trajectories
which naturally considers both spatial and temporal coordinates.
Our distance is able to compare trajectories
that are not defined over the same time span, without resorting
to time generalisation.
Our distance function can compare trajectories
that are timewise overlapping only
partially or not at all. It may seem at first sight that the distance
computation is exponential in terms of all considered trajectories, but
we show that it is in fact computable in polynomial time.

We present empirical results for the two proposed heuristics
using synthetic data and also real-life data.
We theoretically and experimentally compare our first heuristic
with a recent trajectory anonymisation method
called $(k, \delta)$-anonymity~\cite{abul08} also aimed at
trajectory $k$-anonymity without reachability constraints.
Theoretical results show that the privacy
preservation of our first method is the same as that of $(k, \delta)$-anonymity
but dealing with trajectories \emph{not} having the same time span. For the second heuristic involving reachability constraints, no comparable counterparts
seem to exist in the literature.

%
%
%


\section{Trajectory similarity measures}
\label{sec:relworkdistances}

Using microaggregation
for trajectory $k$-anonymisation requires a distance function to measure
the similarity between trajectories.
Such a distance function must consider both space and time.
Although most spatial distances
can be extended into spatio-temporal distances by adding
a time co-ordinate to spatial points,
it is not obvious how to
balance the weight of spatial and temporal
dimensions.
Furthermore, not all similarity measures for trajectories are
suitable for comparing trajectories for anonymisation purposes.
The requirement for anonymisation is not just similarity regarding shape, but
also spatial and temporal closeness. Some typical distances for
trajectories
include the Euclidean distance, the Hausdorff distance~\cite{shonkwiler91},
the Fr\'echet distance~\cite{alt95}, the turning point distance~\cite{arkin91},
and distances based on time series~\cite{liao05}
---{\em e.g.}, dynamic time warping (DTW), short
time series (STS)--- and on edit distance~\cite{chen05}
---{\em e.g}, edit distance
with real penalty (ERP), longest common sub-sequence (LCSS), and the edit
distance on real sequences (EDR) discussed next.

The \emph{edit distance on real sequences} (EDR)~\cite{chen05}
is the number of insert, delete, or replace operations
that are needed to change one sequence into another.
If $P$ and $Q$ are two sequences
of $m$ and $n$ triples, respectively, where each triple $\lambda$ has three
attributes -- x-position $\lambda.x$, y-position $\lambda.y$ and time
$\lambda.t$ --
the distance $EDR(P,Q)$ is defined as
\[\begin{cases} \max\{m,n\} & \text{if } m=0 \text{ or
} n=0 \\
\min \{ match(p_1,q_1) + EDR(Rest(P),Rest(Q)), & \text{otherwise}
\\ \quad 1 + EDR(Rest(P),Q), 1 + EDR(P,Rest(Q)) \} & \\ \end{cases} \enspace\]
 where $p_1$ and $q_1$ are the first elements of a given sequence,
$Rest(\cdot)$ is a function that returns the input sequence without the
first element, and where $match(p,q) := 0$ if $p$ and $q$ are ``close'',
that is, they satisfy either
$|p.x-q.x| \le \epsilon$ and $|p.y-q.y| \le \epsilon$ for some
parameter $\epsilon$~\cite{chen05} or  $|p.x-q.x| \le \Delta.x$,
$|p.y-q.y| \le \Delta.y$, and $|p.t-q.t| \le \Delta.t$ for a triple of
parameters $\Delta$~\cite{abul10}; otherwise, $match(p,q) := 1$. This
definition of $match$ means that the cost for one insert, delete, or replace
operation in EDR is 1 if $p$ and $q$ are not ``close''.

EDR has been used for anonymisation in~\cite{abul10}. However, the
edit distance and variations thereof
are not suitable to guide clustering
for anonymisation purposes. Indeed,
Figure~\ref{fig:edrdistances} shows trajectories with
different degrees of ``closeness'' to trajectory A,
but whose EDR distance from A is the same in all cases.
When time-stamps are considered, the situation is even worse.

In Section~\ref{sec:distance}, we define a
distance measure which is better suited for anonymisation
clustering: it can compare trajectories
defined over different time spans and
even trajectories that are time-wise non-overlapping.

\begin{figure}[!ht]
\centering


\includegraphics[width=3in]{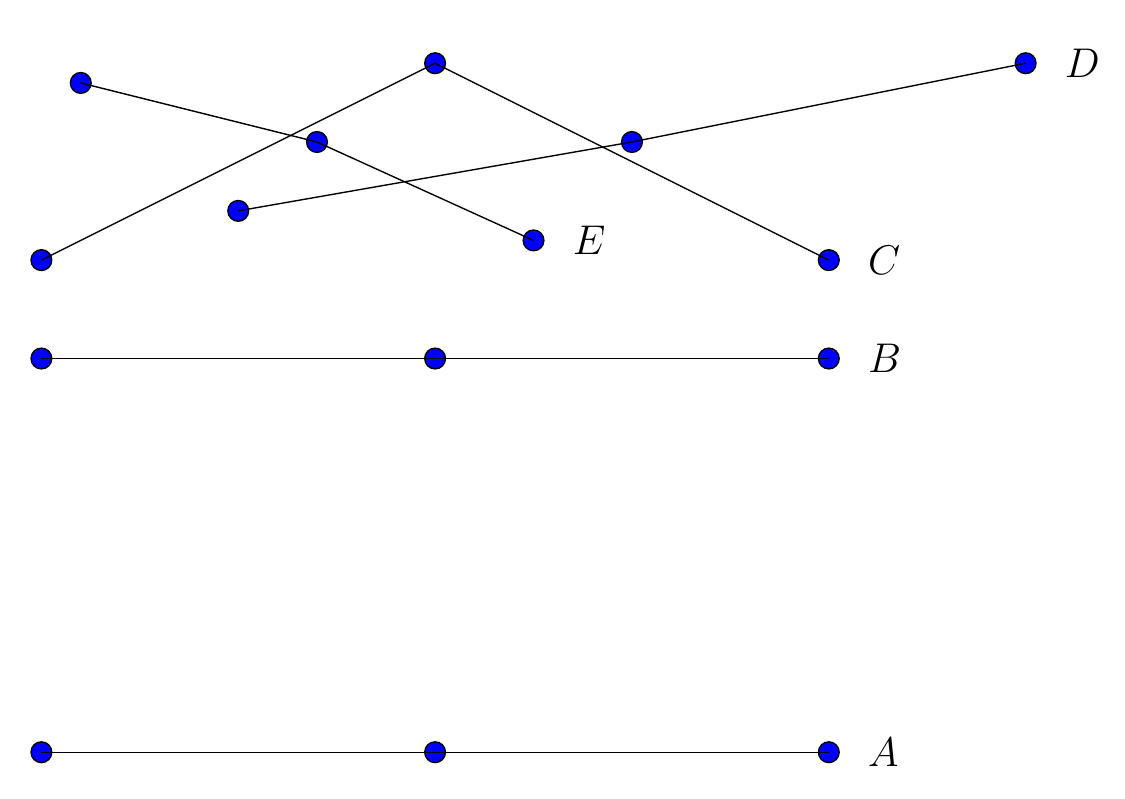}

\caption{Trajectories $B, C, D, E$ are placed at varying
``closeness'' from $A$, yet their EDR distance from $A$
is 3 in all cases. We assume that the first point of $A$
matches the first point of each of $B,C,D,E$; also,
second points are assumed to match each other, and the
same for third points.}
\label{fig:edrdistances}
\end{figure}

\section{Utility and privacy requirements} \label{sec:utilpriv}

Every trajectory anonymisation algorithm must combine utility and privacy.
However, utility and privacy are two largely antagonistic concepts. What is
useful in a set of trajectories is application-dependent, so for each
utility feature probably a different anonymisation algorithm is needed.

\subsection{Desirable utility features}

The utility features that are usually considered in
trajectory anonymisation are: (i)
trajectory length preservation,
(ii) trajectory shape preservation, (iii)
trajectory time preservation, and (iv) minimisation of
the number of discarded locations. We include two additional
utility features that are particularly meaningful in urban scenarios:
\begin{itemize}
\item {\em Location preservation}.
This essentially means that no fake or inaccurate locations
are used to replace original locations; otherwise put,
locations in the anonymised trajectories should
be locations visited by the original trajectories, without
any generalisation or accuracy loss.
Preserving original locations helps answering several queries
that may not be responded by generalisation methods~\cite{monreale10}
or some microaggregation methods~\cite{abul08,abul10}:
(i) what is the ranking of original (non-removed) locations, from most visited
to least visited?; (ii) in which original (non-removed)
locations did two or more mobile
objects meet?, etc.
On the other hand, if trajectory anonymisation rests on replacing true
locations with
fake locations, an adversary can distinguish the latter from the former
and discard fake locations.
Hence, location preservation is desirable for both utility and privacy
reasons.
\item {\em Reachability}. In the second proposed heuristic,
easy reachability between two successive locations in each anonymised
trajectory is enforced. This means that
the distance from the $i$-th location
to the $i+1$-th location on an anonymised location {\em following
the underlying network of streets and/or roads}
should be at most $R^s$, where $R^s$ is a preset
parameter. Like location preservation, this is as good for utility
as it is for privacy: if the adversary sees that reaching the $i+1$-th
location from the $i$-th one takes a long trip across streets and roads,
she will guess that the section between those two locations was not
present in any original trajectory.
\end{itemize}

\subsection{Specific utility measures} \label{subsec:utility}

Basic utility measures are
the number of removed trajectories and the number of
removed locations, whether during pre-processing, clustering or cluster
anonymisation.

The distortion of the trajectory shape is another utility measure,
which can be captured with the
space distortion metric~\cite[Sec.VI.B]{abul08}. This metric
also allows accumulating the total space distortion of all anonymised
trajectories from original ones.

\begin{definition}[Space distortion metric~\cite{abul08}]\label{def:distortion}
The space dis\-tort\-ion of an a\-no\-nym\-ised trajectory
$T^\star$ with respect to its original trajectory $T$ at
time $t$ when $T$ has triple $(t,x,y)$ and $T^\star$ has
possible triple $(t,x^\star,y^\star)$, is
$$SD_t(T,T^\star) = \begin{cases} \Delta((x,y),(x^\star,y^\star)) &
\text{if } (x^\star,y^\star) \text{ is defined at } t \\ \Omega & \text{otherwise}
\end{cases}$$
where $\Delta$ is a distance ({\em e.g.} Euclidean), and $\Omega$ a constant that
penalises for removed locations.
The space distortion of an anonymised trajectory
$T^\star$ from its original $T$ is then
$$SD(T,T^\star) = \sum_{t \in TS} SD_t(T,T^\star) \enspace ,$$
where $TS$ are all the timestamps where $T$ is defined. In particular, if
$T$ is discarded during anonymisation, $T^\star$ is empty, and so
$SD(T,T^\star) = n\Omega$, where $n = |TS|$ is the number of locations of $T$.
In this way, the space distortion of a set of trajectories $\mathcal{T}$
from its anonymised set $\mathcal{T}^\star$ is easily defined as
$$TotalSD(\mathcal{T}, \mathcal{T}^\star) = \sum_{T \in \mathcal{T}}
SD(T,T^\star) \enspace ,$$
where $T^\star \in \mathcal{T}^\star$ (which may be empty) corresponds to $T
\in \mathcal{T}$.
\end{definition}



Another way to measure utility is by comparing the results between queries
performed on both the original data set $\mathcal{T}$ and the
anonymised data set $\mathcal{T}^\star$. Intuitively, when results
on both data sets are similar for a large and diverse number of queries,
the anonymised data set can be regarded as preserving
the utility of the original data set. The challenge of this utility
measure is the selection of queries, which is usually
application-dependent or even user-dependent, {\em i.e.}
two different users are likely
to perform different queries on the same trajectory data set.

In~\cite{Trajcevski:2004:MUM:1016028.1016030} six types of spatio-temporal
range queries were introduced, aimed at evaluating the relative position
of a moving object with respect to a region $R$ in a time
interval $[t_b, t_e]$. We have used these queries
in our experimental work, even though they were designed
for use on uncertain trajectories
(see Definition~\ref{def:motion_curve})
rather than synthetic trajectories.

\begin{definition}[Uncertain trajectory] \label{def:motion_curve}
Given a trajectory $T$ and an uncertainty space threshold $\sigma$,
an \emph{uncertain trajectory} $U(T, \sigma)$ is defined as
the pair $<T, \sigma>$,
where $(t,x,y) \in U(T, \sigma)$ if and only if $\exists x', y'$ such that
$(t, x',y') \in T$ and the Euclidean distance between $(x,y)$ and $(x',y')$
is not greater than $\sigma$.
\end{definition}

\begin{definition}[Possible motion curve]
A \emph{possible motion curve} $PMC^{T}$ of an uncertain trajectory $U(T, \sigma)$ is an ordered set of timestamped locations
\begin{equation}
PMC^{T} = \{ (t_1,x_1,y_1), \ldots, (t_n,x_n,y_n) \} \enspace,
\end{equation}
such that $(t_i,x_i,y_i) \in U(T, \sigma)$ for all $1 \leq i \leq n$.
\end{definition}

In short, a possible motion curve defines one of the possible
trajectories that an object moving along
an uncertain trajectory could follow.
Unlike in~\cite{Trajcevski:2004:MUM:1016028.1016030},
our anonymised trajectories are not uncertain;
hence, we will only
use the two
spatio-temporal range queries proposed in that paper
that can be adapted to non-uncertain trajectories:

\begin{itemize}
\item \emph{Sometime\_Definitely\_Inside($T$, $R$, $t_{b}$, $t_{e}$)}
    is \emph{true} if and only if
    there exists a time $t \in [t_b, t_e]$ at which
    every possible motion curve $PMC^T$ of an uncertain trajectory
    $U(T,\sigma)$ is inside
    region $R$. For a non-uncertain $T$, the previous condition
    can be adapted as: if and only if
    there exists a time $t \in [t_b, t_e]$ at which
      $T$ is inside $R$.
\item \emph{Always\_Definitely\_Inside($T$, $R$, $t_{b}$, $t_{e}$)}
    is \emph{true} if and only if at every time $t \in [t_b, t_e]$, every
    possible motion curve $PMC^T$ of an uncertain trajectory $U(T,\sigma)$
    is inside region $R$. For a non-uncertain $T$, the previous condition
    becomes: if and only if at every time $t \in [t_b, t_e]$, trajectory $T$ is
	    inside $R$.
\end{itemize}

\subsection{Adversarial model and target privacy properties}
\label{adversarial}

In our adversarial model, the adversary has access to the
published anonymised set of trajectories
$\mathcal{T}^\star$. Furthermore, the adversary also knows that every
location $\lambda \in \mathcal{T}^{\star}$ must be in
the original set of trajectories $\mathcal{T}$.
Note that this adversary's knowledge makes an
important difference from
previous adversarial models~\cite{abul08, nergiz09, monreale10, yarovoy09},
because in our model the linkage of some location
with some user reveals the exact location of
this user rather than a generalised or perturbed location.

Further, the method used for
transforming the original set
of trajectories $\mathcal{T}$ into $\mathcal{T}^\star$ is
assumed known by the adversary. However, this does not
include the method parameters or the seeds for pseudo-random number generators, which are considered secret. Indeed, the two methods we
are proposing rely on random permutations of locations and
random selection of trajectories during the clustering process,
and such randomness is in practice implemented using pseudo-random number
generators.
If an adversary knew the seeds of the generators, she could
easily reconstruct the original trajectories from the anonymised
trajectories.

Finally, the adversary also knows a
sub-trajectory $S$ of some original target trajectory
$T \in \mathcal{T}$ ($S \preceq T$) and knows that the
anonymised version of $T$ is in $\mathcal{T}^\star$. As in previous works,
we consider that every location in $\mathcal{T}$ is sensitive, {\em i.e.}
for any location, learning that a specific user
visited it represents useful knowledge
for the adversary.

Then, we identify two attacks:
\begin{enumerate}
    \item Find a trajectory $T^{\star} \in \mathcal{T}^{\star}$
    that is the anonymised version of $T$.
    \item Given a location $\lambda \not\in S$,
    determine whether $\lambda \in T$.
\end{enumerate}

If the adversary succeeds in the first attack of
linking a trajectory $T^{\star}$ with the target $T$, the second
is not trivial, because in general the locations in
$T^{\star}$ will not be those in $T$, but it is indeed easier.
This means that both attacks
are not independent. However, the second attack
can trivially succeed even if the first attack does not:
if all anonymised trajectories cross the same location $\lambda$
and $\lambda \not\in S$, the attacker knows that $\lambda \in T$.
As we show below, both attacks are
related to the two well-known privacy notions of
$k$-anonymity~\cite{samarati98,sweeney02a} and
$\ell$-diversity~\cite{machanavajjhala06}, respectively.

\begin{definition}[Trajectory $p$-privacy] \label{def:trajectory_private}
Let $Pr_{T^{\star}}[T|S]$ denote the probability of the
adversary's correctly linking the anonymised
trajectory $T^{\star} \in \mathcal{T}^{\star}$ with $T$ given
the adversary's knowledge $S \preceq T$. Then, \emph{trajectory $p$-privacy}
is met when $Pr_{T^{\star}}[T|S] \leq p$ for every trajectory $T \in \mathcal{T}$ and every subset $S \preceq T$.
\end{definition}

\begin{definition}[Trajectory $k$-anonymity]\label{def:anonymity}
Trajectory $k$-anonymity is achieved if and
only if trajectory $\frac{1}{k}$-privacy is met.
\end{definition}

\begin{definition}[Location $p$-privacy] \label{def:location_private}
Let $Pr_\lambda[T|S]$ denote the probability of the
adversary's success in correctly determining
a location $\lambda \in T \setminus S$,
given the adversary's knowledge $S \preceq  T$.
Then, \emph{location $p$-privacy} is met
when $Pr_\lambda[T|S] \leq p$ for every triple $(T,S,\lambda)$ such
that $T \in \mathcal{T}$, $S \preceq  T$ and $\lambda \not\in S$.
\end{definition}


\begin{definition}[Location $k$-diversity]\label{def:diversity}
Location $k$-diversity is achieved if and only
if location $\frac{1}{k}$-privacy is met.
\end{definition}

\subsection{Discussion on privacy models}

Achieving straightforward
trajectory $k$-anonymity, where each anonymised trajectory would
be identical to $k-1$ other anonymised trajectories, would in general
cause a huge information loss.
This is why some other trajectory $k$-anonymity definitions
under different assumptions have been proposed.

The $(k, \delta)$-anonymity definition~\cite{abul08, abul10}
relies on the uncertainty inherent to trajectory data recorded by
technologies like GPS. However, it may be hardly applied
when accurate data sets of trajectories are needed. Furthermore,
in order to achieve $(k, \delta)$-anonymity, the $k$ identical
anonymised trajectories should be defined roughly in the
same interval of time and they must contain the same number of locations.
Such constraints are indeed hard to meet.

According to our privacy model, trajectory $k$-anonymity
is achieved when there are at least $k$ anonymised trajectories
in $\mathcal{T}^\star$ having an anonymised version of $T$
as a sub-trajectory. Although this definition ignores the
time dimension, it does not require the length of the $k$
anonymised trajectories to be equal. However, suppose
that the adversary has a trajectory $T$ consisting of only one
location, an individual's home;
whatever the anonymisation method,
the anonymised version of $T$ is likely to be very similar to $T$.
This means that there
will be $k$ anonymised trajectories
containing the single location of $T$.
However, not all of these anonymised
trajectories start at the single location of $T$. Since an individual's home
is likely to be the first location of any individual's original
trajectory, those anonymised trajectories that do not start at the single
location of $T$ (just pass through it) can be filtered out by an adversary
and only the remaining trajectories are considered.
The same filtering process can be performed if the adversary
knows locations where the individual has never been. In this way, using
side knowledge the adversary identifies less than $k$ anonymised
trajectories compatible with the original trajectory $T$.
Hence, this definition may not actually guarantee $k$-anonymity
in the sense of Definition~\ref{def:anonymity}.

In conclusion, different levels of privacy can be provided
according to different assumptions on the
original data, the anonymised data, and the adversary's capabilities.
We defined above trajectory $p$-privacy
(Definition~\ref{def:trajectory_private})
and location $p$-privacy (Definition~\ref{def:location_private})
in order to capture two different privacy notions
when the original locations are preserved.

\section{Distance between trajectories}
\label{sec:distance}

Clustering trajectories requires defining a similarity measure ---a distance
between two trajectories. Because trajectories are distributed over space
and time, a distance that considers both spatial and temporal aspects of
trajectories is needed. Many distance measures have been proposed in the
past for both trajectories of moving objects and for time series but most of them are ill-suited
to compare trajectories for anonymisation purposes.
Therefore we define a new distance which can compare trajectories that are
only partially or not at all timewise overlapping. We believe this is
necessary to cluster trajectories for anonymisation.
We need some preliminary notions.

\subsection{Contemporary and synchronised trajectories}

\begin{definition}[$p$\%-contemporary trajectories]
Two trajectories
\[T_i = \{ (t^i_1,x^i_1,y^i_1), \ldots, (t^i_n,x^i_n,y^i_n)\} \]
and
\[T_j = \{ (t^j_1,x^j_1,y^j_1), \ldots, (t^j_m,x^j_m,y^j_m)\} \]
are said to be $p$\%-contemporary if
\[ p = 100 \cdot \min( \frac{I}{t^i_n-t^i_1}, \frac{I}{t^j_m-t^j_1}) \]
with $I = \max(\min(t^i_n,t^j_m) - \max(t^i_1,t^j_1), 0)$.
\end{definition}

Intuitively, two trajectories are 100\%-contemporary if and only if they
start at the same time and end at the same time; two trajectories are
0\%-contemporary if and only if they occur during
non-overlapping time intervals. Denote the overlap time of two trajectories
$T_i$ and $T_j$ as $ot(T_i,T_j)$.

\begin{definition}[Synchronised trajectories]
Given two $p$\%-contemporary trajectories $T_i$ and $T_j$ for some $p > 0$, both
\emph{trajectories are said to be synchronised} if they have the same number
of locations time-stamped within $ot(T_i,T_j)$ and these correspond to the same
time-stamps. A \emph{set of trajectories is said to be synchronised} if all
pairs of $p$\%-contemporary trajectories in it are
synchronised, where $p>0$ may be different for each pair.
\end{definition}

If we assume that between two locations of a trajectory, the
object is moving along a straight line between the locations at a constant
speed, then interpolating new locations is
straightforward. Trajectories can be then synchronised in the sense that if
one trajectory has a location at time $t$, then other trajectories defined
at that time will also have a (possibly interpolated) location at time $t$.
This transformation
guarantees that the set of new locations interpolated in order to
synchronise trajectories is of minimum cardinality.
Algorithm~\ref{alg:sync} describes
this process.
The time complexity of this algorithm is $O(|TS|^2)$ where $|TS|$
is the number of different time-stamps in the data set.

\begin{algorithm}[!ht]
\caption{Trajectory synchronisation} \label{alg:sync}
\begin{algorithmic}[1]
\STATE \textbf{Require:} $\mathcal{T} = \{ T_1, \ldots, T_N \}$ a set of trajectories to be
synchronised, where each $T_i \in \mathcal{T}$ is of the form:
\[ T_i = \{ (t^i_1,x^i_1,y^i_1), \ldots, (t^i_{n^i},x^i_{n^i},y^i_{n^i})\}; \]
\STATE Let $TS = \{ t^i_j \;|\; (t^i_j,x^i_j,y^i_j) \in T_i \;:\; T_i \in
\mathcal{T} \}$ be all time-stamps from all locations of all trajectories;
\FORALL{$T_i \in \mathcal{T}$}
  \FORALL{$ts \in TS$ with $t^i_1 < ts < t^i_{n^i}$}
    \IF{location having time-stamp $ts$ is not in $T_i$}
      \STATE insert new location in $T_i$ having the time-stamp $ts$ and
      coordinates interpolated from the two timewise-neighboring locations;
    \ENDIF
  \ENDFOR
\ENDFOR
\end{algorithmic}
\end{algorithm}

\subsection{Definition and computation of the distance}
\label{comput}

\begin{definition}[Distance between trajectories] \label{def:distance}
Consider a set of synchronised trajectories $\mathcal{T}= \{ T_1, \ldots,
T_N\}$ where each trajectory is written as
\[T_i = \{ (t^i_1,x^i_1,y^i_1), \ldots, (t^i_{n^i},x^i_{n^i},y^i_{n^i})\}
\enspace . \]
The \emph{distance between trajectories} is defined as follows.
If $T_i, T_j \in \mathcal{T}$ are $p$\%-contemporary with $p >0$, then
\[ d(T_i, T_j) = \frac{1}{p} \sqrt{\sum_{t_\ell \in ot(T_i,T_j)}
\frac{(x^i_\ell - x^j_\ell)^2 + (y^i_\ell - y^j_\ell)^2}{|ot(T_i,T_j)|^2}} \enspace . \]
If $T_i,T_j \in \mathcal{T}$ are $0$\%-contemporary but there is
at least one subset of $\mathcal{T}$
\[ \mathcal{T}^k(ij)=\{T^{ijk}_1, T^{ijk}_2, \ldots,T^{ijk}_{n^{ijk}} \}
\subseteq \mathcal{T} \]
such that $T^{ijk}_1=T_i$, $T^{ijk}_{n^{ijk}}=T_j$
and $T^{ijk}_\ell$ and $T^{ijk}_{\ell+1}$ are $p_\ell$\%-contemporary
with $p_\ell>0$ for $\ell=1$ to $n^{ijk}-1$, then
\[ d(T_i,T_j) = \min_{\mathcal{T}^k(ij)} \left(\sum_{\ell=1}^{n^{ijk}-1}
d(T^{ijk}_\ell, T^{ijk}_{\ell +1}) \right) \]
Otherwise $d(T_i, T_j)$ is not defined.
\end{definition}

The computation of the distance between every pair of trajectories is not
exponential as it could seem from the definition. Polynomial-time
computation of a distance graph containing the distances between all pairs
of trajectories can be done as follows.

\begin{definition}[Distance graph] \label{def:distancegraph}
A \emph{distance graph} is a weighted graph where
\begin{enumerate}
\setlength{\itemsep}{-1mm}
\item[(i)] nodes represent trajectories,
\item[(ii)] two nodes $T_i$ and $T_j$ are adjacent if the corresponding
trajectories are $p$\%-contemporary for some
$p>0$, and
\item[(iii)] the weight of the edge $(T_i, T_j)$ is the distance between
the trajectories $T_i$ and $T_j$.
\end{enumerate}
\end{definition}

Now, given the distance graph for $\mathcal{T} = \{ T_1, \ldots, T_N \}$,
the distance $d(T_i, T_j)$ for two trajectories is easily computed as the
minimum cost path between the nodes $T_i$ and $T_j$, if such path exists.
The inability to compute the distance for all possible trajectories (the
last case of Definition~\ref{def:distance}) naturally splits the distance
graph into connected components. The connected component that has the majority
of the trajectories must be kept, while the remaining components represent
outlier trajectories that are discarded in order to preserve privacy. Finally,
given the connected component of the distance graph having the majority of
the trajectories of $\mathcal{T}$, the distance $d(T_i, T_j)$ for \emph{any
two} trajectories on this connected component is easily computed as the
minimum cost path between the nodes $T_i$ and $T_j$.
The minimum cost path between every pair of nodes can
be computed using the Floyd-Warshall algorithm~\cite{Floyd} with computational cost
$O(N^3)$, {\em i.e.} in polynomial time.

\subsection{Intuition and rationale of the distance}

In order to deal with the time dimension, our distance measure applies
a linear penalty of $\frac{1}{p}$ to those trajectories that
are $p\%$-contemporary. This means that, the closer in time are
two trajectories, the shorter is our distance between both.
It should be remarked that we choose a linear penalty because
the Euclidean distance is also linear in terms of the spatial coordinates
and the Euclidean distance is the spatial distance measure we
consider by default. Other distances
and other penalties might be chosen, {\em e.g.} $\frac{1}{p^2}$.

A problem appears when considering $0\%$-contemporary trajectories.
How can two non-overlapping trajectories be penalised?
A well-known strategy is to give a weight to the time dimension
and another weight to the spatial dimension. By doing so,
the time distance and the spatial distance can be computed separately,
and later be merged using their weights. However, determining
proper values for these weights is a challenging task.

Anyway, the following lemma guarantees that, whenever
we consider two trajectories at minimum distance
for clustering, they do have some overlap.

\begin{lemma}
Any two trajectories in data set $\mathcal{T}$ at minimum distance are
$p\%$-contemporary with $p>0$.
\end{lemma}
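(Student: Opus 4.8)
The plan is to argue by contradiction: suppose a pair $(T_i,T_j)$ achieves the global minimum distance $d_{\min}$ over all pairs of $\mathcal{T}$ having a defined distance, yet $T_i$ and $T_j$ are $0$\%-contemporary. I would show this leads to a contradiction, so that the minimizing pair must fall under the first (directly contemporary) case of Definition~\ref{def:distance}, i.e. be $p$\%-contemporary with $p>0$.

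First I would observe that if $T_i$ and $T_j$ are $0$\%-contemporary, then they are \emph{non-adjacent} in the distance graph of Definition~\ref{def:distancegraph}, since adjacency requires $p>0$. Hence their distance falls under the second case of Definition~\ref{def:distance}, and $d(T_i,T_j)$ is realized by a minimum-cost chain $T_i=T^{(1)},T^{(2)},\ldots,T^{(r)}=T_j$ in which every consecutive pair $T^{(\ell)},T^{(\ell+1)}$ is directly $p_\ell$\%-contemporary with $p_\ell>0$ and $d(T_i,T_j)=\sum_{\ell=1}^{r-1} d(T^{(\ell)},T^{(\ell+1)})$. Because $T_i$ and $T_j$ are not adjacent, no single-edge chain can connect them, so this realizing chain contains at least two edges, i.e. $r-1\geq 2$.

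The key step is then a lower bound on each edge. Each edge $(T^{(\ell)},T^{(\ell+1)})$ joins two directly contemporary trajectories of $\mathcal{T}$, so its weight is itself a defined pairwise distance and is therefore at least the global minimum $d_{\min}$. Summing over the $r-1\geq 2$ edges yields $d(T_i,T_j)\geq 2\,d_{\min}$. Combining this with $d(T_i,T_j)=d_{\min}$ forces $d_{\min}\leq 0$, hence $d_{\min}=0$ since the distance is non-negative; this contradicts the strict positivity of the distance between distinct trajectories. The contradiction shows the minimizing pair cannot be $0$\%-contemporary, which proves the lemma.

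The main obstacle I anticipate is the handling of positivity: the contradiction collapses precisely because each edge weight is strictly positive (equivalently $d_{\min}>0$). I would therefore make explicit that distinct trajectories in $\mathcal{T}$ lie at strictly positive distance, which holds as soon as no two of them coincide throughout their entire temporal overlap. Under this mild genericity assumption the argument is airtight, and the degenerate case of two trajectories coinciding on their overlap can be excluded or treated separately, since such a pair is irrelevant for the clustering step this lemma supports.
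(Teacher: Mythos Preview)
Your argument is correct, but it proves a somewhat weaker statement than the paper intends and uses. You interpret ``two trajectories at minimum distance'' as the pair realising the \emph{global} minimum $d_{\min}$ over $\mathcal{T}\times\mathcal{T}$. The paper's proof (and the clustering application motivating the lemma) reads it as: for \emph{each} $T_i\in\mathcal{T}$, any trajectory $T_j$ at minimum distance \emph{from $T_i$} is $p\%$-contemporary with $T_i$, $p>0$. This per-trajectory version is what an MDAV-style nearest-neighbour clustering step actually needs, and it is strictly stronger than the global-minimum version.

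Your key bound ``each edge weight is at least $d_{\min}$'' relies on $d_{\min}$ being the global minimum; for the per-$T_i$ statement only the \emph{first} edge of the chain is guaranteed to be at least $d(T_i,T_j)$. The paper's proof therefore compares only with that first edge: from $d(T_i,T_j)=\sum_{\ell} d(T^{ij}_\ell,T^{ij}_{\ell+1})$ and the fact that the chain has at least two edges, one gets $d(T_i,T_j)>d(T^{ij}_1,T^{ij}_2)=d(T_i,T^{ij}_2)$, contradicting that $T_j$ was nearest to $T_i$. Note that this strict inequality hides precisely the positivity hypothesis you carefully made explicit; the paper's justification ``strict inequality holds because all trajectories in $\mathcal{T}(ij)$ are distinct'' is that same assumption in disguise. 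So the core mechanism (chain decomposition plus positivity of edge weights between distinct trajectories) is identical; the paper's single-edge comparison buys the stronger per-trajectory conclusion, while your $2d_{\min}$ summation is tidier but only settles the globally closest pair.
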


\begin{proof} 
Consider a trajectory $T_i \in \mathcal{T}$ and another trajectory
$T_j \in \mathcal{T}$ at minimum distance from $T_i$.
Assume that $T_i$ and $T_j$ are not $p\%$-contemporary with $p>0$.
Then, since the distance
between $T_i$ and $T_j$ is defined, according to
Definition~\ref{def:distance} a subset of distinct trajectories
$\mathcal{T}(ij)=\{T^{ij}_1, T^{ij}_2, \ldots,T^{ij}_{n^{ij}} \} \subseteq \mathcal{T}$
must exist such that $T^{ij}_1=T_i$, $T^{ij}_{n^{ij}}=T_j$
and $T^{ij}_\ell$ and $T^{ij}_{\ell+1}$ are $p_\ell$\%-contemporary
with $p_\ell>0$ for $\ell=1$ to $n^{ij}-1$, and
\[ d(T_i,T_j) =\sum_{\ell=1}^{n^{ij}-1}
d(T^{ij}_\ell, T^{ij}_{\ell +1})\; .  \]
Then $d(T_i,T_j) > d(T^{ij}_\ell, T^{ij}_{\ell +1})$ for all
$\ell$ from 1 to $n^{ij}-1$ (strict inequality holds because
all trajectories in $\mathcal{T}(ij)$ are distinct). Thus,
we reach the contradiction that $d(T_i,T_j)$ is not minimum.
Hence, the lemma must hold.
\end{proof}


\section{Anonymisation methods} \label{sec:anonymisation}

We present two anonymisation methods,
called SwapLocations and ReachLocations, respectively, which yield anonymised trajectories consisting of true original locations.
The first method is
partially based on the microaggregation~\cite{domingo02}
of trajectories and partially based
on the permutation of locations.
The second method is based on the permutation of locations only.
The main difference between
the SwapTriples method~\cite{domingo10springl}
and the two new methods we propose here is that the latter
effectively guarantees trajectory $k$-anonymity (SwapLocations)
or location $k$-diversity (ReachLocations).
To that end, an original
triple is discarded if it cannot be swapped randomly with
another triple drawn from a set of $k-1$ other original triples.

Our two methods differ from each other in several aspects.
The first method assumes an unconstrained environment, while
the second one considers an environment with mobility constraints,
like an underlying street or road network.
SwapLocations effectively achieves trajectory $k$-anonymity.
ReachLocations provides higher utility by design, but regarding
privacy, it offers location $k$-diversity instead of
trajectory $k$-anonymity. A common feature of both methods
is that locations in the resulting anonymised trajectories
are true, fully accurate original locations,
{\em i.e.} no fake, generalised or perturbed locations are
given in the anonymised data set of trajectories.

\subsection{The SwapLocations method}
\label{swaploc}

Algorithm \ref{alg:clustering} describes the process
followed by the SwapLocations method in order to anonymise
a set of trajectories. First, the set of trajectories
is partitioned into several clusters.
Then, each cluster is anonymised
using the SwapLocations function in Algorithm \ref{alg:swap_locations}.
We should remark here that we only consider trajectories
for which the distance to other trajectories can be computed
using the distance in Definition~\ref{def:distance}. Otherwise said,
given the distance graph $G$ (Definition~\ref{def:distancegraph}),
our distance
measure can only be used within one of the connected components of $G$;
obviously, we take the trajectories in the largest connected component
of $G$.
It should also be remarked that Algorithm~\ref{alg:sync} is only used
to compute the distance between trajectories. Once a cluster $C$
is created, the anonymisation algorithm works over the original
triples of the trajectories in $C$, and not over the triples created
during synchronisation.

We limit ourselves to clustering algorithms which
try to minimise the sum of the intra-cluster distances or approximate
the minimum and such that the cardinality of each cluster
is $k$, with $k$ an input parameter; if the number of trajectories
is not a multiple of $k$, one or more clusters must absorb the
up to $k-1$ remaining trajectories, hence those clusters will have
cardinalities between $k+1$ and $2k-1$.
This type of clustering is precisely the one used in
microaggregation~\cite{domingo02}.
The purpose of minimising the sum of the intra-cluster distances
is to obtain clusters as homogeneous as possible, so that the subsequent
independent treatment of clusters does not cause much information loss.
The purpose of setting $k$ as the cluster size
is to fulfill trajectory $k$-anonymity, as shown in
Section~\ref{sec:guaranteesswaploc}.
We might employ any microaggregation heuristic for clustering
purposes (see details in Section~\ref{sec:complex} below).

\begin{algorithm}[!ht]
\caption{Cluster-based trajectory anonymisation($\mathcal{T},R^t, R^s,k$)} \label{alg:clustering}
\begin{algorithmic}[1]
\STATE \textbf{Require:} (i) $\mathcal{T}= \{ T_1, \ldots, T_N\}$ a set of original
trajectories such
that $d(T_i,T_j)$ is defined for all $T_i,T_j \in \mathcal{T}$, (ii) $R^t$ a
time threshold and $R^s$ a space threshold, both of them public;
\STATE Use any clustering algorithm to cluster the trajectories of
$\mathcal{T}$, while minimising the sum of intra-cluster distances measured
with the distance of Definition~\ref{def:distance} and ensuring that
minimum cluster size is $k$;
\STATE Let $C_1, C_2, \dots, C_{n_{\mathcal{T}}}$ be the resulting clusters;
\FORALL{clusters $C_i$}
\STATE $C^\star_i$ = SwapLocations($C_i,R^t,R^s$); \hfill // Algorithm~\ref{alg:swap_locations}
\ENDFOR
\STATE Let $\mathcal{T}^\star=C^\star_1 \cup \cdots \cup
C^\star_{n_{\mathcal{T}}}$
be the set of anonymised trajectories.
\end{algorithmic}
\vspace*{1mm}
\end{algorithm}

The SwapLocations function (Algorithm~\ref{alg:swap_locations})
begins with a random trajectory $T$ in $C$.
The function attempts to cluster
each unswapped triple $\lambda$ in $T$
with another $k-1$
unswapped triples belonging to different trajectories
such that: (i) the time-stamps of these triples differ by
no more than a time threshold $R^t$ from the time-stamp of $\lambda$;
(ii) the spatial coordinates differ by
no more than a space threshold $R^s$.
If no $k-1$ suitable
triples can be found that can be clustered with $\lambda$,
then $\lambda$ is removed; otherwise,
random swaps of triples
are performed within the formed cluster.
Randomly swapping this cluster of triples guarantees that any of these triples
has the same probability of remaining in its original trajectory or becoming
a new triple in any of the other $k-1$ trajectories.
Note that Algorithm~\ref{alg:swap_locations} guarantees that every triple $\lambda$ of every trajectory $T \in C$
will be swapped or removed.


\begin{algorithm}[!ht]
\caption{SwapLocations($C, R^t, R^s$)} \label{alg:swap_locations}
\begin{algorithmic}[1]
\STATE \textbf{Require:} (i) $C$ a cluster of trajectories to be transformed, (ii) $R^t$ a time threshold and $R^s$ a space threshold;
\STATE Mark all triples in trajectories in $C$ as ``unswapped'';
\STATE Let $T$ be a random trajectory in $C$; 
\label{line:fewest}
\FORALL {``unswapped'' triples $\lambda = (t_\lambda, x_\lambda, y_\lambda)$ in $T$} \label{line:for}
    \STATE Let $U = \{\lambda\}$;
    // Initialise $U$ with $\{\lambda\}$
    \FORALL {trajectories $T'$ in $C$ with $T' \neq T$}
        \STATE Look for an ``unswapped''
	triple $\lambda'= (t_{\lambda'}, x_{\lambda'}, y_{\lambda'})$
	in $T'$ minimising the intra-cluster distance in $U \cup \{\lambda'\}$ and such that:
        \[ |t_{\lambda'} - t_{\lambda}| \leq R^t \]
        \[ 0 \leq \sqrt{(x_{\lambda'}-x_{\lambda})^2+(y_{\lambda'}-y_{\lambda})^2} \leq R^s \enspace ; \]
        \IF {$\lambda'$ exists}
            \STATE $U \leftarrow U \cup \{\lambda'\}$;
        \ELSE
        	\STATE Remove $\lambda$ from $T$;
          \STATE Go to line~\ref{line:for} in order to analyse the next triple
	  $\lambda$;
        \ENDIF
    \ENDFOR
    \STATE Randomly swap all triples in $U$;
    \STATE Mark all triples in $U$ as ``swapped'';
\ENDFOR
\STATE Remove all ``unswapped'' triples in $C$;
\STATE \textbf{Return}  $C$
\end{algorithmic}
\end{algorithm}

The SwapLocations function specified by
Algorithm~\ref{alg:swap_locations} swaps entire triples, that is,
time and space coordinates.
The following example illustrates the advantages of swapping
time together with space.

\begin{example}\label{tahrir}{\em
Imagine John attended one day the political protests in Tahrir
Square, Cairo, Egypt, but he would not like his political
views to become broadly known.
Assume John's trajectory is anonymised
and published. Assume further that an adversary knows the precise time
John left his hotel in the morning, say 6:36 AM ({\em e.g.}
because the adversary has bribed the hotel concierge into recording
John's arrival and departure times). Now:
\begin{itemize}
\item If SwapLocations swapped only
spatial coordinates, the adversary could re-identify John's
trajectory as one starting with a triple (6:36 AM, $x'_h$, $y'_h$).
Furthermore, $(x'_h,y'_h)$ must be a location
within a distance $R^s$ from the hotel coordinates $(x_h,y_h)$,
although the adversary does not know the precise value of $R^s$.
The re-identified trajectory would contain all true timestamps
of John's original trajectory (because they would not have been swapped),
and spatial coordinates within distance $R^s$ of John's really visited
spatial coordinates.
Hence, it would be easy to check whether John was near Tahrir
Square during that day. Without swapping times,
privacy protection can only be obtained by taking $R^s$ large enough
so that within distance $R^s$ of the original locations visited by John there
are several semantically different spatial coordinates.
To explain
what we mean by semantic difference, assume $(x,y)$
is Tahrir Square and the trajectory
anonymiser guarantees that he has taken $R^s$ large enough
so that $(x,y)$ could be swapped
with some spatial coordinates $(x',y')$ off Tahrir Square;
even if $(x',y')$ turned out to be still within Tahrir Square, John
could claim to have been off Tahrir Square; the adversary
could not disprove such a claim, because in fact
$(x,y)$ could be at a distance $R^s$ from $(x',y')$ and hence
outside the Square.
However, a large $R^s$ means a large total space distortion.
\item If entire triples are swapped, as actually done by SwapLocations,
the adversary can indeed locate an anonymised
trajectory containing
(not necessarily starting with) triple (6:36 AM, $x_h$, $y_h$).
However, there is only a chance $1/k$ that this triple was not swapped
from another of the $k-1$ original trajectories with which John's
original trajectory was clustered.
Similarly, the other triples in the anonymised
trajectory containing (6:36 AM, $x_h$, $y_h$)
have also most likely ``landed'' in that anonymised trajectory as a result
of a swap with some location in some of the $k-1$ original trajectories
clustered with John's. Hence, John's trajectory is cloaked with $k-1$
other trajectories. We will prove in
Section~\ref{sec:guaranteesswaploc} that
this guarantees trajectory $k$-anonymity in the sense of
Definition~\ref{def:anonymity}. In particular, the triple $(t,x,y)$
corresponding to John at Tahrir Square will appear in one of the
$k$ anonymised trajectories, unless that triple has been removed
by the SwapLocations function because it was unswappable (the smaller
$R^t$ and $R^s$, the more likely it is for the triple to be removed).
\end{itemize}}
\end{example}

\subsection{The ReachLocations method}

The ReachLocations method, described in Algorithm~\ref{alg:reach},
takes reachability constraints into account: from a given
location, only those locations at a distance below a threshold
{\em following a path in an underlying graph} ({\em e.g.},
urban pattern or road network) are considered to be directly
reachable. Enforcing such reachability constraints while requiring
full trajectory $k$-anonymity would result in a lot of original locations
being discarded. To avoid this, trajectory $k$-anonymity is
changed by another useful privacy definition: location 
$k$-diversity.

Computationally, this means that trajectories are {\em not} microaggregated
into clusters of size $k$. Instead,
each location is $k$-anonymised independently using
the entire set of locations of all trajectories. To do so, a
cluster $C_\lambda$ of ``unswapped'' locations is created around a
given location $\lambda$, {\em i.e.}
$\lambda \in C_\lambda$.
The cluster $C_\lambda$ is constrained as follows:
(i) it must have the lowest intra-cluster distance among those clusters
of $k$ ``unswapped'' locations that contain the location $\lambda$;
(ii) it must have locations belonging to $k$ different trajectories; and (iii)
it must contain only locations at a path from $\lambda$
at most $R^s$ long and with time-stamps differing from
$t_\lambda$ at most $R^t$. Then, the spatial coordinates
$(x_\lambda, y_\lambda)$ are swapped with the
spatial coordinates of some random location in $C_\lambda$
and both locations are marked as ``swapped''. If no cluster $C_\lambda$
can be found, the location $\lambda$ is removed from the data set
and will not be considered anymore in the subsequent anonymisation.
This process continues until no more ``unswapped'' locations
appear in the data set.

It should be emphasised that, according to Algorithm~\ref{alg:reach},
two successive locations $\lambda^i_j$
and $\lambda^i_{j+1}$ of
an original trajectory
$T_i$ may be cloaked with respective sets of $k-1$ locations belonging
to different sets of $k-1$ original trajectories; for this reason
we cannot speak of trajectory $k$-anonymity, see the example below.

\begin{example}{\em
Consider $k-1$ trajectories
within city $A$, $k-1$ trajectories within city $B$
and one trajectory $T_{AB}$ crossing from $A$ to $B$.
When applying ReachLocations, the initial locations of
$T_{AB}$ are swapped with locations of trajectories
within $A$, whereas the final locations of $T_{AB}$
are swapped with locations of trajectories within $B$.
Imagine that an adversary knows a sub-trajectory $S$
of $T_{AB}$ containing
one location $\lambda_A$ in $A$ and one
location $\lambda_B$ in $B$.
Assume $\lambda_A$ and $\lambda_B$ are not removed by
ReachLocations anonymisation.
Now, the adversary will know that the anonymised trajectory
$T^\star_{AB}$ corresponding to $T_{AB}$ is the only anonymised
trajectory crossing from $A$ to $B$. Thus, there is no
trajectory $k$-anonymity, even if the adversary will
be unable to determine the exact locations of $T_{AB} \setminus S$,
because each of them has been swapped within a set of $k$ locations.}
\end{example}

\begin{algorithm}[p]
\caption{ReachLocations($\mathcal{T},R^t,R^s,k$)} \label{alg:reach}
\begin{algorithmic}[1]
\fontsize{11}{10}\selectfont
\STATE \textbf{Require:}
(i) $\mathcal{T} = \{ T_1, \ldots, T_N \}$ a set of original trajectories,
(ii) $G$ a graph describing the paths between locations,
(iii) $R^t$ is a time threshold and $R^s$ is a space threshold, both
of them public;
\STATE Let $TL = \{\lambda^i_j \in T_i \;:\; T_i \in
\mathcal{T} \}$ contain all locations from all trajectories, where
$\lambda^i_j=(t^i_j,x^i_j, y^i_j)$ and the spatial coordinates
$(x^i_j,y^i_j)$ are called a point;
\STATE Mark all locations in $TL$ as ``unswapped'';
\STATE Let $\mathcal{T^{\star}} = \emptyset$ be an empty set of anonymised trajectories;
\WHILE {there exist trajectories in $\mathcal{T}$}
    \STATE Let $T_i$ be a trajectory randomly chosen in $\mathcal{T}$;
\FOR {$j = 1$ to $j = |T_i|$}
      \IF {$\lambda^i_j$ is ``unswapped''}
        \STATE Let $C^i_j = \{\lambda_1, \cdots, \lambda_{k-1}\}$ be
	a cluster of locations in $TL$ such that: \label{line:bestCluster}
\begin{enumerate}
\item All locations in $C^i_j$ are ``unswapped'', with points
different from $(x^i_j,y^i_j)$ and no two equal points;
\item Points in $C^i_j$ belong to trajectories
in $\mathcal{T} \setminus \{T_i\}$
and no two points belong to the same trajectory;
\item For any $\lambda \in C^i_j$, it holds that:
\begin{enumerate}
\item $|t_\lambda - t^i_j| \leq R^t$
\item If $j > 1$ there is a path in $G$ between  $(x^i_{j-1},y^i_{j-1})$
and  $(x_\lambda, y_\lambda)$;
\item If $j < |T_i|$ there is a path in $G$
between $(x_\lambda,y_\lambda)$ and $(x^i_{j+1},y^i_{j+1})$;
\item The length of each path above is no more than $R^s$;
\end{enumerate}
\item The sum of intra-cluster distances
in $C^i_j \cup \{\lambda^i_j\}$ is minimum among clusters
of cardinality $k-1$ meeting the previous conditions;
\end{enumerate}
        \IF {such a cluster $C^i_j$ does not exist}
            \STATE Remove $\lambda^i_j$ from $T_i$;
        \ELSE
	    \STATE Mark $\lambda^i_j$ as ``swapped'';
            \STATE With probability $\frac{k-1}{k}$:
	    \begin{enumerate}
	    \item Pick a random
	  location $\lambda \in C^i_j$ and mark it as ``swapped'';
	    \item Swap the spatial
	    coordinates $(x^i_j,y^i_j)$ of $\lambda^i_j$
	    with the spatial coordinates $(x_\lambda,y_\lambda)$
	    of $\lambda$;
	    \end{enumerate}
        \ENDIF
      \ENDIF
    \ENDFOR
    \STATE $\mathcal{T^{\star}} = \mathcal{T^{\star}} \cup \{T_i\}$;
    \STATE Remove $T_i$ from $\mathcal{T}$;
\ENDWHILE
\STATE \textbf{Return} $\mathcal{T^{\star}}$.
\end{algorithmic}
\end{algorithm}

Algorithm~\ref{alg:reach} swaps
only spatial coordinates instead of full triples.
We show in the example below that this is enough
for ReachLocations to achieve location $k$-diversity
(we have shown above that it cannot achieve trajectory
$k$-anonymity anyway). If swapping time coordinates
is not beneficial in terms of privacy guarantees,
they should not be swapped, because the fact that
anonymised trajectories preserve the original sequence
of time-stamps of original trajectories increases their utility.

\begin{example}{\em
Let us resume Example~\ref{tahrir}, but now assume that
ReachLocations is used instead of SwapLocations to anonymise
trajectories. In this case, the adversary will find
an anonymised trajectory starting with (6:36 AM,$x'_h$,$y'_h$).
This anonymised trajectory will contain all true timestamps of John's
original trajectory. However, the spatial coordinates appearing
in any location of this re-identified trajectory are John's
original spatial coordinates with a probability at most $1/k$.
We will prove in Section~\ref{guaranteesreach} below that this guarantees
location $k$-diversity in the sense of
Definition~\ref{def:diversity}. If we want to prevent the adversary
from making sure that John visited Tahrir Square, we should
take $R^s$ large enough (the discussion in Example~\ref{tahrir}
about the protection afforded by a large $R^s$ when time is
not swapped is valid here).
}\end{example}

\subsection{Complexity of SwapLocations and ReachLocations}
\label{sec:complex}

We first give a complexity assessment of SwapLocations and
ReachLocations assuming that the distance graph
mentioned in Section~\ref{comput}
has been precomputed and is available.
This is reasonable, because
the distance graph needs to be computed only once, while the anonymisation
methods may need to be run several times ({\em e.g.} with different parameters).
Regarding SwapLocations, we have:
\begin{itemize}
\item Algorithm~\ref{alg:clustering} can use
any fixed-size microaggregation heuristic for clustering
({\em e.g.} MDAV in~\cite{domingo05}). Most microaggregation heuristics
have quadratic complexity, that is $O(N^2)$, where $N$ is the number
of trajectories.
\item Algorithm~\ref{alg:clustering} calls the procedure
SwapLocations once for each resulting cluster, that is,
$O(N/k)$ times.
\item In the worst case, the complexity of procedure SwapLocations
(Algorithm~\ref{alg:swap_locations}) is proportional
to the number of locations of the longest trajectory in $C$,
say $O(n_{max})$.
For each location, a search of another
location for swapping is performed among
the other $k-1$ trajectories. The number of candidates
for swapping is $O((k-1)n_{max})$. Hence, the complexity
of SwapLocations is $O((k-1)n^2_{max})$.
\item The total complexity of the method is thus
\begin{equation}
\label{compswaplocations}
O(N^2) + O(N/k)\cdot O((k-1)n^2_{max}) = O(N^2) + O(N n^2_{max})
\end{equation}
\end{itemize}

Regarding the complexity of ReachLocations, we have
\begin{itemize}
\item Algorithm~\ref{alg:reach} has an external loop
which is called $N$ times, where $N$ is the number
of trajectories in $\mathcal{T}$. For each trajectory,
a swap is attempted for each of its unswapped locations.
Hence the algorithm performs $O(N n_{max})$ swaps,
where $n_{max}$ is the number of locations in the longest
trajectory.
\item Each swap involves forming a cluster which $k-1$ locations
selected from $TL$,
which takes time
proportional to the total number of locations in $TL$,
that is, $O(N n_{max})$.
\item Hence, the total complexity of the method is $O(N^2 n^2_{max})$.
\end{itemize}

By comparing the last expression and Expression~\ref{compswaplocations},
we see that both SwapLocations and ReachLocations are quadratic in
$N$ and quadratic in $n_{max}$, but ReachLocations is slower.
Such complexity motivates the following two comments
related to scalability:
\begin{itemize}
\item If the number of trajectories $N$ in the original data set is very large,
quadratic complexity may be very time consuming. In this case, a good
strategy is to use some blocking technique to split the original data set
into several subsets of trajectories, each of which should be anonymised
separately.
\item $n_{max}$ being large may be less problematic than $N$ being large,
provided that only a small fraction of trajectories have $n_{max}$ or close
to $n_{max}$ locations. If a lot of trajectories are very long, a good
strategy would be to split each of these into two or more trajectories and anonymise them independently.
\end{itemize}

Finally, in case we add the time complexity of the computation of the
distance graph mentioned in Section~\ref{comput} (which
is $O(N^3)$ using the Floyd-Warshall algorithm), the time complexities
of both SwapLocations and ReachLocations become
$O(N^3) + O(N n^2_{max})$ and $O(N^3) + O(N^2 n^2_{max})$, respectively.

\section{Privacy guarantees}
\label{sec:guarantees}

\subsection{Privacy guarantees of SwapLocations}
\label{sec:guaranteesswaploc}

The main difference between the SwapTriples method
in~\cite{domingo10springl} and the SwapLocations method here
is that, in the latter, no original location
remains unswapped in an anonymised trajectory.

\begin{proposition} \label{prop:prob}
Let $S \preceq T_S$ be the adversary's knowledge of a target
original trajectory $T_S$ and
$\lambda_1, \lambda_2, \cdots, \lambda_{|S|}$ be all triples in $S$.
For every trajectory $T_i$, the probability
that the triple $\lambda$ in $S$ appears in the anonymised version $T_i^{\star}$ of $T_i$ produced by SwapLocations is:
$$
\Pr(\lambda \in T_i^{\star}|\lambda \in S)  = \left\{\begin{array}{l l}
\frac{1}{k} &  \quad \text{if } T_S \text{ and } T_i \mbox{ lie in the same cluster}\\
\\
0 & \quad \mbox{otherwise.}
\end{array}\right.
$$
\end{proposition}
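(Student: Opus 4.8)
The plan is to derive both branches of the formula directly from the mechanics of the SwapLocations function (Algorithm~\ref{alg:swap_locations}) together with the cluster-based driver (Algorithm~\ref{alg:clustering}), by isolating two facts: swaps never cross cluster boundaries, and inside a cluster each surviving triple is redistributed by a uniformly random permutation. Everything then reduces to symmetry of that permutation plus careful bookkeeping about when $\lambda$ is touched.

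First I would dispose of the ``otherwise'' case. Algorithm~\ref{alg:clustering} calls SwapLocations independently on each cluster $C_i$, and within Algorithm~\ref{alg:swap_locations} every swap set $U$ is assembled exclusively from triples of trajectories lying in the single cluster $C$ being processed: the seed triple comes from $T \in C$ and each partner $\lambda'$ is drawn from some $T' \in C$. Since the only operation that can move $\lambda$ out of its home trajectory is a swap inside such a $U$, a triple originating in $T_S$ can land only in an anonymised trajectory whose original lies in the same cluster as $T_S$. Hence if $T_i$ belongs to a different cluster, $\Pr(\lambda \in T_i^\star \mid \lambda \in S) = 0$.

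Next I would treat the same-cluster case. The decisive structural fact is that $\lambda$ is acted upon by \emph{at most one} swap during the whole run: the instant $\lambda$ is placed into some $U$ and the random swap is executed it is marked ``swapped'' and is never revisited, and if $\lambda$ ever fails to find a partner in every trajectory of $C$ it is removed outright and appears nowhere. Conditioning on the (data-determined) event that $\lambda$ is not removed, the set $U$ containing it draws exactly one triple from each trajectory of $C$, so $|U|$ equals the cluster size, which is $k$ for the fixed-size clusters produced by the microaggregation heuristic. The step ``Randomly swap all triples in $U$'' applies a uniformly random permutation $\sigma$ of these $k$ triples across the $k$ trajectories, so by symmetry, for the fixed triple $\lambda$ and the fixed trajectory $T_i \in C$ one has $\Pr(\sigma \text{ sends } \lambda \text{ to } T_i) = 1/k$, independently of which triple $\lambda$ happens to be. Since this lone swap fully determines whether $\lambda \in T_i^\star$, the value $1/k$ follows.

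The step I expect to be most delicate is the bookkeeping that pins down the denominator and rules out compounding, rather than any deep idea. Concretely I must verify that (i) $\lambda$ enters exactly one swap set, so distinct swaps cannot grant it several independent chances; (ii) a surviving $U$ really contains one triple from each of the $k$ trajectories of $C$ (which holds precisely because the algorithm removes $\lambda$ unless \emph{every} trajectory of $C$ supplies a partner), so that $T_i$ genuinely occupies one of the $k$ target slots and $\sigma$ acts on exactly $k$ items; and (iii) the removal branch is consistent with the stated probabilities. I would also remark that for the over-sized clusters permitted by the heuristic (cardinality between $k+1$ and $2k-1$, see Section~\ref{swaploc}) the identical argument gives $1/|C| \le 1/k$, so the proposition's value is exact for the canonical fixed-size clustering and otherwise an upper bound, which is all that the downstream trajectory $k$-anonymity guarantee requires.
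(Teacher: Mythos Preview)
Your argument is correct and follows essentially the same route as the paper's proof: both dispatch the cross-cluster case by observing that swaps never leave a cluster, and both obtain the $1/k$ value from the uniform random redistribution of a surviving $\lambda$ within its swap set $U$ of size $k$, with removed triples excluded from consideration. Your write-up is in fact more explicit than the paper's (which is a brief ``by construction'' sketch), and your closing remark on oversized clusters is a useful addition that the paper does not spell out.
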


\begin{proof} 
By construction of Algorithm~\ref{alg:swap_locations},
if $T_S$ and $T_i$ do not lie in the same cluster, there is no
possibility of swapping triples between them. Hence, in this
case, $\Pr(\lambda \in T_i^{\star}|\lambda \in S) = 0$.

Let $T_1, T_2, \cdots, T_k \in \mathcal{T}$ be $k$ trajectories that
are anonymised together in the same cluster by the SwapLocations method.
Without loss of generality, let us assume that $T_S = T_1$.
By construction of Algorithm~\ref{alg:swap_locations},
for every $1 \leq i \leq k$, $\Pr(\lambda \in T_i^{\star}| \lambda \in T_1)$
is $0$ if $\lambda$ was removed, $\frac{1}{k}$ otherwise.
Note that a swapping option is to swap a triple with itself,
that is, not to swap it.
Since it does not make sense to consider removed triples in $S$,
we conclude that $\Pr(\lambda_j \in T_i^{\star}| \lambda_j \in T_1) =
\frac{1}{k}, \;\; \forall 1\leq j \leq |S|, 1 \leq i \leq k$ and,
in consequence, $\Pr(\lambda_j \in T_i^{\star}| \lambda_j \in S)
= \frac{1}{k},\;\; \forall 1\leq j \leq |S|, 1 \leq i \leq k$. 
\end{proof}

\begin{theorem} \label{theo:anonymity}
The SwapLocations method achieves trajectory $k$-anonymity.
\end{theorem}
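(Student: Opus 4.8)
The plan is to show that the SwapLocations method satisfies trajectory $\frac{1}{k}$-privacy in the sense of Definition~\ref{def:trajectory_private}; the statement then follows immediately from Definition~\ref{def:anonymity}. First I would fix an arbitrary target trajectory $T_S \in \mathcal{T}$ and arbitrary adversary knowledge $S \preceq T_S$, and let $T_S^\star$ denote the anonymised trajectory occupying the slot of $T_S$ after anonymisation. The goal is to bound $Pr_{T^\star}[T_S|S]$, the probability that the adversary, seeing $\mathcal{T}^\star$ and knowing $S$, correctly outputs $T_S^\star$, and to show it never exceeds $1/k$.

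The second step localises the attack to a single cluster. By Proposition~\ref{prop:prob}, every non-removed triple $\lambda \in S$ appears in an anonymised trajectory $T_i^\star$ with probability $0$ whenever $T_i$ and $T_S$ were microaggregated into different clusters. Consequently $T_S^\star$, together with the only anonymised trajectories that can carry any evidence of $S$, lies inside the single cluster $\{T_1,\dots,T_k\}$ of size at least $k$ containing $T_S$ (take $T_S = T_1$ without loss of generality). Any rational linkage strategy must therefore output one of these trajectories, so it suffices to prove that none of them is more likely than $1/k$ to be $T_S^\star$.

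The core step is an indistinguishability argument based on the uniform random swaps performed by Algorithm~\ref{alg:swap_locations}. Each triple of $T_1$ belongs to exactly one swapping group $U$, distinct triples of $S$ belong to distinct groups, and the random permutation applied to each group is independent of the others. By Proposition~\ref{prop:prob}, within the cluster each triple $\lambda_j \in S$ lands in each of $T_1^\star,\dots,T_k^\star$ with the same probability $1/k$. The key observation is that the per-group uniform permutation is symmetric with respect to the identity of the originating trajectory, so the joint distribution of the observable placement of $S$'s triples among the $k$ anonymised slots is the same whichever slot actually is $T_S^\star$. Hence, conditioning on everything the adversary can observe, the posterior probability that a given slot equals $T_S^\star$ is uniform over the $k$ candidates, and the best achievable linkage probability is $1/k$. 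Since $T_S$ and $S$ were arbitrary, $Pr_{T^\star}[T|S]\le 1/k$ for every $T$ and every $S \preceq T$, which is trajectory $\frac{1}{k}$-privacy and hence, by Definition~\ref{def:anonymity}, trajectory $k$-anonymity.

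The main obstacle is making this last step fully rigorous rather than merely invoking the marginal $1/k$ of Proposition~\ref{prop:prob}: because the adversary sees the whole pattern of which triples of $S$ fall in which anonymised trajectory, I must rule out that this joint pattern leaks the identity of $T_S^\star$. The way to close the gap is to exhibit, for each relabelling of the $k$ slots, a measure-preserving bijection on the space of secret group permutations that reproduces any given observation while mapping one candidate identity to another; the independence of the group permutations and the fact that each group contributes exactly one triple to each trajectory guarantee that such a bijection exists and preserves probabilities, yielding the required uniform posterior. Removed triples need no special treatment, since an unswappable triple of $S$ is simply absent from $\mathcal{T}^\star$ and contributes no linkage information.
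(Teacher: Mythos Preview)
Your proposal is correct and follows essentially the same route as the paper: localise to the cluster via Proposition~\ref{prop:prob}, then argue that the uniform swaps make the $k$ anonymised slots indistinguishable given $S$, yielding $\frac{1}{k}$-privacy and hence trajectory $k$-anonymity by Definition~\ref{def:anonymity}. The paper's proof is in fact terser---it jumps directly from the per-triple $1/k$ of Proposition~\ref{prop:prob} to the sub-trajectory level without discussing the joint pattern---so your explicit symmetry/bijection argument for the joint distribution is a genuine tightening of what the paper sketches.
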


\begin{proof} 
By Proposition~\ref{prop:prob},
any sub-trajectory $S' \preceq S \preceq T_1$ has the same probability
of being a sub-trajectory of $T_1^{\star}$ than of being a sub-trajectory
of any of the $k-1$ trajectories $T_2^{\star}, \cdots, T_k^{\star}$.
Thus, given $S$, an adversary is not able to link $T_1$
with $T_1^{\star}$ with probability higher than $\frac{1}{k}$.
Therefore, SwapLocations satisfies $\frac{1}{k}$-privacy
according to Definition~\ref{def:trajectory_private}; according
to Definition~\ref{def:anonymity}, it also satisfies
trajectory $k$-anonymity.
\end{proof}

\subsection{Privacy guarantees of ReachLocations}
\label{guaranteesreach}

We show below that ReachLocations
provides location $k$-diversity.

\begin{proposition} \label{prop:reach}
Any triple $\lambda$ in an original
trajectory $T$
appears in the anonymised trajectory
$T^\star$ corresponding to $T$ obtained with ReachLocations
if and only if $\lambda$ was not removed and was swapped with itself,
which happens with probability at most $\frac{1}{k}$.
\end{proposition}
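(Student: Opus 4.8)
The plan is to follow the single ``slot'' of the anonymised trajectory that corresponds to the original triple $\lambda = \lambda^i_j = (t^i_j, x^i_j, y^i_j)$ of $T = T_i$ through the whole execution of Algorithm~\ref{alg:reach}. The first thing I would record is that ReachLocations never swaps time coordinates, only spatial ones; hence $\lambda$ appears in $T^\star$ if and only if the point stored at time $t^i_j$ in $T^\star$ equals exactly $(x^i_j, y^i_j)$. The entire argument then reduces to deciding when that point is left unchanged, and the stated iff and probability bound will both drop out of this analysis.

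First I would prove a structural lemma on the fate of each location. Since the WHILE loop processes every trajectory exactly once and, for each processed trajectory, the inner FOR loop visits each of its still-``unswapped'' locations, every location ends up either (a) removed, when no admissible cluster exists, (b) marked ``swapped'' as the centre of a cluster, or (c) marked ``swapped'' as the randomly picked member of another centre's cluster. Moreover, once a location is marked ``swapped'' it is never touched again, because both the guard requiring $\lambda^i_j$ to be ``unswapped'' and condition~1 of the cluster definition exclude already-swapped locations. Consequently the point held in slot $(i,j)$ can change at most once, and such a change coincides with the instant $\lambda^i_j$ is marked ``swapped''.

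The second and, I expect, decisive step is to show that \emph{any actual swap} involving $\lambda^i_j$ strictly changes the stored point. Here I would invoke the cluster constraints: a centre's point differs from every member's point, and no two members share a point. Thus, whether $\lambda^i_j$ acts as centre (exchanging with a member) or as a picked member (exchanging with a centre), the point moved into slot $(i,j)$ differs from $(x^i_j,y^i_j)$. Together with the freezing observation, this yields the equivalence: slot $(i,j)$ retains $(x^i_j,y^i_j)$ \emph{iff} $\lambda^i_j$ is processed as a centre, an admissible cluster is found (so it is not removed), and the ``no actual swap'' branch is taken --- which is exactly the meaning of ``$\lambda$ was not removed and was swapped with itself''. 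The $(\Leftarrow)$ direction is then immediate, and the $(\Rightarrow)$ direction is the contrapositive of the ``every swap changes the point'' claim, noting that removed triples cannot appear at all.

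Finally, for the probability bound I would let $E$ denote the event that $\lambda^i_j$ is reached while still ``unswapped'' and an admissible cluster is found for it. Conditioned on $E$, no actual swap is performed with probability exactly $\frac{1}{k}$ (the complement of the explicit $\frac{k-1}{k}$ branch), and this coin is independent of the history producing $E$; hence $\Pr(\lambda \text{ not removed and swapped with itself}) = \Pr(E)\cdot\frac{1}{k} \le \frac{1}{k}$. The subtle point deserving care is guarding the $(\Rightarrow)$ direction against the possibility that $(x^i_j,y^i_j)$ is reintroduced into slot $(i,j)$ by a later swap after being moved out. This is precisely where the freezing observation is essential: the very swap that vacates the slot also marks $\lambda^i_j$ as ``swapped'', so the slot is inert thereafter and can never recover its original value.
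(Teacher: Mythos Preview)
Your proof is correct and follows essentially the same line as the paper's: both track the unique slot at timestamp $t^i_j$, invoke the cluster constraint that all cluster points differ from the centre's point to show any actual swap changes the stored spatial coordinates, and bound the self-swap probability by $1/k$. The one minor difference is how you guard against later reintroduction of the original point: you argue the slot is frozen once marked ``swapped'', whereas the paper derives a contradiction from the fact that a trajectory cannot contain two distinct triples with the same timestamp---both arguments close the same gap.
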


\begin{proof}
Let us prove the necessity implication.
By construction of Algorithm~\ref{alg:reach},
any triple
$\lambda$ whose spatial coordinates (point)
cannot be swapped within a cluster
$C \cup \{\lambda\}$ containing $k$ different points
belonging to $k$ different trajectories
is removed and does not appear in the set of anonymised trajectories.
Further, the only way for a non-removed triple $\lambda \in T$ to remain
unaltered in $T^\star$ is precisely that its point is swapped with itself,
which happens with probability $\frac{1}{k}$.
Therefore, to remain unaltered in $T^\star$, a triple
in $T$ needs to avoid removal and to have its point swapped with itself,
which happens with probability at most $\frac{1}{k}$.

Now let us prove the sufficiency implication.
Assume that $\lambda=(t,x,y) \in T$ appears
in $T^\star$ without having been swapped with itself.
Then, by construction
of ReachLocations, $\lambda \in T^\star$ must have been formed as the result
of swapping a triple $(t,x',y') \in T$ with a triple $(t',x,y)$
from another original trajectory, where $(x',y') \neq (x,y)$.
Buth then $T$ would contain two triples with the same time-stamp $t$
and different spatial locations, which is a contradiction. 
\end{proof}

\begin{theorem}
The ReachLocations method achieves location $k$-diversity.
\end{theorem}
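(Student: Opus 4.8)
The plan is to unwind the two relevant definitions and reduce the statement to a single probability bound. By Definition~\ref{def:diversity} it suffices to establish location $\frac{1}{k}$-privacy in the sense of Definition~\ref{def:location_private}; so I would fix an arbitrary triple $(T,S,\lambda)$ with $T\in\mathcal{T}$, $S\preceq T$ and $\lambda=(t,x,y)\notin S$, and aim to show $Pr_\lambda[T|S]\le\frac{1}{k}$. The only substantive case is when $\lambda$ survives anonymisation: if the removal step of Algorithm~\ref{alg:reach} discarded $\lambda$, it leaves essentially no trace in $\mathcal{T}^\star$, so the adversary gains no advantage over a blind guess. Hence I would assume $\lambda$ was not removed and write $\lambda=\lambda^i_j$, the $j$-th location of $T=T_i$.

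Next I would pin down the distribution induced by the swap. Reading off Algorithm~\ref{alg:reach}, when $\lambda^i_j$ is processed it is assigned a cluster $C^i_j\cup\{\lambda^i_j\}$ of exactly $k$ pairwise-distinct points drawn from $k$ distinct trajectories (constraints (1)--(2) at line~\ref{line:bestCluster}), and only the spatial coordinates are permuted. Combining the self-swap probability $\frac{1}{k}$ from Proposition~\ref{prop:reach} with the uniform choice of partner among the $k-1$ elements of $C^i_j$ (each taken with probability $\frac{k-1}{k}\cdot\frac{1}{k-1}=\frac{1}{k}$), I would conclude that the point $(x,y)$ lands, with probability exactly $\frac{1}{k}$, in each of the $k$ anonymised trajectories whose locations were clustered together; equivalently, the point appearing at time-stamp $t$ in the anonymised trajectory $T^\star$ is uniformly distributed over these $k$ candidate points belonging to $k$ distinct original trajectories. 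This symmetry is the engine of $k$-diversity and follows from the already-proven Proposition~\ref{prop:reach} with little extra work.

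Finally I would translate this symmetry into the adversary bound. Since time-stamps are preserved, the adversary can at best localise the target's true location at time $t$ to the $k$ candidate points in $C^i_j\cup\{\lambda^i_j\}$; because the permutation is uniform and its seed secret, each of these $k$ points is an equally likely value for the target, so any rule for asserting ``the target visited $\lambda$'' is correct with probability at most $\frac{1}{k}$, yielding $Pr_\lambda[T|S]\le\frac{1}{k}$ and hence location $k$-diversity. The main obstacle, and the step I would treat most carefully, is showing that the side knowledge $S\preceq T$ cannot break this symmetry: I must argue that conditioning on $S$ (and on any trajectory-linkage the adversary may perform, which we know is feasible since ReachLocations does \emph{not} provide trajectory $k$-anonymity) still leaves the $k$ candidates for $\lambda$ equiprobable. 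The natural way to handle this is to note that each location is swapped independently, with its partner drawn from a \emph{different} trajectory at each time-stamp, so that knowing genuine locations of $T$ at the time-stamps of $S$ gives no information about which partner absorbed the point $(x,y)$ at the distinct time-stamp $t\notin S$; some care is also needed to verify that the dependencies introduced between consecutive locations through the reachability graph $G$ and the ``unswapped'' bookkeeping do not correlate the identity of the true point across the $k$ trajectories.
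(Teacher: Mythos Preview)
Your approach is essentially the paper's: both lean on Proposition~\ref{prop:reach} and the uniformity of the swap within the $k$-point cluster $C^i_j\cup\{\lambda^i_j\}$. The one structural difference is that the paper argues from the \emph{observed} side and makes an explicit case split you do not: (i) a triple $\lambda\in T^\star\setminus S$ belongs to $T\setminus S$ with probability at most $1/k$ by Proposition~\ref{prop:reach}; and (ii) for a triple $\lambda=(t,x,y)\in T^{\star\star}\setminus S$ with $T^{\star\star}\neq T^\star$, the paper argues the probability that $\lambda$ came from $T\setminus S$ is \emph{zero}, because ReachLocations preserves time-stamps while swapping only between \emph{distinct} spatial points, so such a triple cannot have arrived unchanged from $T$. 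Case~(ii) is what shuts down any advantage from inspecting anonymised trajectories other than $T^\star$, and it is the one concrete step your forward-tracking sketch leaves implicit. Conversely, the dependency issues you flag at the end---conditioning on $S$, correlations through the reachability graph $G$, and the ``unswapped'' bookkeeping---are not addressed in the paper's proof at all; the paper simply invokes Proposition~\ref{prop:reach} and the case split without discussing whether cluster formation could leak side information.
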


\begin{proof}
Assume the adversary knows a sub-trajectory
$S$ of an original trajectory $T$. The sequence of time-stamps
in $S$ allows the adversary to re-identify the anonymised trajectory
$T^\star$ corresponding to $T$ (because the time-stamp
sequence is preserved).
By Proposition~\ref{prop:reach}, any
triple $\lambda \in T^\star \setminus S$ belongs
to $T \setminus S$ with probability at most $\frac{1}{k}$.
Now, consider a triple $\lambda=(t,x,y) \in T^{\star\star} \setminus S$,
where $T^{\star\star}$ is an anonymised trajectory different
from $T^\star$.
The probability that $\lambda$ came to $T^{\star\star} \setminus S$ from
$T \setminus S$ is the probability that $\lambda$ was
swapped and swapping did not alter it. This probability is zero,
because swaps preserve time coordinates
but take place only between triples having different
space coordinates.
Hence, in terms of Definition~\ref{def:trajectory_private},
$Pr_\lambda[T|S] \leq \frac{1}{k}$ for every triple
$(T,S,\lambda)$ such that $T \in \mathcal{T}$, $S \preceq T$
and $\lambda \not\in S$. 
\end{proof}

Note that the previous proof also implies
that, even if a triple $\lambda=(t,x,y) \not\in S$ is shared
by $M>1$ anonymised
trajectories, the probability of $\lambda \in T \setminus S$
remains at most $\frac{1}{k}$.
What can be inferred by the adversary,
however, is that $M$ original trajectories
(in general not the ones corresponding to the $M$ anonymised
trajectories) visited spatial coordinates $(x,y)$
at possibly different times.
Indeed, $(t,x,y)$ can be obtained by swapping $(t',x,y)$
and $(t,x',y')$ for any $t'$ such that $|t'-t| \leq R^t$
and for any $(x',y') \neq (x,y)$ at path distance at most $R^s$.
If $M$ is the total number of anonymised trajectories, then
the adversary can be sure that original
trajectory $T$ visited spatial coordinates
$(x,y)$ at some time $t'$ such that $|t'-t| \leq R^t$.
Such inference by the adversary does
not violate location $k$-diversity: violation
would require guessing {\em both} the spatial {\em and} temporal coordinates
of a triple in $T \setminus S$.
Of course, the time threshold $R^t$ must be taken large enough so that
the time coordinate $t$ is sufficiently protected.

\section{Experimental results and evaluation}

We implemented SwapLocations and ReachLocations. SwapLocations
performs clustering of trajectories using the partitioning step
of the MDAV microaggregation heuristic~\cite{domingo05}.
We used two data sets in our experiments:

\begin{itemize}
\item {\em Synthetic data set}.
We used the Brinkhoff's generator~\cite{brinkhoff03} to generate
1,000 synthetic trajectories which altogether visit
45,505 locations in the German city of
Oldenburg.
Synthetic trajectories generated with the Brinkhoff's generator have also
been used in~\cite{abul08,nergiz08,nergiz09,yarovoy09}.
We used this data set mainly for comparing
our methods with $(k,\delta)$-anonymity~\cite{abul08}.
The number of trajectories
being moderate, we were able to run
in reasonable time the methods to be compared with a large
number of different parameter choices. Another advantage
is that the street graph of Oldenburg was available, which
is necessary to run ReachLocations. The downside of this data set having
a moderate number of trajectories is that these are rather sparse,
which causes the relative distortion in the anonymised data set to
be substantial, no matter the method used. Anyway, this is not
a serious problem to compare methods with each other.
\item {\em Real-life data set}.
We also used a real-life data set of cab mobility traces that were
collected in the city of San Francisco \cite{comsnets09piorkowski}.
This data set consists of 536 files, each of them containing
the GPS coordinates of a cab during a period of time.
After a filtering process, we obtained 4582 trajectories and 94
locations per trajectory on average. The advantage
of this data set over the synthetic one is that it
contains a larger number of trajectories and that these are real ones.
Then, we show through a real example how appropriate is our distance metric for trajectory clustering. Also, we present
utility measures on the SwapLocations method for this real-life data set
using different space thresholds. The weakness of this data set
is that it cannot be used for ReachLocations, because it does
not include the underlying street graph of San Francisco.
\end{itemize}

\subsection{Results on synthetic data}

For the sake of reproducibility, we indicate
the parameters we used in Brinkhoff's generator
to generate our Oldenburg synthetic data set: 6
moving object classes and 3 external object classes; 10 moving objects and
1 external object generated per timestamp; 100 timestamps; speed 250;
and ``probability'' 1,000. This resulted in 1,000 trajectories containing
45,405 locations. The maximum trajectory length was 100 points, the
average length was 45.4 locations, and the median length was 44 locations.

\subsubsection{Implementation details of our methods}

We have introduced a new distance measure between trajectories used by
the SwapLocations proposal during the clustering process.
As mentioned in Section~\ref{swaploc} above,
our distance function can only be used
within one of the connected components of the distance graph $G$.
During the construction of
the distance graph for the synthetic data we found $11$ connected
components, $10$ of them of size $1$. Therefore, we removed these $10$
trajectories in order to obtain a new distance graph with just one connected
component. In this way, we preserved $99\%$ percent of all
trajectories before the anonymisation process. The
removed trajectories were in fact trajectories of length one, {\em i.e.}
with just one location in each one.

The SwapLocations method has been implemented using the
following simple microaggregation method for trajectories:
first, create clusters of  size $k$ with minimum
intra-cluster distance and then disperse the up to $k-1$
unclustered trajectories to existing
clusters while minimising the intra-cluster distance. This algorithm incurs
no additional discarding of trajectories.

On the other hand, the ReachLocations method
does not remove trajectories, unlike the SwapLocations method.
It does, however, remove non-swappable locations, which
causes the removal of any trajectory consisting of
non-swappable locations only.

\subsubsection{Implementing $(k, \delta)$-anonymity for comparison with
our method}

We compared our proposals with $(k, \delta)$-anonymity~\cite{abul08}. Since
$(k, \delta)$-anonymity only works over trajectories having the same
time span, first a pre-processing step to partition the trajectories is
needed. Superimposing the begin and end times of the trajectories
through reduction of the time coordinate modulo a parameter
$\pi$ does not always yield
at least $k$ trajectories having the same time span; it may also happen that a
trajectory disappears because the new reduced end time lies before the new
reduced begin time.

We have used $\pi = 3$ which kept the maximum (and so discarded the minimum)
trajectories. From the 1,000 synthetic trajectories, 40 were discarded because
the end time was less than the begin time and 187 were discarded because
there were at most 4 trajectories having the same time span. In total, 227
(22.7\%) trajectories were discarded just in the pre-processing step. The
remaining 773 trajectories were in 32 sets having the same time span, each
set containing a minimum of 15 trajectories and 24 on average.

We performed $(k, \delta)$-anonymisation for $k = 2$, 4, 6, 8, 10, and
15 and $\delta = 0$, 1000, 2000, 3000, 4000 and 5000. Because of the
pre-processing step, using a higher $k$ was impossible without
causing a
significant number of additional trajectories to be discarded.

\subsubsection{Utility comparison}

The performance of our proposals strongly depends on the values
of the time and space threshold parameters, denoted as $R^t$
and $R^s$, respectively. In practice, these values must be chosen
to maximise utility
while affording sufficient privacy protection. Too large
thresholds reduce utility (large space distortion if $R^s$ is too
high and large time distortion is $R^t$ is too high),
but too small thresholds
reduce utility because of removal of many unswappable locations.
As a rule of thumb, as illustrated in Example~\ref{tahrir},
the space threshold $R^s$ must be sufficiently large so that
within a radius $R^s$ of any spatial location
there are sufficiently distinct locations ({\em e.g.}
if $(x,y)$ lies in Tahrir Square, Cairo, there should
be points outside the Square within a radius $R^s$ of $(x,y)$).

In order to compute the total space distortion, a value
for $\Omega$ must be chosen and this can be a challenging task.
Note that the value of $\Omega$ is application-dependent
({\em e.g.} for applications where the distortion should
measure the accuracy of trajectories
$\Omega$ should be zero so that only non-removed triples
contribute to $TotalSD$, while for
applications that should avoid removing any triple
$\Omega$ should be very high).
For this reason we propose to compare separately the following three utility
properties: (i) total space distortion; (ii) percentage
of removed trajectories; and
(iii) percentage of removed locations. To do so, we set $\Omega = 0$
when computing the total space distortion. Consequently, the
percentage of removed triples as well as the percentage of removed
trajectories are considered separately from the total space distortion.

It should be remarked that the computation of the total space distortion
of the ReachLocations method is done using the Euclidean distance
between locations rather than the distance defined by the
reachability constraints (distance on the underlying network).
Note that reachability constraints should be considered during
the anonymisation process but not necessarily when computing
the total space distortion.

For successive anonymisations aimed at comparing the
SwapLocations and ReachLocations methods with $(k,\delta)$-anonymity,
we set $R^t$ and $R^s$ in a way to obtain roughly the same
total space distortion values as in $(k, \delta)$-anonymity
(cf. Table~\ref{tab:totalsd}) with $\Omega = 0$.
The idea is that, after assuring that the three methods achieve
roughly the same total space distortion,
we will be able to focus on other utility properties
like the percentage of removed trajectories
and the percentage of removed locations. It should be
noted that our comparison is not entirely fair for any of
the three methods because all of them are aimed at achieving
different privacy notions. However, we believe that our
results are indicative of the weaknesses and the strengths of
our proposals.

\begin{table}[!ht]
\centering
\begin{tabular}{|r|r|r|r|r|r|r|}
\hline
$\delta$ $\backslash$ $k$ & 2 & 4 & 6 & 8 & 10 & 15 \\
\hline
0 & 48e6 & 93e6 & 120e6 & 143e6 & 165e6 & 199e6 \\
1,000 & 19e6 & 60e6 & 86e6 & 109e6 & 131e6 & 165e6 \\
2,000 & 4e6 & 32e6 & 56e6 & 78e6 & 99e6 & 133e6 \\
3,000 & .9e6 & 14e6 & 32e6 & 52e6 & 71e6 & 104e6 \\
4,000 & .2e6 & 5e6 & 16e6 & 32e6 & 48e6 & 79e6 \\
5,000 & .03e6 & 2e6 & 7e6 & 18e6 & 31e6 & 58e6 \\
\hline
\end{tabular}
\caption{Total space distortion (TotalSD) of $(k, \delta)$-anonymity
for several parameter values (e6 stands for $\times 10^6$)}
\label{tab:totalsd}
\end{table}

The above principle of equating the space distortions
with $(k,\delta)$-anonymity yields a value for the space threshold
$R^s$ in each of SwapLocations and ReachLocations; however,
it does not constrain the time threshold, which we set at $R^t=100$.
Regarding $R^s$, we set it to achieve the total space
distortions of $(k,\delta)$-anonymity
for cluster size $k = \{2, 4, 6, 8, 10, 15\}$ and
\[ \delta = \{0, 1000, 2000, 3000, 4000, 5000\} \]
(parameter values considered in Table~\ref{tab:totalsd}).
In order to find such space thresholds efficiently, we assume that
the total space distortions of our methods define a monotonically
increasing function of the space threshold, {\em i.e.} the higher
the space threshold, the higher the total space distortion.
Under this assumption, we perform a logarithmic search over the set
of space thresholds defined by the interval $[0, 10^6]$. The reason behind
defining the maximum value for the space threshold as $10^6$ is
that it is high enough to achieve low numbers of removed trajectories.
Indeed, as shown in Figure~\ref{fig:both_method}, for both methods there
exists a value $R^s_{cutoff} < 10^6$ such
that, for every space threshold $R^s > R^s_{cutoff}$, neither the total
space distortion nor the percentage of removed locations and
removed trajectories significantly change. Table~\ref{tab:swap_thresholds}
and Table~\ref{tab:reach_thresholds} show the values of space thresholds
used in each configuration of $(k,\delta)$-anonymity for
SwapLocations and ReachLocations, respectively.

\begin{figure}[!ht]
\centering
  	\includegraphics[width=3in, angle=270]{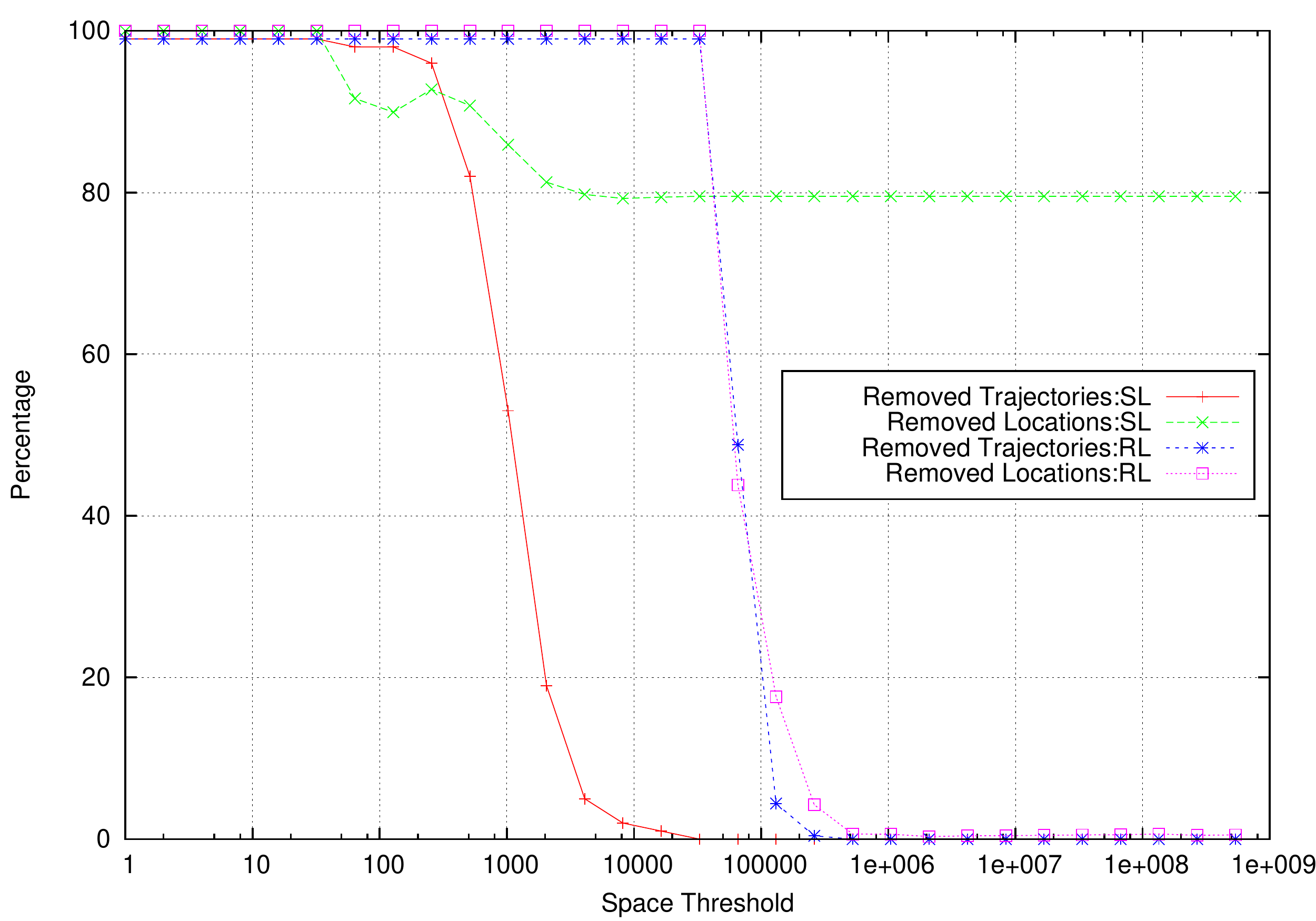}
\includegraphics[width=3in, angle=270]{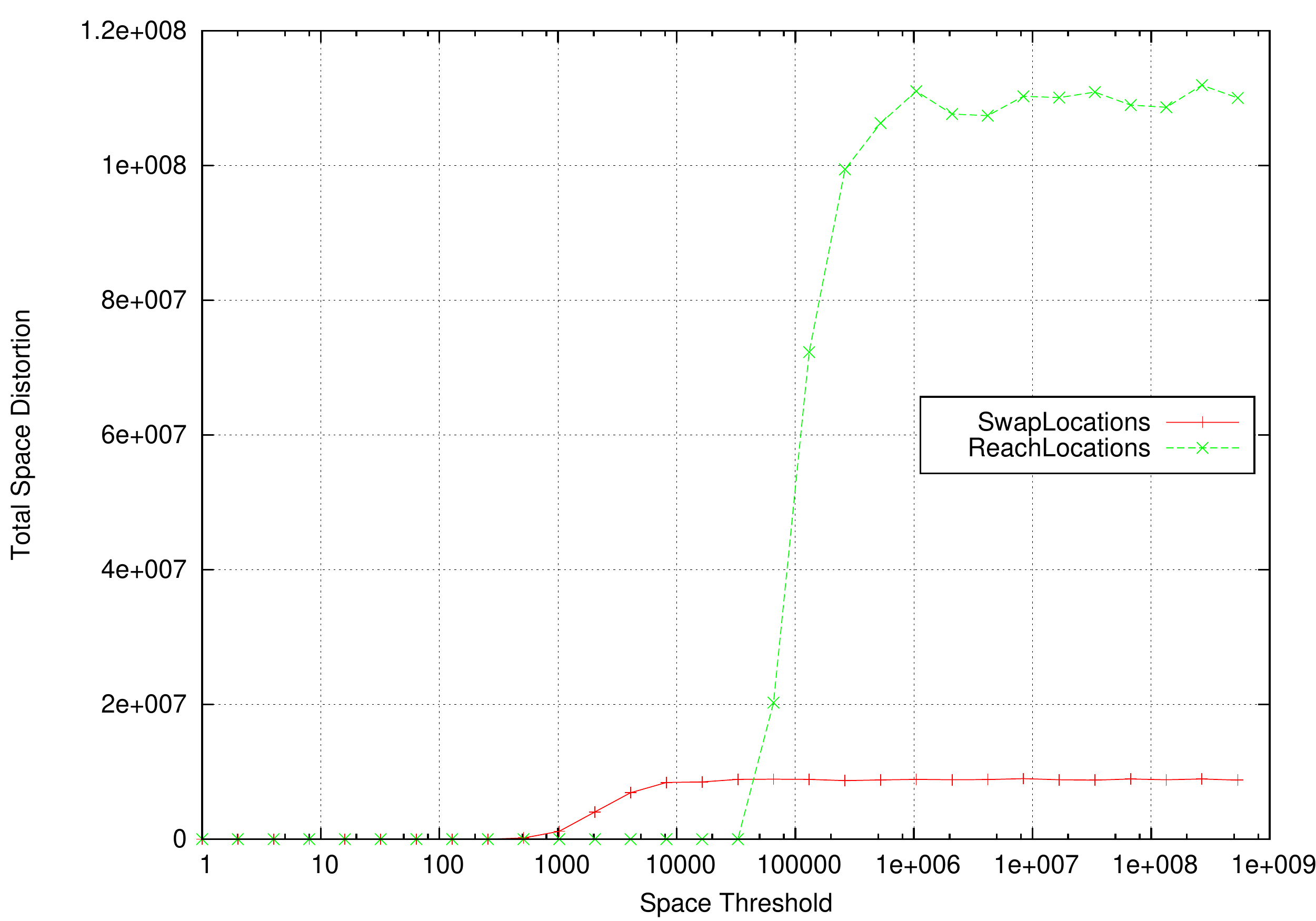}
  \caption{Top, percentage of removed trajectories and locations
with $k=10$, $R^t=100$ and several
values of $R^s$ for SwapLocations (SL) and ReachLocations (RL). Bottom, total
  space distortion with $k=10$, $R^t = 100$ and several
  values of $R^s$ for SwapLocations and ReachLocations}
  \label{fig:both_method}
\end{figure}

\begin{table}[!ht]
\centering
\begin{tabular}{|c|c|c|c|c|c|c|}
\hline
$\delta$ $\backslash$ $k$ & 2 & 4 & 6 & 8 & 10 & 15 \\
\hline
0 & $10^6$ & $10^6$ & $10^6$ & $10^6$ & $10^6$ & $10^6$ \\
1,000 & $10^6$ & $10^6$ & $10^6$ & $10^6$ & $10^6$ & $10^6$ \\
2,000 & 899 & $10^6$ & $10^6$ & $10^6$ & $10^6$ & $10^6$ \\
3,000 & 257 & $10^6$ & $10^6$ & $10^6$ & $10^6$ & $10^6$ \\
4,000 & 86 & 1390 & $10^6$ & $10^6$ & $10^6$ & $10^6$ \\
5,000 & 19 & 681 & 2507 & $10^6$ & $10^6$ & $10^6$ \\
\hline
\end{tabular}
\caption{Space thresholds used in SwapLocations to match the total
space distortion of each
configuration of $(k, \delta)$-anonymity}
\label{tab:swap_thresholds}
\end{table}

\begin{table}[!ht]
\centering
\begin{tabular}{|c|c|c|c|c|c|c|}
\hline
$\delta$ $\backslash$ $k$ & 2 & 4 & 6 & 8 & 10 & 15 \\
\hline
0 & 499875 & $10^6$ & $10^6$ & $10^6$ & $10^6$ & $10^6$ \\
1,000 & 25090 & 106126 & 270157 & $10^6$ & $10^6$ & $10^6$ \\
2,000 & 4780 & 52468 & 93717 & 151915 & 249999 & $10^6$ \\
3,000 & 749 & 37124 & 64801 & 95585 & 132857 & 238884 \\
4,000 & 136 & 25540 & 51089 & 73088 & 94465 & 152862 \\
5,000 & 57 & 18059 & 39061 & 58584 & 79101 & 113280 \\
\hline
\end{tabular}
\caption{Space thresholds used in ReachLocations to match
the total space distortion of each configuration of $(k, \delta)$-anonymity}
\label{tab:reach_thresholds}
\end{table}

As it can be seen in Tables~\ref{tab:swap_thresholds}
and~\ref{tab:reach_thresholds}, we use the maximum value ($10^6$) of
the space threshold for several configurations. This is because in those
configurations
the total space distortion caused by the $(k, \delta)$-anonymity
could not be reached by our methods no matter how much we increased
the space threshold.
Figure~\ref{fig:trashed} explains this behaviour by showing
%
the values of total space distortion SwapLocations
and ReachLocations minus the total space distortion of $(k, \delta)$-anonymity.
With almost every configuration, our methods have a total space
distortion lower than the total space distortion of
$(k, \delta)$-anonymity. In the case of SwapLocations,
the total space distortion is even much lower.

\begin{figure}[!ht]
\centering
  	\includegraphics[width=3in, angle=270]{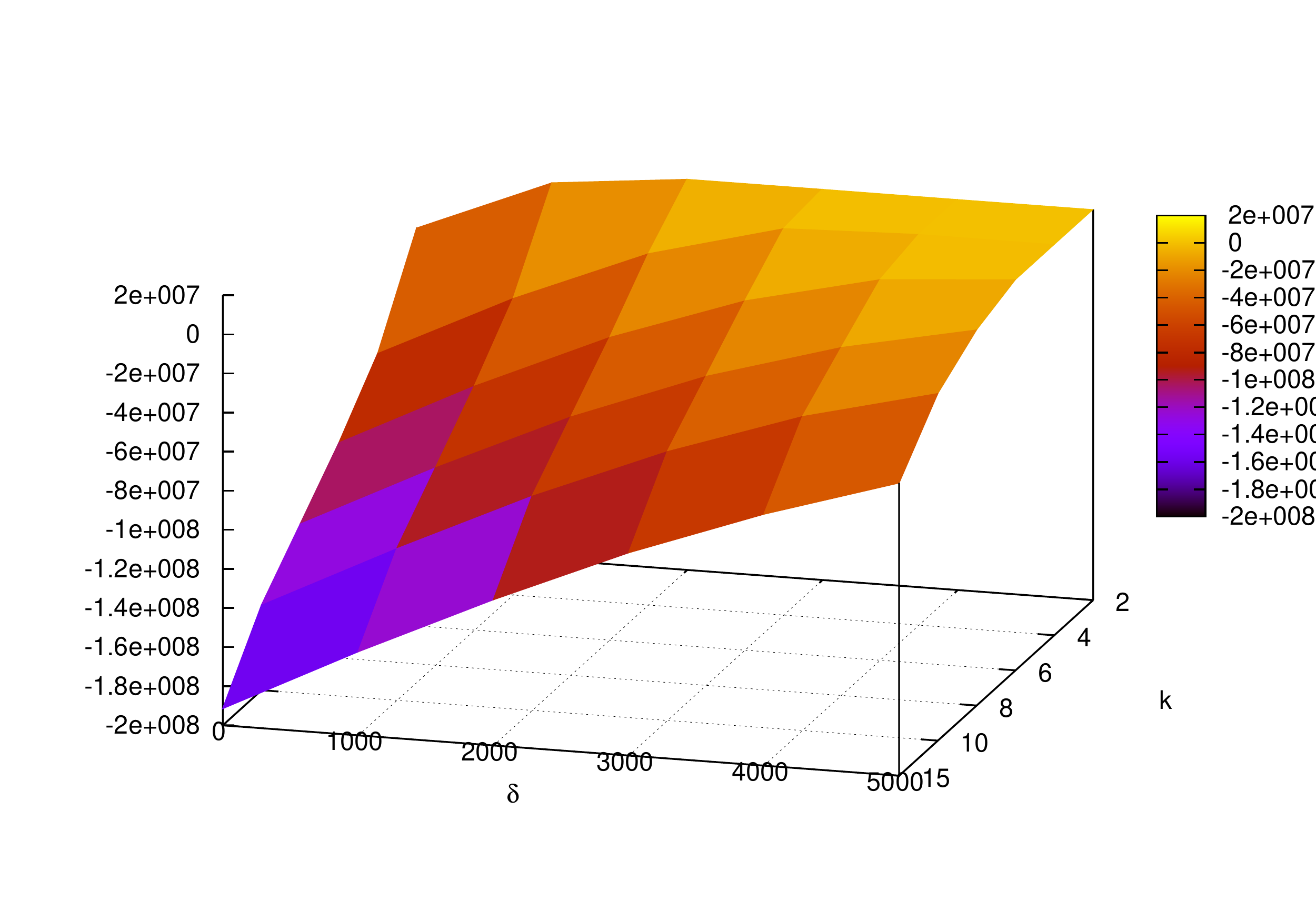}
  	\includegraphics[width=3in, angle=270]{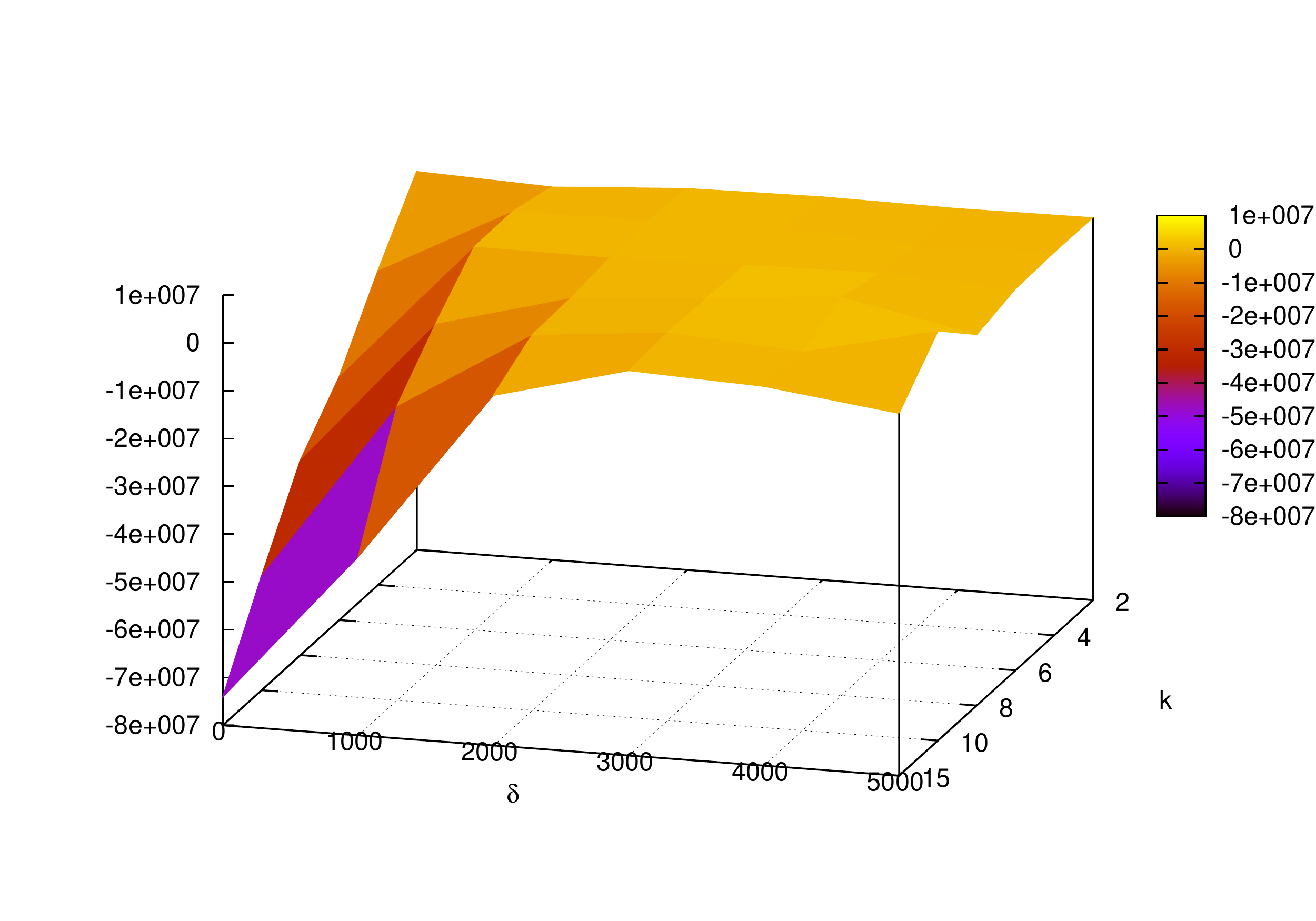}
\caption{Top: total space distortion of SwapLocations minus
total space distortion of $(k, \delta)$-anonymity
for several parameter configurations.
Bottom: total space distortion of ReachLocations minus
total space distortion of $(k, \delta)$-anonymity
for several parameter configurations.
The space thresholds defined in Tables~\ref{tab:swap_thresholds}
  and~\ref{tab:reach_thresholds} have been used, respectively.}
\label{fig:trashed}
\end{figure}

In general, SwapLocations does not reach high values of the
total space distortion because it removes more locations than
ReachLocations in order to achieve trajectory $k$-anonymity. Note that
removing locations does not increase the total space distortion
because we are considering $\Omega = 0$. Tables~\ref{tab:swapLocations}
and~\ref{tab:reachLocations} show in detail the percentage of removed
trajectories and the percentage of removed locations for different
values of $k = \{2, 4, 6, 8, 10, 15\}$ and
$\delta = \{0, 1000, 2000, 3000, 4000, 5000\}$, for SwapLocations and
ReachLocations, respectively.

As it can be seen in Table~\ref{tab:swapLocations}, in general SwapLocations
removes less trajectories than $(k, \delta)$-anonymity because
SwapLocations can cluster non-overlapping trajectories. Indeed, with
$(k, \delta)$-anonymity 227 trajectories were discarded
in the pre-processing step alone
because their time span could not match the time
span of other trajectories, and additional outlier trajectories
were discarded during clustering, up to a total $24\%$ of discarded
trajectories. However, SwapLocations removed up to $84\%$ of all
locations in the worst cases
and thus, it may not be suitable for applications where
preserving the number of locations really matters. SwapLocations
removes any location whose swapping set $U$ contains less than
$k$ locations, which is a relatively frequent event when
$k$ trajectories with
different lengths are clustered together.
As the cluster size $k$ increases, the length diversity tends
to increase and the removal percentage increases.
A simple way around the location removal problem is to
create clusters that contain trajectories with roughly the same length,
even though this may result in a higher total space distortion;
higher space distortion is a natural consequence of clustering
based on the trajectory length rather than the trajectory distance.

Table~\ref{tab:reachLocations} shows that
ReachLocations removes few trajectories when $\delta$ is small
and $k$ is large. The reason is that, for those parameterisations,
$(k,\delta)$-anonymity introduces so much total space distortion
that ReachLocations can afford taking the
maximum space threshold $R^s=10^6$ without reaching
that much distortion. Such a high space threshold allows
ReachLocations to easily swap spatial coordinates, so that
very few locations need to be removed.
Furthermore, the trajectories output by ReachLocations
are consistent with the underlying city topology.
As said above, the only drawback of this method is that in general it
does not provide trajectory $k$-anonymity; rather, it provides
location $k$-diversity.

\begin{table}[!ht]
\renewcommand{\arraystretch}{1.0}
\centering

\begin{tabular}{ @{} l c c c c c c c c c c c c @{}}
\toprule %
\multirow{2}*{ $\delta$ $\backslash$ $k$}
					&	\multicolumn{2}{c}{$2$}	&	\multicolumn{2}{c}{$4$}	&
					\multicolumn{2}{c}{$6$}	& \multicolumn{2}{c}{$8$}	& \multicolumn{2}{c}{$10$}	&
                    \multicolumn{2}{c}{$15$}	\\
					&	\textbf{T}	&	\textbf{L}	&	\textbf{T}	&	\textbf{L}	&
					\textbf{T}	&	\textbf{L}	& \textbf{T}	&	\textbf{L}	&
                    \textbf{T}	&	\textbf{L}	& \textbf{T}	&	\textbf{L}\\\cmidrule(l){2-13}

0	&	$0$	&	$34$	&	$0$	&	$58$	&	$0$	&	$69$	&	$1$	&	$75$	&	$0$	&	 $79$	&	$0$	 &	 $84$	 \\\cmidrule(l){2-13}

1000	&	$0$	&	$34$	&	$0$	&	$58$	&	$0$	&	$69$	&	$1$	&	$75$	&	$0$	&	 $79$	&	$0$	 &	 $84$	\\\cmidrule(l){2-13}

2000	&	$4$	&	$45$	&	$0$	&	$58$	&	$0$	&	$69$	&	$1$	&	$75$	&	$0$	&	 $79$	&	 $0$	 &	 $84$	\\\cmidrule(l){2-13}

3000	&	$11$	&	$62$	&	$0$	&	$58$	&	$0$	&	$69$	&	$1$	&	$75$	&	$0$	&	 $79$	&	 $0$	 &	 $84$	\\\cmidrule(l){2-13}

4000	&	$19$	&	$68$	&	$5$	&	$66$	&	$0$	&	$69$	&	$1$	&	$75$	&	$0$	&	 $79$	&	 $0$	 &	 $84$	\\\cmidrule(l){2-13}
				
5000 &	$32$	&	$78$	&	$20$	&	$73$	&	$4$	&	$72$	&	$1$	&	$75$	&	$0$	&	 $79$	 &	 $0$	 &	$84$	\\\bottomrule
				
\end{tabular}
\caption{Percentage of trajectories (columns labeled with \textbf{T})
and locations
(columns labeled \textbf{L}) removed by SwapLocations
when using time threshold 100, $k = \{2, 4, 6, 8, 10, 15\}$
and space thresholds that match the space distortion
caused by $(k,\delta)$-anonymity with the previous $k$'s and
$\delta = \{0, 1000, 2000, 3000, 4000, 5000\}$. Percentages have
been rounded to integers for compactness.
\label{tab:swapLocations}}
\end{table}

\begin{table}[!ht]
\renewcommand{\arraystretch}{1.0}

\centering

\begin{tabular}{ @{} l c c c c c c c c c c c c @{}}
\toprule %
\multirow{2}*{ $\delta$ $\backslash$ $k$}
					&	\multicolumn{2}{c}{$2$}	&	\multicolumn{2}{c}{$4$}	&
					\multicolumn{2}{c}{$6$}	& \multicolumn{2}{c}{$8$}	& \multicolumn{2}{c}{$10$}	&
                    \multicolumn{2}{c}{$15$}	\\
					&	\textbf{T}	&	\textbf{L}	&	\textbf{T}	&	\textbf{L}	&
					\textbf{T}	&	\textbf{L}	& \textbf{T}	&	\textbf{L}	&
                    \textbf{T}	&	\textbf{L}	& \textbf{T}	&	\textbf{L}\\\cmidrule(l){2-13}

0	&	$0$	&	$1$	&	$0$	&	$3$	&	$0$	&	$3$	&	$0$	&	$4$	&	$0$	&	$4$	&	$0$	&	 $3$	 \\\cmidrule(l){2-13}

1000	&	$0$	&	$2$	&	$0$	&	$3$	&	$0$	&	$3$	&	$0$	&	$4$	&	$0$	&	$5$	&	$0$	&	 $3$	 \\\cmidrule(l){2-13}

2000	&	$36$	&	$27$	&	$9$	&	$18$	&	$3$	&	$11$	&	$0$	&	$5$	&	$0$	&	$6$	 &	$0$	&	 $4$	 \\\cmidrule(l){2-13}

3000	&	$74$	&	$38$	&	$33$	&	$39$	&	$18$	&	$28$	&	$6$	&	$21$	&	$2$	&	 $13$	 &	 $0$	&	 $7$	\\\cmidrule(l){2-13}

4000	&	$82$	&	$43$	&	$65$	&	$49$	&	$41$	&	$40$	&	$20$	&	$34$	&	$10$	&	 $27$	 &	 $2$	&	 $16$	\\\cmidrule(l){2-13}

5000 &	$84$	&	$60$	&	$84$	&	$53$	&	$60$	&	$52$	&	$40$	&	$44$	&	$27$	&	 $35$	 &	 $10$	&	$27$	\\\bottomrule
				
\end{tabular}
\caption{Percentage of trajectories (columns labeled with \textbf{T})
and locations
(columns labeled \textbf{L}) removed by ReachLocations
when using time threshold 100, $k = \{2, 4, 6, 8, 10, 15\}$
and space thresholds that match the space distortion
caused by $(k,\delta)$-anonymity with the previous $k$'s and
$\delta = \{0, 1000, 2000, 3000, 4000, 5000\}$. Percentages have
been rounded to integers for compactness.
\label{tab:reachLocations}}
\end{table}

\subsubsection{Spatio-temporal range queries}

As stated in Section~\ref{subsec:utility}, a typical use
of trajectory data is to perform spatio-temporal range
queries on them. That is why we report
empirical results when performing the two query types
described and motivated in Section~\ref{subsec:utility}:
\emph{Sometime\_Definitely\_Inside} (SI)
and \emph{Always\_Definitely\_Inside} (AI).
We accumulate the number of trajectories in a set of
trajectories $\mathcal{T}$
that satisfy the SI or AI range queries using
the  SQL style code below.

\begin{itemize}
    \item Query $\mathcal{Q}_1(\mathcal{T}, R, t_b, t_e)$:
    \begin{itemize}
        \item[] \texttt{SELECT COUNT (*) FROM }$\mathcal{T}$ \texttt{WHERE SI(}$\mathcal{T}$\texttt{.traj, R, }$\texttt{t}_{\texttt{b}}, \texttt{t}_{\texttt{e}})$
    \end{itemize}
    \item Query $\mathcal{Q}_2(\mathcal{T}, R, t_b, t_e)$:
    \begin{itemize}
        \item[] \texttt{SELECT COUNT (*) FROM }$\mathcal{T}$ \texttt{WHERE AI(}$\mathcal{T}$\texttt{.traj, R, }$\texttt{t}_{\texttt{b}}, \texttt{t}_{\texttt{e}})$
    \end{itemize}
\end{itemize}

Then, we define two different \emph{range query distortions}:

\begin{itemize}
    \item SID$(\mathcal{T}, \mathcal{T}^{\star}) =
    \frac{1}{|\xi|}\sum_{\forall <R, t_b, t_e> \in \xi}\frac{|\mathcal{Q}_1(\mathcal{T}, R, t_b, t_e)-\mathcal{Q}_1(\mathcal{T}^{\star}, R, t_b, t_e)|}{\max{(\mathcal{Q}_1(\mathcal{T}, R, t_b, t_e), \mathcal{Q}_1(\mathcal{T}^{\star}, R, t_b, t_e))}}$
where $\xi$ is a set of SI queries as defined in Section~\ref{subsec:utility}
(definition of SI adapted to non-uncertain trajectories).
    \item AID$(\mathcal{T}, \mathcal{T}^{\star}) =
    \frac{1}{|\xi|}\sum_{\forall <R, t_b, t_e> \in \xi}\frac{|\mathcal{Q}_2(\mathcal{T}, R, t_b, t_e)-\mathcal{Q}_2(\mathcal{T}^{\star}, R, t_b, t_e)|}{\max{(\mathcal{Q}_2(\mathcal{T}, R, t_b, t_e), \mathcal{Q}_2(\mathcal{T}^{\star}, R, t_b, t_e))}}$
where $\xi$ is a set of AI queries as defined in Section~\ref{subsec:utility}
(definition of AI adapted to non-uncertain trajectories).
\end{itemize}

For our experiments with the synthetic data set, we chose random
time intervals $[t_b, t_e]$ such that $0 \leq t_e - t_b \leq 10$.
Also, we chose random uncertain trajectories with a randomly
chosen radius $0 \leq \sigma \leq 750$ as regions $R$.
Actually, $10$ and $750$ are, respectively, roughly a quarter of
the average duration and distance of all trajectories. Note that
we used uncertain trajectories {\em only} as regions $R$; however,
the methods we are considering in this chapter all
release non-uncertain trajectories.

Armed with these settings, we ran $100,000$ times
both queries $\mathcal{Q}_1$ and $\mathcal{Q}_2$ on the original data set
and the anonymised data sets provided by SwapLocations, ReachLocations,
and $(k, \delta)$-anonymity; that is, we took a set $\xi$ with
$|\xi|=100,000$.
The ideal range query distortion would be zero,
which means that query $\mathcal{Q}_{i}$ for $i \in {1,2}$ yields the same
result for both the original and the anonymised data sets; in practice,
zero distortion is hard to obtain. Therefore, in order to compare
our methods against $(k, \delta)$-anonymity, we use the same parameters
of the previous experiments (Tables~\ref{tab:totalsd},~\ref{tab:swap_thresholds}, and~\ref{tab:reach_thresholds}).
We show in Tables~\ref{tab:range_swapLocations}
and~\ref{tab:range_reachLocations} a comparison of SwapLocations,
respectively ReachLocations, against $(k,\delta)$-anonymity in terms
of SID and AID.

\begin{table}[p]
\renewcommand{\arraystretch}{1.0}

\centering
\begin{tabular}{ @{} l c c c c c c c c c c c c @{}}
\toprule %
\multirow{2}*{ $\delta$ $\backslash$ $k$}
					&	\multicolumn{2}{c}{$2$}	&	\multicolumn{2}{c}{$4$}	&
					\multicolumn{2}{c}{$6$}	& \multicolumn{2}{c}{$8$}	& \multicolumn{2}{c}{$10$}	&
                    \multicolumn{2}{c}{$15$}	\\
					&	\textbf{S}	&	\textbf{A}	&	\textbf{S}	&	\textbf{A}	&
					\textbf{S}	&	\textbf{A}	& \textbf{S}	&	\textbf{A}	&
                    \textbf{S}	&	\textbf{A}	& \textbf{S}	&	\textbf{A}\\\cmidrule(l){2-13}

0	&	$34$	&	$29$	&	$31$	&	$14$	&	$36$	&	$16$	&	$36$	&	$13$	&	$37$	&	 $13$	&	$43$	 &	 $14$	 \\\cmidrule(l){2-13}

1000	&	$24$	&	$20$	&	$24$	&	$8$	&	$28$	&	$10$	&	$27$	&	$8$	&	$28$	&	 $9$	 &	 $41$	 &	 $14$	\\\cmidrule(l){2-13}

2000	&	$18$	&	$14$	&	$18$	&	$4$	&	$20$	&	$3$	&	$20$	&	$2$	&	$27$	&	 $6$	&	 $39$	 &	 $10$	\\\cmidrule(l){2-13}

3000	&	$8$	&	$3$	&	$11$	&	$-2$	&	$13$	&	$0$	&	$16$	&	$-1$	&	$21$	&	 $4$	&	 $36$	 &	 $10$	\\\cmidrule(l){2-13}

4000	&	$-6$	&	$-7$	&	$6$	&	$-6$	&	$9$	&	$-5$	&	$11$	&	$-4$	&	$17$	&	 $2$	 &	 $30$	 &	 $5$	\\\cmidrule(l){2-13}
				
5000 &	$-22$	&	$-19$	&	$1$	&	$-9$	&	$3$	&	$-9$	&	$7$	&	$-7$	&	 $14$	 &	 $-2$	 &	$27$	 &	$2$	\\\bottomrule
				
\end{tabular}
\caption{Range query distortion of SwapLocations compared to
$(k,\delta)$-anonymity for SID
(columns labeled with \textbf{S}) and AID (columns labeled with \textbf{A})
when using $k = \{2, 4, 6, 8, 10, 15\}$
and space thresholds that match the space distortion
caused by $(k,\delta)$-anonymity with the previous $k$'s and
$\delta = \{0, 1000, 2000, 3000, 4000, 5000\}$.
In this table,
a range query distortion $x$ obtained with SwapLocations and a
range query distortion $y$ obtained with $(k,\delta)$-anonymity
are represented as the integer rounding of $(y-x)*100$. Hence,
values in the table are positive if and only if SwapLocations outperforms
$(k,\delta)$-anonymity.}
\label{tab:range_swapLocations}
\end{table}

\begin{table}[p]
\renewcommand{\arraystretch}{1.0}
\centering

\begin{tabular}{ @{} l c c c c c c c c c c c c @{}}
\toprule %
\multirow{2}*{ $\delta$ $\backslash$ $k$}
					&	\multicolumn{2}{c}{$2$}	&	\multicolumn{2}{c}{$4$}	&
					\multicolumn{2}{c}{$6$}	& \multicolumn{2}{c}{$8$}	& \multicolumn{2}{c}{$10$}	&
                    \multicolumn{2}{c}{$15$}	\\
					&	\textbf{S}	&	\textbf{A}	&	\textbf{S}	&	\textbf{A}	&
					\textbf{S}	&	\textbf{A}	& \textbf{S}	&	\textbf{A}	&
                    \textbf{S}	&	\textbf{A}	& \textbf{S}	&	\textbf{A}\\\cmidrule(l){2-13}

0	&	$34$	&	$25$	&	$28$	&	$12$	&	$33$	&	$10$	&	$32$	&	$5$	&	$31$	&	$5$	&	 $37$	&	 $6$	 \\\cmidrule(l){2-13}

1000	&	$25$	&	$19$	&	$21$	&	$6$	&	$24$	&	$4$	&	$23$	&	$1$	&	$25$	&	$2$	&	$35$	 &	 $5$	 \\\cmidrule(l){2-13}

2000	&	$10$	&	$10$	&	$8$	&	$-7$	&	$17$	&	$-3$	&	$19$	&	$-3$	&	$23$	&	$-3$	 &	$33$	 &	 $4$	 \\\cmidrule(l){2-13}

3000	&	$-4$	&	$2$	&	$0$	&	$-12$	&	$9$	&	$-12$	&	$13$	&	$-5$	&	$19$	&	 $-4$	 &	 $29$	&	 $1$	\\\cmidrule(l){2-13}

4000	&	$-11$	&	$-6$	&	$-6$	&	$-18$	&	$-2$	&	$-17$	&	$3$	&	$-16$	&	$13$	&	 $-6$	 &	 $26$	&	 $-3$	\\\cmidrule(l){2-13}

5000 &	$-14$	&	$-5$	&	$-10$	&	$-22$	&	$-8$	&	$-25$	&	$-4$	&	$-21$	&	$8$	&	 $-14$	 &	 $20$	&	$-5$	\\\bottomrule
				
\end{tabular}
\caption{Range query distortion of ReachLocations compared to
$(k,\delta)$-anonymity for SID
(columns labeled with \textbf{S}) and AID (columns labeled with \textbf{A})
when using $k = \{2, 4, 6, 8, 10, 15\}$
and space thresholds that match the space distortion
caused by $(k,\delta)$-anonymity with the previous $k$'s and
$\delta = \{0, 1000, 2000, 3000, 4000, 5000\}$.
In this table,
a range query distortion $x$ obtained with ReachLocations and a
range query distortion $y$ obtained with $(k,\delta)$-anonymity
are represented as the integer rounding of $(y-x)*100$. Hence,
values in the table are positive if and only if ReachLocations outperforms
$(k,\delta)$-anonymity.}
\label{tab:range_reachLocations}
\end{table}

It can be seen from Table~\ref{tab:range_swapLocations}
that SwapLocations performs significantly better than $(k, \delta)$-anonymity
for every cluster size and $\delta \leq 3000$. On
the other hand, Table~\ref{tab:range_reachLocations}
shows that ReachLocations outperforms $(k, \delta)$-anonymity only
for $\delta$ up to roughly 2000. Not surprisingly,
SwapLocations offers better performance than ReachLocations,
because the latter must deal with reachability constraints.
It is also remarkable that ReachLocations performs much better in
terms of SID than in terms of AID. The explanation is that,
while $(k, \delta)$-anonymity and SwapLocations operate at the trajectory
level, ReachLocations works at the location level.

We conclude that, according to these experiments,
our methods perform better than $(k, \delta)$-anonymity
regarding range query distortion for values of $\delta$ up to $2000$.
The performance for larger values of $\delta$ is less
and less relevant: indeed,
when $\delta \rightarrow \infty$, $(k, \delta)$-anonymity means
that no trajectory needs to be anonymised and hence the anonymised
trajectories are the same as the original ones.

\subsection{Results on real-life data}


The San Francisco cab data set \cite{comsnets09piorkowski}
we used consists of several files each of them
containing the GPS information of a specific cab during May 2008.
Each line within a file contains the space coordinates (latitude and longitude)
of the cab at a given time. However,
the mobility trace of a cab during an entire month can hardly be
considered a single trajectory.
We used big time gaps between two consecutive locations
in a cab mobility trace to split that trace into several
trajectories. All trajectory visualisations shown in this Section
were obtained using Google Earth.

For our experiments we considered just one
day of the entire month given in the real-life data set, but
the empirical methodology described below could be extended to several days.
In particular, we chose the day between May 25 at 12:04 hours
and May 26 at 12:04 hours because during this 24-hour period
there was the highest
concentration of locations in the data set. We also defined
the maximum time gap in a trajectory as 3 minutes; above 3 minutes,
we assumed that the current trajectory ended and that the next location
belonged to a different trajectory. This choice was based on
the average time gap between consecutive locations in the data set,
which was 88 seconds; hence, 3 minutes was roughly twice the average.
In this way, we obtained 4582 trajectories and 94 locations
per trajectory on average.

The next step was to filter out trajectories with strange features
(outliers). These outliers could be detected
based on several aspects like velocity, city topology, etc.
We focused on velocity and defined 240 km/h as the maximum speed that could
be reached by a cab. Consequently, the distance between
two consecutive locations could not be greater than 12 km because the maximum
within-trajectory time gap was 3 minutes. This allowed us to
detect and remove trajectories containing obviously erroneous locations;
Figure~\ref{fig:outlier} shows one of
these removed outliers where a cab appeared to have
jumped far into the sea probably due
to some error in recording its GPS coordinates.
Altogether, we removed 45 outlier trajectories and we were left
with a data set of 4547 trajectories with an average of
93 locations per trajectory.
Figure~\ref{fig:ten_trajectories} shows the ten longest
trajectories (in number of locations) in the final data set that we used.

\begin{figure}[!ht]
\begin{minipage}[b]{0.5\linewidth}
\centering
\includegraphics[width=0.80\textwidth, bb = 0 0 1020 922]{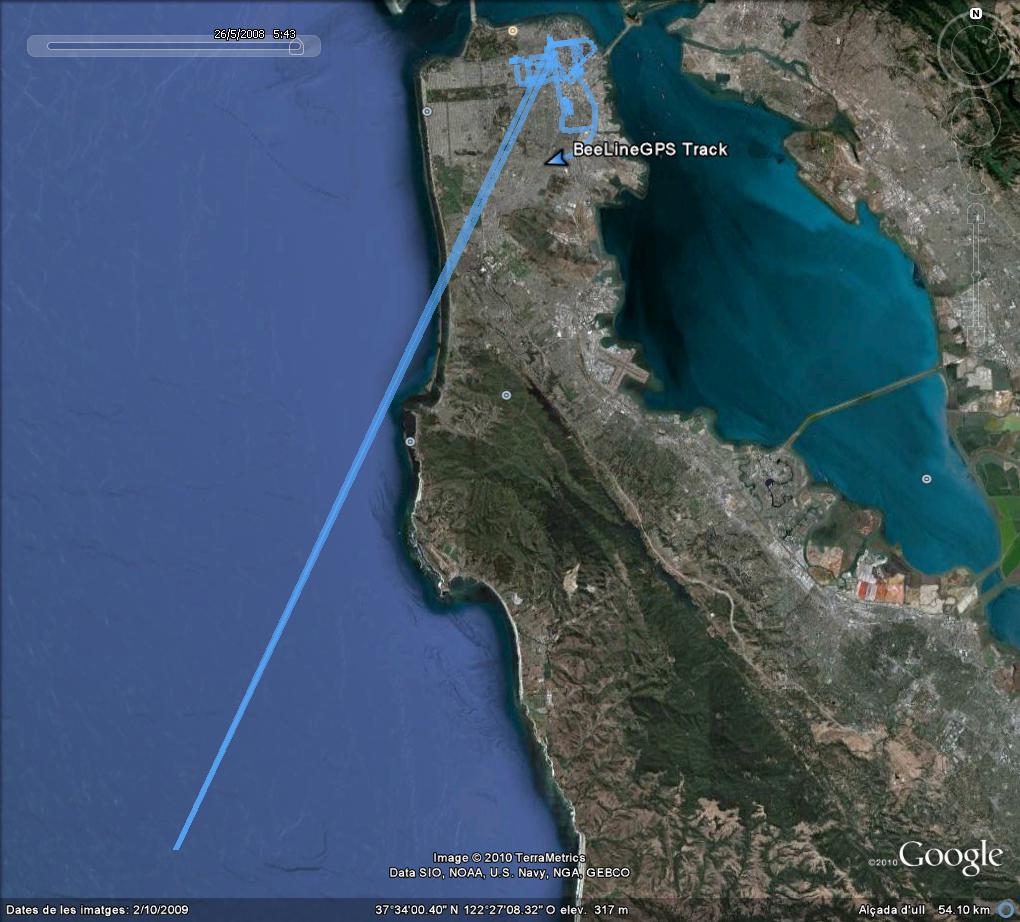}
\caption{Example of an outlier trajectory in the original real-life data set}
\label{fig:outlier}
\end{minipage}
\hspace{0.5cm}
\begin{minipage}[b]{0.5\linewidth}
\centering
\includegraphics[width=0.80\textwidth, bb = 0 0 1020 922]{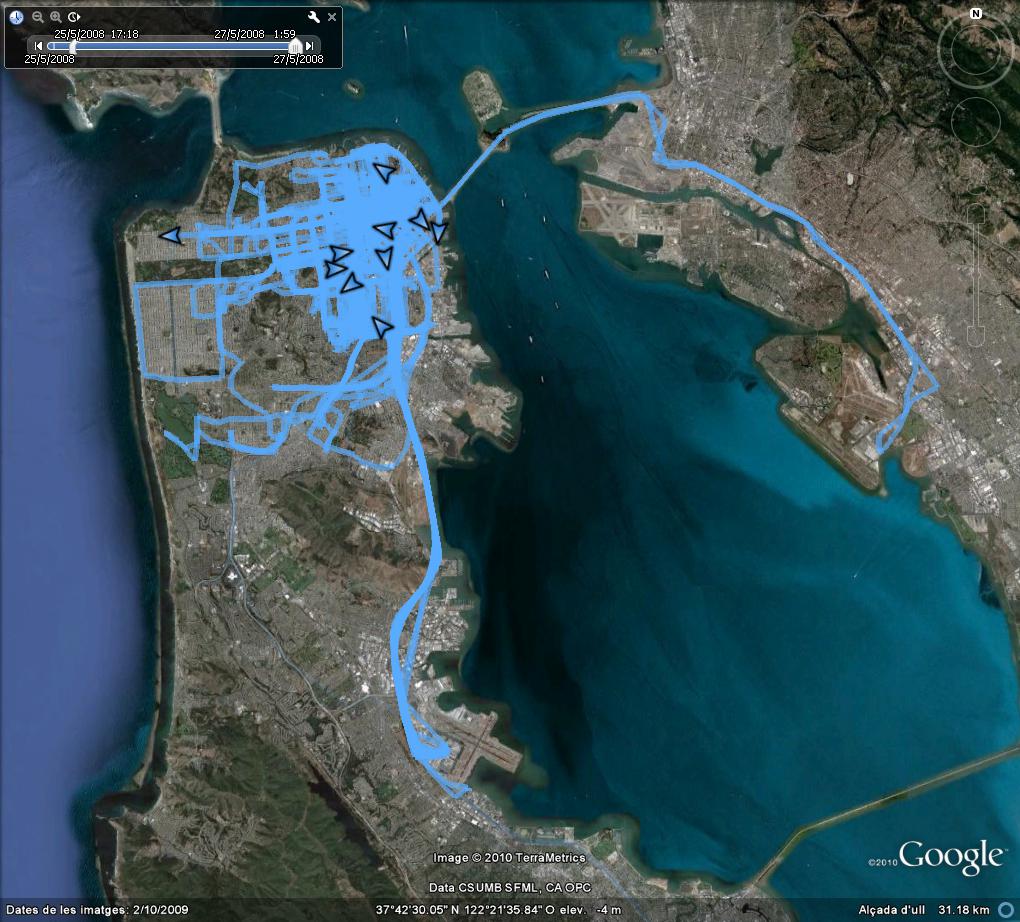}
\caption{Ten longest trajectories in the filtered real-life data set}
\label{fig:ten_trajectories}
\end{minipage}
\end{figure}

\subsubsection{Experiments with the distance metric}

We propose in this chapter
a new distance metric designed specifically for clustering
trajectories. Our distance metric considers both space and time,
dealing even with non-overlapping or partially-overlapping trajectories.
Contrary to the synthetic data where 10
trajectories had to be removed because the distances
to them could not be computed,
in this real-life data set our distance function
could be computed for every pair of trajectories.

Figure~\ref{fig:two_trajectories} shows two trajectories identified by
our distance metric as the two closest ones in the data set.
The two cabs moved around a parking lot and therefore stayed very
close to one another in space. Also in time both trajectories were very close:
one of them was recorded between 12:00:49 hours and 13:50:47
hours, while the other was recorded between 12:00:25 hours and 13:52:30 hours.
Therefore, both trajectories were correctly identified
by our distance metric as being close in time and space; they
could be clustered together with minimum utility loss for anonymisation
purposes.

\begin{figure*}[p]
\centering
\includegraphics[width=0.60\textwidth, bb = 0 0 1020 922]{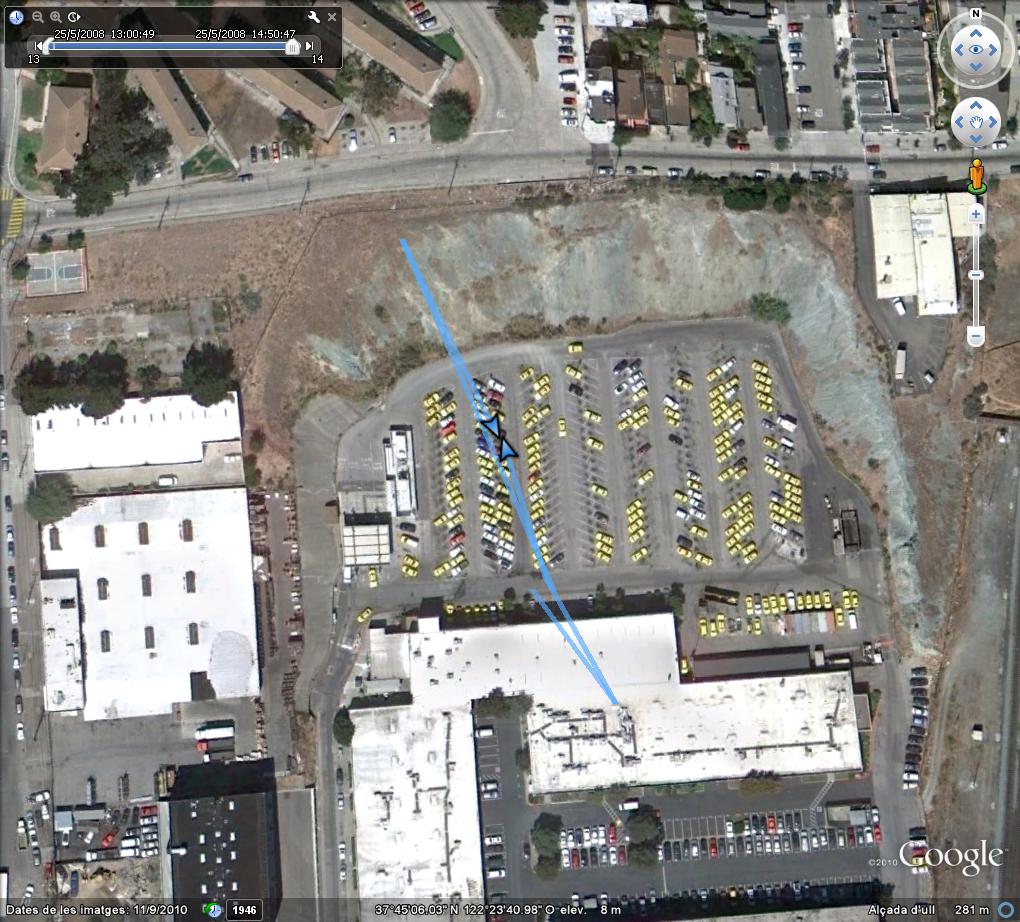}
\caption{The two closest trajectories in the real-life data set
according to our distance metric}
\label{fig:two_trajectories}
\end{figure*}

To compare, Figure~\ref{fig:two_trajectories2} shows two trajectories
identified by the Euclidean distance as the two closest ones in the data set.
These trajectories are located in a parking lot inside
San Francisco Airport and,
spatially, they are closer than the two trajectories shown
in Figure~\ref{fig:two_trajectories}. However, one of these trajectories
was recorded between 24:42:55 hours and 24:55:59 hours, while the other
was recorded between 19:05:29 hours and 19:06:15 hours. Hence, they
should not be
in the same cluster, because an adversary with time knowledge can easily
distinguish them.

\begin{figure*}[p]
\centering
\includegraphics[width=0.60\textwidth, bb = 0 0 1020 922]{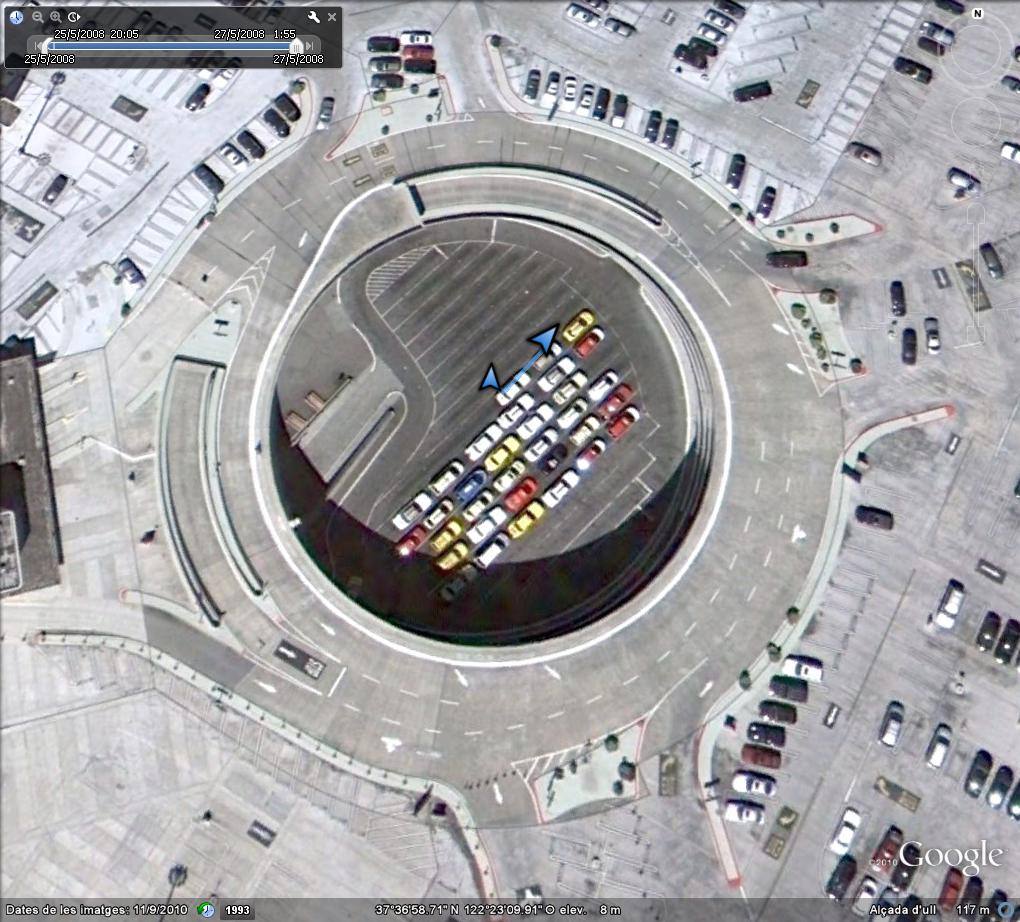}
\caption{The two closest trajectories in the real-life data set
according to the Euclidean distance}
\label{fig:two_trajectories2}
\end{figure*}


\subsubsection{Experiments with the SwapLocations method}

The ReachLocations method cannot be used when the graph of the
city is not provided. Hence, in the experiments with
the San Francisco real data we just considered the SwapLocations method.
As in the experiments with synthetic data, we set $\Omega = 0$
during the computation of the total space distortion.
Figure~\ref{fig:real_distortion} shows the values of total space
distortion given by the SwapLocations for different space thresholds
and different cluster sizes.

\begin{figure*}[p]
\centering
\includegraphics[width=0.6\textwidth, angle=270]{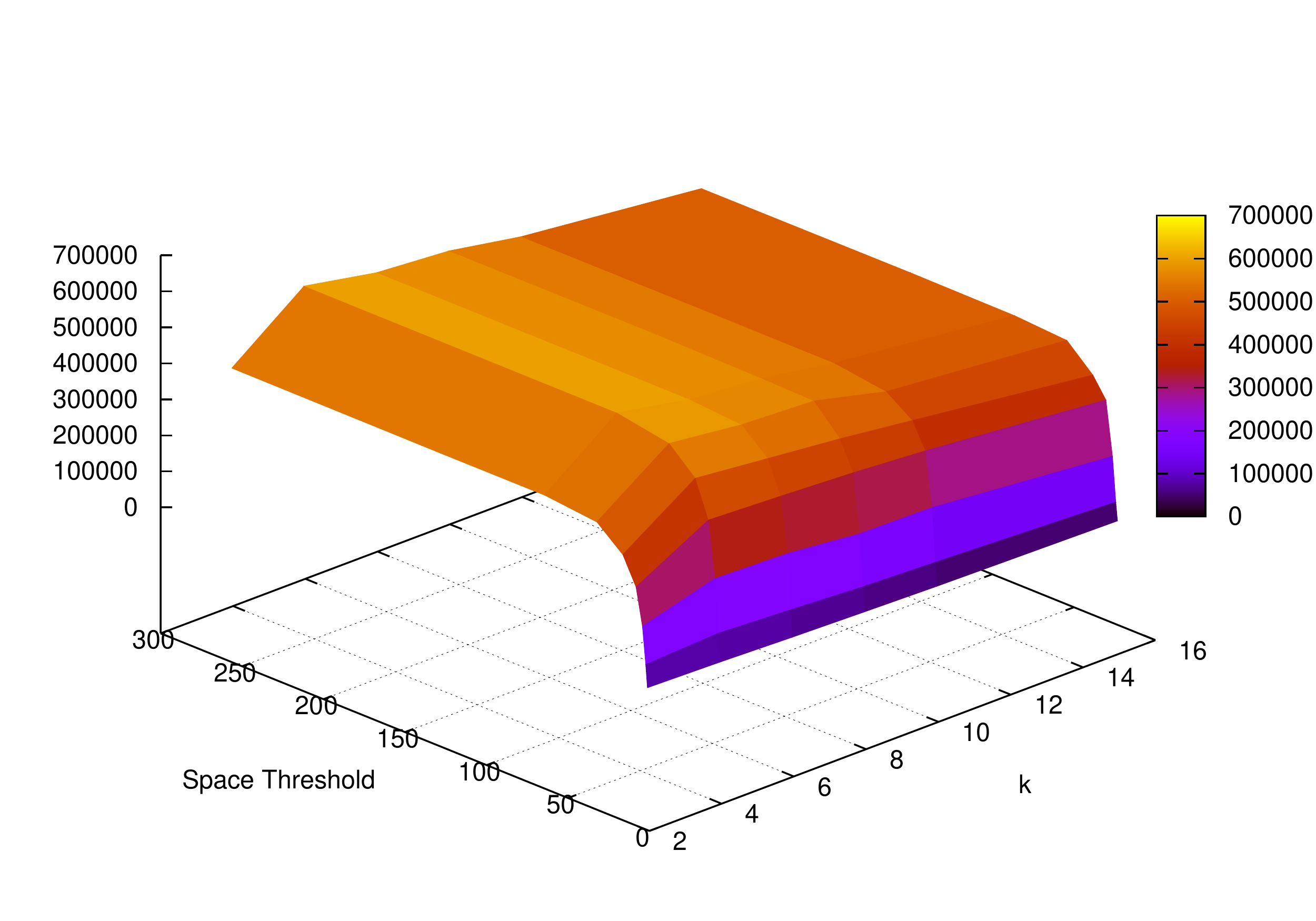}
\caption{Total space distortion (km) for SwapLocations using several
different space thresholds and cluster sizes on the real-life data set}
\label{fig:real_distortion}
\end{figure*}

Two other utility properties we are considering in this work are:
percentage of removed trajectories and percentage of removed locations.
Table~\ref{tab:swapLocations2} shows the values obtained with the
SwapLocations method for both utility properties.

\begin{table}[p]
\renewcommand{\arraystretch}{1.0}
\centering
\begin{tabular}{ @{} l c c c c c c c c c c c c @{}}
\toprule %
\multirow{2}*{ $R^s$ $\backslash$ $k$}
					&	\multicolumn{2}{c}{$2$}	&	\multicolumn{2}{c}{$4$}	&
					\multicolumn{2}{c}{$6$}	& \multicolumn{2}{c}{$8$}	& \multicolumn{2}{c}{$10$}	&
                    \multicolumn{2}{c}{$15$}	\\
					&	\textbf{T}	&	\textbf{L}	&	\textbf{T}	&	\textbf{L}	&
					\textbf{T}	&	\textbf{L}	& \textbf{T}	&	\textbf{L}	&
                    \textbf{T}	&	\textbf{L}	& \textbf{T}	&	\textbf{L}\\\cmidrule(l){2-13}

1	&	$23$	&	$43$	&	$40$	&	$64$	&	$49$	&	$71$	&	$58$	&	$74$	&	$62$	&	 $77$	&	$71$	&	 $81$	 \\\cmidrule(l){2-13}

2	&	$19$	&	$29$	&	$34$	&	$47$	&	$42$	&	$54$	&	$50$	&	$58$	&	$54$	&	 $60$	&	$50$	&	 $66$	 \\\cmidrule(l){2-13}

4	&	$14$	&	$17$	&	$27$	&	$29$	&	$35$	&	$35$	&	$40$	&	$40$	&	$45$	&	 $41$	 &	$54$	&	 $49$	\\\cmidrule(l){2-13}

8	&	$9$	&	$10$	&	$19$	&	$19$	&	$25$	&	$25$	&	$31$	&	$29$	&	$34$	&	 $31$	 &	 $42$	&	 $38$	\\\cmidrule(l){2-13}

16	&	$5$	&	$7$	&	$11$	&	$16$	&	$17$	&	$22$	&	$20$	&	$27$	&	$23$	&	 $30$	 &	 $32$	&	 $38$	\\\cmidrule(l){2-13}
				
32	&	$1$	&	$7$	&	$2$	&	$15$	&	$3$	&	$22$	&	$4$	&	$27$	&	$5$	&	 $30$	 &	 $8$	&	 $38$	\\\cmidrule(l){2-13}
				
64	&	$0$	&	$6$	&	$0$	&	$15$	&	$0$	&	$22$	&	$0$	&	$27$	&	$0$	&	 $30$	 &	 $0$	&	 $38$	\\\cmidrule(l){2-13}
				
 128 &	$0$	&	$6$	&	$0$	&	$15$	&	$0$	&	$22$	&	$0$	&	$27$	&	$0$	&	 $30$	 &	 $0$	&	 $38$	\\\bottomrule
				
\end{tabular}
\caption{Percentage of trajectories (columns labeled with \textbf{T}) and
locations (columns labeled with \textbf{L}) removed by SwapLocations
for several values of $k$ and several space thresholds $R^s$
on the real-life data set.
Percentages have been rounded to integers for compactness.
\label{tab:swapLocations2}}
\end{table}

Finally, Table~\ref{tab:range_swapLocations_real} reports
the performance of SwapLocations regarding spatio-temporal range queries.
We picked random time intervals of length at most $20$ minutes. Also,
random uncertain trajectories with uncertainty threshold of
size at most $7$ km were chosen as the regions.
Analogously to the experiments with the synthetic data set,
$20$ and $7$ are roughly a quarter of the average duration and distance
of all trajectories, respectively. It can be seen
that the SwapLocations method provides low range query distortion
for every value of $k$ when the space threshold is small, \emph{i.e.}
when the total space distortion is also small. However, the smaller
the space threshold, the larger the number of removed trajectories and
locations (see Table~\ref{tab:swapLocations2}). This illustrates
the trade-off between the utility properties considered.

\begin{table}[p]
\renewcommand{\arraystretch}{1.0}
\centering
\begin{tabular}{ @{} l c c c c c c c c c c c c @{}}
\toprule %
\multirow{2}*{ $R^s$ $\backslash$ $k$}
					&	\multicolumn{2}{c}{$2$}	&	\multicolumn{2}{c}{$4$}	&
					\multicolumn{2}{c}{$6$}	& \multicolumn{2}{c}{$8$}	& \multicolumn{2}{c}{$10$}	&
                    \multicolumn{2}{c}{$15$}	\\
					&	\textbf{S}	&	\textbf{A}	&	\textbf{S}	&	\textbf{A}	&
					\textbf{S}	&	\textbf{A}	& \textbf{S}	&	\textbf{A}	&
                    \textbf{S}	&	\textbf{A}	& \textbf{S}	&	\textbf{A}\\\cmidrule(l){2-13}

1	&	$13$	&	$22$	&	$18$	&	$27$	&	$20$	&	$29$	&	$19$	&	$29$	&	$24$	&	 $31$	&	$25$	 &	 $34$	 \\\cmidrule(l){2-13}

2	&	$16$	&	$24$	&	$25$	&	$34$	&	$26$	&	$35$	&	$24$	&	$35$	&	$27$	&	 $37$	 &	 $27$	 &	 $37$	\\\cmidrule(l){2-13}

4	&	$18$	&	$25$	&	$30$	&	$37$	&	$33$	&	$41$	&	$34$	&	$42$	&	$38$	&	 $46$	&	 $38$	 &	 $45$	\\\cmidrule(l){2-13}

8	&	$21$	&	$27$	&	$34$	&	$40$	&	$38$	&	$44$	&	$40$	&	$46$	&	$44$	&	 $50$	&	 $48$	 &	 $54$	\\\cmidrule(l){2-13}

16	&	$20$	&	$26$	&	$36$	&	$42$	&	$42$	&	$47$	&	$45$	&	$50$	&	$50$	&	 $54$	 &	 $53$	 &	 $58$	\\\cmidrule(l){2-13}
				
32	&	$21$	&	$26$	&	$39$	&	$44$	&	$45$	&	$49$	&	$48$	&	$53$	&	$53$	&	 $57$	 &	 $58$	 &	 $62$	\\\cmidrule(l){2-13}
				
64	&	$20$	&	$25$	&	$39$	&	$44$	&	$46$	&	$50$	&	$51$	&	$54$	&	$54$	&	 $57$	 &	 $61$	 &	 $64$	\\\cmidrule(l){2-13}
				
 128 &	$21$	&	$26$	&	$39$	&	$44$	&	$48$	&	$50$	&	$51$	&	$56$	&	 $54$	 &	 $58$	 &	$61$	 &	$64$	\\\bottomrule
				
\end{tabular}
\caption{Range query distortion caused by
SwapLocations on the real-life data set
for SID (columns labeled with \textbf{S}) and AID
(columns labeled with \textbf{A}), for several values of $k$
and several space thresholds $R^s$. In this table,
a range query
distortion $x$ is represented as the integer rounding of $x*100$
for compactness.
\label{tab:range_swapLocations_real}}
\end{table}


\section{Conclusions}

In this chapter, we have presented two permutation-based heuristic
methods to anonymise trajectories
with the common features that: (i)
places and times in the anonymised trajectories are true original places
and times with full accuracy;
(ii) both methods can deal with trajectories with partial or no time
overlap, thanks to a new distance also introduced in this paper.
The first method aims at trajectory $k$-anonymity while the second method takes reachability constraints into account,
that is, it assumes a territory constrained
by a network of streets or roads; to avoid
removing too many locations, the second method changes its privacy
ambitions from trajectory $k$-anonymity to location $k$-diversity.

Both methods use permutation of locations, and the first method
uses also trajectory microaggregation.
There are few counterparts in the literature comparable to the
first method, and virtually none comparable to the
second method.
Experimental results show that, for most parameter choices
and for similar privacy levels,
our methods offer better utility
than $(k,\delta)$-anonymity.

\chapter{Conclusions}
\label{chap:8}

\emph{This chapter summarises the contributions of the present dissertation. In addition, it sketches some lines for future work
that arise from either partially reached goals or expected improvements.}

\minitoc

In this thesis we have focussed on security, privacy, and scalability issues in the RFID technology. We have considered RFID identification protocols based on symmetric key cryptography, which seem to be the most suitable for low-cost RFID tags. We have also dealt with the challenges behind measuring the distance between tags and readers in order to improve the security of any RFID identification protocol. Since the RFID technology is becoming more and more popular, we
noticed that there is an increasing need for new trajectory anonymisation algorithms. For this reason, the last contribution in this
dissertation is devoted to this subject.

\section{Contributions}

In more detail, our contributions are:

\begin{enumerate}
    \item We have presented a communication-efficient
    protocol for collaborative RFID readers
to privately identify RFID tags. With the presented protocol, the centralised management
of tags can be avoided, along with bottlenecks and undesired delays.
    \item We have presented a novel protocol that
uses location and time of arrival predictors to improve the efficiency of the
widely accepted IRHL scheme. We have shown that our protocol outperforms
previous proposals in terms of scalability whilst guaranteeing the same level
of privacy and security.
    \item We have contributed to the design of distance-bounding protocols by: (i) providing a way to compute an upper bound on the distance-fraud probability, which is useful for analysing previous protocols and designing future ones; (ii) re-analysing the mafia fraud probability of the Kim and Avoine protocol~\cite{KimA-2009-cans}; (iii) proposing a new distance-bounding protocol that strikes
    a better balance than all previously published distance-bounding protocols
    between memory consumption, distance fraud resistance,
    and mafia fraud resistance.
    \item We have presented two permutation-based heuristic
methods to anonymise trajectories
with the common features that: (i)
places and times in the anonymised trajectories are true original places
and times with full accuracy;
(ii) both methods can deal with trajectories with partial or no time
overlap, thanks to a new distance also introduced in this dissertation.
The first method aims at trajectory $k$-anonymity while the second method takes reachability constraints into account,
that is, it assumes a territory constrained
by a network of streets or roads; to avoid
removing too many locations, this second method changes its privacy
ambitions from trajectory $k$-anonymity to location $k$-diversity.
\end{enumerate}

\section{Publications}

The main publications supporting the content of this thesis are the following:

\begin{itemize}
    \item Rolando Trujillo-Rasua, Benjamin Martin, and Gildas Avoine.
        \newblock The Poulidor distance-bounding protocol.
        \newblock In {\em The 6th Workshop on RFID Security and Privacy - RFIDSEC 2010}, pages 239--257, 2010.
    \item Josep Domingo-Ferrer, Michal Sramka, and Rolando Trujillo-Rasua.
        \newblock Privacy-preserving publication of trajectories using
          microaggregation.
        \newblock In {\em Proceedings of the SIGSPATIAL ACM GIS 2010 International
          Workshop on Security and Privacy in GIS and LBS, SPRINGL 2010}, San Jose,
          California, USA, 2 November 2010. ACM, pages 26--33, 2010.
    \item Rolando Trujillo-Rasua and Agusti Solanas.
        \newblock Efficient probabilistic communication protocol for the private
          identification of RFID tags by means of collaborative readers.
        \newblock {\em Computer Networks}, 55(15):3211--3223, 2011.
    \item Rolando Trujillo-Rasua and Agusti Solanas.
        \newblock Scalable trajectory-based protocol for RFID tags identification.
        \newblock In {\em The IEEE International Conference on RFID-Technologies and Applications - RFID-TA}, pages 279--285, 2011.
    \item Josep Domingo-Ferrer and Rolando Trujillo-Rasua.
        \newblock Microaggregation- and permutation-based anonymization of movement data.
        \newblock {\em Information Sciences}, \url{http://dx.doi.org/10.1016/j.ins.2012.04.015}.
    \item Rolando Trujillo-Rasua, Agusti Solanas, Pablo A. P\'erez-Mart\'inez and Josep Domingo-Ferrer.
        \newblock Predictive protocol for the scalable identification of RFID tags through collaborative readers.
        \newblock {\em Computers in Industry}, \url{http://dx.doi.org/10.1016/j.compind.2012.03.005}.
    \item Josep Domingo-Ferrer and Rolando Trujillo-Rasua.
        \newblock Anonymization of trajectory data.
        \newblock {\em 7th Joint UNECE/Eurostat Work Session on
	Statistical Data Confidentiality}, Tarragona, Catalonia,
	26-28 October 2011.
	Published at \url{http://www.unece.org/fileadmin/DAM/stats/documents/ece/ces/ge.46/2011/32_Domingo-Trujillo.pdf}.
\end{itemize}

Other publications co-authored by the candidate
and related to RFID systems,
but not included in this thesis, are listed below:

\begin{itemize}
    \item Albert Fernàndez-Mir, Rolando Trujillo-Rasua, Jordi
    Castellà-Roca and Josep Domingo-Ferrer.
        \newblock Scalable RFID authentication protocol supporting ownership transfer and controlled delegation.
        \newblock In {\em The 7th Workshop on RFID Security and Privacy - RFIDSEC 2011}, Amherst, Massachusetts (USA), pages 147--162, Jun 2011.
    \item Rolando Trujillo-Rasua, Antoni Martínez-Ballesté and Agusti Solanas.
        \newblock Revisión de protocolos para la identificación escalable, segura y privada en sistemas RFID.
        \newblock {\em 5as Jornadas Científicas sobre RFID}, Tarragona, Catalonia, 2011. Published at \url{http://crises2-deim.urv.cat/articles/index/type/conferences#672}.
\end{itemize}

\section{Future work}

Next, we sketch possible
lines for future work in the same order in which
we have presented our main contributions.

\begin{enumerate}
    \item  Our first proposal based on collaborative readers (see Chapter~\ref{chap:4}) opens at least the following
  research issues: (i)     study the effect of the number of
neighbours, (ii)  propose methods to dynamically vary $p$ so as to adapt it to
tag
movements, (iii) propose hybrid methods that mix hash-based solutions and tree-based
solutions with collaborative readers.
    \item In Chapter~\ref{chap:4_5} we partially tackle the second issue explained above by proposing some algorithms aimed at location prediction. However, those predictors may work well in some scenarios,
but their performance decreases in others. Although we have provided some
practical implementations for the predictors, the definition of our protocol is
flexible enough to accept the use of any location predictor. Due to the fact
that the efficiency of our proposal highly depends on the accuracy of the
predictors we plan to study and compare a variety of predictors in different
scenarios in the future.
    \item Chapter~\ref{chap:5} introduces the graph-based protocol concept,
    which in turn suggests lines for further work. First of all, an interesting question is to know if there are graph-based protocols that behave still better than the one presented here. In particular, if the number of rounds is not a critical parameter, prover and verifier may be allowed to increase the number of rounds while keeping a $2n$-node graph. This means that some nodes may be used twice. In such a case, the security analysis provided in this paper must be refined. On the other hand, although a bound on the distance fraud success probability is provided, calculating the exact probability of success is still cumbersome.
    \item Regarding trajectory anonymisation, the future work will be directed towards designing trajectory
anonymisation methods aimed at achieving trajectory $p$-privacy
(see Definition~\ref{def:trajectory_private}), but discarding less
locations than the SwapLocations method. Also, finding trajectory
anonymisation methods for constrained territories with better
utility than ReachLocations is an open challenge.
\end{enumerate}

\bibliographystyle{plain}

\begin{thebibliography}{100}
	
	\bibitem{RFID_lounge}
	\url{http://www.avoine.net/rfid/index.php}.
	
	\bibitem{auto_id_labs}
	{EPCGlobal}.
	\newblock \url{http://www.autoidlabs.org/}.
	
	\bibitem{iran_crash}
	{Iran Air Flight 655}.
	\newblock \url{http://en.wikipedia.org/wiki/Iran_Air_Flight_655}.
	
	\bibitem{korean_crash}
	{Korean Air Lines Flight 007}.
	\newblock \url{http://en.wikipedia.org/wiki/Korean_Air_Lines_Flight_007}.
	
	\bibitem{healthcare2}
	Medicarte uses {RFID} and biometrics to reduce counterfeiting.
	\newblock \url{http://www.rfidjournal.com/article/view/9065}.
	
	\bibitem{healthcare3}
	Rapid adoption of {RFID} in healthcare.
	\newblock
	\url{http://www.idtechex.com/research/articles/rapid_adoption_of_rfid_in_healthcare_00000470.asp}.
	
	\bibitem{healthcare1}
	{RFID} in healthcare - a panacea for the regulations and issues affecting 
	the
	industry?.
	\newblock
	\url{http://www.ups-scs.com/solutions/white_papers/wp_RFID_in_healthcare.pdf}.
	
	\bibitem{five_cents}
	{{RFID} News: The Five-Cent Tag is Here, the Five-Cent Tag is Here! Well,
		Almost}.
	\newblock
	\url{http://www.scdigest.com/assets/On_Target/09-01-27-2.php?cid=2201&ctype=content}.
	
	\bibitem{class_1_gen_2}
	{EPC Radio Frequency Identity Protocols Class-1 Generation-2 UHF RFID}, 
	2004.
	\newblock
	\url{http://www.epcglobalinc.org/standards/uhfc1g2/uhfc1g2_1_2_0-standard-20080511.pdf}.
	
	\bibitem{epcglobal}
	{EPCGlobal}, 2004.
	\newblock \url{http://www.gs1.org/epcglobal}.
	
	\bibitem{5374402}
	Osman Abul, Francesco Bonchi, and Fosca Giannotti.
	\newblock Hiding sequential and spatiotemporal patterns.
	\newblock {\em IEEE Trans. on Knowl. and Data Eng.}, 22(12):1709--1723,
	December 2010.
	
	\bibitem{abul08}
	Osman Abul, Francesco Bonchi, and Mirco Nanni.
	\newblock Never walk alone: uncertainty for anonymity in moving objects
	databases.
	\newblock In {\em Proceedings of the 24th International Conference on Data
		Engineering, ICDE 2008,Cancun, Mexico, 7-12 April 2008}, pages 
		376--385.
	IEEE, 2008.
	
	\bibitem{abul10}
	Osman Abul, Francesco Bonchi, and Mirco Nanni.
	\newblock Anonymization of moving objects databases by clustering and
	perturbation.
	\newblock {\em Inf. Syst.}, 35(8):884--910, 2010.
	
	\bibitem{aggarwal04}
	Charu~C. Aggarwal and Philip~S. Yu.
	\newblock A condensation approach to privacy preserving data mining.
	\newblock In {\em Proceedings of the 9th International Conference on 
	Extending
		Database Technology, EDBT 2004, Heraklion, Crete, Greece, 14-18 March 
		2004},
	volume 2992 of {\em Lecture Notes in Computer Science}, pages 183--199.
	Springer, 2004.
	
	\bibitem{10.1109/IDEAS.2006.47}
	Rakesh Agrawal, Alvin Cheung, Karin Kailing, and Stefan Schonauer.
	\newblock Towards traceability across sovereign, distributed {RFID} 
	databases.
	\newblock {\em Database Engineering and Applications Symposium, 
	International},
	0:174--184, 2006.
	
	\bibitem{1364147}
	Sheikh~Iqbal Ahamed, Farzana Rahman, and Md.~Endadul Hoque.
	\newblock Secured tag identification using edsa (enhanced distributed 
	scalable
	architecture).
	\newblock In {\em Proceedings of the 2008 ACM symposium on Applied 
	computing},
	SAC '08, pages 1902--1907, New York, NY, USA, 2008. ACM.
	
	\bibitem{AlomairLP-2010-jcs}
	Basel Alomair, Loukas Lazos, and Radha Poovendran.
	\newblock Securing low-cost rfid systems: An unconditionally secure 
	approach.
	\newblock {\em J. Comput. Secur.}, 19(2):229--257, April 2011.
	
	\bibitem{DBLP:journals/comcom/AlomairP10}
	Basel Alomair and Radha Poovendran.
	\newblock Privacy versus scalability in radio frequency identification 
	systems.
	\newblock {\em Computer Communications}, 33(18):2155--2163, 2010.
	
	\bibitem{alt95}
	Helmut Alt and Michael Godau.
	\newblock Computing the {F}r\'echet distance between two polygonal curves.
	\newblock {\em Internat. J. Comput. Geom. Appl.}, 5(1-2):75--91, 1995.
	
	\bibitem{arkin91}
	Esther~M. Arkin, L.~Paul Chew, Daniel~P. Huttenlocher, Klara Kedem, and 
	Joseph
	S.~B. Mitchell.
	\newblock An efficiently computable metric for comparing polygonal shapes.
	\newblock {\em IEEE Trans. Pattern Anal. Mach. Intell.}, 13(3):209--216, 
	1991.
	
	\bibitem{Ateniese:2005:URT:1102120.1102134}
	Giuseppe Ateniese, Jan Camenisch, and Breno de~Medeiros.
	\newblock Untraceable {RFID} tags via insubvertible encryption.
	\newblock In {\em Proceedings of the 12th ACM conference on Computer and
		communications security}, CCS '05, pages 92--101, New York, NY, USA, 
		2005.
	ACM.
	
	\bibitem{Avoine-2005-techrep}
	Gildas Avoine.
	\newblock {Adversary Model for Radio Frequency Identification}.
	\newblock Technical Report LASEC-REPORT-2005-001, Swiss Federal Institute 
	of
	Technology (EPFL), Security and Cryptography Laboratory (LASEC), Lausanne,
	Switzerland, September 2005.
	
	\bibitem{AvoineBKLM-2011-jcs}
	Gildas Avoine, Muhammed~Ali Bing\"ol, S\"uleyman Karda\c{s}, C\'edric
	Lauradoux, and Benjamin Martin.
	\newblock {A Framework for Analyzing {RFID} Distance Bounding Protocols}.
	\newblock {\em Journal of Computer Security -- Special Issue on {RFID} 
	System
		Security}, 19(2):289--317, March 2011.
	
	\bibitem{4351808}
	Gildas Avoine, Levente Buttyant, Tamas Holczer, and Istvan Vajda.
	\newblock Group-based private authentication.
	\newblock In {\em World of Wireless, Mobile and Multimedia Networks, 2007.
		WoWMoM 2007. IEEE International Symposium on a}, pages 1 --6, june 
		2007.
	
	\bibitem{AvoineCM-2010-rfidsec}
	Gildas Avoine, Xavier Carpent, and Benjamin Martin.
	\newblock {Strong Authentication and Strong Integrity ({SASI}) is not that
		Strong}.
	\newblock In S.B.~Ors Yalcin, editor, {\em Workshop on {RFID} Security --
		RFIDSec'10}, volume 6370 of {\em Lecture Notes in Computer Science}, 
		pages
	50--64, Istanbul, Turkey, June 2010. Springer.
	
	\bibitem{DBLP:conf/sacrypt/AvoineDO05}
	Gildas Avoine, Etienne Dysli, and Philippe Oechslin.
	\newblock Reducing time complexity in {RFID} systems.
	\newblock In {\em Selected Areas in Cryptography}, pages 291--306, 2005.
	
	\bibitem{Avoine:2005}
	Gildas Avoine and Philippe Oechslin.
	\newblock {RFID} traceability: A multilayer problem.
	\newblock In Andrew Patrick and Moti Yung, editors, {\em Financial 
	Cryptography
		-- FC'05}, LNCS 3570, pages 125--140. IFCA, Springer-Verlag, February 
		2005.
	
	\bibitem{Avoine05ascalable}
	Gildas Avoine and Philippe Oechslin.
	\newblock A scalable and provably secure hash-based rfid protocol.
	\newblock In {\em Proceedings of the Third IEEE International Conference on
		Pervasive Computing and Communications Workshops}, PERCOMW '05, pages
	110--114, Washington, DC, USA, 2005. IEEE Computer Society.
	
	\bibitem{AvoineT-2009-isc}
	Gildas Avoine and Aslan Tchamkerten.
	\newblock {An efficient distance bounding {RFID} authentication protocol:
		balancing false-acceptance rate and memory requirement}.
	\newblock In {\em Information Security Conference -- ISC'09}, volume 5735 
	of
	{\em Lecture Notes in Computer Science}, Pisa, Italy, September 2009.
	
	\bibitem{bacheldo08}
	Beth Bacheldo.
	\newblock Agricultural company tracks equipment loaned to farmers.
	\newblock {RFID} Journal. Retrieved on 2008-01-03.
	
	\bibitem{armadillo}
	Stéphane Badel, Nilay Dagtekin, Jorge Nakahara, Khaled Ouafi, Nicolas 
	Reffé,
	Pouyan Sepehrdad, Petr Susil, and Serge Vaudenay.
	\newblock {ARMADILLO}: A multi-purpose cryptographic primitive dedicated to
	hardware.
	\newblock In {\em Cryptographic Hardware and Embedded Systems, CHES 2010, 
	12th
		International Workshop}, pages 398--412, 2010.
	
	\bibitem{Batina07public-keycryptography}
	Lejla Batina, Jorge Guajardo, T~Kerins, Nele Mentens, and Pim Tuyls.
	\newblock Public-key cryptography for {RFID}-tags.
	\newblock In {\em In International Workshop on Pervasive Computing and
		Communication Security - PerSec 2007}, pages 217--222. IEEE Computer 
		Society
	Press, 2007.
	
	\bibitem{springerlink:10.1007/BF00196726}
	Samy Bengio, Gilles Brassard, Yvo~G. Desmedt, Claude Goutier, and 
	Jean-Jacques
	Quisquater.
	\newblock Secure implementation of identification systems.
	\newblock {\em Journal of Cryptology}, 4:175--183, 1991.
	\newblock 10.1007/BF00196726.
	
	\bibitem{Berbain:2009:EFP:1653662.1653669}
	C\^{o}me Berbain, Olivier Billet, Jonathan Etrog, and Henri Gilbert.
	\newblock An efficient forward private {RFID} protocol.
	\newblock In {\em Proceedings of the 16th ACM conference on Computer and
		communications security}, CCS '09, pages 43--53, New York, NY, USA, 
		2009.
	ACM.
	
	\bibitem{Beth:1990:ITS:646755.705208}
	Thomas Beth and Yvo Desmedt.
	\newblock Identification tokens - or: Solving the chess grandmaster 
	problem.
	\newblock In {\em Proceedings of the 10th Annual International Cryptology
		Conference on Advances in Cryptology}, CRYPTO '90, pages 169--177, 
		London,
	UK, UK, 1991. Springer-Verlag.
	
	\bibitem{Bleichenbacher:2000:NPI:1756169.1756175}
	Daniel Bleichenbacher and Phong~Q. Nguyen.
	\newblock Noisy polynomial interpolation and noisy chinese remaindering.
	\newblock In {\em Proceedings of the 19th international conference on 
	Theory
		and application of cryptographic techniques}, EUROCRYPT'00, pages 
		53--69,
	Berlin, Heidelberg, 2000. Springer-Verlag.
	
	\bibitem{Blundo:2008:ISN:1462455.1462487}
	Carlo Blundo, Giuseppe Persiano, Ahmad-Reza Sadeghi, and Ivan Visconti.
	\newblock Improved security notions and protocols for non-transferable
	identification.
	\newblock In {\em Proceedings of the 13th European Symposium on Research in
		Computer Security: Computer Security}, ESORICS '08, pages 364--378, 
		Berlin,
	Heidelberg, 2008. Springer-Verlag.
	
	\bibitem{BogdanovKLPPRSV-2007-ches}
	Andrey Bogdanov, Lars~Ramkilde Knudsen, Gregor Leander, Christof Paar, Axel
	Poschmann, Matthew Robshaw, Yannick Seurin, and C.~Vikkelsoe.
	\newblock {PRESENT: An Ultra-Lightweight Block Cipher}.
	\newblock In Pascal Paillier and Ingrid Verbauwhede, editors, {\em 
	Workshop on
		Cryptographic Hardware and Embedded Systems -- CHES 2007}, volume 4727 
		of
	{\em Lecture Notes in Computer Science}, pages 450--466, Vienna, Austria,
	September 2007. Springer.
	
	\bibitem{BohnEUSAI2004}
	J{\"u}rgen Bohn and Friedemann Mattern.
	\newblock Super-distributed {RFID} tag infrastructures.
	\newblock In {\em EUSAI}, pages 1--12, 2004.
	
	\bibitem{Bolotnyy:2007:PUF:1263542.1263714}
	Leonid Bolotnyy and Gabriel Robins.
	\newblock Physically unclonable function-based security and privacy in 
	{RFID}
	systems.
	\newblock In {\em Proceedings of the Fifth IEEE International Conference on
		Pervasive Computing and Communications}, PERCOM '07, pages 211--220,
	Washington, DC, USA, 2007. IEEE Computer Society.
	
	\bibitem{bonchi09}
	Francesco Bonchi.
	\newblock Privacy preserving publication of moving object data.
	\newblock In {\em Privacy in Location-Based Applications, Research Issues 
	and
		Emerging Trends}, volume 5599 of {\em Lecture Notes in Computer 
		Science},
	pages 190--215. Springer, 2009.
	
	\bibitem{bonchi08}
	Francesco Bonchi, Y{\"u}cel Saygin, Vassilios~S. Verykios, Maurizio Atzori,
	Aris Gkoulalas-Divanis, Selim~Volkan Kaya, and Erkay Savas.
	\newblock Privacy in spatiotemporal data mining.
	\newblock In {\em Mobility, Data Mining and Privacy}, pages 297--333. 
	Springer,
	2008.
	
	\bibitem{Bornhovd:2004:IAD:1316689.1316790}
	Christof Bornh\"{o}vd, Tao Lin, Stephan Haller, and Joachim Schaper.
	\newblock Integrating automatic data acquisition with business processes
	experiences with {SAP}'s auto-{ID} infrastructure.
	\newblock In {\em Proceedings of the Thirtieth international conference on 
	Very
		large data bases - Volume 30}, VLDB '04, pages 1182--1188. VLDB 
		Endowment,
	2004.
	
	\bibitem{188361}
	Stefan Brands and David Chaum.
	\newblock Distance-bounding protocols.
	\newblock In {\em EUROCRYPT '93: Workshop on the theory and application of
		cryptographic techniques on Advances in cryptology}, pages 344--359,
	Secaucus, NJ, USA, 1994. Springer-Verlag New York, Inc.
	
	\bibitem{Bringer:2008:CER:1485310.1485324}
	Julien Bringer, Herv\'{e} Chabanne, and Thomas Icart.
	\newblock Cryptanalysis of {EC-RAC}, a {RFID} identification protocol.
	\newblock In {\em Proceedings of the 7th International Conference on 
	Cryptology
		and Network Security}, CANS '08, pages 149--161, Berlin, Heidelberg, 
		2008.
	Springer-Verlag.
	
	\bibitem{Bringer:2008:IPT:1432967.1432976}
	Julien Bringer, Herv\'{e} Chabanne, and Thomas Icart.
	\newblock Improved privacy of the tree-based hash protocols using 
	physically
	unclonable function.
	\newblock In {\em Proceedings of the 6th international conference on 
	Security
		and Cryptography for Networks}, SCN '08, pages 77--91, Berlin, 
		Heidelberg,
	2008. Springer-Verlag.
	
	\bibitem{brinkhoff03}
	Thomas Brinkhoff.
	\newblock Generating traffic data.
	\newblock {\em IEEE Data Eng. Bull.}, 26(2):19--25, 2003.
	
	\bibitem{Burmester:2008:SEG:1788857.1788887}
	Mike Burmester and Breno De~Medeiros.
	\newblock The security of epc gen2 compliant {RFID} protocols.
	\newblock In {\em Proceedings of the 6th international conference on 
	Applied
		cryptography and network security}, ACNS'08, pages 490--506, Berlin,
	Heidelberg, 2008. Springer-Verlag.
	
	\bibitem{Burmester:2008:ARA:1457161.1457162}
	Mike Burmester, Breno de~Medeiros, and Rossana Motta.
	\newblock Anonymous {RFID} authentication supporting constant-cost 
	key-lookup
	against active adversaries.
	\newblock {\em IJACT}, 1(2):79--90, 2008.
	
	\bibitem{springerlink:10.1007/0-387-25660-1_15}
	Laurent Bussard and Walid Bagga.
	\newblock Distance-bounding proof of knowledge to avoid real-time attacks.
	\newblock In Ryoichi Sasaki, Sihan Qing, Eiji Okamoto, and Hiroshi 
	Yoshiura,
	editors, {\em Security and Privacy in the Age of Ubiquitous Computing},
	volume 181 of {\em IFIP Advances in Information and Communication
		Technology}, pages 223--238. Springer Boston, 2005.
	\newblock 10.1007/0-387-25660-1\_15.
	
	\bibitem{Cai:2009:AIR:1514274.1514282}
	Shaoying Cai, Yingjiu Li, Tieyan Li, and Robert~H. Deng.
	\newblock Attacks and improvements to an {RIFD} mutual authentication 
	protocol
	and its extensions.
	\newblock In {\em Proceedings of the second ACM conference on Wireless 
	network
		security}, WiSec '09, pages 51--58, New York, NY, USA, 2009. ACM.
	
	\bibitem{Cao_architecturalconsiderations}
	Zhao Cao, Yanlei Diao, and Prashant Shenoy.
	\newblock Architectural considerations for distributed {RFID} tracking and
	monitoring.
	\newblock 
	\url{http://netdb09.cis.upenn.edu/netdb09papers/netdb09-final14.pdf}.
	
	\bibitem{Castelluccia06noisytags:}
	Claude Castelluccia and Gildas Avoine.
	\newblock Noisy tags: A pretty good key exchange protocol for {RFID} tags.
	\newblock In {\em In Procs. International Conference on SmartCard Research 
	and
		Advanced Applications CARDIS'06}, pages 289--299. Springer-Verlag, 
		2006.
	
	\bibitem{CastelluscciaS-2007-rfidsec}
	Claude Castelluccia and Mate Soos.
	\newblock Secret shuffing: A novel approach to {RFID} private 
	identification.
	
	\bibitem{HernandezEPQ-2009-wcc}
	Julio C{\'e}sar~Hern{\'a}ndez Castro, Juan~M. Est{\'e}vez-Tapiador, Pedro
	Peris-Lopez, and Jean-Jacques Quisquater.
	\newblock Cryptanalysis of the {SASI} ultralightweight {RFID} 
	authentication
	protocol with modular rotations.
	\newblock {\em CoRR}, abs/0811.4257, 2008.
	
	\bibitem{ChatmonVB-2006-techrep}
	Christy Chatmon, Tri van Le, and Mike Burmester.
	\newblock {Secure Anonymous {RFID} Authentication Protocols}.
	\newblock Technical Report TR-060112, Florida State University, Department 
	of
	Computer Science, Tallahassee, Florida, USA, 2006.
	
	\bibitem{chen05}
	Lei Chen, M.~Tamer \"Ozsu, and Vincent Oria.
	\newblock Robust and fast similarity search for moving object trajectories.
	\newblock In {\em Proceedings of 2005 ACM SIGMOD International Conference 
	on
		Management of Data, Baltimore, Maryland, USA, 14-16 June 2005}, pages
	491--502. ACM, 2005.
	
	\bibitem{CheonHT-2009-eprint}
	Jung~Hee Cheon, Jeongdae Hong, and Gene Tsudik.
	\newblock {Reducing {RFID} Reader Load with the Meet-in-the-Middle 
	Strategy}.
	\newblock Cryptology ePrint Archive, Report 2009/092, 2009.
	
	\bibitem{Chien-2007-ieeetdsc}
	Hung-Yu Chien.
	\newblock {{SASI}: A New Ultralightweight {RFID} Authentication Protocol
		Providing Strong Authentication and Strong Integrity}.
	\newblock {\em IEEE Transactions on Dependable and Secure Computing},
	4(4):337--340, December 2007.
	
	\bibitem{Chien:2007:MAP:1222669.1222985}
	Hung-Yu Chien and Che-Hao Chen.
	\newblock Mutual authentication protocol for {RFID} conforming to {EPC} 
	{C}lass
	1 {G}eneration 2 standards.
	\newblock {\em Comput. Stand. Interfaces}, 29:254--259, February 2007.
	
	\bibitem{5678437}
	Kevin Chiew, Yingjiu Li, Tieyan Li, Robert~H. Deng, and Manfred Aigner.
	\newblock Time cost evaluation for executing {RFID} authentication 
	protocols.
	\newblock In {\em Internet of Things (IOT), 2010}, pages 1 --8, 29 
	2010-dec. 1
	2010.
	
	\bibitem{Choi05efficientrfid}
	Eun~Young Choi, Su~Mi Lee, and Dong~Hoon Lee.
	\newblock Efficient {RFID} authentication protocol for ubiquitous computing
	environment.
	\newblock In {\em In Proc. of SECUBIQ'05, LNCS}, pages 945--954. Springer,
	2005.
	
	\bibitem{Conti2011}
	Mauro Conti, Roberto Di~Pietro, and Angelo Spognardi.
	\newblock ``who counterfeited my viagra?'' probabilistic item removal 
	detection
	via {RFID} tag cooperation.
	\newblock {\em EURASIP J. Wirel. Commun. Netw.}, 2011:9:1--9:13, January 
	2011.
	
	\bibitem{4144831}
	Mauro Conti, Roberto~Di Pietro, Luigi~Vincenzo Mancini, and Angelo 
	Spognardi.
	\newblock {RIPP-FS}: An {RFID} identification, privacy preserving protocol 
	with
	forward secrecy.
	\newblock In {\em Pervasive Computing and Communications Workshops, 2007.
		PerCom Workshops '07. Fifth Annual IEEE International Conference on}, 
		pages
	229 --234, march 2007.
	
	\bibitem{DBLP:conf/crypto/DesmedtGB87}
	Yvo Desmedt, Claude Goutier, and Samy Bengio.
	\newblock Special uses and abuses of the {F}iat-{S}hamir passport protocol.
	\newblock In {\em CRYPTO}, pages 21--39, 1987.
	
	\bibitem{1607559}
	Tassos Dimitriou.
	\newblock A lightweight {RFID} protocol to protect against traceability and
	cloning attacks.
	\newblock In {\em Security and Privacy for Emerging Areas in Communications
		Networks, 2005. SecureComm 2005. First International Conference on}, 
		pages 59
	-- 66, sept. 2005.
	
	\bibitem{domingo10a}
	Josep Domingo-Ferrer and Ursula Gonz\'alez-Nicol\'as.
	\newblock Hybrid microdata using microaggregation.
	\newblock {\em Inform. Sciences}, 180(15):2834--2844, 2010.
	
	\bibitem{domingo02}
	Josep Domingo-Ferrer and Josep~Maria Mateo-Sanz.
	\newblock Practical data-oriented microaggregation for statistical 
	disclosure
	control.
	\newblock {\em IEEE Trans. Knowl. Data Eng.}, 14(1):189--201, 2002.
	
	\bibitem{muapprox}
	Josep Domingo-Ferrer, Francesc Seb\'{e}, and Agusti Solanas.
	\newblock A polynomial-time approximation to optimal multivariate
	microaggregation.
	\newblock {\em Comput. Math. Appl.}, 55:714--732, February 2008.
	
	\bibitem{domingo10springl}
	Josep Domingo-Ferrer, Michal Sramka, and Rolando Trujillo-Rasua.
	\newblock Privacy-preserving publication of trajectories using
	microaggregation.
	\newblock In {\em Proceedings of the SIGSPATIAL ACM GIS 2010 International
		Workshop on Security and Privacy in GIS and LBS, SPRINGL 2010, San 
		Jose,
		California, USA, 2 November 2008}. ACM, 2010.
	
	\bibitem{domingo08}
	Josep Domingo-Ferrer and Vicenc Torra.
	\newblock A critique of k-anonymity and some of its enhancements.
	\newblock In {\em Proceedings of the 3rd International Conference on
		Availability, Reliability and Security, ARES 2008, Barcelona, Spain, 
		4-7
		March 2008}, pages 990--993. IEEE, 2008.
	
	\bibitem{domingo05}
	Josep Domingo-Ferrer and Vicen\c{c} Torra.
	\newblock Ordinal, continuous and heterogenerous k-anonymity through
	microaggregation.
	\newblock {\em Data Min. Knowl. Disc.}, 11(2):195--212, 2005.
	
	\bibitem{citeulike:1141061}
	Dang~N. Duc, Jaemin Park, Hyunrok Lee, and Kwangjo Kim.
	\newblock {Enhancing Security of EPCglobal Gen-2 {RFID} Tag against
		Traceability and Cloning}.
	\newblock {\em Proc. of SCIS 2006, Abstracts pp.97, Hiroshima, Japan}, 
	January
	2006.
	
	\bibitem{Engberg04zero-knowledgedevice}
	Stephan~J. Engberg, Morten~B. Harning, and Christian~Damsgaard Jensen.
	\newblock Zero-knowledge device authentication: Privacy \& security 
	enhanced
	{RFID} preserving business value and consumer convenience.
	\newblock In {\em in Proceedings of the 2nd Annual Conference on Privacy,
		Security and Trust (PST'04}, 2004.
	
	\bibitem{FeldhoferDW-2004-ches}
	Martin Feldhofer, Sandra Dominikus, and Johannes Wolkerstorfer.
	\newblock {Strong Authentication for {RFID} Systems using the AES 
	Algorithm}.
	\newblock In Marc Joye and Jean-Jacques Quisquater, editors, {\em Workshop 
	on
		Cryptographic Hardware and Embedded Systems -- CHES 2004}, volume 3156 
		of
	{\em Lecture Notes in Computer Science}, pages 357--370, Boston,
	Massachusetts, USA, August 2004. IACR, Springer.
	
	\bibitem{4253019}
	Martin Feldhofer and Johannes Wolkerstorfer.
	\newblock Strong crypto for {RFID} tags - a comparison of low-power 
	hardware
	implementations.
	\newblock In {\em Circuits and Systems, 2007. ISCAS 2007. IEEE 
	International
		Symposium on}, pages 1839 --1842, may 2007.
	
	\bibitem{FeldhoferWR-2005-ieeproceedings}
	Martin Feldhofer, Johannes Wolkerstorfer, and Vincent Rijmen.
	\newblock {AES Implementation on a Grain of Sand}.
	\newblock {\em IEE Proceedings -- Information Security}, 152(1):13--20, 
	October
	2005.
	
	\bibitem{FernandezCV-2010-setop}
	Albert Fern\`{a}ndez-Mir, Jordi Castell\`{a}-Roca, and Alexandre Viejo.
	\newblock {Secure and Scalable {RFID} Authentication Protocol}.
	\newblock In J.~Garcia-Alfaro et~al., editor, {\em DPM 2010 and SETOP 
	2010},
	volume 6514 of {\em Lecture Notes in Computer Science}, pages 231--243,
	Athens, Greece, September 2010. Springer.
	
	\bibitem{finkenzeller99}
	Klaus Finkenzeller.
	\newblock {\em {RFID} Handbook}.
	\newblock John Wiley \& Sons, 1999.
	
	\bibitem{Floyd}
	Robert~W. Floyd.
	\newblock Algorithm 97: Shortest path.
	\newblock {\em Commun. ACM}, 5:345--350, June 1962.
	
	\bibitem{forlizzi00}
	Luca Forlizzi, Ralf~Hartmut G{\"u}ting, Enrico Nardelli, and Markus 
	Schneider.
	\newblock A data model and data structures for moving objects databases.
	\newblock In {\em Proceedings of the 2000 ACM SIGMOD International 
	Conference
		on Management of Data, SIGMOD 2000, Dallas, Texas, USA, 16-18 May 
		2000},
	pages 319--330. ACM, 2000.
	
	\bibitem{Fossorier06anovel}
	Marc P.~C. Fossorier, Miodrag~J. Mihaljevic, Hideki Imai, Yang Cui, and 
	Kanta
	Matsuura.
	\newblock A novel algorithm for solving the {LPN} problem and its 
	application
	to security evaluation of the {HB} protocol for {RFID} authentication.
	\newblock {\em IACR Cryptology ePrint Archive}, 2006:197, 2006.
	
	\bibitem{Fouladgar:2008:SPP:1461464.1461467}
	Sepideh Fouladgar and Hossam Afifi.
	\newblock Scalable privacy protecting scheme through distributed {RFID} tag
	identification.
	\newblock In {\em Proceedings of the workshop on Applications of private 
	and
		anonymous communications}, AIPACa '08, pages 3:1--3:8, New York, NY, 
		USA,
	2008. ACM.
	
	\bibitem{Francis:2010:PNP:1926325.1926331}
	Lishoy Francis, Gerhard Hancke, Keith Mayes, and Konstantinos 
	Markantonakis.
	\newblock Practical {NFC} peer-to-peer relay attack using mobile phones.
	\newblock In {\em Proceedings of the 6th international conference on Radio
		frequency identification: security and privacy issues}, RFIDSec'10, 
		pages
	35--49, Berlin, Heidelberg, 2010. Springer-Verlag.
	
	\bibitem{fung10}
	Benjamin Fung, Ke~Wang, Rui Chen, and Philip~S. Yu.
	\newblock Privacy-preserving data publishing: a survey on recent 
	developments.
	\newblock {\em ACM Comput. Surv.}, 42(4):to appear, 2010.
	
	\bibitem{2283}
	Simson Garfinkel and Beth Rosenberg.
	\newblock {\em {RFID: Applications, Security, and Privacy}}.
	\newblock Addison-Wesley Professional, July 2005.
	
	\bibitem{cryptoeprint:2005:237}
	Henri Gilbert, Matt Robshaw, and Herve Sibert.
	\newblock An active attack against {HB}+ - a provably secure lightweight
	authentication protocol.
	\newblock Cryptology ePrint Archive, Report 2005/237, 2005.
	
	\bibitem{DBLP:conf/eurocrypt/GilbertRS08}
	Henri Gilbert, Matthew J.~B. Robshaw, and Yannick Seurin.
	\newblock {HB}$^{\mbox{\#}}$: Increasing the security and efficiency of
	{HB}$^{\mbox{+}}$.
	\newblock In {\em EUROCRYPT}, pages 361--378, 2008.
	
	\bibitem{28}
	Bill Glover and Himanshu Bhatt.
	\newblock {\em {{RFID} essentials}}.
	\newblock 2006.
	
	\bibitem{DBLP:journals/jacm/GoldreichGM86}
	Oded Goldreich, Shafi Goldwasser, and Silvio Micali.
	\newblock How to construct random functions.
	\newblock {\em J. ACM}, 33(4):792--807, 1986.
	
	\bibitem{Golle02universalre-encryption}
	Philippe Golle, Markus Jakobsson, Ari Juels, and Paul Syverson.
	\newblock Universal re-encryption for mixnets.
	\newblock In {\em in proceedings of the 2004 RSA conference, 
	cryptographer's
		track}, pages 163--178. Springer-Verlag, 2002.
	
	\bibitem{GonzalezEtAl-2010-TKDE}
	Hector Gonzalez, Jiawei Han, Hong Cheng, Xiaolei Li, D.~Klabjan, and 
	Tianyi Wu.
	\newblock Modeling massive {RFID} data sets: A gateway-based movement graph
	approach.
	\newblock {\em Knowledge and Data Engineering, IEEE Transactions on}, 
	22(1):90
	--104, jan. 2010.
	
	\bibitem{GonzalezEtAl-2006-ICDE}
	Hector Gonzalez, Jiawei Han, Xiaolei Li, and Diego Klabjan.
	\newblock Warehousing and analyzing massive {RFID} data sets.
	\newblock In {\em Proceedings of the 22nd International Conference on Data
		Engineering}, ICDE '06, pages 83--, Washington, DC, USA, 2006. IEEE 
		Computer
	Society.
	
	\bibitem{GonzalezEtAl-2007-ICDE}
	Hector Gonzalez, Jiawei Han, and Xuehua Shen.
	\newblock Cost-conscious cleaning of massive {RFID} data sets.
	\newblock In {\em Data Engineering, 2007. ICDE 2007. IEEE 23rd 
	International
		Conference on}, pages 1268 --1272, april 2007.
	
	\bibitem{gruteser03}
	Marco Gruteser and Dirk Grunwald.
	\newblock Anonymous usage of location-based services through spatial and
	temporal cloaking.
	\newblock In {\em Proceedings of the 1st International Conference on Mobile
		Systems, Applications, and Services, MobiSys 2003, San Francisco, 
		California,
		USA, 5-8 May 2003}. USENIX, 2003.
	
	\bibitem{gruteser05}
	Marco Gruteser and Baik Hoh.
	\newblock On the anonymity of periodic location samples.
	\newblock In {\em Proceedings of the 2nd International Conference on 
	Security
		in Pervasive Computing, SPC 2005, Boppard, Germany, 6-8 April 2005}, 
		volume
	3450 of {\em Lecture Notes in Computer Science}, pages 179--192. Springer,
	2005.
	
	\bibitem{springerlink:10.1007/978-3-540-88313-5-18}
	JungHoon Ha, SangJae Moon, Jianying Zhou, and JaeCheol Ha.
	\newblock A new formal proof model for {RFID} location privacy.
	\newblock In Sushil Jajodia and Javier Lopez, editors, {\em Computer 
	Security -
		ESORICS 2008}, volume 5283 of {\em Lecture Notes in Computer Science}, 
		pages
	267--281. Springer Berlin / Heidelberg, 2008.
	\newblock 10.1007/978-3-540-88313-5-18.
	
	\bibitem{Haley-2003-InternetNews}
	Colin~C Haley.
	\newblock Are you ready for rfid?.
	\newblock {\em \em InternetNews.com}, November 2003.
	\newblock \url{http://www.internetnews.com/infra/print.php/3109501}.
	
	\bibitem{Hancke05apractical}
	Gerhard Hancke.
	\newblock A practical relay attack on {ISO} 14443 proximity cards.
	\newblock Technical report, 2005.
	
	\bibitem{HanckeK-2005-securecomm}
	Gerhard Hancke and Markus Kuhn.
	\newblock {An {RFID} Distance Bounding Protocol}.
	\newblock In {\em Conference on Security and Privacy for Emerging Areas in
		Communication Networks -- SecureComm 2005}, pages 67--73, Athens, 
		Greece,
	September 2005. IEEE, IEEE Computer Society.
	
	\bibitem{Hancke:2006:PAP:1130235.1130384}
	Gerhard~P. Hancke.
	\newblock Practical attacks on proximity identification systems (short 
	paper).
	\newblock In {\em Proceedings of the 2006 IEEE Symposium on Security and
		Privacy}, pages 328--333, Washington, DC, USA, 2006. IEEE Computer 
		Society.
	
	\bibitem{Hein:2009:ERR:1616707.1616744}
	Daniel Hein, Johannes Wolkerstorfer, and Norbert Felber.
	\newblock {\em {ECC} Is Ready for {RFID} --- A Proof in Silicon}, pages
	401--413.
	\newblock Springer-Verlag, Berlin, Heidelberg, 2009.
	
	\bibitem{1276922}
	Dirk Henrici and Paul Muller.
	\newblock Hash-based enhancement of location privacy for radio-frequency
	identification devices using varying identifiers.
	\newblock In {\em Pervasive Computing and Communications Workshops, 2004.
		Proceedings of the Second IEEE Annual Conference on}, pages 149 -- 
		153, march
	2004.
	
	\bibitem{Hlavac:2009:KAR:1617722.1617736}
	Martin Hlav\'{a}\v{c}.
	\newblock {K}nown---{P}laintext---{O}nly {A}ttack on {RSA}---{CRT} with
	{M}ontgomery {M}ultiplication.
	\newblock In {\em Proceedings of the 11th International Workshop on
		Cryptographic Hardware and Embedded Systems}, CHES '09, pages 128--140,
	Berlin, Heidelberg, 2009. Springer-Verlag.
	
	\bibitem{springerlink:10.1007/3-540-36563-X_9}
	Jeffrey Hoffstein, Nick Howgrave-Graham, Jill Pipher, Joseph Silverman, and
	William Whyte.
	\newblock {NTRUS}ign: Digital signatures using the {NTRU} lattice.
	\newblock In Marc Joye, editor, {\em Topics in Cryptology - CT-RSA 2003},
	volume 2612 of {\em Lecture Notes in Computer Science}, pages 122--140.
	Springer Berlin / Heidelberg, 2003.
	\newblock 10.1007/3-540-36563-X\_9.
	
	\bibitem{hoh05}
	Baik Hoh and Marco Gruteser.
	\newblock Protecting location privacy through path confusion.
	\newblock In {\em Proceedings of the IEEE/CreateNet International 
	Conference on
		Security and Privacy for Emerging Areas in Communication Networks, 
		SecureComm
		2005, Athens, Greece, 5-9 September 2005}. IEEE, 2005.
	
	\bibitem{hoh07}
	Baik Hoh, Marco Gruteser, Hui Xiong, and Ansaf Alrabady.
	\newblock Preserving privacy in {GPS} traces via uncertainty-aware path
	cloaking.
	\newblock In {\em Proceedings of the 2007 ACM Conference on Computer and
		Communications Security, CCS 2007, Alexandria, Virginia, USA, 28-31 
		October
		2007}, pages 161--171. ACM, 2007.
	
	\bibitem{hoh10}
	Baik Hoh, Marco Gruteser, Hui Xiong, and Ansaf Alrabady.
	\newblock Achieving guaranteed anonymity in {GPS} traces via 
	uncertainty-aware
	path cloaking.
	\newblock {\em IEEE Trans. Mob. Comput.}, 9(8):1089--1107, 2010.
	
	\bibitem{Hopper01asecure}
	Nicholas~J. Hopper and Manuel Blum.
	\newblock A secure human-computer authentication scheme.
	\newblock Technical report, 2001.
	
	\bibitem{hu10tkde}
	Haibo Hu, Jianliang Xu, and Dik~Lun Lee.
	\newblock Pam: An efficient and privacy-aware monitoring framework for
	continuously moving objects.
	\newblock {\em Knowledge and Data Engineering, IEEE Transactions on}, 
	22(3):404
	--419, march 2010.
	
	\bibitem{hu10}
	Haibo Hu, Jianliang Xu, Sai~Tung On, Jing Du, and Joseph Kee-Yin Ng.
	\newblock Privacy-aware location data publishing.
	\newblock {\em ACM Trans. Database Syst.}, 35:18:1--18:42, July 2010.
	
	\bibitem{jones:015001}
	Erick Jones, Marcia Henry, David Cochran, and Tara Frailey.
	\newblock {RFID} pharmaceutical tracking: From manufacturer through in vivo
	drug delivery.
	\newblock {\em Journal of Medical Devices}, 4(1):015001, 2010.
	
	\bibitem{Juels03minimalistcryptography}
	Ari Juels.
	\newblock Minimalist cryptography for low-cost {RFID} tags.
	\newblock pages 149--164. Springer-Verlag, 2003.
	
	\bibitem{JuelsB-2004-wpes}
	Ari Juels and John Brainard.
	\newblock {Soft Blocking: Flexible Blocker Tags on the Cheap}.
	\newblock In Sabrina De~Capitani~di Vimercati and Paul Syverson, editors, 
	{\em
		Workshop on Privacy in the Electronic Society -- WPES'04}, pages 1--7,
	Washington, DC, USA, October 2004. ACM, ACM Press.
	
	\bibitem{Juels02squealingeuros:}
	Ari Juels, Ravikanth Pappu, and Thingmagic Llc.
	\newblock Squealing euros: Privacy protection in {RFID}-enabled banknotes.
	\newblock In {\em Financial Cryptography '03}, pages 103--121. 
	Springer-Verlag,
	2002.
	
	\bibitem{Juels03theblocker}
	Ari Juels, Ronald~L. Rivest, and Michael Szydlo.
	\newblock The blocker tag: Selective blocking of {RFID} tags for consumer
	privacy.
	\newblock In {\em 8th ACM Conference on Computer and Communications 
	Security},
	pages 103--111. ACM Press, 2003.
	
	\bibitem{juels:proxies}
	Ari Juels, Paul Syverson, and Dan Bailey.
	\newblock High-power proxies for enhancing {RFID} privacy and utility.
	\newblock In {\em Proceedings of the 5th workshop on Privacy Enhancing
		Technologies}, pages 210--226, May 2005.
	
	\bibitem{springerlink:10.1007/11535218}
	Ari Juels and Stephen Weis.
	\newblock Authenticating pervasive devices with human protocols.
	\newblock In Victor Shoup, editor, {\em Advances in Cryptology - CRYPTO 
	2005},
	volume 3621 of {\em Lecture Notes in Computer Science}, pages 293--308.
	Springer Berlin / Heidelberg, 2005.
	
	\bibitem{JuelsW-2007-percom}
	Ari Juels and Stephen~A. Weis.
	\newblock Defining strong privacy for {RFID}.
	\newblock In {\em Pervasive Computing and Communications Workshops, 2007.
		PerCom Workshops '07. Fifth Annual IEEE International Conference on}, 
		pages
	342 --347, march 2007.
	
	\bibitem{Juels:2009:DSP:1609956.1609963}
	Ari Juels and Stephen~A. Weis.
	\newblock Defining strong privacy for {RFID}.
	\newblock {\em ACM Trans. Inf. Syst. Secur.}, 13:7:1--7:23, November 2009.
	
	\bibitem{kaplan08}
	Emre Kaplan, Thomas~Brochmann Pedersen, Erkay Savas, and Y\"ucel Saygin.
	\newblock Privacy risks in trajectory data publishing: reconstructing 
	private
	trajectories from continuous properties.
	\newblock In {\em Proceedings of the 12th International Conference on
		Knowledge-Based Intelligent Information and Engineering Systems, KES 
		2008,
		Part II, Zagreb, Croatia, 3-5 September 2008}, volume 5178 of {\em 
		Lecture
		Notes in Computer Science}, pages 642--649. Springer, 2008.
	
	\bibitem{kaplan10}
	Emre Kaplan, Thomas~Brochmann Pedersen, Erkay Savas, and Y\"ucel Saygin.
	\newblock Discovering private trajectories using background information.
	\newblock {\em Data Knowl. Eng.}, 69(7):723--736, 2010.
	
	\bibitem{Karthikeyan:2005:RSW:1102219.1102229}
	Sindhu Karthikeyan and Mikhail Nesterenko.
	\newblock {RFID} security without extensive cryptography.
	\newblock In {\em Proceedings of the 3rd ACM workshop on Security of ad 
	hoc and
		sensor networks}, SASN '05, pages 63--67, New York, NY, USA, 2005. ACM.
	
	\bibitem{springerlink:10.1007/978-3-540-74442-9_30}
	Maurice Keller and William Marnane.
	\newblock Low power elliptic curve cryptography.
	\newblock In Nadine Az\'emard and Lars Svensson, editors, {\em Integrated
		Circuit and System Design. Power and Timing Modeling, Optimization and
		Simulation}, volume 4644 of {\em Lecture Notes in Computer Science}, 
		pages
	310--319. Springer Berlin / Heidelberg, 2007.
	\newblock 10.1007/978-3-540-74442-9\_30.
	
	\bibitem{Kfir:2005:PVP:1128018.1128470}
	Ziv Kfir and Avishai Wool.
	\newblock Picking virtual pockets using relay attacks on contactless 
	smartcard.
	\newblock In {\em Proceedings of the First International Conference on 
	Security
		and Privacy for Emerging Areas in Communications Networks}, pages 
		47--58,
	Washington, DC, USA, 2005. IEEE Computer Society.
	
	\bibitem{KimA-2009-cans}
	Chong~Hee Kim and Gildas Avoine.
	\newblock {{RFID} Distance Bounding Protocol with Mixed Challenges to 
	Prevent
		Relay Attacks}.
	\newblock In {\em 8th International Conference on Cryptology And Network
		Security -- CANS'09}, Kanazawa, Ishikawa, Japan, December 2009. 
		Springer.
	
	\bibitem{KimA-2011-ieeetwc}
	Chong~Hee Kim and Gildas Avoine.
	\newblock {RFID} distance bounding protocols with mixed challenges.
	\newblock {\em IEEE Transactions on Wireless Communications}, 
	10(5):1618--1626,
	May 2011.
	
	\bibitem{KimAKSP-2008-icisc}
	Chong~Hee Kim, Gildas Avoine, Fran\c{c}ois Koeune, Fran\c{c}ois-Xavier
	Standaert, and Olivier Pereira.
	\newblock The {Swiss}-knife {RFID} distance bounding protocol.
	\newblock In P.J. Lee and J.H. Cheon, editors, {\em International 
	Conference on
		Information Security and Cryptology -- ICISC}, volume 5461 of {\em 
		Lecture
		Notes in Computer Science}, pages 98--115, Seoul, Korea, December 2008.
	Springer-Verlag.
	
	\bibitem{10.1109/ICCNT.2010.40}
	So-Hyeon Kim, Do-Hyeun Kim, and Hee-Dong Park.
	\newblock Animal situation tracking service using {RFID}, {GPS}, and 
	sensors.
	\newblock {\em Computer and Network Technology, International Conference 
	on},
	0:153--156, 2010.
	
	\bibitem{Kim:2007:RSP:1345530.1345612}
	SungJin Kim, YoungSoo Kim, and SeokCheon Park.
	\newblock {RFID} security protocol by lightweight {ECC} algorithm.
	\newblock In {\em Proceedings of the Sixth International Conference on 
	Advanced
		Language Processing and Web Information Technology (ALPIT 2007)}, pages
	323--328, Washington, DC, USA, 2007. IEEE Computer Society.
	
	\bibitem{1987}
	Neal Koblitz.
	\newblock Elliptic curve cryptosystems.
	\newblock {\em Mathematics of Computation}, 48(177):pp. 203--209, 1987.
	
	\bibitem{dartmouth-campus-2007-02-08}
	David Kotz, Tristan Henderson, and Ilya Abyzov.
	\newblock {CRAWDAD} data set dartmouth/campus (v. 2007-02-08).
	\newblock Downloaded from http://crawdad.cs.dartmouth.edu/dartmouth/campus,
	February 2007.
	
	\bibitem{Koutarou03cryptographicapproach}
	Miyako~Ohkubo Koutarou, Koutarou Suzuki, and Shingo Kinoshita.
	\newblock Cryptographic approach to ``privacy-friendly'' tags.
	\newblock In {\em In {RFID} Privacy Workshop}, 2003.
	
	\bibitem{Kumar06arestandards}
	Sandeep~S. Kumar and Christof Paar.
	\newblock Are standards compliant elliptic curve cryptosystems feasible on
	{RFID}.
	\newblock In {\em In Proc. of RFIDSec'06}, 2006.
	
	\bibitem{springerlink:10.1007/11807964_27}
	Daesung Kwon, Daewan Han, Jooyoung Lee, and Yongjin Yeom.
	\newblock Vulnerability of an {RFID} authentication protocol proposed in at
	{S}ec{U}biq 2005.
	\newblock In Xiaobo Zhou, Oleg Sokolsky, Lu~Yan, Eun-Sun Jung, Zili Shao,
	Yi~Mu, Dong Lee, Dae Kim, Young-Sik Jeong, and Cheng-Zhong Xu, editors, 
	{\em
		Emerging Directions in Embedded and Ubiquitous Computing}, volume 4097 
		of
	{\em Lecture Notes in Computer Science}, pages 262--270. Springer Berlin /
	Heidelberg, 2006.
	\newblock 10.1007/11807964\_27.
	
	\bibitem{citeulike:755357}
	J.~Landt.
	\newblock {The history of {RFID}}.
	\newblock {\em Potentials, IEEE}, 24(4):8--11, 2005.
	
	\bibitem{Adam20074}
	Adam Laurie.
	\newblock Practical attacks against {RFID}.
	\newblock {\em Network Security}, 2007(9):4 -- 7, 2007.
	
	\bibitem{Le07forward-securerfid}
	Tri~Van Le, Mike Burmester, and Breno de~Medeiros.
	\newblock Forward-secure {RFID} authentication and key exchange, 2007.
	
	\bibitem{LeeChung-2008-SIGMOD}
	Chun-Hee Lee and Chin-Wan Chung.
	\newblock Efficient storage scheme and query processing for supply chain
	management using {RFID}.
	\newblock In {\em Proceedings of the 2008 ACM SIGMOD international 
	conference
		on Management of data}, SIGMOD '08, pages 291--302, New York, NY, USA, 
		2008.
	ACM.
	
	\bibitem{citeulike:2632811}
	Su~M. Lee, Young~J. Hwang, Dong~H. Lee, and Jong~I. Lim.
	\newblock {Efficient Authentication for Low-Cost {RFID} Systems}.
	\newblock pages 619--627. 2005.
	
	\bibitem{4519370}
	Yong~Ki Lee, L.~Batina, and I.~Verbauwhede.
	\newblock {EC-RAC} ({ECDLP} based randomized access control): Provably 
	secure
	{RFID} authentication protocol.
	\newblock In {\em {RFID}, 2008 IEEE International Conference on}, pages 97
	--104, april 2008.
	
	\bibitem{4911179}
	Yong~Ki Lee, L.~Batina, and I.~Verbauwhede.
	\newblock Untraceable {RFID} authentication protocols: Revision of 
	{EC-RAC}.
	\newblock In {\em RFID, 2009 IEEE International Conference on}, pages 178
	--185, april 2009.
	
	\bibitem{10.1109/SITIS.2007.51}
	Hong Lei and Tianjie Cao.
	\newblock Cryptanalysis of {SPA} protocol.
	\newblock {\em Signal-Image Technologies and Internet-Based System,
		International IEEE Conference on}, 0:1049--1052, 2007.
	
	\bibitem{DBLP:conf/iscis/LeviCAKC04}
	Albert Levi, Erhan \c{C}etintas, Murat Aydos, \c{C}etin Kaya~Ko\c{c}, and
	M.~Ufuk \c{C}aglayan.
	\newblock Relay attacks on bluetooth authentication and solutions.
	\newblock In {\em ISCIS}, pages 278--288, 2004.
	
	\bibitem{li07}
	Ninghui Li, Tiancheng Li, and Suresh Venkatasubramanian.
	\newblock $t$-closeness: privacy beyond $k$-anonymity and $\ell$-diversity.
	\newblock In {\em Proceedings of the 23rd International Conference on Data
		Engineering, ICDE 2007, Istanbul, Turkey, 15-20 April 2007}, pages 
		106--115.
	IEEE, 2007.
	
	\bibitem{DBLP:journals/corr/abs-1101-2604}
	Ninghui Li, Wahbeh~H. Qardaji, and Dong Su.
	\newblock Provably private data anonymization: Or, k-anonymity meets
	differential privacy.
	\newblock {\em CoRR}, abs/1101.2604, 2011.
	
	\bibitem{4159809}
	Tieyan Li and Robert Deng.
	\newblock Vulnerability analysis of {EMAP}-an efficient {RFID} mutual
	authentication protocol.
	\newblock In {\em Availability, Reliability and Security, 2007. ARES 2007. 
	The
		Second International Conference on}, pages 238 --245, april 2007.
	
	\bibitem{Li07securityanalysis}
	Tieyan Li and Guilin Wang.
	\newblock Security analysis of two ultra-lightweight {RFID} authentication
	protocols.
	\newblock In {\em In IFIP SEC}, pages 14--16, 2007.
	
	\bibitem{Li:2007:PRC:1229285.1229318}
	Yingjiu Li and Xuhua Ding.
	\newblock Protecting {RFID} communications in supply chains.
	\newblock In {\em Proceedings of the 2nd ACM symposium on Information, 
	computer
		and communications security}, ASIACCS '07, pages 234--241, New York, 
		NY, USA,
	2007. ACM.
	
	\bibitem{liao05}
	T.~Warren Liao.
	\newblock Clustering of time series data - a survey.
	\newblock {\em Pattern Recogn.}, 38(11):1857--1874, 2005.
	
	\bibitem{LimIEA2006}
	Andrew Lim and Kaicheng Zhang.
	\newblock A robust {RFID}-based method for precise indoor positioning.
	\newblock In Moonis Ali and Richard Dapoigny, editors, {\em Advances in 
	Applied
		Artificial Intelligence}, volume 4031 of {\em Lecture Notes in Computer
		Science}, pages 1189--1199. Springer Berlin / Heidelberg, 2006.
	\newblock 10.1007/11779568\_126.
	
	\bibitem{LimKwo06}
	Chae~H. Lim and Taekyoung Kwon.
	\newblock {Strong and Robust {RFID} Authentication Enabling Perfect 
	Ownership
		Transfer}.
	\newblock In {\em ICICS}, pages 1--20, 2006.
	
	\bibitem{LuLHHN-2007-percom}
	Li~Lu, Yunhao Liu, Lei Hu, Jinsong Han, and Lionel Ni.
	\newblock {A Dynamic Key-Updating Private Authentication Protocol for 
	{RFID}
		Systems}.
	\newblock In {\em International Conference on Pervasive Computing and
		Communications -- PerCom 2007}, pages 13--22, New York City, New York, 
		USA,
	March 2007. IEEE, IEEE Computer Society Press.
	
	\bibitem{machanavajjhala06}
	Ashwin Machanavajjhala, Johannes Gehrke, Daniel Kifer, and 
	Muthuramakrishnan
	Venkitasubramaniam.
	\newblock $\ell$-diversity: privacy beyond $k$-anonymity.
	\newblock In {\em ICDE}, pages 24--35, 2006.
	
	\bibitem{springerlink:10.1007/11967668_24}
	M{\'a}ire McLoone and Matthew J.~B. Robshaw.
	\newblock Public key cryptography and rfid tags.
	\newblock In {\em CT-RSA}, pages 372--384, 2007.
	
	\bibitem{Menezes:1996:HAC:548089}
	Alfred~J. Menezes, Scott~A. Vanstone, and Paul C.~Van Oorschot.
	\newblock {\em Handbook of Applied Cryptography}.
	\newblock CRC Press, Inc., Boca Raton, FL, USA, 1st edition, 1996.
	
	\bibitem{Meyerson:2004:COK:1055558.1055591}
	Adam Meyerson and Ryan Williams.
	\newblock On the complexity of optimal k-anonymity.
	\newblock In {\em Proceedings of the twenty-third ACM SIGMOD-SIGACT-SIGART
		symposium on Principles of database systems}, PODS '04, pages 
		223--228, New
	York, NY, USA, 2004. ACM.
	
	\bibitem{DBLP:conf/iwrt/MitrokotsaRT08}
	Aikaterini Mitrokotsa, Melanie~R. Rieback, and Andrew~S. Tanenbaum.
	\newblock Classification of {RFID} attacks.
	\newblock In {\em IWRT}, pages 73--86, 2008.
	
	\bibitem{DBLP:journals/isf/MitrokotsaRT10}
	Aikaterini Mitrokotsa, Melanie~R. Rieback, and Andrew~S. Tanenbaum.
	\newblock Classifying {RFID} attacks and defenses.
	\newblock {\em Information Systems Frontiers}, 12(5):491--505, 2010.
	
	\bibitem{mohammed09}
	Noman Mohammed, Benjamin C.~M. Fung, and Mourad Debbabi.
	\newblock Walking in the crowd: anonymizing trajectory data for pattern
	analysis.
	\newblock In {\em Proceedings of the 18th ACM Conference on Information and
		Knowledge Management, CIKM 2009, Hong Kong, China, 2-6 November 2-6 
		2009},
	pages 1441--1444. ACM, 2009.
	
	\bibitem{Molnar:2004:PSL:1030083.1030112}
	David Molnar and David Wagner.
	\newblock Privacy and security in library {RFID}: issues, practices, and
	architectures.
	\newblock In {\em Proceedings of the 11th ACM conference on Computer and
		communications security}, CCS '04, pages 210--219, New York, NY, USA, 
		2004.
	ACM.
	
	\bibitem{monreale10}
	Anna Monreale, Gennady Andrienko, Natalia Andrienko, Fosca Giannotti, Dino
	Pedreschi, Salvatore Rinzivillo, and Stefan Wrobel.
	\newblock Movement data anonymity through generalization.
	\newblock {\em Trans. Data Privacy}, 3(2):91--121, 2010.
	
	\bibitem{DBLP:journals/tdp/MonrealeTPRB11}
	Anna Monreale, Roberto Trasarti, Dino Pedreschi, Chiara Renso, and Vania
	Bogorny.
	\newblock C-safety: a framework for the anonymization of semantic 
	trajectories.
	\newblock {\em Transactions on Data Privacy}, 4(2):73--101, 2011.
	
	\bibitem{MunillaOP-2006-rfidsec}
	Jorge Munilla, Andres Ortiz, and Alberto Peinado.
	\newblock Distance bounding protocols with void-challenges for {RFID}.
	\newblock In {\em Workshop on {RFID} Security -- {RFIDSec'06}}, Graz, 
	Austria,
	July 2006. Ecrypt.
	
	\bibitem{MunillaP-2008-ntms}
	Jorge Munilla and Alberto Peinado.
	\newblock {Security Analysis of Tu and Piramuthu's Protocol}.
	\newblock In {\em New Technologies, Mobility and Security -- NTMS'08}, 
	pages
	1--5, Tangier, Morocco, November 2008. IEEE.
	
	\bibitem{nergiz08}
	Mehmet~Ercan Nergiz, Maurizio Atzori, and Y\"ucel Saygin.
	\newblock Towards trajectory anonymization: a generalization-based 
	approach.
	\newblock In {\em Proceedings of the SIGSPATIAL ACM GIS 2008 International
		Workshop on Security and Privacy in GIS and LBS, SPRINGL 2008, Irvine,
		California, USA, 4 November 2008}, pages 52--61. ACM, 2008.
	
	\bibitem{nergiz09}
	Mehmet~Ercan Nergiz, Maurizio Atzori, Y\"ucel Saygin, and Baris Guc.
	\newblock Towards trajectory anonymization: a generalization-based 
	approach.
	\newblock {\em Trans. Data Privacy}, 2(1):47--75, 2009.
	
	\bibitem{nergiz07}
	Mehmet~Ercan Nergiz, Chris Clifton, and A.~Erhan Nergiz.
	\newblock Multirelational k-anonymity.
	\newblock In {\em Proceedings of the 23rd International Conference on Data
		Engineering, ICDE 2007, Istanbul, Turkey, 15-20 April 2007}, pages
	1417--1421. IEEE, 2007.
	
	\bibitem{Ng:2008:RPM:1462455.1462478}
	Ching~Yu Ng, Willy Susilo, Yi~Mu, and Rei Safavi-Naini.
	\newblock {RFID} privacy models revisited.
	\newblock In {\em Proceedings of the 13th European Symposium on Research in
		Computer Security: Computer Security}, ESORICS '08, pages 251--266, 
		Berlin,
	Heidelberg, 2008. Springer-Verlag.
	
	\bibitem{NingLY-2011-cje}
	Huansheng Ning, Hong Liu, and Chen Yang.
	\newblock Ultralightweight {RFID} authentication protocol based on random
	partitions of pseudorandom identifier and pre-shared secret value.
	\newblock {\em Chinese Journal of Electronics}, 20(4):701--707, October 
	2011.
	
	\bibitem{Nithyanand-2009-eprint}
	Rishab Nithyanand.
	\newblock {The Evolution of Cryptographic Protocols in Electronic 
	Passports}.
	\newblock Cryptology ePrint Archive, Report 2009/200, 2009.
	
	\bibitem{Nohl06quantifyinginformation}
	Karsten Nohl and David Evans.
	\newblock Quantifying information leakage in tree-based hash protocols.
	\newblock In {\em In ICICS}, pages 228--237, 2006.
	
	\bibitem{DBLP:conf/sec/NohlE08}
	Karsten Nohl and David Evans.
	\newblock Hiding in groups: On the expressiveness of privacy distributions.
	\newblock In {\em SEC}, pages 1--15, 2008.
	
	\bibitem{Mary-2005-RFIDJournal}
	Mary~Catherine O'Connor.
	\newblock Mccarran airport {RFID} system takes off.
	\newblock {\em \em {RFID} Journal}, Oct 2005.
	\newblock \url{http://www.rfidjournal.com/article/view/1949}.
	
	\bibitem{OrenF-2008-rfidsec}
	Yossef Oren and Martin Feldhofer.
	\newblock {WIPR - a Public Key Identification on Two Grains of Sand}.
	\newblock In {\em Workshop on {RFID} Security -- RFIDSec'08}, Budapest,
	Hungary, July 2008.
	
	\bibitem{Oren:2009:LPI:1514274.1514283}
	Yossef Oren and Martin Feldhofer.
	\newblock A low-resource public-key identification scheme for {RFID} tags 
	and
	sensor nodes.
	\newblock In {\em Proceedings of the second ACM conference on Wireless 
	network
		security}, WiSec '09, pages 59--68, New York, NY, USA, 2009. ACM.
	
	\bibitem{Ouafi:2008:PRR:1788494.1788513}
	Khaled Ouafi and Raphael C.-W. Phan.
	\newblock Privacy of recent {RFID} authentication protocols.
	\newblock In {\em Proceedings of the 4th international conference on
		Information security practice and experience}, ISPEC'08, pages 
		263--277,
	Berlin, Heidelberg, 2008. Springer-Verlag.
	
	\bibitem{palanisamy11}
	Balaji Palanisamy and Ling Liu.
	\newblock Mobimix: Protecting location privacy with mix-zones over road
	networks.
	\newblock In {\em Proceedings of the 2011 IEEE 27th International 
	Conference on
		Data Engineering}, ICDE '11, pages 494--505, Washington, DC, USA, 
		2011. IEEE
	Computer Society.
	
	\bibitem{Palmer-2011}
	Mark Palmer.
	\newblock The main challenges of {RFID}, Oct 2011.
	\newblock \url{http://www.ebizq.net/topics/scm/features/3916.html}.
	
	\bibitem{pensa08}
	Ruggero~G. Pensa, Anna Monreale, Fabio Pinelli, and Dino Pedreschi.
	\newblock Pattern-preserving k-anonymization of sequences and its 
	application
	to mobility data mining.
	\newblock In {\em Proceedings of the 1st International Workshop on Privacy 
	in
		Location-Based Applications, ESORICS-PiLBA 2008, Malaga, Spain, 9 
		October
		2008}, volume 397 of {\em CEUR Workshop Proceedings}. CEUR-WS.org, 
		2008.
	
	\bibitem{Peris-lopez06emap:an}
	Pedro Peris-lopez, Julio C{\'e}sar~Hern{\'a}ndez Castro, Juan~M.
	Estevez-tapiador, and Arturo Ribagorda.
	\newblock {EMAP}: An efficient mutual authentication protocol for low-cost
	{RFID} tags.
	\newblock In {\em In: OTM Federated Conferences and Workshop: IS Workshop},
	pages 352--361. Springer-Verlag, 2006.
	
	\bibitem{DBLP:conf/uic/Peris-LopezCER06}
	Pedro Peris-Lopez, Julio C{\'e}sar~Hern{\'a}ndez Castro, Juan~M.
	Est{\'e}vez-Tapiador, and Arturo Ribagorda.
	\newblock {M}$^{\mbox{2}}{AP}$: A minimalist mutual-authentication 
	protocol for
	low-cost {RFID} tags.
	\newblock In {\em UIC}, pages 912--923, 2006.
	
	\bibitem{DBLP:conf/ifip6-8/Peris-LopezHER06}
	Pedro Peris-Lopez, Julio C{\'e}sar~Hern{\'a}ndez Castro, Juan~M.
	Est{\'e}vez-Tapiador, and Arturo Ribagorda.
	\newblock {RFID} systems: A survey on security threats and proposed 
	solutions.
	\newblock In {\em PWC}, pages 159--170, 2006.
	
	\bibitem{PerisHER-2006-rfidsec}
	Pedro Peris-lopez, Julio C{\'e}sar~Hern{\'a}ndez Castro, Juan M.~Estevez
	Tapiador, and Arturo Ribagorda.
	\newblock {LMAP}: A real lightweight mutual authentication protocol for
	low-cost {RFID} tags.
	\newblock In {\em In: Proc. of 2nd Workshop on {RFID} Security}, page~06.
	Ecrypt, 2006.
	
	\bibitem{Peris-Lopez:2009:AUC:1530270.1530277}
	Pedro Peris-Lopez, Julio~Cesar Hern{\'a}ndez~Castro, Juan~M. Tapiador, and
	Arturo Ribagorda.
	\newblock Information security applications.
	\newblock chapter Advances in Ultralightweight Cryptography for Low-Cost 
	{RFID}
	Tags: Gossamer Protocol, pages 56--68. Springer-Verlag, Berlin, Heidelberg,
	2009.
	
	\bibitem{springerlink:10.1007/s11277-010-0001-0}
	Raphael Phan, Jiang Wu, Khaled Ouafi, and Douglas Stinson.
	\newblock Privacy analysis of forward and backward untraceable {RFID}
	authentication schemes.
	\newblock {\em Wireless Personal Communications}, 61:69--81, 2011.
	\newblock 10.1007/s11277-010-0001-0.
	
	\bibitem{Phan-2008-ieeetdsc}
	Raphael C.-W. Phan.
	\newblock Cryptanalysis of a new ultralightweight {RFID} authentication
	protocol - {SASI}.
	\newblock {\em IEEE Transactions on Dependable and Secure Computing},
	6:316--320, 2009.
	
	\bibitem{comsnets09piorkowski}
	Michal Piorkowski, Natasa Sarafijanovoc-Djukic, and Matthias Grossglauser.
	\newblock A parsimonious model of mobile partitioned networks with 
	clustering.
	\newblock In {\em The First International Conference on COMmunication 
	Systems
		and NETworkS (COMSNETS)}, January 2009.
	
	\bibitem{ReidGTS-2007-asiaccs}
	Jason Reid, Juan Gonzalez-Nieto, Tee Tang, and Bouchra Senadji.
	\newblock Detecting relay attacks with timing based protocols.
	\newblock In Feng Bao and Steven Miller, editors, {\em Proceedings of the 
	2nd
		ACM Symposium on Information, Computer and Communications Security -- 
		ASIACCS
		'07}, pages 204--213, Singapore, Republic of Singapore, March 2007. 
		ACM.
	
	\bibitem{Rieback05rfidguardian:}
	Melanie~R. Rieback, Bruno Crispo, and Andrew~S. Tanenbaum.
	\newblock {RFID} guardian: A battery-powered mobile device for {RFID} 
	privacy
	management.
	\newblock In {\em ACISP}, pages 184--194, 2005.
	
	\bibitem{Rieback:2006:ERS:1115690.1115749}
	Melanie~R. Rieback, Bruno Crispo, and Andrew~S. Tanenbaum.
	\newblock The evolution of {RFID} security.
	\newblock {\em IEEE Pervasive Computing}, 5:62--, January 2006.
	
	\bibitem{4815081}
	P.~Rizomiliotis, E.~Rekleitis, and S.~Gritzalis.
	\newblock Security analysis of the {S}ong-{M}itchell authentication 
	protocol
	for low-cost {RFID} tags.
	\newblock {\em Communications Letters, IEEE}, 13(4):274 --276, april 2009.
	
	\bibitem{Rolfes:2008:UIS:1430796.1430803}
	Carsten Rolfes, Axel Poschmann, Gregor Leander, and Christof Paar.
	\newblock Ultra-lightweight implementations for smart devices --- security 
	for
	1000 gate equivalents.
	\newblock In {\em Proceedings of the 8th IFIP WG 8.8/11.2 international
		conference on Smart Card Research and Advanced Applications}, CARDIS 
		'08,
	pages 89--103, Berlin, Heidelberg, 2008. Springer-Verlag.
	
	\bibitem{Saito04enhancingprivacy}
	Junichiro Saito, Jae cheol Ryou, and Kouichi Sakurai.
	\newblock Enhancing privacy of universal re-encryption scheme for {RFID} 
	tags.
	\newblock In {\em Embedded and Ubiquitous Computing - EUC 2004, LNCS 3207},
	pages 879--890. Springer-Verlag, 2004.
	
	\bibitem{samarati98}
	Pierangela Samarati and Latanya Sweeney.
	\newblock Protecting privacy when disclosing information: $k$-anonymity 
	and its
	enforcement through generalization and suppression.
	\newblock Technical Report SRI-CSL-98-04, SRI Computer Science Laboratory,
	1998.
	
	\bibitem{1994-2756}
	Adi Shamir.
	\newblock Memory efficient variants of public-key schemes for smart card
	applications.
	\newblock In {\em Advances in Cryptology - EUROCRYPT '94, Workshop on the
		Theory and Application of Cryptographic Techniques, Perugia, Italy, 
		May 9-12,
		1994, Proceedings}, pages 445--449, 1994.
	
	\bibitem{6014260}
	Shuai Shao, Guoai Xu, and Yanfei Liu.
	\newblock Efficient {RFID} authentication scheme with high security.
	\newblock In {\em Communication Software and Networks (ICCSN), 2011 IEEE 
	3rd
		International Conference on}, pages 238 --241, may 2011.
	
	\bibitem{shonkwiler91}
	Ronald Shonkwiler.
	\newblock Computing the hausdorff set distance in linear time for any l(p)
	point distance.
	\newblock {\em Info. Proc. Letters}, 38(4):201--207, 1991.
	
	\bibitem{SolanasDMD-2007-cn}
	Agusti Solanas, Josep Domingo-Ferrer, Antoni Mart\'{\i}nez-Ballest\'{e}, 
	and
	Vanesa Daza.
	\newblock A distributed architecture for scalable private {RFID} tag
	identification.
	\newblock {\em Comput. Netw.}, 51:2268--2279, June 2007.
	
	\bibitem{DBLP:conf/ijcnn/SolanasGM10}
	Agusti Solanas, {\'U}rsula Gonz{\'a}lez-Nicol{\'a}s, and Antoni
	Mart\'{\i}nez-Ballest{\'e}.
	\newblock A variable-{MDAV}-based partitioning strategy to continuous
	multivariate microaggregation with genetic algorithms.
	\newblock In {\em IJCNN}, pages 1--7, 2010.
	
	\bibitem{springerlink:10.1007/978-3-642-25237-2_8}
	Agusti Solanas, {\'U}rsula Gonz{\'a}lez-Nicol{\'a}s, and Antoni 
	Mart\'{\i}nez-Ballest{\'e}.
	\newblock Mixing genetic algorithms and {V-MDAV} to protect microdata.
	\newblock In David~A. Elizondo, Agusti Solanas, and Antoni 
	Martinez-Balleste,
	editors, {\em Computational Intelligence for Privacy and Security}, volume
	394 of {\em Studies in Computational Intelligence}, pages 115--133. 
	Springer
	Berlin / Heidelberg, 2012.
	\newblock 10.1007/978-3-642-25237-2\_8.
	
	\bibitem{Song:2008:RAP:1352533.1352556}
	Boyeon Song and Chris~J. Mitchell.
	\newblock {RFID} authentication protocol for low-cost tags.
	\newblock In {\em Proceedings of the first ACM conference on Wireless 
	network
		security}, WiSec '08, pages 140--147, New York, NY, USA, 2008. ACM.
	
	\bibitem{5318888}
	Boyeon Song and C.J. Mitchell.
	\newblock Scalable {RFID} pseudonym protocol.
	\newblock In {\em Network and System Security, 2009. NSS '09. Third
		International Conference on}, pages 216 --224, oct. 2009.
	
	\bibitem{SongEtAl-2004-Infocom}
	Libo Song, David Kotz, Ravi Jain, and Xiaoning He.
	\newblock {Evaluating location predictors with extensive wi-fi mobility 
	data},
	2004.
	
	\bibitem{springerlink:10.1007/0-387-23462-4_15}
	Sarah Spiekermann and Oliver Berthold.
	\newblock Maintaining privacy in {RFID} enabled environments.
	\newblock In Philip Robinson, Harald Vogt, and Waleed Wagealla, editors, 
	{\em
		Privacy, Security and Trust within the Context of Pervasive Computing},
	volume 780 of {\em The Kluwer International Series in Engineering and
		Computer Science}, pages 137--146. Springer US, 2005.
	\newblock 10.1007/0-387-23462-4\_15.
	
	\bibitem{RFIDJournal-2003-March}
	RFID~Journal Staff.
	\newblock Air canada gets asset tracking.
	\newblock {\em \em {RFID} Journal}, March 2003.
	\newblock \url{http://www.rfidjournal.com/article/print/335}.
	
	\bibitem{RFIDJournal-2003-Sept}
	RFID~Journal Staff.
	\newblock Boeing finds the right stuff.
	\newblock {\em \em {RFID} Journal}, September 2003.
	\newblock \url{http://www.rfidjournal.com/article/print/715/}.
	
	\bibitem{Sun20122211}
	Xiaoxun Sun, Hua Wang, Jiuyong Li, and Yanchun Zhang.
	\newblock An approximate microaggregation approach for microdata 
	protection.
	\newblock {\em Expert Systems with Applications}, 39(2):2211 -- 2219, 2012.
	
	\bibitem{sweeney02a}
	Latanya Sweeney.
	\newblock $k$-anonymity: a model for protecting privacy.
	\newblock {\em Int. J. Uncertain. Fuzz.}, 10(5):557--570, 2002.
	
	\bibitem{Tavares-JPR}
	Jos\'{e} Tavares and Thiago~A. Saraiva.
	\newblock {Elementary Petri net inside {RFID} distributed database 
	({PNRD}) }.
	\newblock {\em International Journal of Production Research}, 48(9):2563 --
	2582, January 2010.
	
	\bibitem{terrovitis08}
	Manolis Terrovitis and Nikos Mamoulis.
	\newblock Privacy preservation in the publication of trajectories.
	\newblock In {\em Proceedings of the 9th International Conference on Mobile
		Data Management, MDM 2008, Beijing, China, 27-30 April 2008}, pages 
		65--72.
	IEEE, 2008.
	
	\bibitem{10.1109/MDM.2008.29}
	Manolis Terrovitis and Nikos Mamoulis.
	\newblock Privacy preservation in the publication of trajectories.
	\newblock In {\em IEEE International Conference on Mobile Data Management},
	pages 65--72, Los Alamitos, CA, USA, 2008. IEEE Computer Society.
	
	\bibitem{Trajcevski:2004:MUM:1016028.1016030}
	Goce Trajcevski, Ouri Wolfson, Klaus Hinrichs, and Sam Chamberlain.
	\newblock Managing uncertainty in moving objects databases.
	\newblock {\em ACM Trans. Database Syst.}, 29:463--507, September 2004.
	
	\bibitem{truta06}
	Traian~Marius Truta and Bindu Vinay.
	\newblock Privacy protection: $p$-sensitive $k$-anonymity property.
	\newblock In {\em Proceedings of the 2nd International Workshop on Privacy 
	Data
		Management, ICDE-PDM 2006, Atlanta, Georgia, USA, 3-7 April 2006}, 
		pages
	94--103. IEEE, 2006.
	
	\bibitem{Tsai-master}
	Wei-Feng Tsai.
	\newblock {A Low Cost {RFID} Tracking and Timing System for Bike Races}.
	\newblock Master thesis, Ohio State University, Electrical and Computer
	Engineering, Columbus, Ohio, United States, 2011.
	
	\bibitem{TuP-2007-rfidtechnology}
	Yu-Ju Tu and Selwyn Piramuthu.
	\newblock {RFID} distance bounding protocols.
	\newblock In {\em First International EURASIP Workshop on {RFID} 
	Technology},
	Vienna, Austria, September 2007.
	
	\bibitem{Tuyls06rfid-tagsfor}
	Pim Tuyls and Lejla Batina.
	\newblock {RFID}-tags for anti-counterfeiting.
	\newblock In {\em Topics in Cryptology - CT-RSA 2006, volume 3860 of LNCS},
	pages 115--131. Springer Verlag, 2006.
	
	\bibitem{Vaudenay:2007:PMR:1781454.1781461}
	Serge Vaudenay.
	\newblock On privacy models for {RFID}.
	\newblock In {\em Proceedings of the Advances in Crypotology 13th 
	international
		conference on Theory and application of cryptology and information 
		security},
	ASIACRYPT'07, pages 68--87, Berlin, Heidelberg, 2007. Springer-Verlag.
	
	\bibitem{Violino-2004-RFIDJournal}
	Bob Violino.
	\newblock Farm harvests {RFID}'s benefits.
	\newblock {\em \em {RFID} Journal}, March 2004.
	\newblock \url{http://www.rfidjournal.com/article/print/810}.
	
	\bibitem{webster08}
	Webster.
	\newblock Baa and emirates test new baggage tagging technology.
	\newblock Retrieved on 2008-02-13.
	
	\bibitem{WeisSRE-2003-spc}
	Stephen Weis, Sanjay Sarma, Ronald Rivest, and Daniel Engels.
	\newblock {Security and Privacy Aspects of Low-Cost Radio Frequency
		Identification Systems}.
	\newblock In Dieter Hutter, G{\"u}nter M{\"u}ller, Werner Stephan, and 
	Markus
	Ullmann, editors, {\em International Conference on Security in Pervasive
		Computing -- SPC 2003}, volume 2802 of {\em Lecture Notes in Computer
		Science}, pages 454--469, Boppard, Germany, March 2003. Springer.
	
	\bibitem{Wong:2006:CAR:1143147.1143152}
	Kirk H.~M. Wong, Patrick C.~L. Hui, and Allan C.~K. Chan.
	\newblock Cryptography and authentication on {RFID} passive tags for 
	apparel
	products.
	\newblock {\em Comput. Ind.}, 57:342--349, May 2006.
	
	\bibitem{wong06}
	Raymond Chi-Wing Wong, Jiuyong Li, Ada Wai-Chee Fu, and Ke~Wang.
	\newblock $(\alpha, k)$-anonymity: an enhanced k-anonymity model for 
	privacy
	preserving data publishing.
	\newblock In {\em Proceedings of the 12th ACM SIGKDD International 
	Conference
		on Knowledge Discovery and Data Mining, KDD 2006, Philadelphia, 
		Pennsylvania,
		USA, 20-23 August 2006}, pages 754--759. ACM, 2006.
	
	\bibitem{Wu:2009:HSR:1574927.1574959}
	Jiang Wu and Douglas~R. Stinson.
	\newblock A highly scalable {RFID} authentication protocol.
	\newblock In {\em Proceedings of the 14th Australasian Conference on
		Information Security and Privacy}, ACISP '09, pages 360--376, Berlin,
	Heidelberg, 2009. Springer-Verlag.
	
	\bibitem{4911191}
	Jiang Wu and D.R. Stinson.
	\newblock How to improve security and reduce hardware demands of the wipr
	{RFID} protocol.
	\newblock In {\em RFID, 2009 IEEE International Conference on}, pages 192
	--199, april 2009.
	
	\bibitem{4912773}
	Qingsong Yao, Yong Qi, Jinsong Han, Jizhong Zhao, Xiangyang Li, and Yunhao 
	Liu.
	\newblock Randomizing {RFID} private authentication.
	\newblock In {\em Pervasive Computing and Communications, 2009. PerCom 
	2009.
		IEEE International Conference on}, pages 1 --10, march 2009.
	
	\bibitem{10.1109/SUTC.2006.36}
	Guang yao Jin, Xiao yi~Lu, and Myong-Soon Park.
	\newblock An indoor localization mechanism using active {RFID} tag.
	\newblock {\em Sensor Networks, Ubiquitous, and Trustworthy Computing,
		International Conference on}, 1:40--43, 2006.
	
	\bibitem{yarovoy09}
	Roman Yarovoy, Francesco Bonchi, Laks V.~S. Lakshmanan, and Wendy~Hui Wang.
	\newblock Anonymizing moving objects: how to hide a {MOB} in a crowd?
	\newblock In {\em Proceedings of the 12th International Conference on 
	Extending
		Database Technology, EDBT 2009, Saint Petersburg, Russia, 24-26 March 
		2009},
	volume 360 of {\em ACM International Conference Proceeding Series}, pages
	72--83. ACM, 2009.
	
	\bibitem{YehL-2010-iasl}
	Kuo-Hui Yeh and N.W. Lo.
	\newblock {Improvement of Two Lightweight {RFID} Authentication Protocols}.
	\newblock {\em Information Assurance and Security Letters -- IASL 2010},
	1:6--11, 2010.
	
	\bibitem{springerlink:10.1007/11601494_13}
	Sang-Soo Yeo and Sung Kim.
	\newblock Scalable and flexible privacy protection scheme for {RFID} 
	systems.
	\newblock In Refik Molva, Gene Tsudik, and Dirk Westhoff, editors, {\em
		Security and Privacy in Ad-hoc and Sensor Networks}, volume 3813 of 
		{\em
		Lecture Notes in Computer Science}, pages 153--163. Springer Berlin /
	Heidelberg, 2005.
	\newblock 10.1007/11601494\_13.
	
	\bibitem{YuSMS-2009-esorics}
	Ching Yu~Ng, Willy Susilo, Yi~Mu, and Rei Safavi-Naini.
	\newblock {New Privacy Results on Synchronized {RFID} Authentication 
	Protocols
		against Tag Tracing}.
	\newblock In Michael Backes and Peng Ning, editors, {\em 14th European
		Symposium on Research in Computer Security -- ESORICS 2009}, volume 
		5789 of
	{\em Lecture Notes in Computer Science}, pages 321--336, Saint-Malo, 
	France,
	September 2009. Springer.
	
	\bibitem{Zhang07aggregatequery}
	Qing Zhang, Nick Koudas, Divesh Srivastava, and Ting Yu.
	\newblock Aggregate query answering on anonymized tables.
	\newblock In {\em In ICDE}, pages 116--125, 2007.
	
	\bibitem{springerlink:10.1007/11599548_7}
	Xiaolan Zhang and Brian King.
	\newblock Modeling {RFID} security.
	\newblock In Dengguo Feng, Dongdai Lin, and Moti Yung, editors, {\em
		Information Security and Cryptology}, volume 3822 of {\em Lecture 
		Notes in
		Computer Science}, pages 75--90. Springer Berlin / Heidelberg, 2005.
	\newblock 10.1007/11599548\_7.
\end{thebibliography}

\end{document}